%% file: main.tex
\documentclass[
11pt, % The default document font size, options: 10pt, 11pt, 12pt
english, % ngerman for German
singlespacing, % Single line spacing, alternatives: onehalfspacing or doublespacing
headsepline, % Uncomment to get a line under the header
]{DoctoralThesis} % The class file specifying the document structure
\usepackage{amsmath,amssymb,amsfonts}
\usepackage{amsthm}
\usepackage{tabularx} 
\usepackage{graphicx}
\usepackage{subcaption}
\usepackage{comment}
\usepackage{xcolor}
\usepackage{multirow}
\newtheorem{problem}{Problem}
\newtheorem{definition}{Definition}
\newtheorem{theorem}{Theorem}

\newtheorem{lemma}[theorem]{Lemma}
\newtheorem{assumption}{Assumption}
\newtheorem{proposition}{Proposition}
\usepackage{rotating}
\usepackage[ruled,linesnumbered]{algorithm2e}
\usepackage[utf8]{inputenc} % Required for inputting international characters
\usepackage{pdflscape}
\usepackage[T1]{fontenc} % Output font encoding for international characters
\usepackage{booktabs}% http://ctan.org/pkg/booktabs
\usepackage{booktabs} % For formal tables
\usepackage{url}
\usepackage{graphicx}
\usepackage{titlesec}
\usepackage[labelsep=period]{caption}
\usepackage{enumitem}
\usepackage[most]{tcolorbox}
\usepackage[backend=bibtex,style=authoryear,natbib=true,dashed=false,maxbibnames=99]{biblatex} % Use the bibtex backend with the authoryear citation style (which resembles APA)
\usepackage{float}    
\usepackage{array,adjustbox,amssymb} % array for p{..}, adjustbox to auto-fit
\usepackage[table]{xcolor}
\definecolor{rowgray}{gray}{0.94} 
\newcommand{\cmark}{\checkmark}
\newcommand{\xmark}{\(\times\)}
\newcolumntype{L}[1]{>{\raggedright\arraybackslash\hspace{0pt}}p{#1}}
\newcolumntype{C}[1]{>{\centering\arraybackslash}p{#1}}
\titleformat{\paragraph}
{\normalfont\normalsize\bfseries}{\theparagraph}{1em}{}
\titlespacing*{\paragraph}
{0pt}{3.25ex plus 1ex minus .2ex}{1.5ex plus .2ex}

\usepackage[autostyle=true]{csquotes} % Required to generate language-dependent quotes in the bibliography

\thesistitle{Practical Graph Optimisation and AI-Driven Models for Active Directory Security Hardening} % Your thesis title, this is used in the title and abstract, print it elsewhere with \ttitle
\supervisor{Dr.\textsc{Hung Nguyen}} % Your supervisor's name, this is used in the title page, print it elsewhere with \supname
\examiner{} % Your examiner's name, this is not currently used anywhere in the template, print it elsewhere with \examname
\degree{Doctor of Philosophy} % Your degree name, this is used in the title page and abstract, print it elsewhere with \degreename
\author{\textsc{Quang Huy Ngo}} % Your name, this is used in the title page and abstract, print it elsewhere with \authorname
\addresses{} % Your address, this is not currently used anywhere in the template, print it elsewhere with \addressname

\subject{Computer Science} % Your subject area, this is not currently used anywhere in the template, print it elsewhere with \subjectname, such as Computer Science
\keywords{} % Keywords for your thesis, this is not currently used anywhere in the template, print it elsewhere with \keywordnames
\university{\href{http://www.adelaide.edu.au}{University of Adelaide}} % Your university's name and URL, this is used in the title page and abstract, print it elsewhere with \univname
\department{\href{https://set.adelaide.edu.au/computer-and-mathematical-sciences/}{School of Computer and Mathematical Sciences}} % Your department's name and URL, this is used in the title page and abstract, print it elsewhere with \deptname, CS is the department name
\group{\href{https://set.adelaide.edu.au/}{Computer Science Education Research}} % Your research group's name and URL, this is used in the title page, print it elsewhere with \groupname
\faculty{\href{http://faculty.university.com}{Faculty Name}} % Your faculty's name and URL, this is used in the title page and abstract, print it elsewhere with \facname

\AtBeginDocument{
\hypersetup{pdftitle=\ttitle} % Set the PDF's title to your title
\hypersetup{pdfauthor=\authorname} % Set the PDF's author to your name
\hypersetup{pdfsubject=\deptname}
\hypersetup{pdfkeywords=\keywordnames} % Set the PDF's keywords to your keywords
}

\begin{document}

\frontmatter % Use roman page numbering style (i, ii, iii, iv...) for the pre-content pages

\pagestyle{plain} % Default to the plain heading style until the thesis style is called for the body content

%----------------------------------------------------------------------------------------
%	TITLE PAGE
%----------------------------------------------------------------------------------------

\begin{titlepage}
\begin{center}

\includegraphics{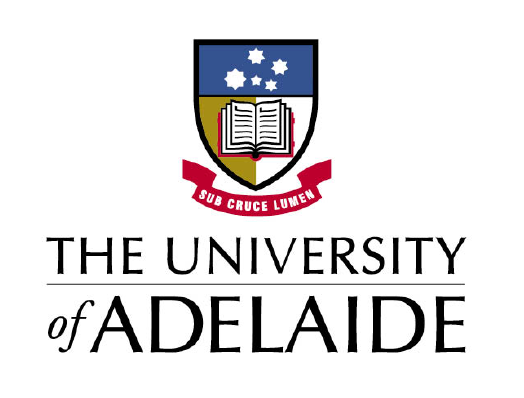}\\
\vspace{0.8cm}{\huge \bfseries \ttitle\par}\vspace{0.8cm}

{%
  \fontsize{14pt}{17pt}\selectfont
  \authorname\par
}
\vfill
{%
  \fontsize{15pt}{18pt}\selectfont
  \deptname\par
}
\vfill
\large A thesis submitted for the degree of\\ DOCTOR OF PHILOSOPHY\\The University of Adelaide\\[0.3cm] % University requirement text
 
\vfill

{\large 2nd Feb 2026}\\[1cm] % Date

{\large \itshape Awarded the Dean's Commendation for Doctoral Thesis Excellence}\\[3cm]

\vfill
\end{center}
\end{titlepage}

%----------------------------------------------------------------------------------------
%	LIST OF CONTENTS/FIGURES/TABLES PAGES
%----------------------------------------------------------------------------------------

\tableofcontents % Prints the main table of contents

\listoffigures % Prints the list of figures

\listoftables % Prints the list of tables

%----------------------------------------------------------------------------------------
%	ABSTRACT PAGE
%----------------------------------------------------------------------------------------

% \begin{abstract}

\chapter*{\abstractname}
\addchaptertocentry{\abstractname} 
Microsoft’s Active Directory (AD) is a directory service that enables the IT admin to manage security permissions and control access within a Windows domain network. As a core management system in up to 90$\%$ of companies, AD has become a primary target for adversaries. As a result, AD vulnerabilities are exploited in an estimated nine out of ten cyberattacks. Worse still, these networks are often alarmingly fragile when penetration testers show they can achieve full compromise in up to 82$\%$ of cases, often by leveraging known vulnerabilities. These statistics highlight an urgent need for more effective defence and hardening solutions.

Conceptually, an AD environment is represented as a directed graph where nodes represent entities like accounts, computers, and groups, and the directed edges represent accesses, permissions, or vulnerabilities. The primary challenge in securing this graph is managing the large number of vulnerabilities. While one could attempt to remediate every single vulnerability, the sheer scale and complexity of an enterprise AD graph make such a brute-force approach astronomically time-consuming and impractical. These constraints shift the focus from exhaustive remediation to prioritisation-driven attack-graph hardening. Rather than fixing every vulnerability, the defender targets a small set of high-risk issues that deliver the most defensive impact. The goal is to reduce the attack surface and limit traversable paths to high-privilege accounts while keeping remediation effort low. Traditional rank-based prioritisation methods often fall short, as they are not designed to mitigate attack paths tied to Active Directory structure. Recent optimisation-based hardening better disrupts these paths and has proved more effective. These optimisation-based approaches formulate the hardening task as a combinatorial problem on the attack graph, with the objective of improving graph-related security metrics and ultimately reducing the available attack surface.

While many solutions for hardening attack graphs exist, these efforts fall short in addressing several key practical challenges specific to the AD attack graph. First, existing models often assume the graph is static, whereas a real-world AD environment is highly dynamic. Second, most proposed solutions are limited to the defensive measure of revoking vulnerabilities (edge removal), while more active defence mechanisms, like the strategic placement of honeypots or decoys, are largely unstudied. Third, because not all remediations are implementable, as the recommended set may include essential permissions that cannot be removed due to operational requirements, a practical end-to-end model must incorporate system admin feedback into the prioritisation process, a constraint that previous works have often ignored.

This thesis aims to address these limitations by studying and proposing a number of game-theoretic and optimisation-based decision-making models specifically designed for defending AD-style attack graphs. This thesis introduces four key contributions. First, we propose a novel honeypot/decoy placement model based on the principle of minimising the number of shortest paths and the number of Domain Admin–reachable nodes. Second, building on this model, we introduce a defence strategy that considers the dynamic/temporal nature of the AD graph, where the objective is to find the location to deploy decoys that maximises the worst-case incident response time. Third, we introduce a novel adaptive prioritisation model that queries each high-risk attack path to the IT administrator for mediation. The objective of this model is to find a remediation policy that disconnects lower-privilege nodes from Domain Admin nodes with a minimum number of remediations. Finally, we introduce an end-to-end adaptive prioritisation model that minimises the approval effort of the system admin (IT admin) by finding a general adaptive edge-removal policy that generalises the system admin's decisions to edges with similar risk features. As a key result of this thesis, we show that the problems underlying all of the contributed models are computationally intractable. To address this complexity, we propose and evaluate a diverse range of algorithms. These solutions span from mathematical optimisation and evolutionary algorithms to AI-based methods, including clustering and reinforcement learning.

% \end{abstract}

%----------------------------------------------------------------------------------------
%	DECLARATION PAGE
%----------------------------------------------------------------------------------------

\begin{declaration}
\addchaptertocentry{\authorshipname} % Add the declaration to the table of contents
%source: https://www.adelaide.edu.au/graduatecentre/current-students/your-thesis-examination/preparation#thesis-declaration --> https://www.adelaide.edu.au/graduatecentre/system/files/media/documents/2020-01/examples-of-thesis-declarations-for-submissions.pdf --> the statement below is the one "For a thesis that contains publications"

I certify that this work contains no material which has been accepted for the award of any other degree or diploma in my name, in any university or other tertiary institution and, to the best of my knowledge and belief, contains no material previously published or written by another person, except where due reference has been made in the text. In addition, I certify that no part of this work will, in the future, be used in a submission in my name, for any other degree or diploma in any university or other tertiary institution without the prior approval of the University of Adelaide and where applicable, any partner institution responsible for the joint-award of this degree.

I also give permission for the digital version of my thesis to be made available on the web, via the University’s digital research repository, the Library Search and also through web search engines, unless permission has been granted by the University to restrict access for a period of time.

\vspace{40mm}
% \begin{figure}[h!]
% \begin{flushright}
%   \includegraphics[width=0.4\textwidth]{Figures/sampleSignature.png}
% \end{flushright}
% \end{figure}
\begin{flushright}
Quang Huy Ngo
\end{flushright}
\begin{flushright}September 2025\end{flushright}
 
\end{declaration}

\cleardoublepage

%----------------------------------------------------------------------------------------
%	ACKNOWLEDGEMENTS
%----------------------------------------------------------------------------------------

\begin{acknowledgements}
\addchaptertocentry{\acknowledgementname} PhD journey has been one of the best things to happen in my life—deeply rewarding and, at times, very challenging. This thesis would not have been possible without the tremendous support, guidance, and encouragement of many people. I hereby express my deepest gratitude to you all.

I am deeply grateful to my supervisors, Professor Hung Nguyen and Dr Mingyu Guo, for their unwavering support and guidance throughout my candidature. Their mentorship and companionship have been immensely important. They shaped how I think about research and what it means to be an independent scholar, and they made my candidature deeply rewarding.

I am especially thankful to Professor Hung Nguyen, whose mentorship began well before my PhD, during my Honours year of my bachelor’s degree. His constructive feedback and patience consistently pushed me to improve and grow into a more well-rounded researcher.

I am equally grateful to Dr Mingyu Guo, whose expertise and ability to make new concepts easy to understand, along with his enthusiasm for new research ideas, shaped my work and strengthened this thesis.

I am grateful to Dr Cheryl Pope for introducing and supporting me in teaching at the University of Adelaide. I also thank all my tutor peers for being supportive colleagues in the classroom and for hearing me out on professional challenges.

I also want to acknowledge Nhat, Long, Yumeng, and everyone in the HDR Lab for thoughtful feedback on ideas, for encouraging me forward, and for listening generously to my research and to my personal life. To all of my friends, thank you for your encouragement and for always being there.

Finally, I am immensely grateful to my family, especially my parents whose love and practical help were the bedrock of this work. Without their constant support and unconditional care, I could not have brought this thesis to completion. To my partner, thank you for your patience and steady support since day one of the PhD, and for cheering me on through the tough moments.

\end{acknowledgements}

\begin{publications}
    \addchaptertocentry{\publicationname}
    This thesis is \textbf{based on and extends} the following works that have been published or submitted for review:
    \begin{itemize}
        \item \textbf{Huy Q. Ngo}, Mingyu Guo, Hung Nguyen, Catch Me if You Can: Effective Honeypot Placement in Dynamic AD Attack Graphs. the IEEE International Conference on Computer Communications (INFOCOM’24) \textbf{[CORE A*]} (Chapter~\ref{chapter:paper1}).
        
        \item \textbf{Huy Q. Ngo}, Mingyu Guo, Hung Nguyen, Near-optimal Strategies for Honeypots Placement in Dynamic and Large Active Directory Networks (Extended Abstract). International Conference on Autonomous Agents and Multiagent Systems (AAMAS’23) \textbf{[CORE A*]} (Chapter~\ref{chapter:paper1}).
        
        \item \textbf{Huy Q. Ngo}, Mingyu Guo, Hung Nguyen, Optimizing Cyber Response Time on Temporal Active Directory Networks Using Decoys. Genetic and Evolutionary Computation Conference (GECCO’24) \textbf{[CORE A]} (Chapter~\ref{chapter:paper2}).
        
        \item \textbf{Huy Q. Ngo}, Mingyu Guo, Hung Nguyen, Adaptive Wizard for Removing Cross-Tier Misconfigurations in Active Directory. International Joint Conference on Artificial Intelligence (IJCAI’25)  \textbf{[CORE A*]} (Chapter~\ref{chapter:paper3})

        \item \textbf{Huy Q. Ngo}, Mingyu Guo, Hung Nguyen, Reinforcement Learning for Security Adaptive Connectivity Test. Submitted to AAAI Conference on Artificial Intelligence (AAAI’26) (Chapter~\ref{chapter:paper4})
    \end{itemize}
\end{publications}

%----------------------------------------------------------------------------------------
%	THESIS CONTENT - CHAPTERS
%----------------------------------------------------------------------------------------

\mainmatter % Begin numeric (1,2,3...) page numbering

\pagestyle{thesis} % Return the page headers back to the "thesis" style

% Include the chapters of the thesis as separate files from the Chapters folder
% Uncomment the lines as you write the chapters

\include{Chapter1/Chapter1}
\include{Chapter2/Chapter2}

\include{Chapter3/Chapter3}
\include{Chapter4/Chapter4}

\include{Chapter5/Chapter5}
\include{Chapter6/Chapter6}

\include{Chapter7/Chapter7}

%----------------------------------------------------------------------------------------
%	THESIS CONTENT - APPENDICES
%----------------------------------------------------------------------------------------

% \appendix % Cue to tell LaTeX that the following "chapters" are Appendices

% Include the appendices of the thesis as separate files from the Appendices folder
% Uncomment the lines as you write the Appendices

% \include{Chapter3/Appendix3}
% \include{Chapter4/Appendix4}
% \include{Chapter5/Appendix5}
% \include{Chapter6/Appendix6}

%----------------------------------------------------------------------------------------
%	BIBLIOGRAPHY
%----------------------------------------------------------------------------------------

 \printbibliography[heading=bibintoc]

%----------------------------------------------------------------------------------------

\end{document}

%% file: Chapter1/Chapter1.tex
\chapter{Introduction} % Main chapter title
\label{chapter:intro} % Change X to a consecutive number; for referencing this chapter elsewhere, use \ref{ChapterX}

Microsoft Active Directory (AD) is a directory service that enables administrators to manage identity and access within Windows domain networks. It serves as the core identity and access management platform for most large enterprises, including up to 90$\%$ of Fortune 500 companies \citep{adexploit}. Because of this ubiquity, AD has been a primary target for attackers for more than a decade. Microsoft estimates that organisations using AD face a sustained high volume of brute-force attempts, with roughly 95 million AD accounts targeted every day \citep{report3}. Worse still, adversaries continue to expand their attack techniques \citep{mandiant2025mtrend,schroeder2021cpo}, in part because methods that succeed in one AD often transfer to others. Consistent with this, Mandiant reports that Active Directory exploitation is involved in about 90$\%$ of the breaches it investigates \citep{report2}.

The importance of defending AD has been widely recognised, and organisations have invested heavily in tools and hardening programs to secure it \citep{report3}. Yet enterprise AD environments remain fragile in practice. A survey by Enterprise Management Associates found that penetration testers achieved full compromise in up to 82$\%$ of assessed networks \citep{adexploit}. Furthermore, Microsoft reports that 88$\%$ of enterprises affected by a cyberattack were not following AD best practices, leaving numerous critical misconfigurations \citep{report1}. Consistent with these findings, the 2024 CyberEdge report shows that 81.4 $\%$ of surveyed organisations experienced at least one successful attack \citep{cyberedge2024cdr}. These statistics underscore the urgent need for more efficient defences for Active Directory.

In Active Directory, defenders can extract entities and permission relationships from directory data to generate an AD attack graph. Using this graph, they can align the environment with best practices by separating nodes into tiers based on roles and privileges, with the most critical assets isolated in a Tier 0 zone. With the adoption of automated enumeration tools such as BloodHound \citep{BloodHound}, attack graphs have become more accessible, and attack-graph analysis has become standard practice in the industry. This graphical perspective reveals previously overlooked permissions and misconfigurations and makes AD environments easier to manage. To emphasise the importance of attack graphs in defenders’ workflow, this thesis cites the well-known quote by John Lambert, Corporate Vice President and Security Fellow at Microsoft:
\begin{quote}
“Defenders think in lists. Attackers think in graphs. As long as this is true, attackers win.”
\end{quote}

Although this approach is now standard in many enterprises, attack paths to Tier 0 still appear. As reported in \citep{atkinson2025iap}, about 70$\%$ of users in surveyed networks have at least one path to accounts and resources in Tier 0. Without sustained defence and regular maintenance, AD environments drift from best practices during day-to-day operations due to admins’ configuration errors and sign-in behaviour that does not follow best-practice policy.

To combat this, defenders adopt a hardening process that removes vulnerabilities or deploys active defences (honeypots and decoys). The primary objective is to eliminate open attack paths from least-privileged nodes to the most critical assets in the Tier 0 zone. Traditional rank-based models assess risk from the characteristics of individual vulnerabilities (e.g., CVSS \citep{cvss}) while overlooking graph structure, such as how risks interact and compound along multi-step paths. Recent work instead treats AD defence as a combinatorial optimisation problem on the attack graph, with objectives that minimise the attack surface \citep{guo2022practical,zhang2024practical,goel2022defending}. However, current optimisation-based approaches still face practical gaps: there is a lack of models for placing active defences, for operating on time-varying attack graphs, and with robust, usable human-in-the-loop formulations. This thesis addresses these shortcomings with a suite of optimisation, game-theoretic, and decision-making models tailored to AD-style attack graphs.

% To secure these AD environments, IT professionals often follow a four-stage vulnerability management lifecycle: Discovery,  Prioritization, Remediation, and Monitoring. The process begins with Discovery, where enumeration tools like BloodHound are used to map the entire network and identify all potential security weaknesses. This initial step, however, often uncovers thousands of vulnerabilities, making the goal of fixing every single one impractical and time-consuming. This challenge makes the next stage, Vulnerability Prioritization, arguably the most critical. Instead of attempting to fix everything, the goal is to analyze the attack graph and identify the handful of vulnerabilities that pose the greatest risk. During the subsequent Remediation phase, an IT administrator is responsible for acting on this prioritized list, deciding which vulnerabilities to remediate and which defensive controls to apply, often balancing security risks against operational needs. This allows them to focus their efforts where they will have the most impact, while continuous Monitoring ensures the network stays secure against new threats.

\section{Background and Scope}

In this section, we first provide a detailed description of Active Directory and its best practice design. Next, we explain how the attack graph degrades over time in the absence of maintenance and hardening. Finally, we describe the standard hardening process and review state of the art models for this process.

\subsection{Active Directory and Best Practice}

Active Directory (AD) is a Windows network directory service that stores network objects and security policy information and enables administrators to organise and manage these policies. AD stores security data as objects—for example users, computers, security groups, organizational units, and group policy objects—and arranges them in a hierarchical structure similar to a directory tree. This directory-tree-like structure is why it is called a “Directory”. A key capability of AD is that it lets the system admin implement and manage identity and access control for every object in the network. In the AD access-management model, an object can be granted permissions or relationships that give it control over other objects. For example, a user or group may be granted the right to reset passwords (ForceChangePassword), modify attributes or entire objects (WriteProperty, GenericWrite), add members to a group (AddMember), or change security settings (WriteDACL). Depending on the role of the object in the network, system admins will grant appropriate permissions for it. Consequently, AD can be represented as a graph in which the nodes are objects and the edges capture the permissions and relationships between them.

To understand what leads to an insecure AD, we first outline the best-practice design used by security engineers when building AD networks and then demonstrate why every AD eventually drifts from this safe state and becomes vulnerable. Specifically, Microsoft recommends applying the tier model \citep{bestpractice, bestpractice2} to Active Directory. In this model, each node is assigned to exactly one tier based on the permissions required by its role. The ultimate goal of this model is to isolate higher-privilege nodes from lower-privilege nodes. Following Microsoft’s guidelines, nodes are typically grouped into three or more tiers:

\begin{itemize}
    \item Tier 0 (highest privilege): The most critical assets and admins with full control over AD (e.g., Enterprise Admin accounts, Domain Admin accounts, Domain Controller, etc.).
    \item Tier 1: Admins and systems that manage enterprise servers, business-critical applications, and high-value data (e.g., application servers, database servers, server admin, etc.).
    \item Tier 2 and beyond: other non-administrative user, workstations and assets. 
\end{itemize}

To avoid confusion between tier numbers and their privilege, we use "privilege" to refer to a tier: “high-privilege” refers to lower-numbered tiers (Tier 0 being the highest) and “low-privilege” refers to higher-numbered tiers. This tiering model also enforces two rules to ensure the separation between tiers: (1) higher-privileged nodes can control resources at their tier and in the tiers below, but not the other way around (access-control restriction); (2) credentials from a higher-privileged tier (e.g. Tier 0 Admin) must not be exposed to lower-tier systems (log-in restriction). This architecture strictly prevents any path from a lower-privilege tier to a higher one, and only “downstream” edges are allowed \footnote{In practice, downstream edges are created sparingly, only when required by role.}. In other words, a node may be controlled or accessed only by nodes that are located in a tier with higher privilege than itself. The tiering model and privilege access rules ensure that all objects in the network follow the "Clean Source Principle" \citep{cleansource}. This principle requires all security dependencies to be as trustworthy as the object being secured. It is based on the assumption that if an adversary can control any dependency of a target, they can ultimately control the target object itself. Figure \ref{fig:tier}.a shows an exemplary AD that follows these best-practice principles.

\begin{figure}[h]
\centering
  \includegraphics[width=1\linewidth]{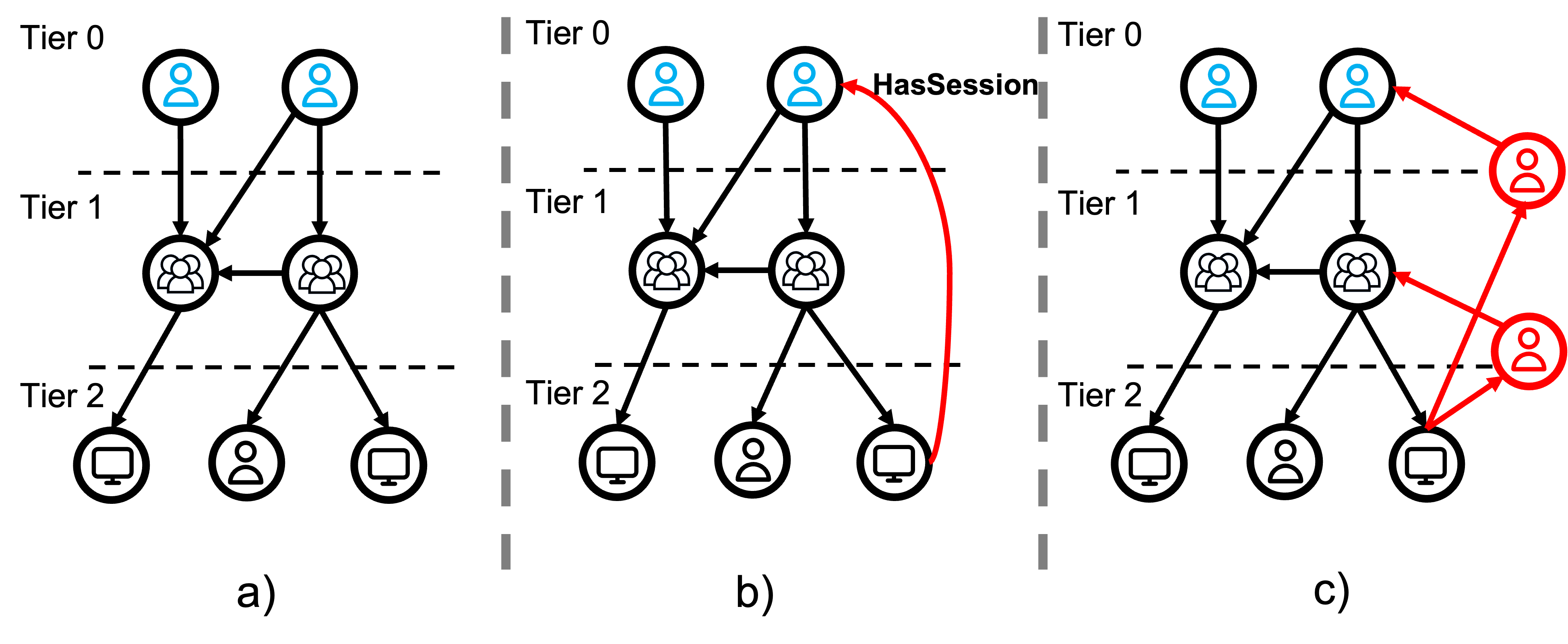}
  \caption{Misconfigurations in AD tiering model. (a) Secured AD follows best practice with only “downstream” edges. (b) Insecure AD with an “upstream” edge caused by a privileged admin signing in to a low-privileged computer. (c) Insecure AD caused by granting multi-tier permissions to an accounts (red nodes).}
  \label{fig:tier}
\end{figure}

% \begin{figure}
% \centering
% \begin{subfigure}{0.29\textwidth}
%     \includegraphics[width=\textwidth]{Chapter1/figure/bestpractice.png}
%     \caption{}
%     \label{fig:tier1}
% \end{subfigure}
% \hfill
% \begin{subfigure}{0.34\textwidth}
%     \includegraphics[width=\textwidth]{Chapter1/figure/insecuresign.png}
%     \caption{}
%     \label{fig:tier2}
% \end{subfigure}     
% \hfill
% \begin{subfigure}{0.33\textwidth}
%     \includegraphics[width=\textwidth]{Chapter1/figure/insecurerole.png}
%     \caption{}
%     \label{fig:tier3}
% \end{subfigure}     
% \caption{Misconfigurations in AD tiering model. (a) Secured AD follows best practice with only “downstream” edges. (b) Insecure AD with an “upstream” edge caused by a privileged admin signing in to a low-privileged computer. (c) Insecure AD caused by granting multi-tier permissions to an accounts (red nodes).}
% \label{fig:tier}
% \end{figure}

\subsection{Insecure Drift and Violations}

In practice, especially in large or complex enterprises, AD environments often drift from best practices during operations due to violations such as over-granting permissions or high-privilege admins signing in to lower-privilege machines. When these violations occur, they create \textit{cross-tier attack edges} \citep{nguyen2024adsynth} that short-circuit privilege boundaries and expose higher-tier systems. These violations and misconfigurations are hard to avoid in day-to-day operations and often result from administrator error driven by operational pressure \citep{verizon2024dbir}, overlooked legacy permissions and systems \citep{legacy}, users with overlapping roles, and users not following security policy \citep{violate}. Over time, these cross-tier edges will snowball into multiple attack paths and open more insecure pathways that allow an attacker to move laterally from low-privilege nodes to higher tiers.

Here are examples of how cross-tier edges can form and worsen the AD security posture. First, Figure \ref{fig:tier}.b shows a "upstream" cross-tier edge created when a high-privilege (Tier 0) admin signs in to a lower-privilege (Tier 2) machine. This happens when an admin user fails to follow restriction (2) (the log-in restriction). In Windows systems, user authentication leaves behind credential material, in the form of a hash or clear-text password \citep{mimikatz,Tilbury2017,Moore2015}, in the computer's memory. In this case, if the Tier 2 machine is compromised by the attacker, they can harvest those credentials and pivot to Tier 0. A second example, illustrated in Figure \ref{fig:tier}.c, involves a cross-tier edge caused by over-granting permissions where a user or group is given rights that span multiple tiers. This commonly occurs in large organisations when users hold multiple responsibilities requiring access across different tiers while administrators simultaneously fail to enforce the best practice that each node is assigned to a single tier \footnote{Microsoft's best practice recommends splitting these nodes into multiple accounts, each corresponding to a single tier\citep{bestpractice}}. An attacker starting from a lower privilege node can then use those accounts to move laterally to nodes in a higher privilege tier.

\subsection{AD Attack Graph}

As defenders, we aim to detect and prevent these cross-tier pathways from existing or going unmonitored in our AD environment. One way to defend against cross-tier misconfigurations is to visualise them with attack graphs. An attack graph is a graphical model that models chains of events or actions an attacker could exploit to compromise a system. The idea of representing Active Directory as an attack graph gained traction with the work of Dunagan et al. \citep{dunagan2009heat}, who first described these pathways as the "identity snowball attack" in the Active Directory. This type of attack leverages known vulnerabilities and cross-tier misconfigurations from an initially compromised user to gain access to other entities, uncover multiple attack paths in the network, and ultimately search for routes to the High Value Asset. Following this foundational work, research and industry focus on AD attack graphs grew significantly. In practice, system admins can construct an attack graph by extracting entities and permission relationships from the directory data. Notably, the development of commercialised attack graph enumeration tools like BloodHound \citep{BloodHound} has made attack graphs widely accessible, and attack-graph analysis has become standard practice in enterprise security.

\begin{figure}[h]
\centering
  \includegraphics[width=0.8\linewidth]{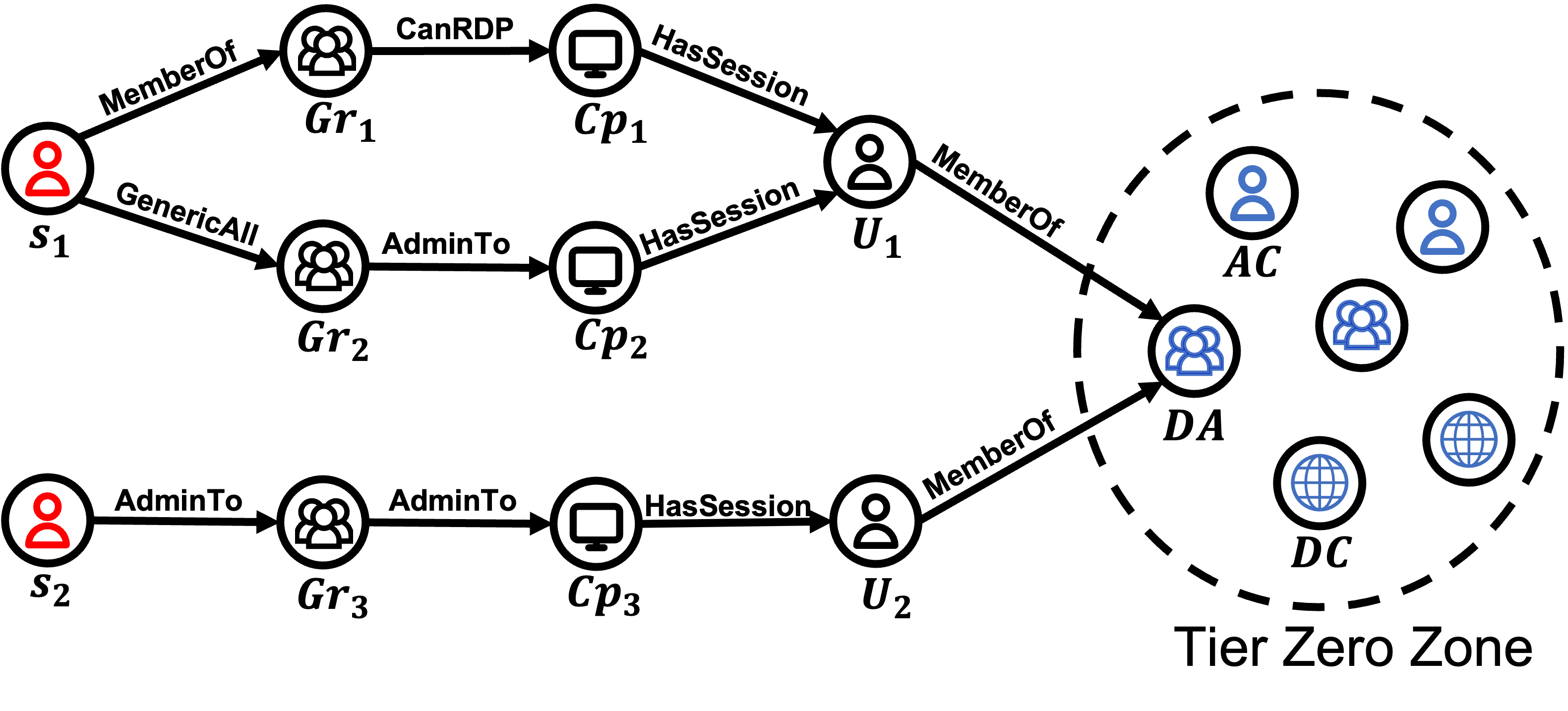}
  \caption{Simplified BloodHound attack graph with only Account ($U_x$), Computer ($Cp_x$) and Groups ($Gr_x$) nodes. The attack graph visualize attack paths from a low-privilege/compromise-prone source ($s_x$) to the Domain Admin (DA) node.}
  \label{fig:AD}
\end{figure}

Formally, an AD environment can be described as a directed graph $G(V, E)$ where the the node set $V$ represents entities such as users, computers, security groups, Organisational Units (OUs), Group Policy Objects (GPOs) and Domain. The directed edge set $E$ represents relationships that allow an attacker to move laterally from $u\in V$ to $v\in V$. These relationships include permissions or known-vulnerabilities for example local admin rights on an AD object (AdminTo), membership in a group (MemberOf), ability to establish Remote Desktop Access to a computer (CanRDP), full control over an AD object (GenericAll), and an active user session on a computer (HasSession), etc. Figure \ref{fig:AD} illustrates an attack graph for an insecure AD that shows a chain of potential privilege escalation from low-privilege nodes to Tier Zero, which is intended to be isolated from other network segments. Ultimately, as defenders, we seek to eliminate or minimise attack paths to Tier Zero and any nodes in this zone.

\subsection{Defending AD with Attack Graphs}

Even when an AD starts in a secure state, small misconfigurations can snowball into attack paths and eventually leave the whole environment vulnerable. To counter this, IT teams follow a vulnerability-management lifecycle to harden and maintain AD security. Figure \ref{fig:process} shows four main steps: Vulnerability Discovery, Prioritisation, Remediation, and Monitoring.
\begin{figure}[h]
\centering
  \includegraphics[width=1\linewidth]{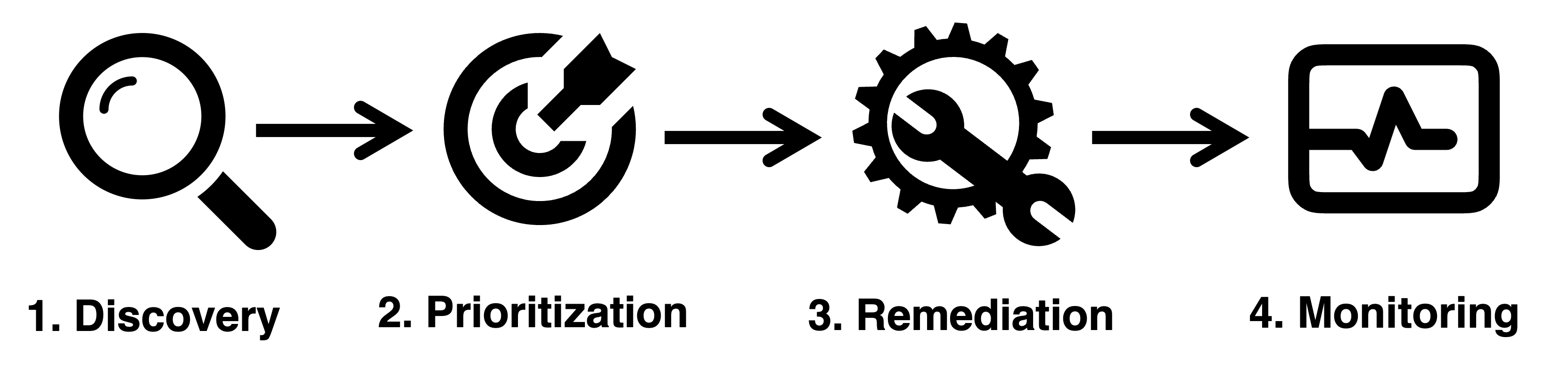}
  \caption{Vulnerability management lifecycle for Active Directory }
  \label{fig:process}
\end{figure}

The process begins with Discovery phase, where enumeration tools like SharpHound \citep{SharpHound} scan the whole network to list potential security weaknesses, and BloodHound maps the network into the attack graph. However, in enterprise networks, the scan typically reveals thousands of vulnerabilities, so fixing every single one is impractical and time-consuming. As a result, the next step is Vulnerability Prioritisation, which analyzes the attack graph to identify a list of few vulnerabilities that pose the greatest risk. During the subsequent Remediation phase, an IT administrator acts on this prioritized list, deciding which vulnerabilities to remediate and which defensive controls to apply (remove the vulnerability or deploy a honeypot), often balancing security risks against operational needs. Finally, the Monitoring phase ensures the environment does not drift back into an unsafe state and that new issues are caught early. If new risks are found, new findings are fed back into prioritization so the cycle continues.

In this thesis, we refer to "attack graph hardening" as the process involving two stages, prioritization and remediation. This thesis predominantly focuses on studying models for this subprocess. In hardening the attack graph, the main goal is to minimise the attack surface by either removing a limited number of edges or deploying a limited number of security devices (IDS or honeypots) to block open attack paths with as few resources as possible.

In the literature, there are two types of hardening, rank-based and optimisation-based. Rank-based methods \citep{Robbins2021patent,GoodHound,PlumHound,zhang2025rethinking} assign each vulnerability a score and a rank, and the ranked list is then given to the system admin, who remediates a limited number of risks. However, blocking attack paths to minimise risk under a limited budget is, at its core, a combinatorial optimisation problem. For example, disconnecting a high-privilege target from a low-privilege source in an attack graph resembles a minimum cut problem, which is a well-known combinatorial optimisation problem. This has given rise to more recent optimisation-based methods \citep{guo2022practical,zhang2024practical,goel2022defending}, which use optimisation algorithms to find a prioritized list of minimal vulnerabilities whose mitigation will reduce the attack surface the most.
In this setting, the hardening task typically involves optimising a security metric, such as minimising the shortest-path length to privileged assets \citep{guo2022practical} or minimising the number of nodes that can reach the Domain Admin \citep{zhang2024practical}, by modifying the attack graph (e.g., allocating honeypots or removing edges). 
Besides this, the hardening problem is subject to two constraints. First, most formulations impose a resource (budget) constraint on the number of modifications, since the implementation of each change requires human effort.
Second, because AD is a security critical system and changes to AD alter the permissions and access of entities, not every change is implementable, since some would disrupt operations or reduce usability.
In the current literature, there are three ongoing issues with optimisation based approaches that hinder their practical deployment.

First, most recent optimisation-based methods for AD attack graphs focus on removing edges (revoke vulnerabilities/permission) to disconnect attack paths to the Tier 0 zone. However, fully disconnecting all paths from low privilege nodes to Tier 0 is sometimes infeasible because certain edges are non-removable due to operational constraints or because the required cut set exceeds the available budget. In such cases, active defences such as honeypots or decoys can be deployed on nodes to monitor and intercept attackers along the remaining open paths. To diversify the defence inventory and reduce reliance on edge removal strategy alone, there is a need for models that guide system admin on where to place these active defences. At the time of writing, the literature lacks such allocation models for AD attack graphs.

Moreover, AD attack graphs are dynamic and evolve over time. Despite this, most existing work enumerates the graph once and optimises on a single static snapshot. As the graph continues to change, solutions derived from that snapshot deteriorate over time and become less effective or suboptimal. The graph changes for many reasons, from routine administrative updates, for example granting a user additional permissions to meet work requirements, to rapid shifts caused by user sign-in activity. On Windows systems, user authentication can leave credential material in memory (for example, password hashes or clear-text passwords), which adversaries can harvest for lateral movement. These motivate the need for a model that can deploy defensive measures in a (near-)optimal way on a dynamic, time-varying attack graph.

Another practical issue with optimisation-based models is that they often treat hardening as a standalone optimisation problem and ignore IT administrators’ decision-making. In practice, hardening follows an iterative two-step workflow: first, security proposes edges (vulnerabilities/permissions) in the attack graph to remove; second, the IT administration team manually reviews whether those changes are safe before implementation. Most optimisation-based models assume the IT administrator’s decision for every edge is known in advance, which is impractical because verifying each edge requires significant manual effort. Guo et al. \citep{guo2024limited} proposes an adaptive model that incorporates administrator decisions into hardening process, but the work is primarily theoretical and has two key drawbacks. First, it does not guarantee elimination of all attack paths. Second, its edge-by-edge querying does not scale to large AD networks, where IT admin typically define high-level security policies based on object properties. Consequently, there remains a lack of practical human-in-the-loop hardening models for AD.

\section{Research Objective and Challenges}
\label{sec:obj}
To tackle three issues identified above, this thesis develops a suite of novel, practical hardening models. Hence, the thesis pursues three research objectives aimed at resolving these objective:
\begin{itemize}
    \item \textbf{Objective 1:} Develop a model for allocating active defences on large-scale AD attack graph (e.g., honeypots and decoys). The model’s main contribution is a placement strategy that can effectively interdict attack paths and enable early detection.
    \item \textbf{Objective 2:} Design a hardening model for time-varying AD attack graphs that produces defences that remain effective as the graph evolves and prevents strategy deterioration over time.
    \item \textbf{Objective 3:} Develop a robust and practical human-in-a-loop hardening model for removing vulnerability in the large-scale AD attack graph. While prior works is largely theoretical, ineffective at removing attack paths, and does not scale at all, the proposed model should addresses these gaps.
\end{itemize}

In this thesis, we propose several defence models to directly achieve these objectives. Beyond the challenges above, which motivate our research objectives, there are additional issues we must address in algorithm design. First, most optimisation problems on attack graphs are computationally hard. Later, we will show that the core optimisation problems behind all of our models are intractable. Worse still, attack-graph-problem instances are often massive. Real-world AD graphs commonly contain hundreds of thousands of nodes and millions of edges. Inference time is another constraint we must consider when designing our algorithms. Slow inference is especially problematic during urgent incidents, when defenders need decisions within strict time limits.

\section{Thesis Outline and Contributions}

This section outlines the main content of the thesis and the corresponding contributions. To achieve the objectives in Section \ref{sec:obj}, this thesis develops a set of practical models for hardening the AD attack graph. While this chapter has introduced Active Directory, common misconfigurations, and the motivation for a more practical hardening approach. Chapter~\ref{chapter:review} will provide an in-depth overview of related research on attack graph hardening. The key original contributions of this thesis are as follows:

\begin{itemize}
    \item Chapter \ref{chapter:paper1} achieves Objectives 1 and 2. In this chapter, we present a novel honeypot placement model for Active Directory, framed as a Stackelberg game on the AD attack graph. We consider two attacker types: one that can observe honeypots and one that cannot. The defender’s goal is to jointly minimise the success probability of both attacker types. We prove the defender’s honeypot-placement problem is $\mathcal{NP}$-hard against an observable attacker and $\mathcal{W}[1]$-hard against a non-observable attacker. We formulate the problem as a mixed-integer program and use it to produce high-quality honeypot placements on very large AD graphs. We also extend the model to time-varying attack graphs and select placements that perform well across all snapshots. To handle the dynamic case efficiently, we develop two heuristics based on voting and clustering. The clustering method builds defences from representative snapshots, runs quickly on large graphs, and achieves results close to optimal.
    The preliminary version of this Chapter have been accepted for publication at \textit{the 22th International Conference on Autonomous Agents and Multiagent Systems (AAMAS'23)} as an extended abstract  under the title "Near optimal strategies for honeypots placement in dynamic and large active directory networks" \citep{ngo2023near}. The full content of the this Chapter accepted for publication at \textit{the IEEE International Conference on Computer Communications (INFOCOM'24)} under the title "Catch Me if You Can: Effective Honeypot Placement in Dynamic AD Attack Graphs" \citep{ngo2024catch}.

    \item Chapter \ref{chapter:paper2} achieves Objectives 1 and 2. In this chapter, we take a different approach to the honeypot allocation problem by modelling the time-varying AD attack graph as a temporal directed graph. In this model, edges are active only during specific time intervals, and a temporal attack path is a sequence of edges with non-decreasing timestamps. This captures a realistic time dimension of a persistent attacker behaviour, in which adversaries maintain a foothold and move when opportunities arise. We introduce an novel response time metric, defined as the duration from the first decoy trigger to compromise of the DA. The defender aims to choose honeypot placements that disconnect the source and the target when while maximising response time to ensure the early detection of the attacker. We prove that the defender’s optimisation problem is $\mathcal{NP}$-hard. To solve it, we design an Evolutionary Diversity Optimisation heuristic and address the fitness evaluation bottleneck with two advances: a Dijkstra-based earliest arrival algorithm that accelerates fitness computation on graphs with many static edges, and a lightweight surrogate that evaluates solutions on a set of important attack paths rather than the entire graph. We prove that the surrogate evaluation converges to feasible solutions and show experimentally that the approach scales well.
    The work in this Chapter has been accepted for publication at \textit{the Genetic and Evolutionary Computation Conference (GECCO'24)} under the title "Optimizing Cyber Response Time on Temporal Active Directory Networks Using Decoys" \citep{ngo2024optimizing}.

    \item Chapter \ref{chapter:paper3} achieves Objective 3. In this chapter, we introduce a human-in-the-loop edge-removal model for defending AD attack graphs, called Adaptive Path Removal (APR). In this model, a wizard proposes an attack path in each step and presents it as a set of multiple-choice options to the IT admin (system admin). In response, The IT admin then selects one edge from the proposed set to remove. This process continues until all attack paths from source $s$ to target $t$ are disconnected or a preset query budget is reached. Our goal is to design an adaptive wizard policy that minimises the expected number of queries. To our knowledge, this is the first path-based adaptive query framework that lets defenders choose the removal edge through multiple choice. Compared with prior adaptive defence work for Active Directory, our method can then guarantee elimination of all $s$ to $t$ attack paths when sufficient budget is available. We prove that finding an optimal policy for APR problem is $\#\mathcal{P}$-hard. To overcome this, we present an exact algorithm, an approximation algorithm, and a set of scalable heuristics. Among our heuristics, Dynamic Programming with Restrictions delivers the strongest results by building on the exact and approximate methods. A reinforcement learning heuristic is also promising. While it does not perform as well as DPR, it can be trained offline and then used with low inference time, which makes it suitable for larger graphs and time sensitive deployments.
    This work has been accepted for publication at the \textit{34th International Joint Conference on Artificial Intelligence (IJCAI’25)} as “Adaptive Wizard for Removing Cross-Tier Misconfigurations in Active Directory” \citep{ngo2025adaptive}.

    \item Chapter \ref{chapter:paper4} achieves Objective 3. In this chapter, we introduces another human-in-the-loop edge-removal framework for defending AD attack graphs, called Feature based Adaptive Connectivity Test (F-ACT). While prior work often oversimplifies administrator decisions by assigning each edge an independent fixed removal probability. F-ACT instead uses security context features to model the decision process. At query time, the wizard presents an edge together with its feature vector, and the administrator decides to cut or keep it. That decision is then applied to all edges that share the same feature pattern. The interaction continues until all attack paths are eliminated or we determine that elimination is impossible. The objective is to learn an adaptive query policy that minimises the expected number of queries and therefore reduces IT admin effort. We prove that finding an optimal policy for F-ACT problem is $\#\mathcal{P}$-hard. Since the adaptive nature of F-ACT problem class make it fit naturally to the RL framework, we proposed a RL solution for this problem. To handle the large search space and sparse rewards, we propose RL4FT algorithm with three novel components: a policy-agnostic self improvement mechanism, a policy invariant reward shaping scheme, and a custom prioritised experience replay. By design, RL4FT guarantees improvement upon any given reference policy, and we show that it significantly outperforms state of the art baselines.
    The work in this Chapter is submitted for review at \textit{the 40th AAAI Conference on Artificial Intelligence (AAAI'26)}.

    \item Finally, chapter \ref{chapter:end} provides a conclusive summary of the thesis and provide several potential direction for future work. 
\end{itemize}

% \begin{itemize}
%     \item Discovery: Scan the whole network and list every permission and known vulnerability. In a BloodHound workflow, run SharpHound (and AzureHound for Microsoft Entra ID) to collect AD entities, relationships, and properties, then use the data to build the attack graph.
%     \item Prioritization: Identify the few weaknesses that pose the highest risk, since fixing everything is not feasible. In an attack graph workflow, this often involves selecting a small set of nodes or edges and defining specific defensive actions for them (for example, removing edge or allocate honeypot on nodes) to break critical paths and reduce the attack surface.
%     \item Remediation: Implement the mitigation for the vulnerability identified by the prioritization stage. This stage, involved decision making, system admin and security engineer will colaborate access the security risk of not implement the measure and the usability potential operational disruption cause by implement it. 
%     \item Monitoring: ensures the environment does not drift back into an unsafe state and that new issues are caught early. If new risk araise, New findings are fed back into prioritization so the cycle continues.
% \end{itemize}

%% file: Chapter2/Chapter2.tex
\chapter{Literature Review} % Main chapter title
\label{chapter:review} % Change X to a consecutive number; for referencing this chapter elsewhere, use \ref{ChapterX}
% \section{Active Directory and Attack Graph}
Understanding the origins of insecure AD and the need for an efficient hardening model for AD attack graphs. This literature review explores existing defence models, moving from general graph application to AD-focused attack graphs. We begin by introducing attack graphs and their applications in network security (Section \ref{sec:lit1}). Next, we survey a broad range of defence models on general graphs, often formulated as Network Security Games. We then review defence models for attack-graph hardening (Section \ref{sec:lit2}). Finally, we examine hardening models for AD attack graphs, across both industry and research, which are most closely related to this thesis (Section \ref{sec:lit3}).

\section{Attack Graph and Application}
\label{sec:lit1}
Attack graphs have a long history in security which model chains of events or actions an attacker can exploit to compromise a system. It was first formalised by Phillips and Swiler as state-enumeration graphs \citep{phillips1998graph,jha2002two,swiler2001computer}. Since then, substantial work has focused on improving scalability and usability, and several types of attack graphs have been proposed, including dependency attack graphs \citep{ammann2002scalable,jajodia2005topological,ou2005mulval}, Bayesian attack graphs \citep{frigault2008measuring,munoz2017exact}, among many others. In a later survey, Lallie et al. \citep{lallie2020review} examined more than 180 studies on attack graphs and attack trees and identified over 90 distinct self-described definitions across the literature.

Attack graphs are primarily used to help system administrators and security engineers visualise and analyze attack paths, and to support decision-making for secure configuration and network hardening. For example, attack graphs can be used to identify and prioritise critical assets in the network. Sawilla et al. \citep{sawilla2008identifying} demonstrate this idea with AssetRank, which integrates public vulnerability data with dependency attack graph semantics to compute an importance score for each vertex and thereby identify critical assets. Attack graphs can also be correlated with real-time alerts from intrusion detection systems (IDS) to infer attacker intentions and predict their next possible exploits, Wang et al. \citep{wang2006using} demonstrate how combining attack-graph information with IDS alerts can recommend probable attacker next actions.

One of the most important applications of attack graphs is to prioritise vulnerabilities, which supports defenders’ decision-making in system hardening. Because resources and time are limited, defenders aim to reduce the attack surface with as few actions as possible. In practice, defenders have two approaches to hardening the network: security-device placement and vulnerability removal. For security-device placement, attack graphs help determine where controls are most effective. Examples include optimal allocation of IDS sensors \citep{noel2008optimal} and the deployment of honeypots \citep{durkota2015optimal,durkota2015approximate,durkota2019hardening,kiekintveld2015game}. For vulnerability removal, two well-studied approaches are rank-based hardening and optimization-based hardening. Rank-based hardening assigns each vulnerability a priority score, based on its exploitability and reachability to critical assets,which administrators then use to prioritise patching \citep{mehta2006ranking,ingols2006practical}. Optimisation-based hardening uses optimisation algorithms to select a set of vulnerabilities to remove so as to minimise the attack surface and the overall defence cost \citep{wang2006minimum,albanese2012time}. In this thesis, we particular interesting in using attack graph for hardening the network, which will be continued review in the next section.

\section{Defence Models for Graphs}
\label{sec:lit2}
Network hardening can be conceptualised as a strategic conflict on an attack graph, in which a defender seeks to prevent an attacker from reaching critical assets. Network Security Games (NSGs) are closely related and motivate ideas such as budgeted defence, path interdiction, and planning against an adaptive adversary. We next review security-focused models for attack-graph hardening from the literature, with particular emphasis on vulnerability prioritisation methods

\textbf{Network Security Games.} Network Security Games (NSGs) is a branch of security games \citep{pita2009security,tambe2011security,kar2016trends} that combine game theory and graph theory to model defender–attacker interplay on networked infrastructure. In these models, the defender allocates limited resources to protect the network, while the attacker targets nodes to maximise damage. The framework is versatile and has been applied across domains, with attacker goals and defender objectives tailored to each setting.
A common line of work studies urban attack interdiction \citep{jain2013security,jain2011double,tsai2010urban}, typically framed as a Stackelberg game in which the attacker chooses a path to a target $t$ from a set of potential targets $T$ to maximize the damage payoff $\mathcal{T}(t)$, and the defender allocates resources (e.g., checkpoints) to interdict paths so as to minimise $max_{t\in T} \mathcal{T}(t)$.
The interdiction idea extends to smuggling prevention \citep{guo2016optimal}, where the attacker seeks to maximise network flow and the defender blocks nodes or edges to reduce it.
Another significant line addresses contagion-style attacks that spread over a network \citep{bai2024resource,bai2021defending,gan2015security,li2020defending,adiga2016delay,tsai2012security}. Here, the attacker compromises an initial node $u$, causing damage that cascades to neighbours, and the defender allocates resources to minimise the resulting loss.
NSGs have also been applied in counter-terrorism to disrupt terrorist communication. The goal is not to block a single physical path but to degrade the network’s ability to coordinate by disrupting communication links. Defender success is measured by reducing a key network metric, such as the damage potential of a critical subgraph \citep{wang2016computing,mutzari2021coalition,guo2016coalitional} or overall connectivity (e.g., via the inverse geodesic length) \citep{aziz2017weakening,aziz2018defender,gaspers2019optimal}. Across many of these settings, the defender’s optimisation problem is computationally challenging—often intractable—due to the hardness of solving the underlying Stackelberg game \citep{conitzer2006computing}. Consequently, much of the literature focuses on scalable approximations and algorithmic heuristics that perform well on large, real-world graphs.

\textbf{Attack-Graph Hardening.} Focusing on attack graphs, hardening is commonly formulated as an optimisation problem in which the defender seeks to minimise the attack surface under a limited defence budget. These problems are often intractable because of the structure of common attack graph models. In state‑enumeration attack graphs, Sheyner et al. \citep{sheyner2002automated} prove that disconnecting all attack paths by removing exploit nodes is NP‑complete. Wang et al. \citep{wang2006minimum} and Albanese et al. \citep{albanese2012time} extend this line to interdependent exploits, where prerequisite constraints couple fixes, i.e., disabling a target exploit may require disabling its prerequisites, and removing those prerequisites can eliminate the target indirectly. Other papers cast hardening as multi-objective optimisation, jointly minimising attack surface and defence cost. For example, Wang et al. \citep{wang2013exploring} propose heuristics to jointly optimize security metrics (e.g., integrity loss, privilege escalation, service availability) and the defence cost. Khouzani et al. \citep{khouzani2019scalable} also jointly minimise defence cost and the probability of a successful attack on probabilistic attack graphs. 
A complementary line uses game-theoretic formulations to hardening on attack graph problem as a NSGs. For instance,  Durkota et al. \citep{durkota2015optimal,durkota2015approximate,durkota2019hardening} and Kiekintveld et al. \citep{kiekintveld2015game} formulated honeypot placement as NSG, where the defender introduces honeypots to raise attacker risk and cost and the attacker, uncertain about exact placements, plans under detection risk. Nguyen et al. \citep{nguyen2017multi} similarly frame hardening as a multi-stage security game on Bayesian attack graphs, using particle-filter belief updates with value-propagation/sampled-activation heuristics to allocate countermeasures under uncertainty. 
Finally, rather than selecting specific nodes to remove or harden, several works score vulnerabilities and return a ranked list for administrators, who then decide which edges to remove or which controls to apply. For example, Sawilla and Ou \citep{sawilla2008identifying} and Ingols et al. \citep{ingols2006practical} score vulnerability vertices by exploitability and by the number of hosts remaining reachable after a removal to generate ranked lists of high-priority fixes.

\textbf{Optimization-based Defense in Cyber Security.} Apart from defending networks via attack-graph optimization, a large body of work examines cybersecurity through the broader lens of optimization \citep{wang2018survey}. 
In the domain of optimal firewall configuration, \citet{katic2007optimization} address the firewall policy refinement problem by modeling rule sets as logical structures and applying static merging and redundancy-removal algorithms to eliminate policy conflicts. Similarly, \citet{hamed2006dynamic} optimize packet-filtering rule ordering to minimize matching latency, formulating the task as a dynamic traffic-aware sequencing problem solved via probabilistic matching heuristics. Tan et al. \citet{tan2016cyber} investigate the cyber password maintenance problem, utilizing a cost-benefit optimization framework to find an equilibrium between security risks and user dissatisfaction costs. 
In the context of system reliability, Shen et al. \citet{shen2012survivability} evaluate the survivability of Wireless Sensor Networks by formulating a zero-sum stochastic game to calculate transition probabilities and maximize the Mean Time To Failure. Furthermore, Law et al. \citet{law2014security} optimize the protection of Smart Grid Automated Generation Control systems against data injection attacks by combining risk management analysis with game-theoretic interdiction to identify critical system vulnerabilities.
There are several fundamental characteristics that distinguish the defense of Active Directory (AD) from other above-mentioned security optimization problems. Firstly, AD defense is fundamentally a combinatorial optimization problem on a graph, as the AD environment is naturally represented as an attack graph where defensive decisions are expressed as discrete actions. Secondly, the hardening of these graphs is centered on the Clean Source Principle \citep{cleansource}, where the objective is typically formulated as a path interdiction or reachability problem aimed at removing or minimizing attack paths to the Tier 0 zone. Thirdly, the AD attack graph and the defense workflow itself exhibit unique characteristics, such as a topology that changes dynamically over time and the requirement for human verification of each action. These factors required a new model and solution for the defense of the AD.

\section{Defence Model for AD Attack Graph}
\label{sec:lit3}
The study of AD attack graphs was pioneered by the seminal work of Dunagan et al. \citep{dunagan2009heat}. They introduced the identity snowball attack model, a concept that was later expanded and commercialised by Bloodhound \citep{BloodHound}. In the context of enterprise defence, the defender’s primary objective is to minimise accessibility from low privilege accounts to Tier Zero or Domain Admins (DA) nodes.

\textbf{Rank-based models.} BloodHound not only commercialises attack graph enumeration, it also integrates a vulnerability prioritization module that highlights risky edges as choke points~\citep{bloodhound-enterprise}. Although the full method is not disclosed, their patent briefly describes the algorithm at a high level \citep{Robbins2021patent}. The chokepoint score of an edge is the number of assets that can reach Tier Zero by traversing that edge. Concretely, for each directed edge $(u, v)$ where $v$ can reach a Tier Zero node, the score is the number of nodes in the attack graph that can reach $u$. In practice, practitioners first run BloodHound to enumerate the graph, then they receive a ranked list of edges by this choke point metric. Edges that serve as convergence points for many assets on the way to Domain Admins appear at the top and are labeled with higher severity. Using this ranked list, security engineers and system administrators proceed from the top, weighing security risk against operational disruption for each edge before deciding whether to remove it. Also in industry setting, GoodHound \citep{GoodHound} and PlumHound \citep{PlumHound} are popular open-source AD security tools that identify choke points by counting how many enumerated shortest paths traverse each edge. In a research setting, Zhang et al. \citep{zhang2025rethinking} extend this notion by redefining choke points in terms of attack paths, treating them as locations through which most paths to high-value targets pass.

\textbf{Optimization-based models.} Prior works often formulate AD defence as an combinatorial optimisation problem that improves a security posture measure by modifying the attack graph, for example by removing unsafe edges. The objective is to optimise a security-related metric, such as minimising the shortest path length to privileged assets \citep{guo2022practical}, minimising the number of nodes that can reach the Domain Admin \citep{zhang2024practical}, etc. 
Most formulations impose two constraints. First, a budget constraint on the number of actions, because each change is time-consuming to implement in practice. Second, an implementability constraint on proposed actions, because changes must be verified by the system admin to avoid unnecessary disruption of critical business operations.

One common way to enforce implementability is to assume it is known in advance. This leads to a non-adaptive model that assumes a set of removable edges and solves a single pass optimisation to produce a global hardening plan. Within this class, \citep{guo2022practical,guo2023scalable,zhang2023oracle} formulate defence as shortest path interdiction, which find a set of $B$ number of edges so as to maximise the length of the shortest paths from low privilege sources to DA. Because AD graphs can contain hundreds of thousands of nodes and millions of edges, they develop fixed parameter and linear programming methods that exploit structural features such as small treewidth and low budgets to scale to large graphs. To further improve scalability within this line, Zhang et al. \citep{zhang2023oracle} introduce an oracle-based algorithm that can scale on practical-sized AD graphs. In another defence formulation, Zhang et al. \citep{zhang2024practical} frame defence as minimising Domain Admin reachability by removing edges to minimize the number of low privilege nodes that can reach Domain Admin nodes. They further introduce an anytime algorithm that returns a usable solution at any time. Another line of work models defence as a Stackelberg game between attacker and defender on a stochastic graph \citep{goel2022defending,goel2023evolving}. In these work, each edge has a failure probability, the attacker adapts by trying alternate unexplored edges when one fails, and the defender removes edges to minimise the probability that the attacker reaches Domain Admins. In these papers, the attacker policy is learned with reinforcement learning, and the corresponding defence is constructed with an evolutionary diversity optimisation approach. While the works above optimise to select an actionable set of edges to remove, Zheng et al. \citep{zheng2011active} focus on learning implementability of actions. They use active learning to predict, for each edge, whether a proposed removal would be accepted, and the resulting labels can help the nonadaptive optimisation models.

Nonadaptive models are often impractical in real security pipelines. They require the implementability of every edge to be known in advance, yet in practice this would demand that a system administrator verify each edge by hand. To address this limitation, an adaptive approach is introduced that proceeds iteratively and takes system administrator decisions into account. Within this setting, the seminal study by Dunagan et al. \citep{dunagan2009heat} introduced Heatray. In their formulation each edge has a latent implementation cost that reflects the difficulty of removing it, and administrators are assumed to prefer lower cost changes. At each round it solves a directed sparsest cut that minimizes total edge cost, presents the suggested cut to the administrator for a keep or remove decision, and uses those decisions to train a support vector machine that updates the edge costs for the next round. 
Guo et al. \citep{guo2024limited} introduce an adaptive defence model called the Limited Query Graph Connectivity Test. A proxy algorithm proposes one edge at a time to the system administrator who will decide whether to remove or retain it. Given a low privilege source $s$ and a high privilege target $t$, the model aims to find a cut that separates $s-t$. If the system admin retains so many edges that no feasible cut remains, the procedure instead returns a s-t path. The design goal is a optimal query policy for the proxy which minimises the expected number of queries. Finally, the summary of the related work in AD hardening is provided in Table \ref{tab:hardening-comparison}

\clearpage
\begin{landscape}
\begin{table*}[p]
\centering
\caption{Summary of AD hardening tools and researches.}
\label{tab:hardening-comparison}
\smallskip
\setlength{\tabcolsep}{3pt}
\renewcommand{\arraystretch}{1.04}

\resizebox{0.79\linewidth}{!}{%
\begin{tabular}{L{4.7cm}|C{1.6cm}C{1.6cm}C{1.6cm}C{1.6cm}C{1.8cm}|L{13cm}}
\hline
\textbf{Study/Tool} & \textbf{Optimis-ation} & \textbf{Adaptive} & \textbf{Dynamic} & \textbf{Defence Type} & \textbf{Hardness} & \textbf{Model Summary} \\
\hline
BloodHound Enterprise \citep{bloodhound-enterprise} & \xmark & \xmark & \xmark & edge & -- &
Give each edge a "chokepoint" score by how many nodes can reach Tier~0 via that edge. Afterward provide the system admin with a ranked list of edges for removal. \\
\hline
GoodHound \citep{GoodHound}, PlumHound \citep{PlumHound} & \xmark & \xmark & \xmark & edge & -- &
Score each edge by how many shortest paths to Tier~0 traverse it. Afterward provide the system admin with a ranked list of edges for removal.  \\
\hline
\citep{zhang2025rethinking} & \xmark & \xmark & \xmark & edge & -- &
Score each edge by the number of nodes that can reach Tier Zero through it. Afterward provide the system admin with a ranked list of edges for removal. \\
\hline
\citep{dunagan2009heat} & \cmark\textsuperscript{1}  & \cmark & \xmark & edge & -- &
Help the system admin clean up misconfigured edges by repeatedly solving a cost weighted directed sparsest cut to propose edges for removal, while an SVM learns and updates per edge removal costs. \\
\hline
\citep{zheng2011active} & \xmark & \cmark & \xmark & edge & -- &
Use active learning to predict how likely the system admin is to remove each edge when proposed (predict edge cost). \\
\hline
\citep{guo2022practical},\citep{guo2023scalable},\citep{zhang2023oracle} & \cmark & \xmark & \xmark & edge & $W[1]$-hard &
Maximise shortest-path length to Tier Zero nodes by removing edges (shortest-path interdiction). \\
\hline
\citep{zhang2024practical} & \cmark & \xmark & \xmark & edge & NP-hard &
Minimise the number of low-privilege nodes that can reach Tier Zero nodes by removing edges. \\
\hline
\citep{goel2022defending}, \citep{goel2023evolving} & \cmark & \xmark & \xmark & edge & $\#\mathcal{P}$-hard &
Model defence as a Stackelberg game between attacker and defender on a stochastic graph. The attacker alters strategy when an fail to exploit an edge fails, the defender removes edges to minimize the chance of reaching DA. \\
\hline
\citep{goel2024optimizing} & \cmark & \xmark & \cmark\textsuperscript{2} & edge & $\#\mathcal{P}$-hard &
As above \citep{goel2022defending}, \citep{goel2023evolving}, but extended to time-varying graphs. \\
\hline
\citep{guo2024limited} & \cmark & \cmark & \xmark & edge & $\#\mathcal{P}$-hard &
Adaptive querying to determine whether $s$--$t$ can be separated in the attack graph. The policy minimizes the expected number of queries. \\
\hline
\rowcolor{rowgray} 
\textbf{Chapter~\ref{chapter:paper1}} & \cmark & \xmark & \cmark & node & $W[1]$-hard / NP-hard &
Stackelberg honeypot placement to interdict attack paths (static and time-varying graph). Jointly minimizing unmonitored shortest-path routes and the number of low-privilege nodes reaching Tier 0. \\
\hline
\rowcolor{rowgray} 
\textbf{Chapter~\ref{chapter:paper2}} & \cmark & \xmark & \cmark & node & NP-hard &
Stackelberg honeypot placement on temporal attack graph. Maximize cyber response time and disconnect $s$--$t$. \\
\hline
\rowcolor{rowgray} 
\textbf{Chapter~\ref{chapter:paper3}} & \cmark & \cmark & \xmark & edge & $\#\mathcal{P}$-hard &
Query the system admin with attack paths selected adaptively for removal. The policy aims to minimize expected queries and disconnect $s-t$. \\
\hline
\rowcolor{rowgray} 
\textbf{Chapter~\ref{chapter:paper4}} & \cmark & \cmark & \xmark & edge & $\#\mathcal{P}$-hard &
Adaptively query the attack graph to decide whether $s$--$t$ can be separated. Edge features guide the system admin’s decisions, and the policy minimizes the expected number of queries. \\
\hline
\end{tabular}}
\vspace{2pt}

\noindent\footnotesize
\begin{minipage}{0.78\linewidth}
\raggedright\fontsize{6}{7}\selectfont
\textsuperscript{1} This model is not formulated as an overall optimization problem. It uses a sparsest cut to guide the proposed edge set\\
\textsuperscript{2} Chapters \ref{chapter:paper1} and \ref{chapter:paper2} had already been published at the time of this paper \citep{goel2024optimizing}’s publication. So we are the first to consider time-varying AD attack graph
\end{minipage}
\end{table*}
\end{landscape}
\clearpage

%% file: Chapter3/Chapter3.tex
% Chapter Template

\chapter{Effective Honeypot Placement in Static and Dynamic Active Directory Networks} % Main chapter title
\label{chapter:paper1} % Change X to a consecutive number; for referencing this chapter elsewhere, use \ref{ChapterX}

%----------------------------------------------------------------------------------------
%	SECTION Background
%----------------------------------------------------------------------------------------

In this chapter, we model a Stackelberg game between an attacker and a defender on large, dynamic Active Directory (AD) attack graphs. The defender deploys honeypots to prevent the attacker from reaching high-value targets. We consider two attacker types: a simple attacker who cannot observe honeypots and a competent attacker who can. We formulate this as a bi-objective optimization problem where we propose a mixed-integer programming (MIP) formulation for defender problem. Unlike prior work only focus on small and static attack graphs, AD graphs typically contain hundreds of thousands of nodes and edges and constantly change over time. We observed that the optimal blocking plan for static graphs performs poorly in dynamic graphs. To address this, we redesign the formulation to combine $m$ MIP instances which we called $dyMIP(m)$ to produce a near-optimal blocking plan over time. To scale to many dynamic instances, we propose a cluster-based method that selects the m most representative graph snapshots, and run dyMIP(m) on them to compute defence strategies. This is motivated by the observation that a single defence strategy remains effective across neighbouring snapshots, so solving the MIP only on representative instances is sufficient. Furthermore, we establish a lower bound on the optimal dynamic blocking strategy and show that dyMIP(m) achieves results close to optimal across a range of AD graphs under realistic conditions.

\section{Introduction} 

In this chapter, we study a new active-defense model for securing Active Directory (AD) using honeypots. In our model, the defender places honeypots on nodes of the AD attack graph, and an attack campaign is considered unsuccessful if the attacker enters or interacts with any honeypot. In Active Directory, an attacker can enumerate vulnerabilities using SharpHound \citep{SharpHound} and use the accompanying tool BloodHound~\citep{BloodHound} to generate the attack graph. By default, BloodHound suggests the shortest attack path to the highest-privilege node, which in this chapter we assume is Domain Admin (DA).

Active defense with honeypots is not new. However, placing honeypots in an AD attack graph brings two challenges that have not been studied thus far. The first challenge is the scale of the attack graph. An AD attack graph can include thousands of nodes and hundreds of thousands of edges, and even a small or medium-sized organization can produce millions of attack paths. The second challenge is how to derive a decoy strategy that remains effective while the graph changes over time. One of the major sources of changes in the AD graphs is users' activities such as logging on and off a workstation. In an AD attack graph, these dynamics are represented by a special type of edges called HasSession edges \citep{BloodHound}. HasSession edges are added to the graph when a user signs on to a computer and has their credential stored in the computer memory. These edges will stay online until being removed from the graph when the user signs off from the computer after a period of time.

Our solution is designed to be effective against attackers with different tactics and techniques. In our proposed model, we consider two types of attackers: a \textbf{simple attacker}, who is unable to detect any honeypots, and a \textbf{competent attacker}, who employs sophisticated detection tools to obtain complete visibility of all honeypots. In practice, it is difficult to ascertain the level of competence of an attacker. However, we can assume the probability of our network being attacked by one of these attackers based on the observation from previous attacks and/or analysis of the attacker's tactics using databases such as MITRE ATT\&CK~\citep{MITRE}.

Our first proposal is to use a mixed-integer programming (MIP) method to solve the honeypot allocation problem for mixed-type attackers. For our problem, we observed that the Linear Programming approach can be very efficient in solving AD graphs due to the exploitability of MIP on the unique tree-like structure of AD graphs~\citep{guo2022practical}. 

The above solution, however, assumes a static graph and performs poorly when the graph dynamically changes. Our approach to addressing the graph dynamic problem is to simulate all possible variations of the HasSession edges and design a placement solution that works best for all of these graph instances. This solution would only work for very small AD systems with a small number of HasSession edges as the number of graph instances grows exponentially with the number of these edges. Instead, we expand the MIP formulation to include $m$ samples of the AD graph. We call it mixed-integer programming formulation for dynamic graphs (dyMIP(m)). As the run-time of dyMIP($m$) grows for $m \gg 1$, we propose two heuristics: the \textit{voting-based heuristic} and the \textit{clustering-based heuristic}. These allow dyMIP($m$) to approximate the optimal strategy very quickly in these cases. While \textit{voting-based heuristic} is a very natural heuristic based on the idea of running smaller batches of samples on dyMIP(m), hence reducing the $m$ value. \textit{Clustering-based heuristic} approach hypothesizes that there is a number of good-quality samples in our graph sample that could be used to improve the solution for our dyMIP(m) algorithm. The novel idea of \textit{clustering-based heuristic} is to solve combinatorial in each individual (static) graph to derive features and then apply a clustering algorithm to improve the quality of the sample for the general problem on a dynamic graph. The key contributions of this chapter are:

\begin{itemize}
\item We formulate a general honeypot placement problem in AD graphs considering 2 types of attackers. We show that the static graph version of our problem is already NP-hard when attackers are unable to detect any honeypots and $W[1]$-hard when attackers have complete visibility of all honeypots.
% \item We proposed a studies on the problem of jointly defense against these two types of attackers using honeypot;
\item We provide a mixed integer program formulation and solutions for placing honeypots on very large AD graphs. Our solution could solve the problem on graphs of sizes that none of the previous approaches could.

\item We study for the first time the honeypot placement problem with dynamic graphs. All of the existing approaches focused only on static graphs. We develop two heuristics based on voting and clustering to solve the dynamic graph problems.  Our solutions, especially the clustering-based method, which derives defense based on "representative" snapshots of the dynamic graph (cluster centroids), run very fast on large graphs and provide close to optimal results. Our experiment will consider the algorithm on a range of AD graphs under realistic conditions.
\end{itemize}

\section{Related work}

%There has been some existing work on the AD attack graph. 
Guo et al. \citep{guo2022practical} study the shortest path edge interdiction problem on AD attack graphs. The authors propose a dynamic programming-based tree decomposition method to derive an optimal solution for their edge-blocking problem in directed acyclic AD graphs. They then developed a Graph Neural Network as a heuristic to enhance the scalability of their solution. Guo et al. \citep{guo2022scalable} extend their previous edge-blocking problem by proposing several scalable algorithms including Reinforcement Learning for tree decomposition and mixed-integer programming (MIP) algorithms. The authors later extend the tree decomposition algorithm to a general AD graph with cycles and derive the mixed strategy solution for their problem via MIP. In addition, Zhang et al. \citep{zhang2023oracle} showed the scalability of a double oracle-based algorithm on the same problem. Goel et al. \citep{goel2022defending} assumed the detection and failure probability of attackers on every edge of the AD graph. They deploy a neural network to approximate the attacker's strategy and apply an evolutionary diversity optimization (EDO) strategy to solve the defender problem. Goel et al. \citep{goel2023evolving} further improved their defense algorithm (EDO) by improving attacker strategy (fitness function) via Reinforcement Learning. These works, however, only work on static graphs and are not applicable to real AD networks, which are naturally dynamic. One of our key observations in this chapter is that optimal strategies on a static graph perform poorly when applied to dynamic networks. Our main difference with these existing works on AD graphs is that our algorithms are designed for dynamic graphs and hence generate better defenses for real AD systems. 

Honeypots have been widely studied in the literature. Stackelberg games are typically used to formulate the honeypot allocation problem in previous works~\citep{durkota2015optimal,durkota2015approximate,durkota2019hardening,milani2020harnessing}. The models in these studies are very similar to ours: they use attack graphs to model the attacker's actions and the defender can stop the attacker by deploying honeypots in the network. The defender’s task is to minimize the damage caused by the attacker with the minimum cost of honeypot deployment. Also, Milani et al. \citep{milani2020harnessing} study games where the defender can either deploy honey-edge or remove an edge from a network to lure the attacker to honeypots. Although these studies can optimally place honeypots in the network, they only work on very small graphs and suffer scalability issues. Indeed, in all of these works, they only consider networks of a hundred nodes in their experiments. Real AD graphs are several orders of magnitude larger than these graphs. Our solutions in this chapter are the first solutions that can effectively handle honeypot placement at that scale. 

More recently, Shinde et al. \citep{shinde2021cyber} model the cyber deception game as an interactive partially observable Markov decision process.  The authors of~\citep{shinde2021cyber} focus on luring the attacker to fake data by allocating deception files on network hosts - that is when the attacker already gets in the honeypot. Our work focuses on finding the optimal honeypot allocation strategy at the network level. Lukas et al. \citep{lukas2021deep} study the honeypot placement in AD graphs. The author proposed to use deep learning variational autoencoder model to generate and place honey users in the network. Their focus is on the topological structure of the fake AD graphs, not the impact of their solution on the attacker's success rate - a real measure of effectiveness for these solutions. Again, these recent solutions only consider honeypot allocation in small and static networks. In this chapter, we aim to provide practical solutions for allocating honeypots on large and dynamic AD attack graphs against realistic attackers.

\begin{figure}[htbp]
\centering
  \includegraphics[width=0.7\linewidth]{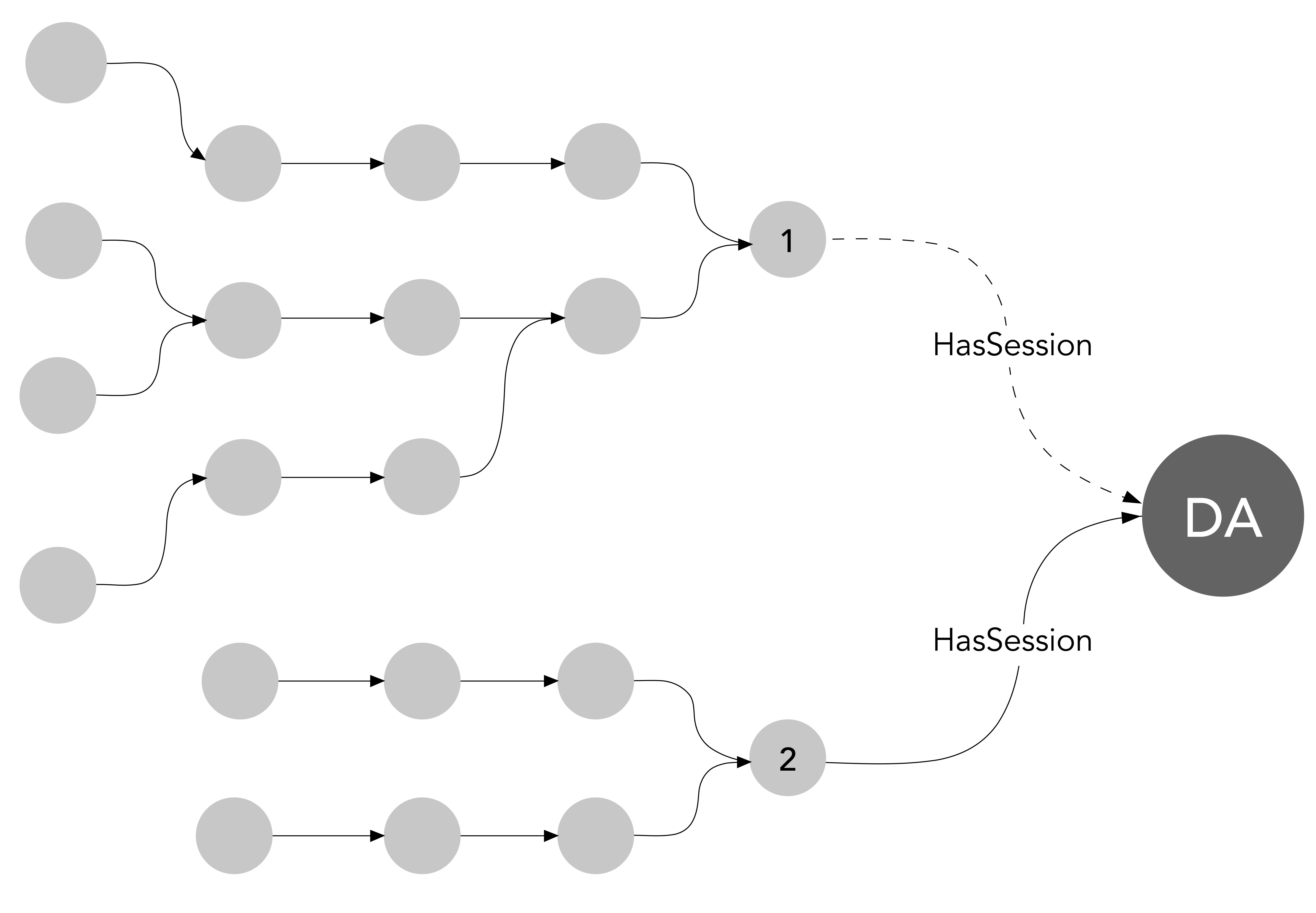}
  \caption{An attack graph with dynamic HasSession edges. The DA uses the edge 2 $\rightarrow$ DA (straight) regularly but only use 1 $\rightarrow$ DA (dash) when needed}
 \label{fig:dygraph}
\end{figure}

Fig. \ref{fig:dygraph} shows an attack graph of a typical dynamic AD environment. There are two HasSession edges to the DA. A static snapshot of the graph typically captures only the branch that has 2 $\rightarrow$ DA and completely misses the branch of the graph that contains 1 $\rightarrow$ DA. An attacker would wait in a system for a long time to wait for edge 1 $\rightarrow$ DA to appear and can bypass static defenses built on only the 2 $\rightarrow$ DA branch. None of the previous works on AD attack graphs ~\citep{dunagan2009heat, goel2023evolving, guo2022practical,guo2022scalable,goel2022defending, zhang2023oracle} considered dynamic AD graphs and would fail to defend against this type of attacks. Our solution can provide near-optimal honeypot allocation in dynamic networks with hundreds of thousands of nodes and millions of edges.

\section{Problem Formulation}
\label{sec:model}
\subsection{Honeypot placement model for static graphs}
We consider a Stackelberg game between an attacker and a defender on a directed AD attack graph $G = (V, E)$. There is a set $S\subseteq V$ of entry nodes, and the attacker can enter the graph via one of $s = |S|$ entry nodes. The attacker's ultimate goal is to reach the Domain Admin (DA). The attacker performs attacks by choosing paths with pure actions. (i.e.,  an attacker can run BloodHound \citep{BloodHound} to generate the paths to DA). From an entry node, there are many possible attack paths to DA, and the attacker can choose to attack the DA. Also, there is a fixed set of \textit{blockable} nodes, denoted by $N_b \subseteq V$. In this work, we use the term ``blocking'' to mean allocating a honeypot to a node (rather than removing the node from the graph); thus, $N_b$ represents the set of candidate nodes available for honeypot placement. The defender's task is to pick $b$ nodes in $N_b$ to allocate honeypots in order to intercept as many of the attack paths as possible. A honeypot will stop the attacker from going further into the network, that is, if the attacker stumbles into a honeypot, the attack campaign fails. We study two types of attacker models, the \textbf{simple attacker}  and \textbf{competent attacker}.  We introduce a defending problem denoted as $HATK(\varphi)$, where $\varphi$ ($0\leq \varphi \leq 1$) and represents the fraction of competent attacker agents ($1-\varphi$) is the probability of simple attacker agents).  When $\varphi = 1$, we focus on defending against pure competent attackers, whereas $\varphi = 0$ represents defending against pure simple attackers.

In the simple attacker model, the attacker is assumed to be unable to differentiate between a normal node and a honeypot. The simple attacker will perform the shortest attack path. Therefore, the attacker's optimal strategy is to randomly select one of the shortest paths to the DA. The defender's goal is to reduce the number of ``clean paths'', which are paths that can be used by the attacker to reach the target without any intervention. The defender achieves this by applying a blocking plan $B$ that reduces the number of clean paths. Let's denote by $y_{i}$ the total number of clean paths from entry node $i$ to DA and $y^B_{i}$ as the remaining clean paths after the defender applies a blocking plan $B$. The attacker's success probability is $\frac{y^B_{i}}{y_i}$, and the expected success probability can be obtained by averaging over all entry nodes. Our optimization problem for allocating honeypot against the \textbf{simple attacker} is formally defined as:

\begin{equation}
    \min_{B \subset N_b, |B| \leq b} \sum_{i=1}^{s} \frac{y^B_{i}}{y_i}
    \label{eq:obj_simple}
\end{equation}
% $\min\limits_{B \subset N_b, |B| \leq b} \sum\limits_{i=1}^{s} \frac{y^B_{i}}{y_i}$

In the competent attacker model, the attacker is capable of detecting every honeypot in the attack graph. Thus, when the attacker encounters a honeypot on its way to the target destination, it alternates to other paths. To prevent a competent attacker from reaching the DA, the defender must place honeypots to completely disconnect an entry node from the DA. The defender's goal is to maximize the number of entry nodes that are disconnected from the DA using a limited budget. Let's denote $R_{i}$ as the reachability of node $i$ to DA and given that allocating honeypot can change the reachability, we define $R^{B}_{i}$ as the reachability of node i after applying blocking plan $B$. Our optimization problem for allocating honeypot against the \textbf{competent attacker} is formally defined as: 

\begin{equation}
    \min_{B \subset N_b, |B| \leq b} \sum_{i=1}^{s} R^{B}_{i}
    \label{eq:obj_competent}
\end{equation}

By combining the objectives of both attacker models (Eqs. \ref{eq:obj_simple} and \ref{eq:obj_competent}), the problem $HATK(\varphi$) can be interpreted as: 
\begin{equation}
\min\limits_{B \subset N_b, |B| \leq b} \sum\limits_{i=1}^{s} (\varphi R^{B}_{i} + (1-\varphi) \frac{y^B_{i}}{y_i})
\end{equation}
To simplify notation, we introduce the following denotation: $z_{\varphi,i} = \varphi R^{B}{i} + (1-\varphi) \frac{y^B{i}}{y_i}$, where $z_{\varphi,i}$ represents the normalized attacker success rate of reaching the Domain Admin (DA) from node $i$ given a specific $\varphi$ value.

The two theorems below establish the computational hardness of the honeypot placement problems presented in this chapter. 

\begin{theorem} 
The static version of the optimal honeypot placement problem against \textbf{Simple Attacker} is NP-hard when $L\ge 7$ where $L$ be the maximum shortest path length from any entry node to DA.
\end{theorem}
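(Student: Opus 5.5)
I would prove NP-hardness by a polynomial reduction from \textsc{3-SAT}, using the decision version of \eqref{eq:obj_simple}: given $G$, $S$, $N_b$, budget $b$ and threshold $\tau$, is there $B\subseteq N_b$ with $|B|\le b$ and $\sum_{i}y^B_i/y_i\le\tau$? We read $y_i$ as the number of shortest entry-to-DA paths in the unblocked graph, and $y^B_i$ as the number of those paths that avoid $B$. Membership in NP is routine, since the counts are computable by BFS-layer dynamic programming. The construction needs an OR-type penalty: a clause entry should be neutralised exactly when \emph{some} true literal is blocked. Parallel shortest paths do not give this. If an entry has $d$ parallel shortest paths through distinct blockable nodes, its fraction is just (number unblocked)$/d$, so the objective is linear in $B$ and would be easy. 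So I would force \emph{serial} composition. Every shortest path from a clause entry $s_C$ will traverse the gadgets of its three literals one after another, so blocking any one of them kills all of $s_C$'s shortest paths.

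Concretely, I would build a layered DAG of depth exactly $L=7$. A clause $C=(\ell_1\vee\ell_2\vee\ell_3)$ gets entry $s_C$ at layer $0$ and a private non-blockable connector at each odd layer. The literal occurrence $\ell_j$ gets a blockable node $\ell_j^{(C)}$ at layer $2j$, and a connector at layer $7$ links to DA. This gives the pattern $s_C\to c\to\ell_1\to c\to\ell_2\to c\to\ell_3\to c'\to$ DA, or with a slight shift $0,\dots,7$; the count $3$ literal layers $+\,3$ connectors $+\,1$ is where $7$ comes from. All non-literal nodes lie outside $N_b$. Because each occurrence has its own copy $\ell_j^{(C)}$, no cross-clause shortcut can create a shorter path; every clause entry has exactly one shortest path, of length $L$. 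Its fraction is therefore $0$ or $1$, and it is $0$ iff some literal copy on it is blocked.

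Consistency between copies is the delicate part. I would give each variable $x$ its own entry gadgets at the same depth $L$, replicated $M$ times with $M$ polynomially larger than the number of clauses. These gadgets would make blocking \emph{all} copies of $x$, or all copies of $\bar x$, the only budget-efficient way to zero out the variable entries. For example, there is one variable entry per pair of copies of the form (positive copy, negative copy), with its unique shortest path running through both copies in series. It is killed iff one of the pair is blocked. The budget is $b=\sum_x \max(\#\text{occ}(x),\#\text{occ}(\bar x))$, after padding with dummy occurrences so the two counts are equal. With threshold $\tau=0$, the claim is that a zero-objective blocking exists iff the formula is satisfiable. If the formula is satisfiable, block all copies of the false literal... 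I would double-check which polarity must be blocked. Since blocking kills the path, a clause is ``satisfied'' when a blocked literal lies on it, so I block all copies of the \emph{true} literal of each variable. Every variable pair and every satisfied clause is then killed within budget. Conversely, the variable pairs together with the tight budget force exactly one polarity per variable to be fully blocked; I would argue this by a counting/matching argument over the pairs. That reads off an assignment satisfying every clause.

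The main obstacle is the interference between gadgets. Sharing literal copies between variable-pair paths and clause paths creates extra edges, and these may introduce new shortest paths or shortcuts that change $y_i$ for some entry. I would first try to route all edges strictly between consecutive layers, with every gadget occupying all layers $0,\dots,7$, so every entry-to-DA path has length exactly $7$. I would also give each entry private connectors, so that the only branching occurs at blockable literal copies. Then I would verify, by inspecting layer adjacencies, that each entry's set of length-$7$ paths is exactly the intended one. If the budget-tightness step fails, my fallback is to replace $\tau=0$ by a weighted threshold, duplicating variable entries $M$ times, so that any inconsistent blocking costs more than all clause penalties combined.
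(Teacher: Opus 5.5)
Your reduction cannot work as set up, because the threshold $\tau=0$ turns the problem into a polynomial cut problem. Let $d(v)$ be the distance from $v$ to DA. Every edge $(u,v)$ satisfies $d(u)\le d(v)+1$, so an entry-to-DA path is shortest iff every edge on it is tight, i.e.\ $d(u)=d(v)+1$. Hence the shortest paths of \emph{all} entries are exactly the entry-to-DA paths in one DAG $T$ of tight edges; this is why the paper's MIP can count paths on a single all-shortest-path graph via $y_i=\sum_{j\in n(i)}y_j$. Consequently $\sum_i y_i^B/y_i=0$ holds iff $B\subseteq N_b$ is an $S$--DA vertex separator in $T$. Whether such a separator with $|B|\le b$ exists is decided by max-flow, with unit capacity on nodes of $N_b$ and infinite capacity elsewhere. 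In your layered design this is even more direct, since every entry-to-DA path has length $7$. So ``a zero-objective blocking exists iff the formula is satisfiable'' cannot hold unless $\mathrm{P}=\mathrm{NP}$, however the layers are wired.

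You can also see locally where your gadgets break. Suppose clause entry $s_C$ and a variable entry both have a unique shortest path through a shared copy $v$. Then their suffixes from $v$ must coincide, since splicing the prefix of one with the other's suffix would give a second shortest path. So a variable path that visits $\ell^{(C)}$ and then $\bar\ell^{(C')}$ must follow $C$'s path after $\ell^{(C)}$, and that path never reaches $\bar\ell^{(C')}$; the other order fails symmetrically. The interference you flag is therefore not a wiring nuisance. It is the exchange property that makes shortest-path families cut-like.

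The weighted fallback does not escape this. As long as entries have unique shortest paths, which your $0/1$ fractions rely on, the part of $T$ reachable from $S$ is an in-tree toward DA, and any threshold version is solved by a tree DP. The missing idea is that hardness must come from the \emph{counting} in $y_i^B/y_i$. Entries need many shortest paths whose surviving fraction depends multiplicatively on $B$, rather than linearly (pure parallel) or all-or-nothing (pure series).

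The paper gets this by reducing from Vertex Cover. Each vertex gets a length-$7$ chain with blockable nodes $h_{i1},h_{i3},h_{i5}$, whose segments carry $M$ parallel edges alongside single-edge bypasses. Adjacent vertices get cross-paths between their gadgets, and the budget is $n_{vc}+x$. Blocking $h_{i1},h_{i5}$ for cover vertices and $h_{i3}$ otherwise keeps the clean-path count at $O(dM^2)$ per gadget. Any other pattern leaves on the order of $M^4$ clean paths, so the decision threshold is a positive count separating these orders of magnitude, not zero.
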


\begin{proof}

    We present a reduction from the Vertex Cover problem. Let $G_{vc} = (V_{vc}, E_{vc})$ be an instance of the Vertex Cover problem, where $G_{vc}$ is an undirected graph with $n_{vc} = |V_{vc}|$. We define $L$ as the maximum shortest path length from the entry nodes to the DA node.
    We construct an attack graph as follows. We create $s$ entry nodes and a single destination node (DA). For every node $i \in V_{vc}$, we construct a gadget containing 6 non-blockable nodes: $i_1, i_2, i_4, i_5, i_6$ and 3 blockable nodes, which we refer to as $h_{i1}$, $h_{i3}$, and $h_{i5}$. Let $H$ denote the set of blocked nodes (honeypots).
    We connect these nodes using a set of multiple edges (each consisting of $M$ parallel edges, where $M$ is assumed to be a large value) and single edges. The multiple edges connect the pairs: $(s_i, h_{i1})$, $(h_{i1}, i_2)$, $(i_2, h_{i3})$, $(h_{i3}, i_4)$, $(i_4, h_{i5})$, and $(h_{i5}, i_6)$. The single edges connect the pairs: $(s_i, i_1)$, $(i_1, i_2)$, $(i_4, i_5)$, and $(i_5, i_6)$. Additionally, if node $i$ and node $j$ are connected in the original graph $G_{vc}$, we create two cross-paths in our gadget: from $i_2$ to $j_4$ (via a connector $c_{j3}$) and from $j_3$ to $i_4$ (via a connector $c_{i3}$).

    Observe that within a gadget, if we choose to block one node (e.g., $h_{i1}$), we guarantee that at least $M$ paths are removed between $s_1$ and $i_2$. We refer to any path that is not blocked by a honeypot as a ``clean path.'' Recall that in our model, to maximise the probability of the attacker triggering a honeypot, we aim to minimize the number of available clean paths to the DA by selecting a specific set of nodes for honeypot allocation.

    Let $V'_{vc} \subseteq V_{vc}$ be a vertex cover, and let $x = |V'_{vc}|$ be the size of the vertex cover. Given a defensive budget $b = n_{vc} + x$, the optimal strategy corresponds to the vertex cover set. An optimal blocking plan is constructed as follows: if $i \in V'_{vc}$, we add two blockable nodes, $h_{i1}$ (left) and $h_{i5}$ (right), to our blocking set $H$; if $i \in V_{vc} \backslash V'_{vc}$, we add one blockable node, $h_{i3}$ (middle), to the blocking set $H$.

    % Let $V'_{vc} \subseteq V_{vc}$ be a vertex cover and $x = |V'_{vc}|$ is the vertex cover size. Giving the defensive budget $b = n_{vc} + x$, the best way to spend defensive budget is to locate the vertex cover of size $x$. Optimal blocking plan could be obtained by the following \textbf{implication}: If $i \in V'_{vc}$ we add 2 blockable node h-type1 (left) and h-type5 (right) to our blocking set $H$ and for each node $u \in V_{vc} \backslash V'_{vc}$, we add 1 node h-type3 (middle) to blocking set $H$. 

    For every gadget corresponding to node $i \in V_{vc}$, at least 1 unit of the budget must be spent; otherwise, there will be at least $(M^2+1)^3$ paths to the DA via this gadget. Furthermore, if exactly one node of the gadget is in $H$, it must be the middle node ($h_{i3}$); otherwise, there will be at least $(M^2+1)^2$ paths via this gadget. Conversely, if two or more honeypots are placed in the gadget, they must be the left and right nodes ($h_{i1}$ and $h_{i5}$); this guarantees that there are at most $Z$ paths available via this gadget. Finally, following this construction, assume a gadget $i$ corresponds to a node in the vertex cover ($i \in V'_{vc}$) and let $d$ be the degree of node $i$ in $G_{vc}$.

    In this optimal scheme, the number of clean paths is at most $(M^2+1)(2d+1)$. We aim to avoid the sub-optimal case of $\mathcal{O}((M^2+1)^2)$ or $\mathcal{O}(M^4)$. Thus, we have:

    \begin{eqnarray}\label{eq:s}
    \mathcal{O}(M^4) > \mathcal{O}((M^2+1)(2d+1)) = \mathcal{O}(2dM^2) \nonumber
    \end{eqnarray}
    \begin{eqnarray}\label{eq:ss}
    \Rightarrow \mathcal{O}(M) > \mathcal{O}(\sqrt{2d}) \nonumber
    \end{eqnarray}

    In this proof, we assume that multiple edges exist in our graph. A similar proof for the case of a simple graph (without multiple edges) can be obtained by substituting every multiple edge (except edges connected to the DA) with a structure consisting of an intermediate node and two edges.

    \begin{figure}[h]
    \centering
      \includegraphics[width=0.7\linewidth]{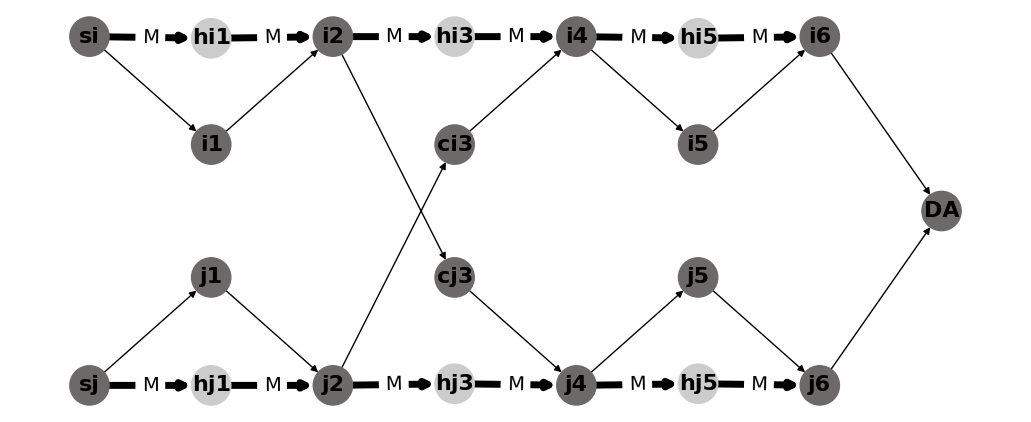}
      \caption{Proof gadget for Theorem 1. Light nodes are blockable nodes. Thick edge is multiple edges ($M$ of them) and thin edge are single edge.}
      \label{fig:NPgadget}
    \end{figure}
\end{proof}

\begin{theorem} The static version of the optimal honeypot placement problem against \textbf{Competent Attacker} is $W[1]$-hard with respect to budget b.
\end{theorem}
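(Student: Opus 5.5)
We prove $W[1]$-hardness by a parameterized reduction from a $W[1]$-complete problem whose parameter maps linearly to the budget $b$. Since the competent-attacker objective (Eq.~\ref{eq:obj_competent}) counts entry nodes that can still reach DA, placing $b$ honeypots is a budgeted vertex-interdiction problem: choose $b$ blockable nodes so that as many entry nodes as possible are separated from DA. The natural source is \textsc{Multicoloured Clique} (or plain $k$-\textsc{Clique}), parameterised by $k$. We will build an instance with budget $b=f(k)$ such that a target number of entry nodes can be disconnected if and only if $G_{vc}$ (now the clique instance) has a $k$-clique. A simpler alternative is \textsc{Partial Vertex Cover} / \textsc{Maximum $k$-Coverage}, which is $W[1]$-hard in $k$: disconnect as many entry nodes as possible with $k$ honeypots. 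We will try this first.

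\textbf{Construction.} Take a $k$-clique instance $H=(V_H,E_H)$ with $k$. For every vertex $v\in V_H$ create a blockable node $h_v$ with an edge $h_v\to\mathrm{DA}$. For every edge $\{u,w\}\in E_H$ create a non-blockable entry node $s_{uw}$ with the two edges $s_{uw}\to h_u$ and $s_{uw}\to h_w$. No other paths to DA exist, and all other nodes are non-blockable.

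Then an entry node $s_{uw}$ is disconnected from DA exactly when both $h_u$ and $h_w$ are honeypots. The defender with budget $b=k$ therefore disconnects as many entry nodes as there are edges of $H$ induced by the chosen $k$ vertices. This is at least $\binom{k}{2}$ if and only if $H$ contains a $k$-clique. So the decision version "can $\sum_i R^B_i\le |E_H|-\binom{k}{2}$ with $|B|\le k$?" is equivalent to $k$-\textsc{Clique}. The reduction is polynomial and sets $b=k$, so it is a parameterized reduction and $W[1]$-hardness follows.

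We must also check that the construction is a legitimate attack graph in the model. It has a single DA, an entry set $S$, and blockable set $N_b=\{h_v\}$. Honeypots are only on blockable nodes, and entry nodes are non-blockable, which prevents the trivial strategy of blocking entries. If one prefers entry nodes also to be blockable, we replace each $s_{uw}$ by $M>k$ parallel entry copies, or equivalently give it weight, so that blocking entries is never competitive. Verifying that no shortcut through this gadget breaks the equivalence is the main (mild) obstacle.

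Finally, the gadget is acyclic with path length $2$, consistent with tree-like AD graphs. This strengthens the statement: hardness holds even for very shallow graphs.
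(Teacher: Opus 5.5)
Your reduction is correct and is essentially the paper's: both reduce from $k$-\textsc{Clique} with one entry node per edge and one blockable node per vertex. In both, an entry node is disconnected exactly when both endpoint nodes hold honeypots, so budget $b=k$ disconnects $\binom{k}{2}$ entry nodes if and only if a $k$-clique exists. The only difference is that the paper inserts an extra unblockable node $v_{ub}$ before each blockable $v_b$ on the paths to DA, which changes nothing, and you state the decision threshold and parameter preservation more explicitly than the paper does.
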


\begin{proof}
    To prove the hardness of our problem, we use the reduction from \textit{clique} problem to our. Clique problem is know to be $W[1]-hard$ with respect to clique size \citep{fellows2009parameterized}. Let define undirected graph $G_{c} = (V_{c}, E_{c})$ is the clique instance. We construct attack graph $G$ as follow. For every edge $(i,j) \in E_{c}$, we construct a corresponding entry node $s_{i,j}$. For every node $v \in G_{c}$ we construct a pair of node $v_{ub}$ and $v_{b}$ in the attack graph instance where $v_{ub}$ is unblockable node. Next, for every entry nodes, we create 2 path $s_{i,j}\rightarrow i_{ub} \rightarrow i_{b} \rightarrow DA$ and  $s_{i,j}\rightarrow j_{ub} \rightarrow j_{b} \rightarrow DA$. If we want to block all path from an entry node $s_{i,j}$ to DA, we need to block both $i_{b}$ and $j_{b}$. Let's say we can choose up to $b$ node to block, the optimal blocking plan could be obtained by locate the clique of size $b$ in $G_{c}$. By allocating a clique, we can disconnect maximum of $b(b-1)/2$ entry nodes.
\end{proof}

\subsection{Honeypot placement model for dynamic graphs}

The above formulation only optimizes the honeypot placement for a given graph instance $G$. In real AD networks, the graph $G$ changes constantly due to users' activities. 
There is a subset of edges $H \subseteq E$ (the HasSession edges) where each edge is turned on and off randomly. Formally, let $V_{comp}$ be the set of computers and $V_{user}$ be the set of users in the network. The authentication data is represented in the tuple format $<t_s, t_e, c_i, u_i>$, where $t_s$ is the sign-in (start time), $t_e$ is the sign-off time (end time), and $c_i$ and $u_i$ are a user and a computer in $V_{comp}$ and $V_{user}$, respectively. This tuple indicates that user $u_i$ signs in to computer $c_j$ at time $t_s$ and signs off that computer at time $t_e$. 

In practice, security teams often run SharpHound \citep{SharpHound} to enumerate a snapshot of the current network state, and they schedule it at regular intervals, for example every $t$ hours, to obtain multiple snapshots. Formally, let $G_s = \{g_1, g_2, ..., g_m\}$ be the set of possible graph snapshots. Let $E$ be the set of universal edges. In snapshot $x$, the network is represented as $g_x = (V,E_x)$ where $x\in \{1, \cdots, m\}$ indexes snapshots by order and $E_x \subseteq E$ is the set of edges that are active in that snapshot. Let $T(x)$ denote the time at which snapshot $x$ is taken. For example, given an authentication record $<t_s, t_e, c_i, u_i>$ and a snapshot $x$ taken at time $T(x)$, if $t_s \leq T(x) \leq t_e$ then edge $e_i = (c_i, u_i)$ will appears in the snapshot graph $g_x$. We assume only the edge set $E_x$ changes across snapshots while the node set $V$ remains the same. This is reasonable because most unexpected change is due to edges, while node changes are typically introduced by the defender with full awareness. The dynamic version of the HATK($\varphi$) problem can be interpreted:
 
\begin{equation}
\min\limits_{B \subset N_b, |B| \leq b} \sum\limits_{g \in {G_s}} \sum\limits_{i=1}^{s}  (\varphi R^{B}_{i,g} + (1-\varphi) \frac{y^B_{i,g}}{y_{i,g}})
\end{equation}

Again, to simplify the notation, we have: $z_{\varphi,i,g} = \varphi R^{B}{i,g} + (1-\varphi) \frac{y^B{i,g}}{y_{i,g}}$. Note that we do not make any assumption on the temporal pattern of the edge sets in our solutions. We will show in the evaluation section that our algorithms are robust to dynamic graphs with realistic patterns.

\newpage
\section{Honeypot Placement Solutions}

\subsection{Solution for static AD graphs}

\textbf{Mixed-integer programming for static AD graphs}

\noindent Firstly, we formulate a nonlinear program for solving the $HATK(\varphi$):

\begin{subequations}
\begin{align}
\text{min} \displaystyle\sum\limits_{i=1}^{s} z_{\varphi,i}  \nonumber\\ && \nonumber\\
% \end{flalign*}
% \begin{align}
  y^{B}_i =& \sum_{j \in n(i)}y^{B}_{j}, & \forall i \in V \backslash N_b \label{eq:1a} \\
  y^{B}_i =& \sum_{j \in n(i)} (1-B_i) y^{B}_{j}, & \forall i \in  N_b \label{eq:1b} \\
  y^B_{i} \leq& y_{i},         & \forall i \in V \label{eq:1c} \\
  y^B_{i} \geq& 0,              & \forall i \in V \label{eq:1d} \\
  R^B_{i} \geq& R^B_{j} - B_{i}, &(i,j) \in E, i \in N_{b} \label{eq:1e}\\
  R^B_{i} \geq& R^B_{j}        , & (i,j) \in E, \nonumber\\
  &&i \in N \backslash N_{b} \label{eq:1f}\\
  \sum_{i \in V}\ B_{i} \leq& b \label{eq:1g}\\
  B_{i},R_{i} \in & \{ 0,1 \} \label{eq:1h}\\
  R_{DA}, y_{DA} =& 1 \label{eq:1i}
\end{align}
\end{subequations}

\noindent where $n(i)$ is the set of the successor node of node $i$. The design of the formulation is based on the problem of defending against each individual attacker. We formulate it as a nonlinear program that is tailored to each type of agent. In our formulation, we use $B_i$ to denote the unit of budget spent on node i, $B_i$ is binary. Formally, $B_i = 0$ if $i \not \in N_b$ and $B_i \in \{0,1\}$ if $i \in N_b$.

Constraints (\ref{eq:1a}) to (\ref{eq:1d}) present the problem of defending against the simple attacker. The design of this set of constraints concentrates on computing the number of paths from an arbitrary node to the target (DA) on an \textbf{all-shortest path} graph by summing the number of paths to DA from each of its successors. We illustrate this idea using the example in Fig. \ref{fig:exadgraph}. To calculate the number of paths to DA from 5 or $y_5$, we need to recursively calculate $y_3$, $y_2$, $y_1$ and $y_{DA}$. First, we define that $y_{DA} = 1$ (there is 1 path from DA to itself). Then each node 1 and 2 has exactly 1 direct edge to DA so $y_1 = 1$ and $y_2 = 1$. Next, node 3 has two outgoing edges to nodes 2 and 1, meaning that attackers have two possible ways to reach DA via these nodes. Therefore, $y_3 = y_1 + y_2 = 1 + 1 = 2$. Finally, since node 5 has one outgoing edge to node 3, which could lead to two possible paths to DA, we set $y_5 = y_3 = 2$. Note that our path-calculating approach will fail in the presence of cycles in the graph, as the number of paths of a node in a cycle can go to infinity. Since the simple attacker is unable to observe any honeypot and they always go for the shortest paths, it is without loss of generality to consider only \textbf{all-shortest path} graphs, which is naturally acyclic.

\begin{figure}[h]
\centering
  \includegraphics[width=0.65\linewidth]{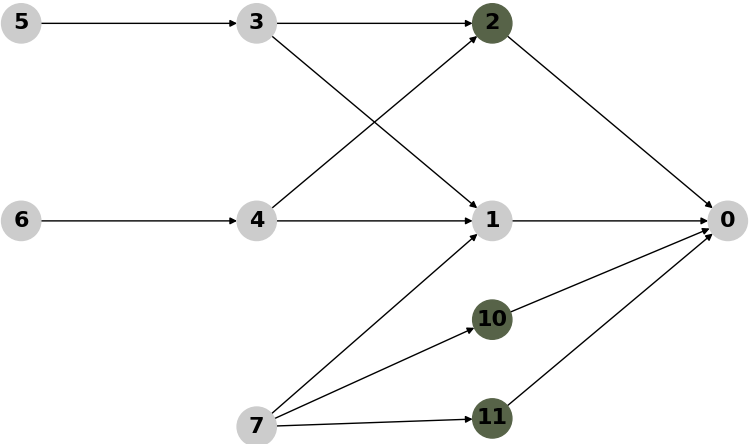}
  \caption{An example all-shortest path AD attack graph. Nodes 5, 6, 7 are entry nodes. Nodes 0 is the DA. \textbf{Bold} nodes 2, 10, 11 are not blockable nodes}
  \label{fig:exadgraph}
\end{figure}

In the case of defending against a competent attacker, we aim to reduce the attacker's success rate by disconnecting the maximum number of entry nodes from DA. Constraints (\ref{eq:1e}) and (\ref{eq:1f}) presented the problem of defending against a competent attacker. These constraints solve the problem based on the idea that if any of the successor nodes of node $i$ can reach DA, then DA is also reachable via node $i$. We assume that DA is reachable by all entry nodes by default.

Our formulation is nonlinear due to the nonlinear constraint (\ref{eq:1b}). To obtain a linear version, we follow several transformation steps. First, we will replace $y_i$ by an auxiliary variable $v_i$ by adding constraints $y^B_{i} \leq  v_{i}$, (\ref{eq:1a}) also become $v_i = \textstyle\sum_{j \in n(i)}y^{B}_{j}$. Next, we replace (\ref{eq:1b}) by 2 additional constraints, $y^B_{i} \leq y_{i}\cdot(1-B_i)$ and $y^B_{i} \geq v_{i} + y_{i}\cdot(-B_i)$. These constraints imply that if node $i$ is blocked, then $y^B_{i} = 0$; otherwise, we compute $y^B_i$ normally ($y^B_i = v_i = \textstyle\sum_{j \in n(i)}y^{B}_{j}$).

\subsection{Honeypot Placement for dynamic networks}
\label{sec:placedynamic}
% In this section, we will discuss our proposed graph sampling technique that generalise our algorithm for dynamic graph. We first refining our mixed-integer programming formulation for multiple graph problem. 

\textbf{Multi-graph Mixed-integer Programming}

% In the previous section, we formulated the mixed-integer programming for the honeypot allocation problem for static graphs. As shown later, blocking strategy for static graph will shortfall in dynamic graph scenario where HasSession edges are changing their states.
\noindent For dynamic graphs, we require a different approach as we need to design a blocking plan that performs well against an exponential number of possible graph instances. Assume that we have a collection of ``sample static instances'' (snapshots) based on the dynamic AD graph, denoted by $G_s$. As shown in section \ref{sec:model}, we model the node blocking problem as a global optimization problem where the defender tries to minimize the attacker's average success probability over the sample instances in $G_s$.

Let us denote by $m$ the number of graph samples in $G_s$. A naive approach is to merge all the graphs in $G_s$ (i.e, find a super graph that contains all instances, essentially assuming all HasSession edges are on) and run sMIP on it. Of course, the resulting blocking plan will not be optimal. Let $OBJ(g_i)$ be the objective function on graph $g_i \in G_s$ or the attacker's success probability facing graph $g_i \in G_s$ in the static scenario. 
We also define $OBJ_m(G_s)$ to be the sum of $m$ objective functions or the sum of the attacker's success probabilities for all sample graphs in $G_s$ (there are $m=|G_s|$ samples). 
Optimizing blocking strategy for dynamic graphs then boils down to minimizing the following objective function:

\begin{eqnarray}\label{eq:6}
OBJ_m(G_s) = OBJ(g_1) + OBJ(g_2) + ... + OBJ(g_m)
\end{eqnarray}

In practice, some snapshots of the graph will have a higher chance of appearing, in that case, we could adjust our objective function by multiplying by a weight  $k$  to that graph's objective function (i.e., $k  * OBJ(g_i)$). We have the following nonlinear program for the dynamic version (We provide the linear program version in Section \ref{sec:appxmmip}):

\begin{subequations}
\begin{align}
% \text{min} \displaystyle\sum\limits_{g \in {G_s}} \displaystyle\sum\limits_{i=1}^{s} \varphi &R^{B}_{i,g} + (1-\varphi) \frac{y^B_{i,g}}{y_{i,g}}&  \nonumber\\  \nonumber\\
\text{min} \displaystyle\sum\limits_{g \in {G_s}} \displaystyle\sum\limits_{i=1}^{s} z_{\varphi,i,g}&  \nonumber\\  \nonumber\\
% \end{flalign*}
% \begin{align}
  y^{B}_{i,g} =& \sum_{j \in n(i)}y^{B}_{j,g}, & \forall i \in V_g \backslash N_b, g \in G_s \label{eq:3a} \\
  y^{B}_{i,g} = \sum_{j \in n(i)} &(1-B_i) y^{B}_{j,g}, & \forall i \in  N_b, g \in G_s \label{eq:3b} \\
  y^B_{i,g} \leq& y_{i,g},         & \forall i \in V_g, g \in G_s \label{eq:3c} \\
  y^B_{i,g} \geq& 0,              & \forall i \in V_g, g \in G_s \label{eq:3d} \\
  R^B_{i,g} \geq& R^B_{j,g} - B_{i}, &\forall (i,j) \in E_g,  \nonumber \\ 
                &                    &i \in N_{b}, g \in G_s \label{eq:3e}\\
  R^B_{i,g} \geq& R^B_{j,g}        , & \forall (i,j) \in E_g,g \in G_s\nonumber \\
                &                    & i \in N \backslash N_{b} \label{eq:3f}\\
  \sum_{i \in V}\ B_{i} \leq& b \label{eq:3g}\\
  B_{i},R_{i,g} \in & \{ 0,1 \} \label{eq:3h}\\
  R_{DA}, y_{DA} =& 1 \label{eq:3i}
\end{align}
\end{subequations}

The above formulation can be converted to a linear program, similar to the approach involving static graphs. In the rest of this chapter, we will refer to the Mixed Integer Programming algorithm for dynamic graphs as $MIP_m(X)$, where $X$ represents the set of graph instances considered in MIP and $m$ denotes the cardinality of $X$. We use the term $dyMIP(m)$ when we do not need to specify the set of graphs under consideration.

\vspace{0.75cm}
\noindent \textbf{Optimal Lower Bound}

% Ideally, to obtained the optimal blocking policy of Sample 
\noindent In theory, we can find the optimal blocking policy by running dyMIP(m) algorithm on every possible graph instance. However, the linear program $MIP_m(X)$ requires $\mathcal{O}(m \cdot (n+e))$ variables, where $n$ is the number of nodes and $e$ is the number of edges in the graph $G \in X$. As a result, $MIP_m(X)$ becomes increasingly computationally challenging as $m$ grows larger. To overcome this issue, we split $G_s$ into $j$ equally-sized batches, each batch has $t$ graphs (i.e, $t \cdot j = |G_s|$). We denote each batch of graphs by $X_i \subset G_s$, and $X = \{X_1, X_2, ..., X_j\}$ is the (all disjoint) batch set, where $G_s = X_1 \cup X_2 \cup \cdots \cup X_j$. The following proposition shows the derivation of a Lower Bound on the overall objective using this batching approach.

\begin{proposition} The batching solution gives a lower bound on the overall objective:
\begin{eqnarray}\label{eq:7}
\displaystyle\sum\limits_{X_i \in X} \min_B OBJ_t(X_i) \leq \min_B OBJ_m(G_s) = OPT
\end{eqnarray}
\end{proposition}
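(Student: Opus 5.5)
The plan is the standard "minimum of a sum is at least the sum of minima" argument, applied to the additive decomposition of $OBJ_m$ over disjoint batches. Because the batches $X_1,\dots,X_j$ partition $G_s$, and the dynamic objective in~(\ref{eq:6}) is a plain sum of per-snapshot objectives, the first step is to record the identity
\begin{equation*}
OBJ_m(G_s) \;=\; \sum_{g\in G_s} OBJ(g) \;=\; \sum_{i=1}^{j} \sum_{g\in X_i} OBJ(g) \;=\; \sum_{i=1}^{j} OBJ_t(X_i),
\end{equation*}
valid for every fixed blocking plan $B$.

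Next, fix an optimal plan $B^\ast$ for the full problem, so that $OPT = OBJ_m(G_s)$ evaluated at $B^\ast$. The feasible set is the same in every batch subproblem, namely $B \subset N_b$ with $|B| \le b$. This is because only the edge sets change across snapshots; the node set, and hence $N_b$, together with the budget constraint~(\ref{eq:3g}), are common to all of them. Therefore $B^\ast$ is feasible for each $MIP_t(X_i)$, which gives $\min_B OBJ_t(X_i) \le OBJ_t(X_i)\big|_{B^\ast}$ for every $i$. Summing over $i$ and applying the identity above yields $\sum_i \min_B OBJ_t(X_i) \le OBJ_m(G_s)\big|_{B^\ast} = OPT$, which is~(\ref{eq:7}). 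If snapshots carry weights $k$, the identity still holds term by term, so the same argument goes through unchanged.

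I do not expect a real obstacle. The one point to state explicitly is the common feasible region. Each batch minimiser may choose a different $B$, and that freedom is exactly why the left-hand side can only be smaller. This argument requires that no batch has a more restrictive constraint set than the full problem. I would justify it by pointing to the modelling assumption that $V$, $N_b$, and $b$ are snapshot-independent. I would also check that the per-graph variables $y^B_{i,g}$ and $R^B_{i,g}$ in~(\ref{eq:3a})--(\ref{eq:3f}) are coupled across graphs only through $B$. That check confirms that fixing $B$ makes the program separate into independent per-snapshot pieces, so the batch objectives evaluated at $B^\ast$ really do sum to $OPT$.
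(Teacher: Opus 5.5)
Your argument is correct and is essentially the one the paper relies on. The discussion right after the proposition says the left-hand side lets each batch use its own optimal blocking plan, while the right-hand side forces a single plan shared across all batches; that is exactly your sum-of-minima versus minimum-of-sum comparison over the partition of $G_s$. Note that the text inside the paper's proof environment is a misplaced copy of the clique reduction from the $W[1]$-hardness theorem and does not address this statement, so your write-up, which states the common feasible set $B\subset N_b$, $|B|\le b$ and the additive decomposition explicitly, is the more complete justification.
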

\begin{proof}
    To prove the hardness of our problem, we use the reduction from \textit{clique} problem to our. Clique problem is know to be $W[1]-hard$ with respect to clique size \citep{fellows2009parameterized}. Let define undirected graph $G_{c} = (V_{c}, E_{c})$ is the clique instance. We construct attack graph $G$ as follow. For every edge $(i,j) \in E_{c}$, we construct a corresponding entry node $s_{i,j}$. For every node $v \in G_{c}$ we construct a pair of node $v_{ub}$ and $v_{b}$ in the attack graph instance where $v_{ub}$ is unblockable node. Next, for every entry nodes, we create 2 path $s_{i,j}\rightarrow i_{ub} \rightarrow i_{b} \rightarrow DA$ and  $s_{i,j}\rightarrow j_{ub} \rightarrow j_{b} \rightarrow DA$. If we want to block all path from an entry node $s_{i,j}$ to DA, we need to block both $i_{b}$ and $j_{b}$. Let's say we can choose up to $b$ node to block, the optimal blocking plan could be obtained by locate the clique of size $b$ in $G_{c}$. By allocating a clique, we can disconnect maximum of $b(b-1)/2$ entry nodes.
\end{proof}

For (\ref{eq:7}), the left-hand side consists of sum of the attacker's success probabilities for all batches, and the defender can
apply different optimal blocking plans for different batches. The right-hand side is the sum of the attacker's success probabilities for all samples, under the defender's optimal blocking plan that is applied to all samples (i.e., the same plan shared over all batches).
The inequality in (\ref{eq:7}) provides a lower bound on the optimal result for the set of samples $G_s$. This lower bound can be estimated by summing the optimal results of each batch in $X$.
Even for large values of $m$, this approach allows us to find the estimated lower bound with minimal computational effort. In simpler terms, solving a single instance of $MIP_{1000}(G_s)$ is more challenging than solving 100 instances of $MIP_{10}(X_i)$. By summing the objective values of all instances of $MIP_{10}(X_i)$, we are guaranteed to get a lower bound for the original problem. 
Furthermore, we could interpret the result of each batch as {\em an independent draw from the same (unknown) distribution} and our task is to calculate the mean based on sample averages,
so using Hoeffding’s Inequality Bound (more details presented below), we can achieve a high-confidence estimation without going over all batches.

%In our case, the objective value represented the chance of the attacker can get to the DA. It is noteworthy that the objective function is the sum of the attacker's success probability in every graph. For the representation, we will show the normalised result which takes the average over the number of possible graphs. 

% The above program has at most $O()$

% \textbf{Voting for best blocking strategy:} 

\vspace{0.75cm}
\noindent \textbf{Number of Sample for Monte Carlo Simulation}

\noindent To determine how many graph sample is enough to guarantee that our evaluation is good enough, we use the Hoeffding's Inequality Bound \citep{hoeffding1994probability}. Let us define $X_0, ..., X_n$ is the independence random variable such that $a_i \leq X_i \leq b_i$. Let us say $\epsilon > 0$ is a error of the estimated mean to the true mean. We have the following Hoeffding's Inequality:

\begin{eqnarray}\label{eq:hoefdori}
\Pr(|\hat{X} - \mu| \geq \epsilon) \leq 2e^{\frac{-2n^2\epsilon^2}{\sum_{i=1}^{n} (b_i-a_i)^2}} \leq \alpha
\end{eqnarray}

\noindent where $\hat{X}$ is the sample mean and $\hat{X_n} = \frac{\sum_i X_i}{n}$. $\mu$ is the true mean, n is the sample size and $\alpha$ is the level of significance (upper bound of the error probability). Now let us say we have a blocking strategy $B$ and applying it to sample $G_s$ with where number of sample $n = |G_s|$. Apply $B$ on graph $g_i \in G_s$ yield attacker success score $X^{B}_i$. $X^{B}_i$ is bounded by $0 \leq X^{B}_i \leq 1$ for any i in $G_s$. The sample mean and true mean  of the simulation is $\hat{X^B}$ and $\mu^B$, respectively. We could re-write the inequality as:

\begin{eqnarray}\label{eq:hoefeval}
\Pr(|\hat{X^B} - \mu^B| \geq \epsilon) \leq 2e^{-2n\epsilon ^2} \leq \alpha
\end{eqnarray}

Solving \ref{eq:hoefeval} give use the number of sample required for ensure $\hat{X^B} = \mu^B \pm \epsilon$ with confidence level of $(1-\alpha)$:
\begin{eqnarray}\label{eq:sample}
n \geq \frac{\ln(\frac{2}{\alpha})}{2\epsilon^2}
\end{eqnarray}

In our Monte Carlo Simulation, by testing our blocking polity on $n =$ 100,000 graph samples, our result can be at least 99\% confident ($\alpha = 0.01$) with error within $\epsilon = 5.147\times 10^{-3}$ of the true mean ($\hat{X^B} = \mu^B \pm 5.147\times 10^{-3}$)

\subsection{Generalization to Large Collections of Graph Samples}

As mentioned above, dyMIP(m) only works on a small number of graphs before running into scalability obstacles. We need to develop a better strategy for scaling dyMIP(m) to large $m$. We develop two heuristics for this purpose.

% \vspace{0.75cm}
%\subsubsection{Voting based heuristic}
\vspace{0.75cm}
\noindent \textbf{{Voting based heuristic}}

\noindent The overall idea behind our \textit{voting-based heuristic} is to leverage the lower-bound calculation in Section~\ref{sec:placedynamic}. The pseudocode appears in Algorithm~\ref{alg:voting}. In line 2 we partition $G_s$ into $\left\lceil \frac{|G_s|}{p} \right\rceil$ disjoint batches, each of size at most $p$. Lines 3–5 run $MIP(x,m)$ on every batch $x \in X$ and store each resulting blocking strategy $B_{\mathrm{tmp}}$ in the list $\mathcal{B}={B_1,B_2,\ldots,B_j}$. Lines 6–8 then count how often each node appears across the strategies in $\mathcal{B}$, writing $\text{freq}(v)$ for the number of strategies that contain $v$. Finally, line 9 selects the $b$ nodes with the largest values of $\text{freq}(v)$ (majority voting) to form the final blocking strategy.

\begin{algorithm}[H]
 \caption{Voting based heuristic for MIP(m)}
 \label{alg:voting}
 \SetAlgoLined
 \LinesNumbered

 % suppress printed "end" markers
 \SetKwFor{For}{for}{:}{}
 \SetKwFor{ForEach}{foreach}{:}{}
 \SetKwFor{While}{while}{:}{}
 \SetKwIF{If}{ElseIf}{Else}{if}{then}{else if}{else}{}

 \KwIn{Graph sample $G_s$, partition size $p$, number of graph in MIP(m) $m$, defensive budget $b$}
 \KwOut{Blocking strategy $B^*$}
 
 \BlankLine
Initialize the set of blocking strategies $\mathcal{B} \leftarrow \emptyset$\;

Partition $G_s$ into $\left\lceil \frac{|G_s|}{p} \right\rceil$ batches of size $p$ and store to $X$\;

 \For{$x \in X$}{
   $B_{tmp} \leftarrow MIP(x, m)$ \;
   
   Append $\mathcal{B} \leftarrow \mathcal{B} \cup B_{tmp}$ \;
 }

Initialize the node–frequency map $\text{freq} \leftarrow \emptyset$ (so $\text{freq}(v)$ stores how often node $v$ appears across $\mathcal{B}$)\;

 \For{every node $v$ appearing in any $B_{\mathrm{tmp}}\in\mathcal{B}$}{
    $\text{freq}(v) \leftarrow \bigl|\{\, v\in B_{\mathrm{tmp}} \mid B_{\mathrm{tmp}}\in\mathcal{B} \,\}\bigr|$\;
 } 

Set $B^{*}$ the top-$b$ nodes with the largest values of $\text{freq}$\;

 \Return{$B^*$}
\end{algorithm}

\vspace{0.75cm}
\noindent \textbf{Clustering based heuristic}

\noindent Our second heuristic is based on the idea that there is a set of most \textit{representative} graph instances (snapshots, realizations) for the whole collection of samples. This approach is particularly useful when there are duplicate graphs in a set of samples or samples with similar structures. Our idea is only to run dyMIP(m) once on this \textit{representative} set of graphs. To do that, we apply a clustering algorithm to explore different structures among the graph samples and group graphs with common structures into the same group. Formally, let's define $F(g)$ is a $s$-dimention features vector of graph $g$ which expressed as follow:

\begin{equation}
F(g) = \big( f(1,g), f(2,g), \cdots, f(s,g) \big )
\label{eq:feature}
\end{equation}

where $f(i,g) = \frac{y^{B_g}_{i,g}}{y_{i,g}}$ represents the attacker's success rate when they are at a starting node $i$ and facing a optimal blocking strategy $B_g$ on graph $g$ produced by running staticMIP on that instance. We use k-means as the clustering algorithm in our implementation. After getting $F(g)$ for every graph $g \in G_s$, we apply k-means to cluster $G_s$ into $k$ clusters. We consider graphs that are closest to centroids as the representative graphs for our dyMIP(m), where the distances to centroid are Euclidean distances~\citep{danielsson1980euclidean}. We apply dyMIP(m) on the set of representative graphs that we got from the clustering algorithm. 
% \begin{eqnarray}\label{eq:7}
% feature(g) = \bigg( \frac{y^{B_g}_{1,g}}{y_{1,g}}, \frac{y^{B_g}_{2,g}}{y_{2,g}}, \cdots, \frac{y^{B_g}_{i,g}}{y_{i,g}} \bigg )
% \end{eqnarray}

% \begin{eqnarray}\label{eq:7}

% \end{eqnarray}

There is a complication with the clustering algorithm where clusters are of varying sizes and densities. If we uniformly pick a graph from each cluster, then the algorithm is prone to be biased to the ``fringe'' clusters which degrade the quality of our samples. Our solution is to weigh the graph selection based on the cluster size. That is, when choosing a graph for dyMIP(m), we first choose $m$ clusters where each cluster $c$ with size $size_c$ has a probability $\frac{size_c}{|G_s|}$ for being picked. We then pick a graph instance that has the closest distance to centroid $c$ to be the \textit{representative graph}. The already chosen graph instances are removed from consideration to avoid duplication. \textit{Clustering heuristic} is more scalable than \textit{voting heuristic}. For example, suppose our batch size is $100$, while \textit{voting heuristic}  requires us to run dyMIP(100) multiple times (majority voting does not make sense if there is only one vote), the \textit{clustering heuristic} only requires us to run MIP(100) once.

Pseudocode for the clustering-based heuristic appears in Algorithm \ref{alg:kmean}. Lines 3–4 compute a feature vector for each graph using Eq. \ref{eq:feature}. Line 5 runs the clustering algorithm which returns the centroids and the size of each cluster. In our experiments, we use k-means as clustering algorithm. In lines 8–14, we selects the m most representative graphs from the sample. At each iteration we sample a cluster $c\in C$ with probability proportional to its size to avoid bias toward trivial clusters. We then choose the graph $g$ that is closest to its centroid $c$, add it to the representative set $G''_s$, and remove it from the pool to prevent duplicates. Finally, line 15 runs the MIP on $G''_s$.

\begin{algorithm}[H]
 \caption{Clustering based heuristic for MIP(m)}
 \label{alg:kmean}
 \SetAlgoLined
 \LinesNumbered

 % suppress printed "end" markers
 \SetKwFor{For}{for}{:}{}
 \SetKwFor{ForEach}{foreach}{:}{}
 \SetKwFor{While}{while}{:}{}
 \SetKwIF{If}{ElseIf}{Else}{if}{then}{else if}{else}{}

 \KwIn{Graph sample $G_s$, number of cluster $k$, number of graph in MIP(m) $m$, defensive budget $b$}
 \KwOut{Blocking strategy $B^*$}
 
 \BlankLine
 Initialize a map of features $F\gets \emptyset$\;
 Initialize list of centroid $C \gets \emptyset$\;
 \For{$g \in G_s$}{
   Calculate $F(g)$ based on Eq. (\ref{eq:feature})\;
 }

 Get list of centroid $C$ and size of each cluster $clusters\_size$ by running clustering algorithm\;
 
 Initialize 2-D list $G'_s$ where $G'_s(c)$ is the list of graph in cluster $c$\;
 
 Initialize most representative list $G''_s = \emptyset$\;
 
 \For{$i$= $[1, \cdots,m]$}{

    \Repeat{$G'_s(c)\notin \emptyset$}{
   Pick a cluster $c \in C$ randomly with probability of $\frac{size(c)}{|G_s|}$ where $size(c)$ is number of graph in cluster $c$\;
   }
   
   Get graph $g\in G'_s$ with minimum distance to centroid $c$\;
   
   Append $G''_s \leftarrow G''_s \cup g$\;
   
   Delete $g$ from $G'_s(c)$\;
 }

 $B^* {\leftarrow} \textit{MIP}(G''_s, m)$\;
 \Return{$B^*$}
\end{algorithm}

\section{Experiment}
We evaluate our proposed solutions on both synthetic AD graphs with both randomized HasSession edges and real HasSession data. All of the experiments are carried out on a high-performance computing cluster with 2 CPUs and 24GB of RAM allocated to each task for the experiments. Code and data will be made publicly available.

\subsection{Synthetic Graph Generation}
%Since the Active Directory attack graph data is confidential, we assess the performance of our algorithms using synthetic data. 
We start with synthetic graphs generated by DBCreator \footnote{https://github.com/BloodHoundAD/BloodHound-Tools/tree/master/DBCreator} and Adsimulator \footnote{https://github.com/nicolas-carolo/adsimulator} - two state of the art tools for creating AD graphs.
In particular, we present results for the 5 synthetic graphs in Table \ref{tab:compgreedy}). The graphs labeled as "Rxxxx" were generated using DBCreator, while the ones labeled as "ADxxx" were created using Adsimulator. DBCreator allows us to generate graphs given a number of computers in the targetted AD network (R2000 and R4000 have 2000 and 4000 computer nodes, respectively). Adsimulator provides more flexibility by allowing graphs with specific properties, including the number of users, computers, domain trusts, OUs, GPOs, and more. $ADX05$ and $ADX10$ are created by increasing all its default parameters by a factor of 5 and 10 respectively. $ADU10$, $ADU100$ are created by tuning the parameters to mimic the real AD network of {\em University of Anonymized} which are $10\%$, $100\%$ of the size of the mimicked network, respectively.

We perform the following pre-processing steps, First, when there are multiple DAs in the graph, we simply merge it into 1 unique DA. We only consider 3 types of edges in our graph: AdminTo, HasSession, and MemberOf. We only consider an entry node that can reach DA. $ADU100$ is the largest graph in our experiment with 137207 nodes and 1410679 edges. Table \ref{tab:graphstat} show the size of each graphs

\begin{table}
\begin{center}
\caption{This table provides the size of each graph in terms of the number of nodes, edges, user nodes, and computer nodes.}\label{tab:graphstat}
\smallskip\noindent
\resizebox{0.7\linewidth}{!}{%
\begin{tabular}{l|c|c|c|c}

\hline
       & \textbf{Nodes}   & \textbf{Edges}    & \textbf{Users } & \textbf{Computers} \\
       \hline
\textbf{R2000 } & 5997   & 18795   & 2000  & 2001 \\
\textbf{R4000}  & 12001  & 45780   & 4000  & 4001 \\
\textbf{ADX05 } & 1624   & 6955    & 482   & 526  \\
\textbf{ADX10}  & 3123   & 15767   & 972   & 1026 \\
\textbf{ADU10}  & 13834  & 94878   & 6372  & 366  \\
\textbf{ADU100} & 137315 & 1490766 & 63172 & 3378 \\
\textbf{AuthOrg}& 2624   & 18594   & 1804  & 326 \\
       \hline
\end{tabular}}
\end{center}
\end{table}

\subsection{Experiments on static graphs}
\textbf{Experiment Set Up.} We will refer mixed-integer programming approach as sMIP (or staticMIP). 
We also adopt 2 algorithms referred to as sMIP-S, sMIP-C where the former is sMIP formulation for solving the problem of (pure) simple attacker agent ($\varphi = 0$), and the latter is sMIP for solving the problem of (pure) competent attacker agent ($\varphi = 1)$. 
Throughout our experiment, we assume $\varphi = 0.5$ (equal chance of being attacked by either of the attackers). However, it is important to note that our technique remains effective for any value of $\varphi$. 
%In summary, our experiment will have three settings for $\varphi$ (0, 0.5, 1), which will be denoted as sMIP-S, sMIP-M, and sMIP-C, respectively.

% We also have another setting for dealing with mixed types of attackers and the MIP algorithm for solving that problem will be denoted as sMIP-M. We argue that the competent attacker, in the most extreme case, have a higher capability to infiltrate our network compared to the simple attacker. In our experiment, we assume a ratio of $s:1$ (or $\varphi =\frac{s}{s+1}$) between the competent attacker and the simple attacker. This signifies that the more the entry node, the greater the chance the network is vulnerable with competent attackers. 

 We use Gurobi 9.0.2 solver for solving staticMIP. For each graph, we run 10 trials and report the results for these runs. For each trial, we randomly draw 50 entry nodes and set the budget b to 10. We assumed only computer nodes are blockable (i.e., honeypots can be placed only on computers). In the experiment, we introduce 3 metrics to evaluate our algorithms: \textbf{S}imple-Agent \textbf{S}uccess \textbf{R}ate (\textbf{SSR}) and \textbf{C}ompetent-Agent \textbf{S}uccess \textbf{R}ate (\textbf{CSR}) and \textbf{M}ean \textbf{S}uccess \textbf{R}ate (MSR). SSR and CSR are the success rates of the Simple and Competent attacker agent when facing a blocking strategy from the defender. SSR is measured as the fraction of the clean paths over the number of paths in the all-shortest path graph, normalized by the number of starting nodes and CSR is the fraction of entry nodes that can reach DA over the total number of entry nodes; whereas MSR is the average of SSR and CSR. 
 %Basically, SSR is the fraction of the clean paths over the number of paths in the all-shortest path graph, normalised by the number of starting nodes, CSR is the fraction of entry nodes that can reach DA over the total number of entry nodes; and MSR is mean of SSR and CSR. 

We carry out 2 sets of experiments. The first experiment which shows in Table \ref{tab:compgreedy} compares our sMIP-S and sMIP-C with 2 greedy algorithms namely GREEDY-S, GREEDY-C. 
Additionally, we evaluated our approach against Zhang et al.'s Double Oracle (ZDO) algorithm \citep{zhang2023oracle}. ZDO is the state-of-the-art edge-blocking algorithm for AD graphs. We adopt this approach for honeypot placement by formulating it as a simple blocking problem.  The purpose of this comparison is to demonstrate that the existing techniques are not suitable for our honeypot placement problem without compromising its performance.

%which is a completely different problem to ours. We re-implemented their node-blocking version and compared it to ours. However, it is still noteworthy that our chapter is not simply the node blocking version of the previous works, we study a completely new defense technique that is honeypot placement problem which is completely difference from other works (difference objective function). The purpose of this comparison is to demonstrate that the existing techniques are not suitable for our defense approach without compromising its performance.

In the GREEDY-S algorithm, for each node, we can enumerate the shortest path that goes through it or calculate the betweenness centrality for each node. The idea is to iteratively block the node that has the highest betweenness centrality and repeat it $b$ times. GREEDY-C uses a minimum node cut algorithm to disconnect each entry node and then prioritizes the entry nodes requiring the least budget to be blocked until the budget is depleted. In the second experiment shown in Table \ref{tab:mixedstatic}, we simply show the performance of our sMIP against a scenario of 2 types of attacker agents in the network.

\textbf{Result Interpretation.} In Table \ref{tab:compgreedy}, it can be seen that sMIP-S and sMIP-C perform at least as well as, and often better than, GREEDYs on various graphs. While ZDO performed badly since it is designed for the shortest path interdiction problem. In Table \ref{tab:mixedstatic}, our goal is to produce a defence plan that can be effective against both types of attacker agents. 
As they do not make any relaxation in the MIP formulation, both sMIP-S and sMIP-C obtained the optimal defense against a Simple attacker and a Competent attacker, respectively.
As shown sMIP-M significantly improves the MSR for $\varphi = 0.5$. This is not surprising as it is specifically designed to defend against both types of attackers. In $ADU100$, 10 honeypots are not enough to stop the competent attacker from any entry nodes hence, the attacker success rate for the competent attacker is 1. More significantly, staticMIP runs fast enough for large graphs and could scale to graphs of sizes equivalent to our anonymous network. That is, staticMIP is able to produce optimal results for very large static graphs, making it scalable for realistic AD graphs. Note that even though the general problem is NP-hard, what we observe here is that AD graphs have certain tree-like structures \citep{guo2022practical} that allow staticMIP to solve them very fast. In fact, the defending problem against both types of attackers is polynomially solvable if the AD graph is a tree (via a top-down Dynamic Program from DA to allocate honeypots). This makes the Linear Programming approach very efficient against problems on tree-like AD graphs.

% \hn{tree-like, cite Mingyu chapter that talks about this?}\hn{Explain why? what in the property of the problem that allows staticMIP to solve an NP-hard problem very fast?}. 

% We note, however, that staticMIP is not designed to work with dynamic graphs --- a topic that we will address next.

\begin{table}
\begin{center}
\caption{Comparison of the sMIP algorithm against GREEDY and ZDO algorithms. SSR is the metric for simple attackers and CSR is used for competent attackers. We set the budget $b=10$. Values in parentheses indicate average computation time in seconds.}\label{tab:compgreedy}
\smallskip\noindent
\resizebox{\linewidth}{!}{%
\begin{tabular}{l|ccc|cc}
\hline
       & \multicolumn{3}{c|}{\textbf{Simple Attacker (SSR)}}       & \multicolumn{2}{c}{\textbf{Competent Attacker (CSR)}}                   \\
\cline{2-6}
                  & sMIP-S                 & GREEDY-S           & ZDO      & sMIP-C                  & GREEDY-C       \\
\hline
$\textbf{R2000 }$ & \textbf{0.233} (0.15s)          & \textbf{0.233}  (0.01s)     & 0.389 (0.67s)      & \textbf{0.354} (0.09s) & 0.542 (0.37s) \\
$\textbf{R4000 }$ & \textbf{0.279} (0.10s)          & \textbf{0.279 } (0.01s)     & 1.000 (0.13s)      & \textbf{0.530} (0.28s) & 0.560 (0.48s) \\
$\textbf{ADX05 }$ & \textbf{0.135} (0.12s) & 0.136  (0.02s)     & 0.272 (0.68s)      & \textbf{0.670} (0.12s) & 0.682 (1.04s) \\
$\textbf{ADX10 }$ & \textbf{0.144} (0.16s) & 0.155  (0.03s)     & 0.226 (0.48s)      & \textbf{0.606} (0.20s) & 0.774 (1.86s) \\
$\textbf{ADU10 }$ & \textbf{0.547} (1.27s)          & \textbf{0.547}  (0.27s)     & 0.717 (0.54s)      & 0.984 (1.29s)          & 0.984 (24.5s) \\
$\textbf{ADU100}$ & \textbf{0.602} (15.3s)          & \textbf{0.602}  (3.80s)     & 0.760  (0.15s)      & 1     (-)              & 1    (-) \\
\hline
\end{tabular}}
\end{center}
\end{table}

\begin{table}
\begin{center}
\caption{Comparison between variants of sMIP algorithm. Optimal results are in \textbf{bold}.}\label{tab:mixedstatic}
\smallskip\noindent
\resizebox{\linewidth}{!}{%
\begin{tabular}{l|ccc|ccc|ccc}

\hline
       &\multicolumn{3}{c|}{sMIP-S}&\multicolumn{3}{c|}{sMIP-M}&\multicolumn{3}{c}{sMIP-C}            \\
\cline{2-10}
       & \textbf{SSR} & \textbf{CSR} & \textbf{MSR} & \textbf{SSR} & \textbf{CSR} & \textbf{MSR} & \textbf{SSR} & \textbf{CSR} & \textbf{MSR}   \\
\hline
$\textbf{R2000}$  & \textbf{0.233}   & 0.682 &  0.46  & 0.303   & 0.356   &  \textbf{0.330}  & 0.313   & \textbf{0.354}  & 0.334     \\
$\textbf{R4000}$  & \textbf{0.279 }  & 0.856 &  0.568 & 0.421   & 0.536   &  \textbf{0.478}  & 0.452   & \textbf{0.53}   & 0.498\\
$\textbf{ADX05}$  & \textbf{0.135}   & 0.958 &  0.547 & 0.17    & 0.794   &  \textbf{0.482}  & 0.492   & \textbf{0.67}   & 0.581    \\
$\textbf{ADX10 }$ & \textbf{0.144}   & 0.992 &  0.568 & 0.409   & 0.63    &  \textbf{0.519}  & 0.468   & \textbf{0.606}  & 0.539   \\
$\textbf{ADU10 }$ & \textbf{0.547}   & 1     &  0.774 & 0.547   & 1       &  \textbf{0.774}  & 0.912   & \textbf{0.984}  & 0.953    \\
$\textbf{ADU100}$ & \textbf{0.602}   & 1     &  0.801 & 0.602   & 1       &  \textbf{0.801}  & 0.851   & \textbf{1}      & 0.925  \\
\hline

\end{tabular}}
\end{center}

\end{table}
% mingyu no explanation so far on table 2, I would include the "final" susccess rate, i.e., 0.5*SSR+0.5*CSR in table and make that the most prominent numbers for comparison
\subsection{Experiments on synthetic dynamic graphs}
\label{sec:dynamicex}

In Table \ref{tab:dynamic}, we present our experiments on the $ADX05$ graph with a budget of 20. Only computer nodes are blockable. We consider all user nodes in the graph as the entry nodes. After preprocessing, the numbers of user nodes and computer nodes are 72 and 500 respectively. Each edge is randomly ON in each run with probability $p = 0.5$. Ideally, we will run our dyMIP(m) on every possible instance to get the optimal blocking policy. However, this is computationally infeasible. A more practical way is to derive a blocking policy on a training set (sampled from the network) and evaluate it on a different testing set (future states of the network). In our experiments, we use 5,000 instances for training and run a Monte Carlo simulation on 100,000 graphs to evaluate the effectiveness of our blocking plan to ensure statistically significant results for the evaluation.

%In the next part, we will explain why 100,000 samples are enough to ensure statistically significant results for the evaluation.
%In our Monte Carlo Simulation, by testing our blocking polity on $n =$ 100,000 graph samples. We proved that the results test on 100,000 can be at least 99\% confident with error within $\epsilon = 5.147\times 10^{-3}$ of the true mean (proof provided in the technical appendix.

% \vspace{0.75cm}
\textbf{Estimating the Optimal Lower Bound.} We compute the Lower Bound in~\eqref{eq:7} for the AD graphs using a sample size of 100,000 graphs. 
%Based on the Central Limit Theorem, we can be confident in treating this result as an accurate estimate of the Lower Bound. Each batch in this experiment can be viewed as an independent draw, and the Lower Bound is calculated as the average over multiple batches.
We find a Lower Bound of CSR and SSR on a graph using dyMIP-$C(m)$  and dyMIP-$S(m)$  respectively. We set $m=50$ and run it 200 times to cover all 100,000 graph instances in Monte Carlo Simulation.
After conducting this experiment on ADX05, the estimated the Lower Bound for SSR is  $0.7240\pm0.0055$ and for CSR is $0.8889\pm 0.0019$.
These estimates are used to measure the optimality of our solutions.

\begin{table}
\begin{center}
\caption{Experiments on dynamic graphs ADS05. We show the success rate of each algorithm with respect to their intended attacker (e.x. We show the Simple attacker Success Rate (SSR) for dyMIP-S algorithm, etc.). Values in parentheses indicate average computation time in seconds.}\label{tab:dynamic}
\smallskip\noindent
\resizebox{\linewidth}{!}{%
\begin{tabular}{l|l|llll}

\hline
      % \multicolumn{9}{c||}\textbf{dyMIP-S} \\
    & &\multicolumn{1}{c}{m = 1$^{\mathrm{1}}$} & \multicolumn{1}{c}{m = 10} & \multicolumn{1}{c}{m = 50}   & \multicolumn{1}{c}{m = 100}        \\
% \cline{2-7}
  % \textbf{dyMIP-S}     & \textbf{SSR}     & \textbf{SSR}     & \textbf{SSR}  & \textbf{SSR}     \\
\hline
\multirow{3}{*}{dyMIP-S} &\textbf{Rand} & 0.7960 (2s)       & 0.7430 (16s)     & 0.7359 (89s)       & 0.7326 (174s)    \\
                         &\textbf{Kmean}  &                   & 0.7409 (919s)    & 0.7343 (987s)      & 0.7325 (1062s)         \\
                         &\textbf{Vote} & 0.7304 (10232s)   & 0.7293 (8403s)   & 0.7293 (9095s)     & \textbf{0.7293} (9643s)        \\
\hline

\hline

\multirow{3}{*}{dyMIP-M}   &\textbf{Rand} & 0.8851 (2s)        &0.8494 (17s)       & 0.8461 (98s)           & 0.8407 (216s)     \\
                          &\textbf{Kmean}  &                    & 0.8544 (741s)    & 0.8398 (1011s)         & 0.8373 (1310s)   \\
                          &\textbf{Vote} & 0.8417 (2416)      & 0.8391 (1716s)       & 0.8353 (2136s)            & \textbf{0.8353} (2666s)  \\
\hline

% \textbf{dyMIP-C}   & \textbf{SSR} & \textbf{CSR}      & \textbf{SSR} & \textbf{CSR}      & \textbf{SSR} & \textbf{CSR}  & \textbf{SSR} & \textbf{CSR}     \\
\hline
\multirow{3}{*}{dyMIP-C} &\textbf{Rand}    & 0.9722 (2s)    &  0.9028 (16s)       & \textbf{0.8889} (95s)      & \textbf{0.8889} (188s)     \\
                         &\textbf{Kmean}     &                &  0.9000 (949s)      & \textbf{0.8889} (1010s)    & \textbf{0.8889} (1069s)    \\
                         &\textbf{Vote}    & 0.9042 (9188s) &  0.9028 (8718s)     &  \textbf{0.8889} (10684s)  &  \textbf{0.8889} (12878s)    \\
       \hline

\multicolumn{6}{l}{$^{\mathrm{1}}$ Random with m = 1 is staticMIP (static graph's policy on the dynamic graph)}

\end{tabular}}
\end{center}

\end{table}

\textbf{Discussion of Dynamic Results.} Table~\ref{tab:dynamic} summarizes our experimental results. Each experiment is conducted with 10 trials. We run our algorithm with 3 settings: pure simple attacker (dyMIP-S and $\varphi = 0$), pure competent attacker (dyMIP-C and $\varphi=1$) and mixed-type attacker (dyMIP-M and $\varphi=0.5$).  We randomly draw a training set of 5,000 graphs for each trial. 
We also run the 2 proposed heuristics: voting-based heuristic (\textit{Voting}) and clustering-based heuristic (\textit{Kmean}). We also include a third algorithm (\textit{Random}), which randomly picks $m$ graphs from the samples and applies MIP(m) to that $m$ graphs only.  
For each setting, we set $m$ to 1, 10, 50, and 100 (note that Kmean does not make sense for $m=1$ since one cluster is not meaningful).
In Table~\ref{tab:dynamic}, note that \textit{Random} with $m = 1$ is equivalent to running the MIP algorithm on one static graph (\textit{staticMIP}), and \textit{Voting} with $m = 1$ is applying the voting heuristic on \textit{staticMIP}. For \textit{Voting}, we run it with 10 parallel tasks. For example, \textit{Voting} with $m = 50$ with 5,000 graphs will require 100 rounds of MIP(m) if we run it sequentially, which is infeasible without parallel computing. The reported time for Voting is the sum of the time of 10 parallel processes. 
As \textit{Random} and \textit{Kmean} only require running MIP(m) once,  we run these algorithms without parallelization. 
For evaluation, we test our algorithms on 100,000 randomly drawn graph samples. %We also split our testing into 10 parallel tasks to save time. 

\textbf{Result Interpretation.} We observed that static defense policy (staticMIP or Random with $m=1$) in all 3 $\varphi$ values performs poorly against all types of attackers.
We are able to obtain optimal CSR (= 0.8889) for various settings with dyMIP-C and dyMIP-M.
While for SSR (= 0.7293), we are able to obtain near-optimal results with dyMIP-S using Voting heuristic and m = 100.
Our experiments demonstrated that \textit{Kmean} outperforms \textit{Random} in defending against simple attackers (yielding better SSR) under various settings. 
This shows that the clustering method (\textit{Kmean}) can improve the ``quality'' of the samples and therefore improve the performance of MIP(m). 
Notably, when only considering settings that can obtain the optimal policy against a competent attacker (CSR = 0.8889), Kmean with m = 100 outperform others in term of SSR (= 0.7781).  
% mingyu I don't understand the above sentence
As the voting heuristic requires running dyMIP(m) on every possible graph in the training set to perform majority voting. Therefore, \textit{Voting} easily outperforms the other methods in terms of SSR as it scales.
However, this makes it computationally more expensive compared to Kmean and Random, which only require running dyMIP(m) once. Therefore, Kmean is more favourable when computational resources are limited. We remark that the success rate obtained from this experiment is quite high for the budget of 10. It is due to the high number of HasSession edges makes the graph very dense (each snapshot could contain $\approx 36,000$ HasSession edges) and hard to defend with 10 honeypots. However, in the next experiment, we will show that our approach could obtain a manageable success rate in a realistic attack graph.

\subsection{Experiment on real dynamic AD graphs}

\begin{figure}
\centering
\begin{subfigure}{0.48\textwidth}
    \includegraphics[width=\textwidth]{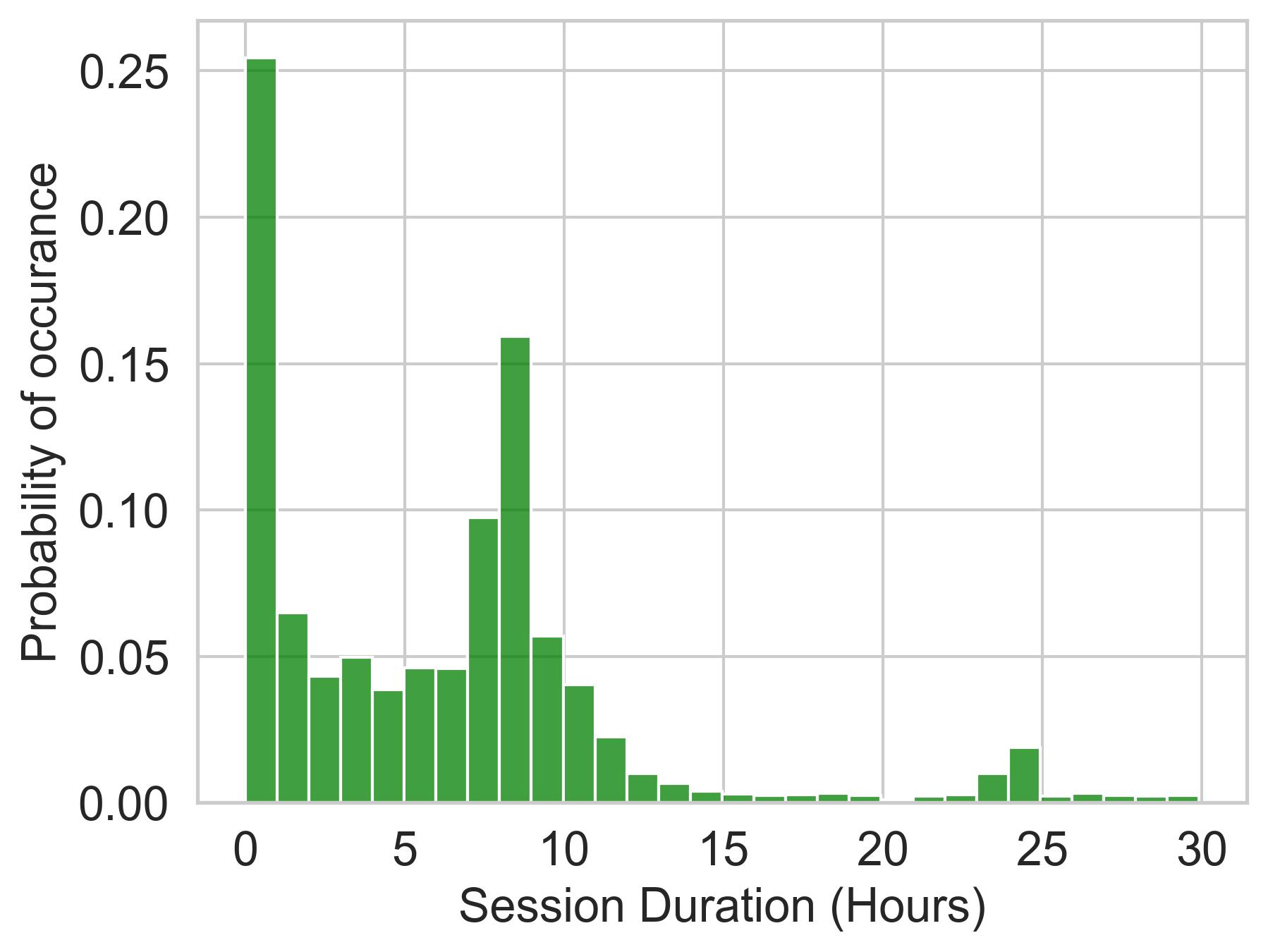}
    \caption{}
    \label{fig:session_length}
\end{subfigure}
\hfill
\begin{subfigure}{0.48\textwidth}
    \includegraphics[width=\textwidth]{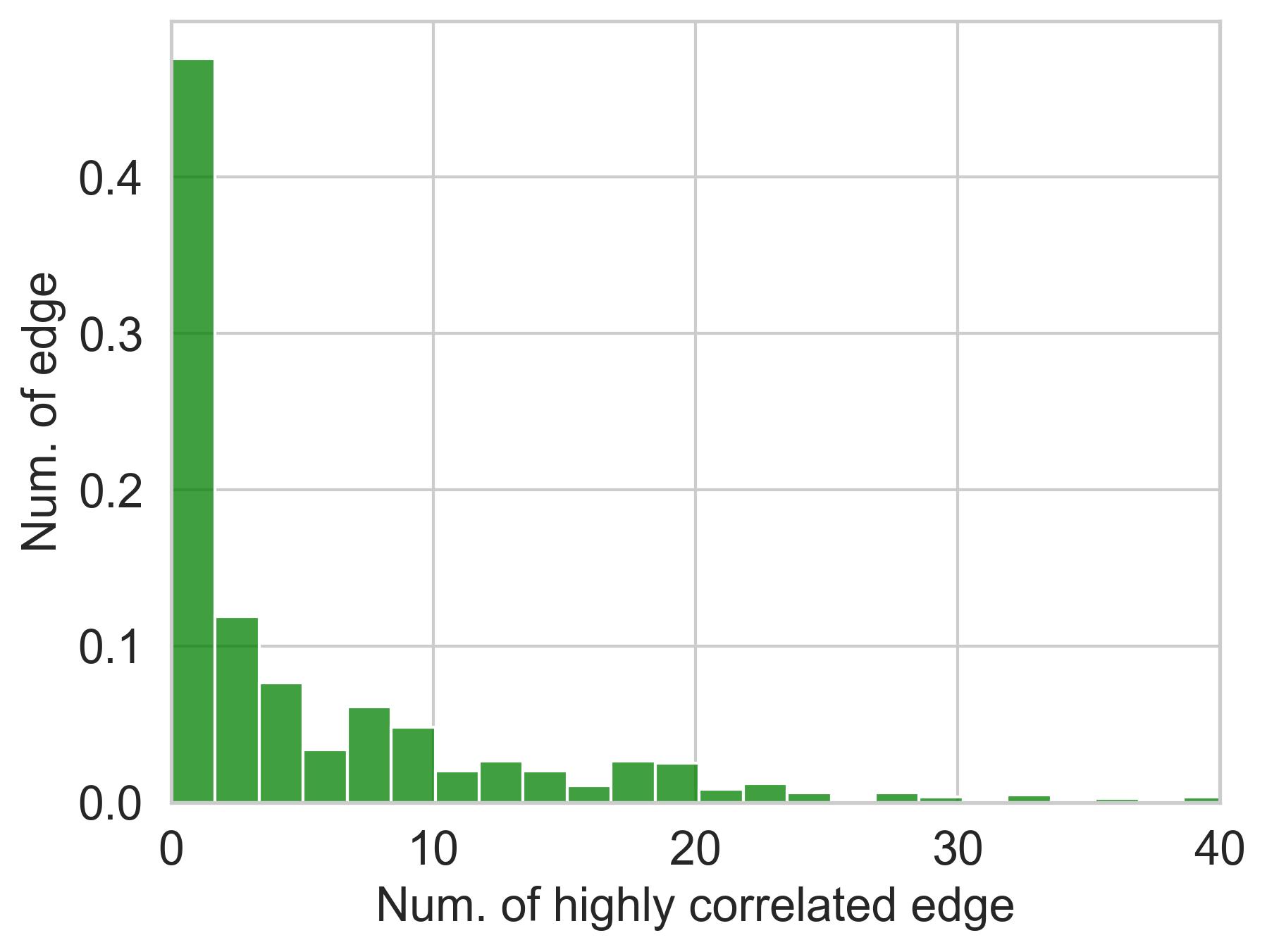}
    \caption{}
    \label{fig:correlation}
\end{subfigure}     

\caption{(a) Number of online sessions and newly established session sampling every 1 hour (b) Histogram of a number of edges which are highly correlated (in terms of logon/off session pattern) with others.}
\label{fig:figures}
\end{figure}

\begin{figure}
\centering
\begin{subfigure}{0.48\textwidth}
    \includegraphics[width=\textwidth]{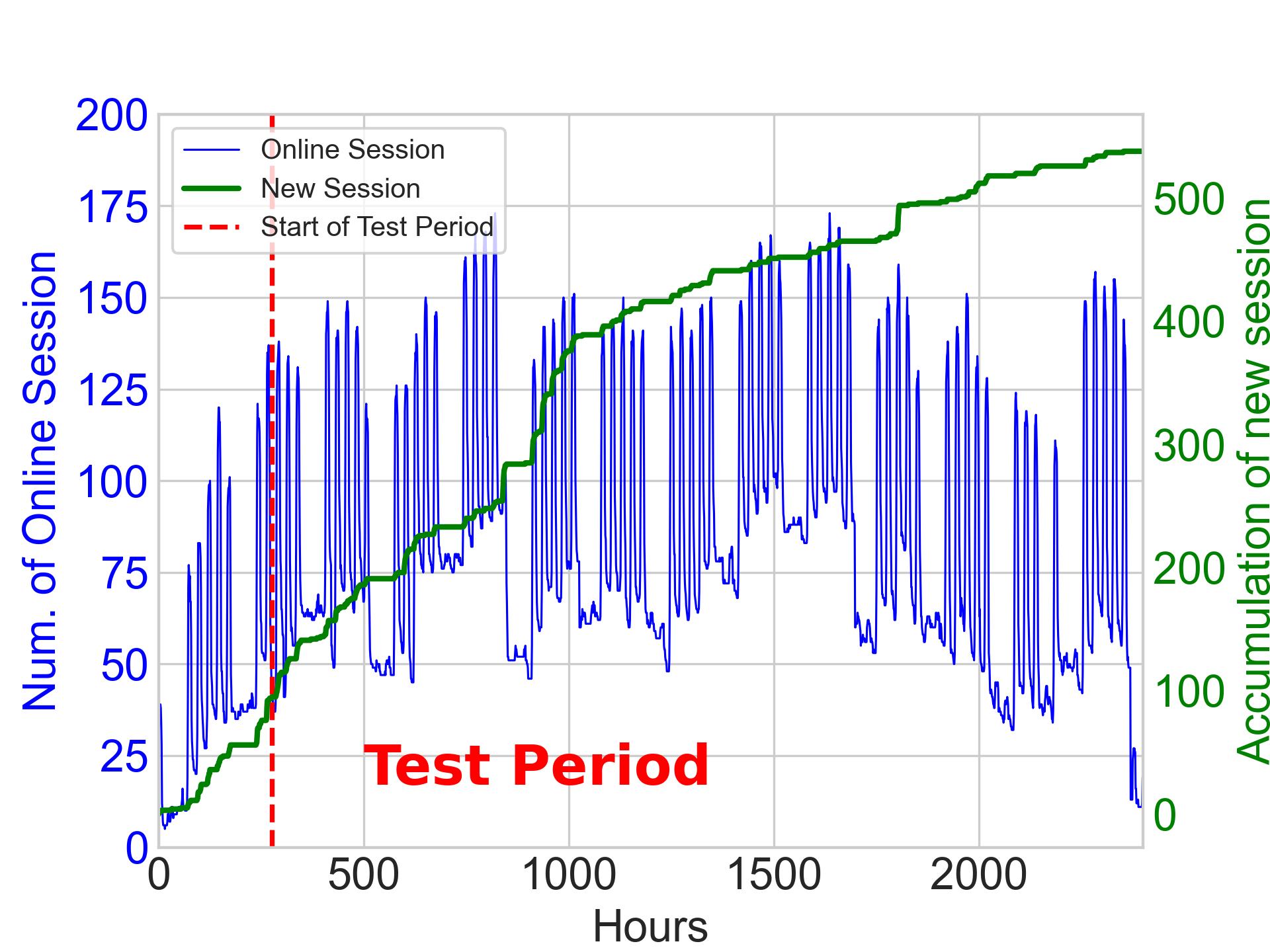}
    \caption{}
    \label{fig:online_session}
\end{subfigure}
\hfill
\begin{subfigure}{0.48\textwidth}
    \includegraphics[width=\textwidth]{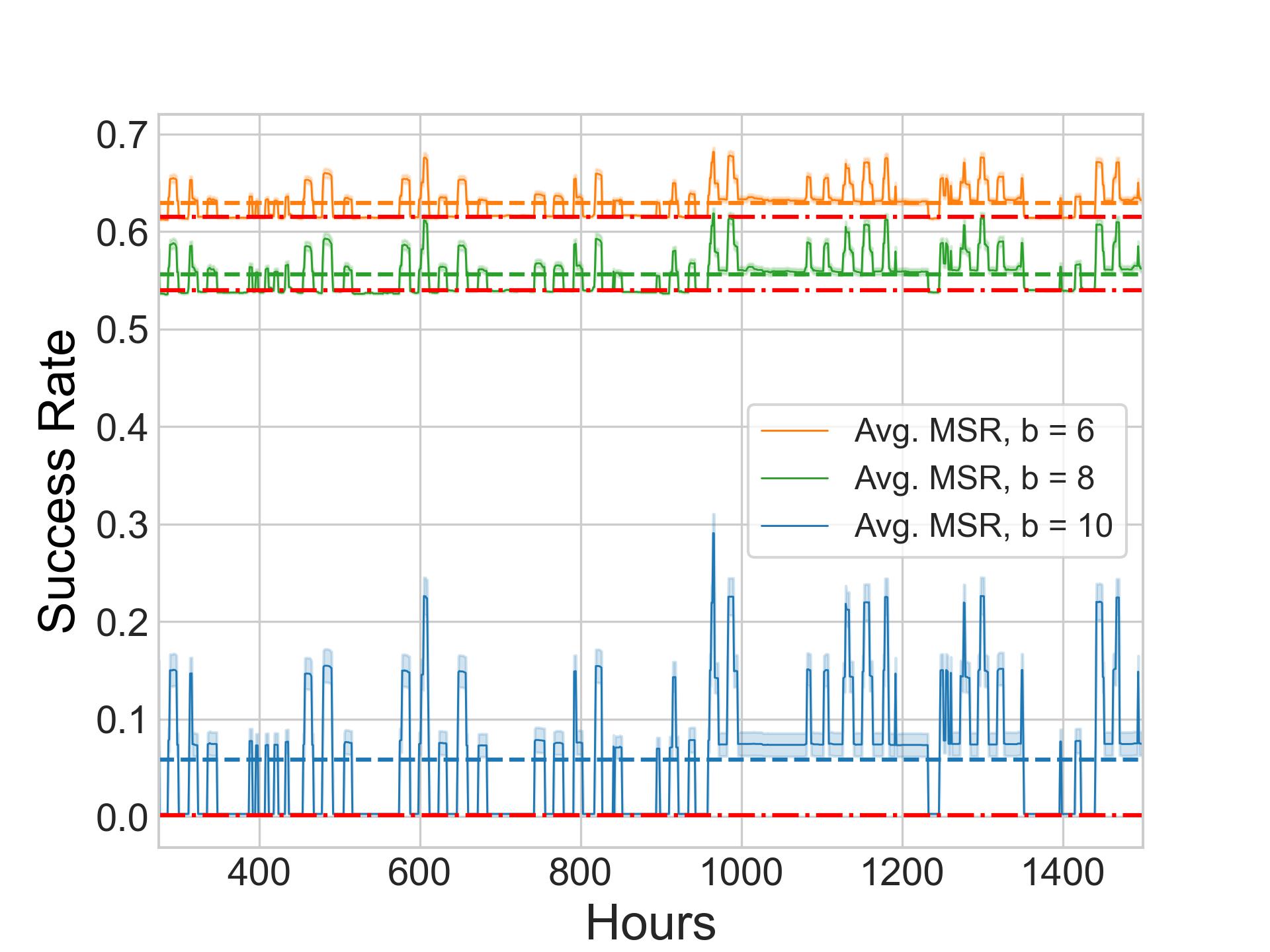}
    \caption{}
    \label{fig:resultcomp}
\end{subfigure}     

\caption{(a)  Number of online sessions and the accumulation of newly established session sampling every 1 hour (b)Voting with m = 10. The red dash-dotted lines are the optimal MSR of b = ${6, 8, 10}$, listed from top to bottom.}
\label{fig:figures}
\end{figure}

\textbf{Analyzing authentication data.} We analyze in this section the temporal properties of HasSession edges in a real network and apply these properties to simulate realistic dynamic AD graphs. The dataset used in this section is sourced from an anonymous organization and referred to as \textbf{AuthOrg} data. The dataset includes the authentication events observed over a 102-day period (from 24-06-2022 to 04-10-2022). There are 270 users, 338 computers, 14249 sessions and 1015 unique sessions in the dataset. Upon analyzing the \textbf{AuthOrg} data, we find that the random authentication scheme in Section \ref{sec:dynamicex} significantly diverges from realistic session patterns. Our analysis reveals that the activities of Hassession edges are not independent; rather, they exhibit correlations with each other. We highlight this with three observations: First, the graph shows a higher density of HasSession edges during working hours, indicating a greater likelihood of sessions appearing in that period (shown on Fig. \ref{fig:online_session}); HasSession edges can persist for extended periods, for example, a typical employee session lasts about eight hours before sign-off (Fig. \ref{fig:session_length}). Third, session profiles are often correlated across edges (Fig. \ref{fig:correlation}). In this Fig., we present a histogram of highly correlated session profiles for each edge. For every edge $e_i$, we count the number of session profiles in the set $E_t\backslash e_i$ that exhibit a high correlation with $profile(e_i)$ 
We define two vectors as \textit{highly correlated} if their Pearson correlation is greater than $0.5$. Fig. \ref{fig:correlation} provides evidence that a significant portion of the edges exhibits high correlations with at least one other edge where the percentage of edge profiles that have at least 1 or more highly correlated edges $P[X \geq 1]$ is more than $63.8 \%$.

In addition to correlated edges, another factor that could affect the performance of our honeypot placement algorithm is the number of new or rare link connections. New connections is connection that the establishment of unique sessions for the first time (edges that did not exist before).
The accumulative new connection session line in Fig. \ref{fig:online_session} illustrates a continuous establishment of new sessions throughout the entire dataset duration. These new links can introduce alternate attack paths that were not considered during the training period, thereby contaminating the graph with unexpected paths and degrading the performance of the defense policy over time.
We show in the next section that our algorithms by sampling representative behaviours across time horizons are robust to these degradation factors. 

\textbf{Simulated AD graphs with realistic HasSession edges.} 
We augment the Adsimulator AD graphs in the previous sessions with the log-on and log-off data to generate more realistic AD graphs. Here, we randomly map users and computers from the authentication data to the synthetic graph which we refer to as the "AuthOrg" graph. We train our algorithm using the first 250 hours of data, excluding the initial 25 hours for training stability. The subsequent 2,073 hours are used for testing, resulting in a total of 2,448 hours of data. For ease of comprehension, the figures in this chapter only display results up to 1,500 hours. We take snapshots on the graph every 0.25 hours, so we have 1000 snapshots for the training in total. In the experiments, 80\% of nodes are blockable (this number could be changed and didn't affect the overall results). We used the voting heuristic with $m = 10$ for all experiments. Additionally, each set is tested on 10 trials.

\textbf{Result Interpretation.} In Fig. \ref{fig:resultcomp}, we present the results for varying the budget, specifically $b = {6, 8, 10}$. Our algorithm demonstrates strong performance with low optimality gaps for each setting, with gaps of 0.0129, 0.0189, and 0.049 for $b = 6, 8,$ and $10$, respectively. We are able to archive the low attacker success rate of 5.83 $\%$ on average with a budget of 10. Notably, we observed that the attacker's success rate is influenced by two main factors: the number of online edges in the network and the presence of certain new link connections (rare links) established in the network.

The experimental results exhibit a high correlation with the number of online sessions plotted in Fig. \ref{fig:online_session}. For instance, with the $b = 10$ setting, the Pearson correlation between the number of online sessions and the success rate over time is 0.5870. Our algorithm shows overall resilience against the effects of new link connections when they are introduced linearly over time. The (Pearson) correlation between the number of new connections and the success rate over time, with $b = 10$ setting, is only 0.07. Still, we observed a burst of new connections occurring throughout week 6 (from hour 800th to 1000th, Fig. \ref{fig:online_session}), which had a noticeable impact on the overall algorithm performance, as visually observed in Fig. \ref{fig:resultcomp}. % \hn{Explain what causes this? rare links?} 
Our hypothesis is that during these periods there are \emph{rare new links} that the $m$ sample graphs do not capture. 
% \hn{Why don't you try to increase $m$}

Overall, the algorithm performs well with low optimality gaps. However, it tends to perform worse during periods with an increased number of connections in the network, such as working hours. However, it is explainable that as the graph becomes more populated with Hassession edges, the number of paths to the Domain Admin (DA) increases, subsequently elevating the attacker's success rate. Another reason is that we also assumed that we have a limited number of honeypots and having more honeypots will suppress this trend. Additionally, the presence of rare certain new link connections during the test period can also affect the algorithm's performance. This limitation can be addressed by periodically retraining the algorithm to adapt to changing network conditions or increasing the number of snapshots. 
% \hn{Show results of this!!!}

Finally, Table~\ref{tab:graphstat} shows the scalability of our algorithms.  As shown, the largest network that we applied our algorithms to (ADU100) has 137,315 nodes and 1,490,766 edges. This is several orders of magnitude larger than typical graphs that any existing algorithms for honeypot placement can handle.

\section{Chapter Summary}

This chapter investigated a Stackelberg game model between an attacker and a defender on Active Directory attack graphs. The model considers two types of attackers, one of which can observe honeypots while the other cannot. The defender aims to allocate honeypots on attack graph nodes to minimize the attacker's expected success chance to reach DA. We show that the problems for both types of attackers are computationally hard and propose a mixed-integer programming (MIP) formulation to solve the problem when there is a mixed-types of attacker agents exist in the network, for both static and dynamic graphs. Our experiments showed that MIP scales well for a graph with $\approx$ 100 thousand nodes and $\approx$ 1 million edges. We show that our algorithm that combines m MIP (dyMIP(m)) instances of the graphs performs well under realistic temporal patterns. Our experimental results show that the MIP(m) algorithms produce near-optimal blocking plans, with the help of two heuristic methods based on majority voting and k-means clustering.

\section{Appendix-Complete Formulation for Multi-graph MIP}
\label{sec:appxmmip}

In this section, we provide the supplementary the complete formuluation for Multi-graph MIP. We define $v_{i,g}$ as the temporary variable for $y_{i,g}$
\begin{subequations}
\begin{align}
\text{min} \displaystyle\sum\limits_{g \in {G_s}}\displaystyle\sum\limits_{i=1}^{s}& \varphi R^{B}_{i,g} + (1-\varphi) \frac{y^B_{i,g}}{y_{i,g}} & \nonumber\\  \nonumber\\
% \end{flalign*}
% \begin{align}
  v_{i,g} =& \sum_{j \in n(i)}y^{B}_{j,g}, & \forall i \in V_g \backslash N_b, g \in G_s  \\
  % y_{i,g} =& \sum_{j \in n(i)} (1-B_i) y^{B}_{j,g}, & \forall i \in  N_b, g \in G_s \label{eq:1b} \\
  % y^B_{i,g} \leq& y_{i,g},         & \forall i \in V_g, g \in G_s \label{eq:1c} \\
  % y^B_{i,g} \geq& 0,              & \forall i \in V_g, g \in G_s \label{eq:1d} \\
  y^B_{i,g} \leq& v_{i,g}, & \forall i \in N_b, g \in G_{s} & \\
  y^B_{i,g} \leq& y_{i,g} (1-B_i), & \forall i \in N_b, g \in G_{s}\\ 
  y^B_{i,g} \geq& v_{i,g} + y_{i,g} \cdot (-B_i) ,& \forall i \in N_b, g \in G_{s}\\
  y^B_{i,g} \geq& 0 ,& \forall i \in N_b, g \in G_{s}\\
  y^B_{i,g} =& v_{i,g},& \forall i \in V_g \backslash N_b, g \in G_{s}\\
  R^B_{i,g} \geq& R^B_{j,g} - B_{i}, &\forall (i,j) \in E_g,  \nonumber \\ 
                &                    &i \in N_{b}, g \in G_s \\
  R^B_{i,g} \geq& R^B_{j,g}        , & \forall (i,j) \in E_g,\nonumber \\
                &                    & i \in N \backslash N_{b}, g \in G_s \\
  \sum_{i \in V}\ B_{i} \leq& b \\
  B_{i},R_{i,g} \in & \{ 0,1 \} \\
  R_{DA}, y_{DA} =& 1 
\end{align}
\end{subequations}

%% file: Chapter4/Chapter4.tex
\chapter{Optimizing Cyber Response Time on Temporal Active Directory Network using Decoys} % Main chapter title
\label{chapter:paper2} % Change X to a consecutive number; for referencing this chapter elsewhere, use \ref{ChapterX}

In this chapter, we study the problem of placing decoys in AD network to detect potential attacks. We model the problem as a Stackelberg game between an attacker and a defender on AD attack graphs where the defender employs a set of decoys to detect the attacker on their way to Domain Admin (DA). Contrary to previous works, we consider time-varying (temporal) attack graphs. We proposed a novel metric called response time, to measure the effectiveness of our decoy placement in temporal attack graphs. Response time is defined as the duration from the moment attackers trigger the first decoy to when they compromise the DA. Our goal is to maximize the defender's response time to the worst-case attack paths. We establish the NP-hard nature of the defender's optimization problem, which leads us to develop Evolutionary Diversity Optimization (EDO) algorithms. EDO algorithms identify diverse sets of high-quality solutions for the optimization problem. Despite the polynomial nature of the fitness function, it proves experimentally slow for larger graphs. To enhance scalability, we proposed an algorithm that exploits the static nature of AD infrastructure in the temporal setting. Then, we introduce tailored repair operations, ensuring the convergence to better results while maintaining scalability for larger graphs.

\section{Introduction}

Active Directory is Microsoft's identity and access management system designed for Windows domain networks. It's widely adopted and plays a critical role in the networks of many enterprises and government bodies. However, its popularity has also made it a prime target for many cyber adversaries over the years. According to a report from Microsoft ~\citep{stat2}, there has been an alarming surge in attacks targeting AD users, with 30 billion attempted password attacks on AD accounts reported each month in 2023.

An Active Directory network naturally describes an attack graph, with \textbf{nodes} representing both physical and virtual entities such as users, computers, security groups, etc., and \textbf{directed edge} $(i, j)$ representing the vulnerability and accesses that an attacker can exploit to gain access from node $i$ to node $j$. BloodHound \footnote{https://github.com/BloodHoundAD/BloodHound} is one of the most influential tools for analysing/visualising the AD attack graph. BloodHound models the \textit{identity snowball attack}, a concept initially introduced by Dunagan et al.~\citep{dunagan2009heat}. The identity snowball attack models the sequence of attack in the network allowing them to gain access of higher privileges nodes from a low privilege node (ex. Account A $\xrightarrow[]{\text{AdminTo}}$ Computer B $\xrightarrow[]{\text{HasSession}}$ Account C).  However, Dunagan et al. \citep{dunagan2009heat} and several other works on defending Active Directory network \citep{guo2022practical, guo2023scalable, goel2022defending, goel2023evolving, zhang2023oracle} oversimply the problem with the assumption that AD network is static. 
% mingyu(c): edge is not just vulnerability, it could be access
% While this assumption might hold for highly scripted and fast-paced attacks, it is not always reflective of reality. 
In practice, the AD graph is very dynamic which will affect the security landscape overtime. One of the major sources of changes in the AD graphs is caused by users’ activities. In Windows systems, user authentication leaves behind credential material, typically in the form of a hash or clear-text password in the computer's memory. Adversaries can exploit this vulnerability, harvesting credentials for lateral movement. In BloodHound, this vulnerability is presented as 'HasSession'. HasSession edges will stay online until being removed from the graph when the user signs off from the computer after a period of time. In this work, we formally model the dynamics of the AD graph using the \textbf{temporal attack graph}, wherein attackers gain access to nodes in the AD graph through the \textit{identity snowball attack}, presented as \textit{temporal paths}. For example, in Figure \ref{fig:ADex}, the identity snowball attack in temporal graph for gaining access of account $U_3$ from compromised node $s_2$ can be the following temporal path: $s_2$ $\xrightarrow[]{\text{1}}$ $Cp_1$ $\xrightarrow[]{\text{2}}$ $Cp_3$ $\xrightarrow[]{\text{4}}$ $U_3$ where number on each arrow is time the attacker exploit the edge to gain the access to next node. The static attack graph can not capture this attack path if generated at time steps $t \in [1,4) \cup (6, 10]$

\begin{figure}[h]
  \includegraphics[width=0.9\linewidth]{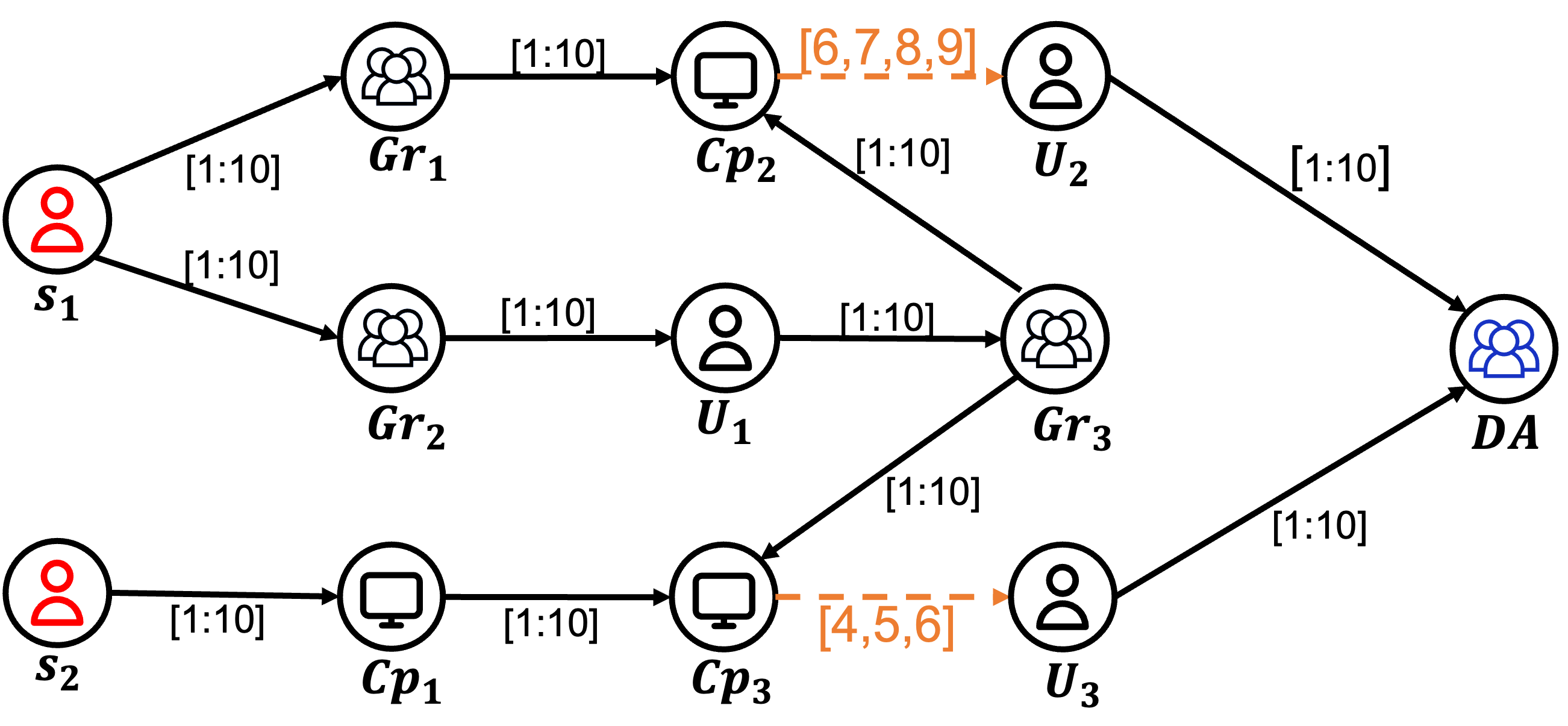}
  \caption{Example of an Active Directory graph sampled over a period of 10 time units. The timestamps on each edge indicates its appearance time. Black labels represent static edges, while orange labels denote dynamic edges (HasSession).}
  \label{fig:ADex}
\end{figure}

In this chapter, we study a method for defending temporal AD attack graph by using active defense with cyber decoys. Decoys or honeypots \citep{decoy1, decoy2} are fake assets such as fake users, and fake hosts that trigger an alert when attackers engage them. They are designed to lure attackers to them by mirroring authentic assets. By allocating decoys in strategic locations, they can serve as the sentinel for the early detection of the cyber threat. An early detection of a threat can increase the effectiveness of incident response process of IT admin and reduce the further damage caused by the attack~\citep{stat1}. Motivated by the early detection use case of the network decoy, we proposed a problem for decoy allocation in temporal network called $maxRT$. In this problem, defender aims to optimize the \textbf{response time} of the decoy to any attack path. The response time is defined as the duration from the moment attacker triggers the first decoy to when they can reach DA. Defender want to maximize the response time to ensure the early detection, providing IT admin with sufficient time to respond to the incident before the attacker reaches the DA. 

We model our problem as a two-player Stackelberg game model with a pure strategy. In our game, the defender (leader) wants to allocate at most $b$ decoys on a set of \textit{blockable} nodes. The defender's allocation intercepts the attacker's temporal attack path while maximizing the response time of the allocation to ensure early detection. The attacker (follower) has access to a set of compromised entry nodes. The attacker can also observe the entire temporal AD graph and the defensive strategy. This assumption is based on the practicality of attackers employing reconnaissance tools similar to Sharphound \citep{SharpHound} to collect data from domain controllers and build an AD attack graph. The attacker’s strategy specifies an entry node, and from it, a temporal attack path to DA. The attacker's best response is to choose a temporal attack path that has the minimum response time. We will later show that the attacker's optimal plan can be found in polynomial time.
% mingyu: maybe mention it is realistic to assume that the attacker can observe the whole graph and the defensive strategy (i.e., APT and sharphound etc.)

\textbf{Chapter Contribution.} This chapter aims to propose a solution for the decoys allocation in Active Directory problem. We first prove $\mathcal{NP}$-hard nature of the defender's combinatorial optimization problem. We then introduce the Evolutionary Diversity algorithm as a heuristic solver. However, the vanilla EDO algorithm does not scale well for our problem when it fails to converge to any feasible solution (response time > 0) in some graphs. When we mention "vanilla" EDO algorithm, we refer to directly applying the EDO implementation from Goel et al. \citep{goel2022defending, goel2023evolving} to our problem. In an attempt to run the vanilla EDO algorithm on the ADX10 graph in our experiment, the response time of the solution remains 0 even after 2 million iterations (equivalent to almost 3 days of computational effort).
To enhance our algorithm, we propose several improvements. Firstly, the computation of the attacker's optimal path relies on the earliest-arrival path, which is computationally slow in AD graphs. We observe an contradiction that the state-of-the-art algorithm for computing the earliest-arrival path becomes highly inefficient in temporal graphs with a large number of static edges. Despite the dynamic characteristics of the AD graph, a significant portion of the AD infrastructure remains static. To address this problem in AD-specific graphs, we present a novel Dijkstra-based algorithm for computing the earliest-arrival path which significantly improves the run-time of the fitness function.
Secondly, we introduce two constraint-handling techniques to tackle the difficulty of finding feasible solutions in the vanilla EDO. The first method introduces a repair mechanism using Integer Linear Program (ILP) to directly patch the infeasible solution every round. The second approach introduces the surrogate/penalty fitness function. The surrogate function is a lightweight fitness function that replaces the computationally expensive real fitness function, allowing the evaluation of individuals at a lower cost during each iteration. The surrogate function is designed to evaluate a solution on a set of "important" attack paths instead of the whole graph and penalize the infeasible individuals. We experimentally verify that our proposal effectively improves the scalability of the EDO algorithms on our problem.

\section{Model Description}

% In this chapter, we study a defence problem on the Active Directory attack graph assuming that the AD graph is a temporal graph. Here in this chapter, we consider 2 special elements of an advanced attacker such as APT, i.e., the persistence and the stealthiness of the attacker. To capture the time-depending element of the attacker model, we define a discrete sequence of time $\tau = \{1, 2, \dots, t_{max}\}$, where $t_{max}$ is the maximum lifetime. Next, we define the \textbf{temporal directed attack graph}:

\subsection{Background}
\textbf{Temporal directed graph} define as $G = (V, E_1,\cdots, E_{t_{max}}) = (V, E = (E_i)_{i\in [t_{max}]})$) where $V$ is a set of vertices in the graph and $E_i$ is the set of edges at time $i$. We denote the tuple $(u, v, t) \in E_t$ the edge from $u$ to $v$ appears at time $t$. For the sake of model simplicity, we assume that every edge has a duration of 1. In other words, if an attacker traverse edge $(u, v, t)$, they will start from node $u$ at time $t$ and arrive $v$ at time $t + 1$. Despite this simplification, all our algorithms remain effective in more general settings in which the duration of every edge is larger or equal 1. We call $t_{max}$ the lifetime of the graph. 
% We denote $G' = (V', (E'_i)_{i\in [t_{\alpha}, t_{\omega}]})$ as the temporal subgraph of $G$ where $V' \subseteq V$, $\bigcup_{i = t_{\alpha}}^{t_{\omega}}E_i = V'$, and $[t_{\alpha}, t_{\omega}]$ is called the interval of $G'$, where $1 \leq t_{\alpha} < t_{\omega} \leq t_{\text{max}}$. 
We also define the \textbf{underlying graph} of graph $G$ as $G_{\downarrow} = (V, E_{\downarrow})$ where $E_{\downarrow} = \bigcup_{t=1}^{t_{max}}E_t$. For the ease of demonstration in the chapter, we also denote $time\_label(u, v) = (t_i)_{i=1}^k$ as a (ascending) sorted list of time units that edge $(u, v)$ appears (or is on) if $(u, v, t_i) \in E_i$ where $t_{i} \in time\_label(u, v)$. Otherwise, we say edge $(u, v)$  disappears or is off at time step $t_i$ if $t_i \notin time\_label(u, v)$. We denote a set of \textbf{static edges} as $E_s$, we say an edge $(u, v) \in E_s$ is static if they appear in every time step throughout the graph's lifetime or $|time\_label(u, v)| = t_{max}$. Similarly, we denote set of \textbf{dynamic edges} as $E_d$, we say an edge $(u, v) \in E_d$ is dynamic if they disappear from the graph for some of the time units or $|time\_label(u, v)| < t_{max}$.

\textbf{Temporal $(s, d)$-path} is defined as a sequence of edges in graph $G$ exhibiting a monotonic increase in edge labels. For any two distinct nodes $s, d \in V$, a temporal path between two vertices $s$ and $d$ is represented by the sequence of edges: $\pi$ = $\pi(s, d) = \langle(s = v_0, v_1, t_1), (v_1, v_2, t_2), \dots, (v_{k-1}, v_k = d, t_k) \rangle = \langle(v_{i-1}, v_i, t_i )\rangle_{i=1}^{k}$ where $v_i \neq v_j$ and $t_i < t_j$ for all $i, j \in \{0, \dots, k\}$ with $i \neq j$.  
We denote $start(\pi) = t_1$ and $end(\pi) = t_k + 1$ as the \textbf{starting time} and \textbf{ending time} of a path $\pi(s, d)$. We further denote by $dur(\pi(s, d)) = end(\pi(s, d)) - start(\pi(s, d))$ the duration of travelling from the starting vertex to the ending vertex of the path $\pi(s, d)$. 
% Next, we may refer to a \textbf{sub-temporal path} of $\pi$ as $\pi(v_x, v_y) = \langle(v_{i-1}, v_i, t_i )\rangle_{i=x}^{y}$ where $\pi(v_x, v_y) \subseteq \pi = \langle(v_{i-1}, v_i, t_i )\rangle_{i=1}^{k}$ and $1 \leq x < y \leq k$. 
% We note that is a more general way to denote temporal path: $\pi(s, d) = \langle (v_{i-1}, v_i, t_i, \lambda_i) \rangle_{i=1}^{k}$, where $\lambda_i$ denotes the time taken to traverse from $v_{i-1}$ to $v_{i}$ through the edge $(v_{i-1}, v_i, t_i, \lambda_i)$. However, for the sake of model simplicity, we assume $\lambda_i = 1$ for all edges.
Next, we define a set of every possible temporal path from $s$ to $d$ between interval $[t_{\alpha}, t_{\omega}]$ as $\Pi(s, d, [t_{\alpha}, t_{\omega}]) = \{\pi:\pi \text{ is a (s, d)-temporal path such} $ $ \text{that} start(\pi) \geq t_{\alpha}, end(\pi) \leq t_{\omega} \}$. Then, a path $p \in \Pi(s, d, [t_{\alpha}, t_{\omega}])$ is an \textbf{earliest-arrival path} if $end(\pi) = min\{ end(\pi'): \pi' \in \Pi(s, d, [t_{\alpha}, t_{\omega}])\}$.
% mingyu: end is the min of the set?

% In the existing literature, two models for temporal paths have been identified: paths that can traverse multiple edges per time step (non-strict temporal path) and paths that are constrained to only one edge per time step (strict temporal path). We assume to use only strict temporal path in our model.

\textbf{Temporal $(s, d)$-cut}, also known as a temporal $(s, d)$-separator, refers to the set of nodes $C(s, d)$ in the graph $G$ that the removal of every node in set $C(s, d)$ will disconnects all temporal paths from $s$ to $d$. \textit{It is essential to note that in this chapter, the terms "cut" or "separator" specifically refers to the allocation of decoy on vertices}. When we employ a temporal $(s, d)$-cut the graph, we guarantee every (s, d) temporal path has a contact with the cut $C$.
% We denote $minC(s, d)$ as the minimum cut of the temporal graph if $|C(s, d)| \leq |minC(s, d)|$ for any temporal cut $C(s, d)$. 
Given a path $\pi(v_0, v_k) = \langle(v_{i-1}, v_i, t_i)\rangle_{i=1}^{k}$ in graph $G$ and a cut $C(v_0, v_k)$, we define a node $v$ as the \textbf{first point of contact} between the path $\pi(v_0, v_k)$ and the cut $C(v_0, v_k)$ if $v \in I : \forall u \in I, \text{dur}(v_0, v) \leq \text{dur}(v_0, u)$ where $I = V(\pi(v_0, v_k)) \cap C(v_0, v_k)$. In plain language, the first point of contact represents the first honeypot encountered when following the path. 
% mingyu: shall we emphasis there is always contact; we plan to place decoys on C; and only computer nodes can be decoys?

\textbf{Response time} denoted as $RT$ is a key parameter introduced in this chapter for our specific problem. 
The response time of a path $\pi$ is defined as the duration between the moment when the attacker encounters or triggers the first honeypot and the time when the attacker compromises the Domain Admin while following path $\pi$.  
Let's us consider the temporal path $\pi(s, DA) = \langle(s = v_0, v_1, t_1), \dots, (v_{k-1}, v_k = DA, t_k) \rangle = \langle(v_{i-1}, v_i, t_i )\rangle_{i=1}^{k}$ with $v_x$ where $1 < x < k$ is the first point of contact of $\pi(s, DA)$ and the defense solution $C(s, DA)$. The response time of path $\pi(s, DA)$ is defined as $RT(\pi, C) = dur(\pi(s, DA)) - dur(\pi(s, v_x)) = t_k - t_{x} $
As a defender, we want to maximize the response time of every temporal path in the attack graph to let IT admin have enough time to react to the incident. 
% Formally, we define the response time of path $\pi(s, DA) = \langle(s = v_0, v_1, t_1), \dots, (v_{k-1}, v_k = DA, t_k) \rangle = \langle(v_{i-1}, v_i, t_i )\rangle_{i=1}^{k}$ with defender honeypot allocation $C(s, DA)$ as $RT(\pi, C) = dur(s, DA) - dur(s, v)$ where $v$ is the first point of contact of $\pi$ and $C$. 
% As we want to measure the effectiveness of our honeypot deployment, we may interested in the response time of the whole graph. We may refer to the response time of an graph $G$ with the decoy deployment $C$ as $\Delta RT(G, C)$  .

\textit{Example 2.1.} Figure \ref{fig:ADex} illustrates a temporal Active Directory graph. The graph includes two compromised users, denoted as $s_1$ and $s_2$. The graph consists of two sets of edges: static edges, allowing the attacker to move between nodes at every time step, and dynamic edges, which appear for a limited time. In this example, we assume the defender allocates honeypots to a set of nodes $C = \{Cp_2, Cp_3\}$.
Consider the following temporal path $\pi = \langle(s_1, Gr_1, 2), (Gr_1, Cp_2, 4),$ $ (Cp_2, U_2, 6), (U_2, DA, 7) \rangle$ from $S$ to $DA$. Assuming the attacker from $s_1$ chooses this path, the honeypot on node $Cp_2$ is triggered at time 4 (as the attacker steps on it), alerting the IT admin to the attacker's presence. In this context, node $Cp_2$ is considered as the first point of contact for the attacker. The response time, defined as the time from honeypot alert to the attacker compromising the DA, is $RT = dur(\pi(s_1, DA)) - dur(\pi(s_1, Cp_2)) = (7+1-2) - (4+1-2) = 3$ units (plus 1 due to the assumption that traversing each edge takes 1 time unit). During this window, the IT admin has 3 time units to isolate compromised systems and terminate the attacker's unauthorized session. Note that the proposed response time is a realistic model of real hackers' behaviour where they would wait in the system for a long time before an opportunity arises for the next movement.

\subsection{Problem formulation}

\textbf{Temporal directed attack graph}
We define an AD attack graph in our model as a Temporal directed graph $G = (V, E_1,\cdots, E_{t_{max}})$. Set of vertices V represents all physical and virtual entities such as user, computer, security group, etc. The set of edge $E_i$ denotes the link modelling the security dependency and relationships between entities which represent vulnerabilities for attacker to make lateral movements.
There is a set $S\subseteq V$ of initial footholds called entry vertices, and the attacker has already compromised these vertices at the start of the attack. 
The attack goal is to compromise the Domain Admin (DA), the attacker can laterally move through the network using any of the \textbf{temporal (s, DA)-path}.

\textbf{Formulation with game theory} The problem of defending a temporal AD network with honeypots can be modelled as a Stackelberg game.
In our proposed model, the defender can deploy a set of honeypots on a set of vertices $C$ (a cut) such that form a temporal $(S, DA)$-cut. In our model, each honeypot will "monitor" any malicious activities on their allocated vertices. The honeypots will set an alert to IT admin once the attacker steps on one of these vertices. Defender can only allocate honeypot on a set of blockable vertices, denoted by $N_b \subseteq V$. 
% mingyu: i.e., blockable = computer nodes?
In consideration of a worst-case scenario, we assume the attacker has full visibility into the temporal graph and the honeypot placements. The attacker can bypass these honeypots if the honeypot's placement does not form a temporal $(s, DA)$-cut, in which case the response time is 0. 
Consequently, when the budget of the defender problem is exactly the size of the minimum temporal cut, our problem's solution is also the solution for the minimum temporal $(s, d)$-separator problem \citep{zschoche2020complexity} which is known to be a $\mathcal{NP}$-complete problem. However, our problem goes beyond this by also maximizing the response time of the temporal cut which tends to "push" the solution further away from the DA. Generally speaking, nodes further away from the DA tend to be lower privilege nodes instead of servers or admin. Therefore, our solution incurs lesser disruption to the network.
We say $C$ is a defender's \textbf{feasible solution} if $C$ is strictly a temporal $(S, DA)$-cut, otherwise, it is a \textbf{infeasible solution}. Strategically, when facing a defence solution $C$, the attacker selects a path that minimizes the response time. The \textbf{attacker optimal attack} path can be found via $\min_{\pi\in \Pi} RT(\pi, C)$ where $\Pi$ is the set of every possible temporal path between each vertex $s \in S$ to DA. In contrast, the defender aims to find a cut $C$ that maximize the response time. The \textbf{defender's objective} is formulated as

\begin{equation}
\max_{C\subseteq N_{b}, |C|<b}\{\min_{\pi\in \Pi} RT(\pi, C)\}.
\end{equation}

\begin{theorem} Defender's problem is $\mathcal{NP}$-hard.
\label{theorem:np}
\end{theorem}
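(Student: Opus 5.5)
We prove hardness by a reduction from the minimum temporal $(s,d)$-separator problem, which is known to be $\mathcal{NP}$-complete \citep{zschoche2020complexity}. This is the connection already noted in the problem formulation: the response time of a placement $C$ is positive exactly when $C$ is a temporal $(S,DA)$-cut. Deciding whether the optimum of the defender's objective is strictly positive therefore already decides whether a temporal separator of size at most $b$ exists. The plan is to make this correspondence exact and to check that the decision version of our problem is as hard as the separator problem.

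\textbf{Construction.} Take an instance of temporal separator: a temporal graph $G'=(V',E'_1,\dots,E'_{t_{max}})$, terminals $s,d$ and an integer $k$. We build a $maxRT$ instance as follows.
\begin{itemize}
\item Keep all vertices and time-labelled edges of $G'$.
\item Set the entry set $S=\{s\}$ and identify $d$ with $DA$.
\item Declare every vertex other than $s$ and $DA$ blockable, so $N_b=V'\setminus\{s,d\}$. Separators in the cited problem never contain the terminals themselves.
\item Set the budget to $b=k$. If the statement's $|C|<b$ is meant strictly, use $b=k+1$ instead.
\end{itemize}
The decision question becomes: is there $C\subseteq N_b$ within the budget with $\min_{\pi\in\Pi} RT(\pi,C)\ge 1$? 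The construction is clearly polynomial.

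\textbf{Correctness in both directions.} Suppose $C$ is a temporal $(s,d)$-separator with $|C|\le k$. Then every temporal $(s,DA)$-path meets $C$ at some first point of contact $v_x\neq DA$. The path must traverse at least the edge leaving $v_x$, and edge labels strictly increase along a path. With unit traversal time this gives $RT(\pi,C)=t_k-t_x+1\ge 1$ (in the paper's convention counting the $+1$), and in all cases $RT>0$. Conversely, suppose some $C$ achieves positive response time. Then no temporal $(s,DA)$-path avoids $C$, because by the model an unintercepted path yields response time $0$. Hence $C$ is a temporal separator of size at most $k$.

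\textbf{Main obstacle: convention details.} The delicate step is making the paper's conventions line up with the separator problem's conventions. There are three points to check.
\begin{itemize}
\item \emph{Strict versus non-strict paths.} The model uses strictly increasing labels with unit duration, i.e.\ strict temporal paths. I would cite the hardness result for the strict variant; Zschoche et al.\ show both variants are $\mathcal{NP}$-complete.
\item \emph{Contact at $DA$.} A path whose only contact would be $DA$ itself cannot arise, since $DA\notin N_b$.
\item \emph{Response-time formula.} Even if $RT$ were $t_k-t_x$ and could equal $0$ when $v_x$ immediately precedes $DA$, we can prevent this. Subdivide every edge into $DA$ with a non-blockable intermediate vertex and time-shift the labels, for example by doubling all timestamps. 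This guarantees that any intercepted path has $RT\ge 1$, while the set of separable configurations stays unchanged.
\end{itemize}
With these adjustments, the optimisation problem is $\mathcal{NP}$-hard, since even deciding whether its optimum is positive is $\mathcal{NP}$-complete.
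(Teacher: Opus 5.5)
Your reduction is correct. It starts from the same problem as the paper, strict temporal $(s,d)$-separation, but uses a different construction.

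\textbf{Your route.} You embed the separator instance unchanged ($S=\{s\}$, $DA=d$, $N_b=V'\setminus\{s,d\}$, budget taken from the given $k$) and rely only on the feasibility threshold: the optimum is positive exactly when a separator of size at most $k$ exists.

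\textbf{The paper's route.} The paper shifts the separator instance by $t_\alpha$ and wraps it in a gadget. The gadget has two entry nodes $s_1,s_2$ and auxiliary vertices $y_1,y_2$ that reach $DA$ late, via $(d,y_2,t_\alpha+6)$, $(y_2,DA,t_\alpha+7)$, $(s_2,y_1,t_\alpha)$ and $(y_1,DA,t_\omega)$. It takes budget $|minC_{ts}|+1$ and argues that an optimal max-$RT$ placement contains a minimum temporal separator.

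\textbf{What each buys.} With such a gadget a positive response time is always attainable by blocking the auxiliary vertices. The hardness then lies in the maximisation itself: distinguishing response time $1$ (intercepting at $y_2$, one step before $DA$) from response time at least $2$ (intercepting inside the separator subgraph). This matches the paper's remark that the problem goes beyond the separator problem. Your argument is shorter and is a clean Karp reduction. The paper's budget, as written, is expressed through the unknown $|minC_{ts}|$ and would need to be replaced by the given $k$ to become one.

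Two minor points on your write-up. First, by the chapter's definition $RT(\pi,C)=dur(\pi(s,DA))-dur(\pi(s,v_x))=t_k-t_x$, with no extra $+1$. Since labels strictly increase and $DA\notin N_b$, this is already at least $1$ for every intercepted path. So the subdivision and timestamp-doubling patch in your last bullet is unnecessary, though harmless. Second, if the separator hardness result you cite is stated for undirected temporal graphs, add one line on transfer. Replacing each edge by two opposite arcs with the same label preserves the set of strict temporal paths, so hardness carries over to the directed model used here.
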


\begin{proof} The proof is based on a reduction from the strict temporal $(s, d)$-seperator (strict-TS) problem which is $\mathcal{NP}$-complete \citep{zschoche2020complexity} for graph of lifetime $\geq 5$. 

\noindent \textbf{Problem: } Strict-TS
\begin{itemize}
    \item \textbf{Input:} A temporal graph $G = (V, E_1,\cdots, E_{t_{max}})$, source node $s\in V$, destination $d\in V$ and $k \in \mathbb{N}$
    \item \textbf{Output:} Does $G$ admit a temporal $(s, d)$-seperator of size at most $k$
\end{itemize}
The proof gadget for the strict-TS is illustrated in Figure \ref{fig:NPgadget}.a and the complete proof is provided in \citep{zschoche2020complexity}. 

Now for our max-$RT$ problem, the high level idea for the hardness proof is that the solution for the max-$RT$ problem can be found via solving the strict-TS problem. Let us define an instance of strictTS problem $G_{ts} = (V_{ts},  E_{ts,1},\cdots, E_{ts, t_{max}})$. We define a source node $s \in V_{ts}$ and destination node $d \in V_{ts}$. For the detailed construction of other nodes and edges in strict-TS, we refer the reader to Theorem 3.1 of \cite{zschoche2020complexity}. Let $minC_{ts}$ represent the solution to the strict-TS problem.

Subsequently, we construct the proof gadget for the max$RT$ problem (Figure \ref{theorem:np}.b) as follows. We introduce two entry nodes, $s_1$ and $s_2$. At time $t_\alpha$, node $s_1$ is connected to node $s$ of a sub-graph constructed following the strict-TS instance. At time $t_{\alpha}+6$, we also connect $d$ from the strict-TS subgraph to $y_2$. We delay every edges in the Strict-TS instance by $t_{\alpha}$. We assume that $s$ and $d$ is not blockable. Assuming $s$ and $d$ are not blockable, we finalize the instance by adding the following remaining edges: $(y_2, DA, t_{\alpha} + 7)$, $(s_2, y_1, t_{\alpha})$, and $(y_1, DA, t_{\omega})$ where $t_{\omega} \geq t_{\alpha}+7$. The full construction for max$RT$ can be seen in Figure \ref{theorem:np}.b.

\begin{figure}[h]
    \centering
  \includegraphics[width=0.75\linewidth]{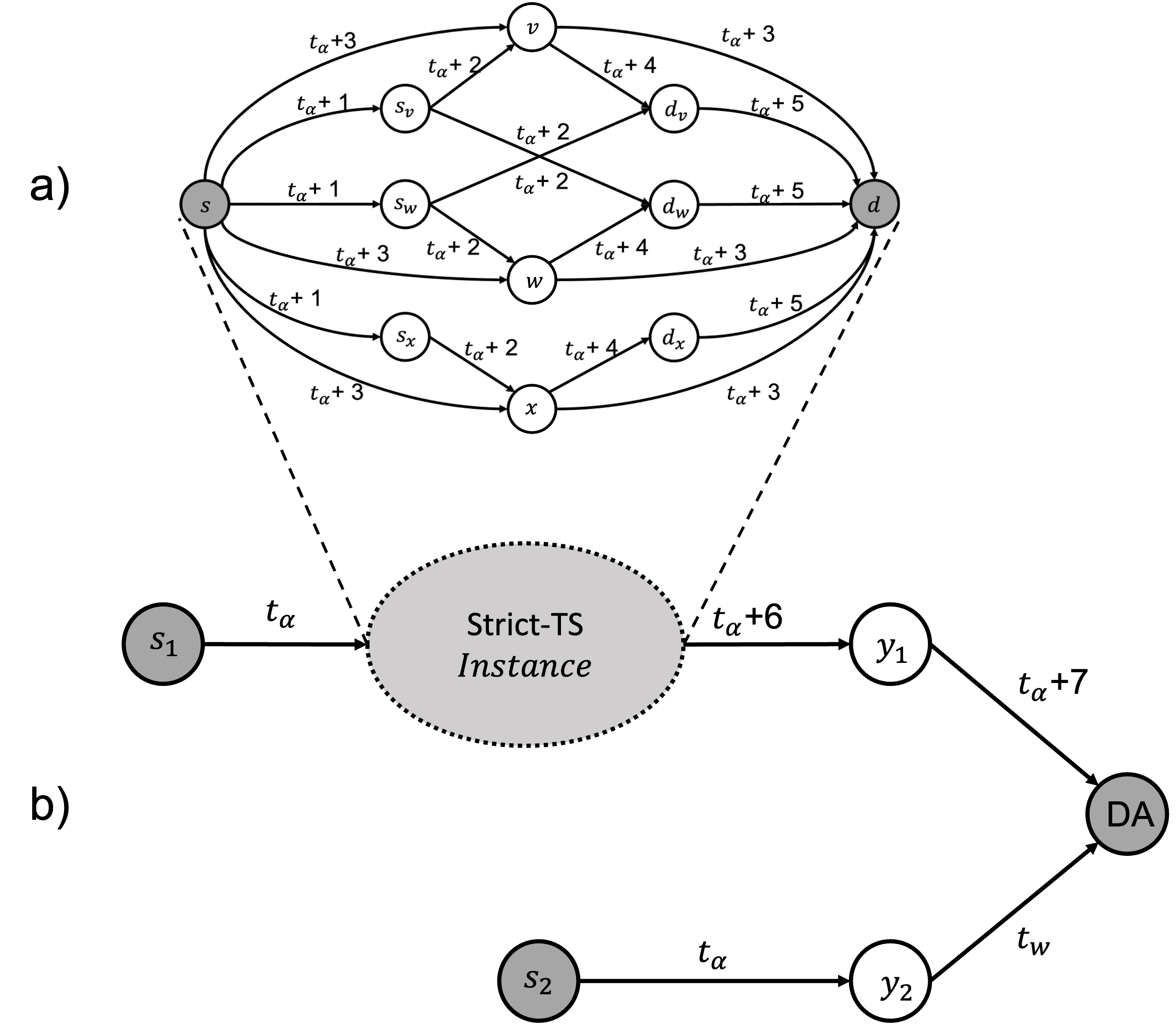}
  \caption{Proof gadget for Theorem \ref{theorem:np}. a) Proof gadget for Strict-TS problem. b) Proof gadget for max-$RT$ problem}
  \label{fig:NPgadget}
\end{figure}

With a defensive budget of $b = |minC_{ts}| + 1$, the optimal allocation involves locating the solution for the strict-TS instance and blocking vertices $y_2$. As the optimal solution of max$RT$ yield the optimal solution for Strict-TS, this implies that max$RT$ is $\mathcal{NP}$-hard.
% mingyu: and also block y_2?
\end{proof}

\section{Related Work}

\textbf{Identity Snowball Attack in dynamic environment}. In the literature, there are several efforts to model the identity snowball attacks with consideration of the dynamic nature of the attack graph. Chapter \ref{chapter:paper1} also study the honeypot/decoys allocation on Active Directory network with the consideration of the dynamic setting. However, their approach to modelling dynamism is somewhat simplistic. They capture the dynamic nature by taking independent static snapshots of the attack graph at each time step, treating each snapshot as an attacker's scenario in a static graph. Their allocation strategy jointly optimizes the number of attack paths in each snapshot. Chapter \ref{chapter:paper1} fail to model the identity snowball attack in the temporal graph. In practical scenarios, attackers can patiently "lurk" in a node until a more opportune path emerges. This characteristic makes our model more sophisticated and practical than theirs. Albanese et al. \citep{albanese2022formal} attempted to model the credential hopping attacks/identity snowball attacks on the time-varying user-computer graph. They assume that attacker does no observation on the network topology and employ a heuristic algorithm to find the upper-bound of the attacker attack effort. In contrast, our work considers the worst scenario where attacker have the observation on the attack graph and we can derive the optimal attack response. Pope et al. \citep{pope2018automated} also consider a similar model to Albanse et al. except they employ genetic programming to predict the attacker success rate/effort. We highlight that none of these works considers the temporal graph for modelling the dynamic of AD graph. 
% mingyu: incomplete

\textbf{Active Directory.} In the literature, two primary defender strategies have been explored for defending Active Directory: edge-blocking and decoy allocation (node-blocking). The seminal work by Dunagan et al. \citep{dunagan2009heat} was the first to study the defense problem in Active Directory through edge-blocking by introducing the heuristic edge-blocking algorithm. Follow-up researches on the edge-blocking optimization problem includes Guo et al. \citep{guo2022practical} proposed an optimal edge-blocking strategy using Fixed-Parameter Tractable algorithms; \citep{guo2023scalable, zhang2023oracle} improved scalability through Mixed-Integer Programming and the Double Oracle algorithm; Goel et al. \citep{goel2022defending, goel2023evolving} proposed the Evolutionary Diversity Optimization (EDO) algorithm to defend against attackers in a configurable environment; and Guo et al.\citep{guo2024limited} studied optimal edge-blocking problem with minimal human input. Another approach for defending Active Directory found in the literature involves node-blocking, which abstracts the concept to decoy allocation. Chapter \ref{chapter:paper1} are the first to study the honeypot allocation problem for defending Active directory where they proposed MIP algorithm to solve the problem.

\textbf{Evolutionary Diversity Optimization} \citep{ulrich2010integrating} is a recent branch of Evolutionary Computation. EDO is designed to identify a set of solutions that is both high-quality and structurally diverse. In the literature, there have been considerable efforts exploring the EDO algorithm for various combinatorial problems, including the travelling salesperson problem \citep{nikfarjam2021entropy, do2022analysis, bossek2019evolving}, minimum spanning tree problem \citep{bossek2021evolutionary}, knapsack problems \citep{bossek2021breeding}, and more. Among these studies, the work of Goel et al. \citep{goel2022defending, goel2023evolving} is particularly relevant to our research. Goel et al. consider the edge-blocking problem against attacker in AD graph where edges are associated with a failure rate and detection rate. They deploy a neural network/reinforcement learning to approximate the attacker's strategy and apply EDO algorithm to solve the defender problem. In our study, our EDO algorithm draws inspiration from Goel et al. \citep{goel2022defending}, including the design of the mutation/crossover operator and diversity measure strategy. However, our experimental findings reveal that the vanilla EDO algorithm performs poorly when directly applied to our specific problem.

\section{Proposed Algorithm}

\subsection{Game-theoretical rational attacker}
\label{sec:optatk}
% mingyu: optimal attacker => game-theoretical rational attacker?
In our model, the game-theoretical rational/optimal attacker will choose the attack path that has the minimal response time. We illustrate such paths using the following example from Figure \ref{fig:ADex}. We assume the defender allocates honeypots to a set of nodes $C = {Cp_2, Cp_3}$. 
Starting from the entry node $s_1$, let's examine two potential attack paths: $\pi_1 = \langle(s_1, Gr_1, 1), (Gr_1, Cp_2, 2),(Cp_2, U_2, 6), (U_2, DA, 7) \rangle$ and $\pi_2 = \langle(s_1, Gr_1, 1), (Gr_1, Cp_2, 5),(Cp_2, U_2, 6), (U_2, DA, 7) \rangle$. The difference between these 2 paths lies in the departure time of exploiting the second edge $(Gr_1, Cp_2)$. After exploiting the first edge $(s_1, Gr_1)$ at time 1, the attacker has 2 options: either immediately exploit the next edge at time 2 ($\pi_1$) or wait until time 5 to continue ($\pi_2$). Despite both paths leading to the attacker reaching DA at time 7, the attacker is more "troublesome" if they opt for $\pi_1$. This is because the decoy only identifies them at time 5 ($RT = 2$) for path $\pi_1$, whereas for path $\pi_2$, the attacker is detected at time 2 ($RT = 5$), providing the defender with significantly more time to respond to the incident. $\pi_1$ in this example is actually the worst-case/optimal attack path.

% From entry node $s_1$, the following path is one of the optimal attack path: $\pi = \langle(s_1, Gr_1, 1), (Gr_1, Cp_2, 5),(Cp_2, U_2, 6), (U_2, DA, 7) \rangle$.  . However, the attacker still wait in node $Gr_1$ until edge $(Cp_2, U_2)$ appear at time 6 to minimize the response time which is only $RT = 7 - 6 = 1$. 

Algorithm \ref{alg:optatk} for finding such paths can be described as follows. Let's consider an attack graph $G$ and a defender's honeypot allocation $C \in V$. We define a tuple $(\pi_1, c, t_c)$, where $\pi_1$ represents a temporal path, $c$ is a node in $C$, and $t_c$ is a time. Firstly, for each node $c \in C$, we verify if it is reachable from any of the entry nodes $s\in S$ at time $t_c$ in a graph $G' = (V\setminus (C\setminus c), E)$ (line 4) —here, we remove all nodes in $C$ except node $c$ (line 2). The condition in line 4 ensures the \textit{worst-case} condition of the optimal attack path. If we can reach node $c$ from $S$ at time $t_c$ using path $\pi_1$, we then find the earliest-arrival path $\pi_2$ from $c$ to DA within the interval $[t_c, t_{\omega}]$ (line 5-6). We add the tuple $(\pi_1, \pi_2)$ to $\Psi$ (line 7). Next, for every tuple $(\pi_1, \pi_2) \in \Psi$, we merge 2 path to form a temporal $(s,DA)$-path $\pi = \pi_1 + \pi_2$. We identify the tuple with the smallest duration $dur(\pi_2)$, \textit{the duration of the earliest-arrival path $\pi_2$ is actually the response time for the attack path $\pi = \pi_1 + \pi_2$} (line 8). Therefore, the optimal attack path $\pi_{OPT}$ is the one where the $\pi_2$ sub-path has the smallest duration. The fitness function giving a defender solution $C$ can be defined as: 

\setcounter{AlgoLine}{0}
\begin{equation}
\label{eq:fullfit}
  f(C)=\begin{cases}
    \min_{\pi \in \Pi} RT(\pi, C), & \text{if $C$ is feasible (temporal cut)}.\\
    % mingyu: reviewers already forget what is feasible, so good to mention it again
    0, & \text{otherwise}.
  \end{cases}
\end{equation}

\begin{algorithm}[H]
 \caption{Algorithm for Computing Optimal Attack Path}
 \label{alg:optatk}
 \SetAlgoLined
 \LinesNumbered

 % suppress printed "end" markers
 \SetKwFor{For}{for}{:}{}
 \SetKwFor{ForEach}{foreach}{:}{}
 \SetKwFor{While}{while}{:}{}
 \SetKwIF{If}{ElseIf}{Else}{if}{then}{else if}{else}{}

 \KwIn{Temporal graph $G$, set of source nodes $S$, set of honeypot $C$, destination node DA, time interval $[t_{\alpha}, t_{\omega}]$}
 \KwOut{Optimal attack path $\pi$}
\BlankLine
 \For{$c \in C$}{
   Remove nodeset $C \setminus c$ from graph $G$;

   \For{$t \in [t_{\alpha}, t_{\omega}]$}{
     \If{$c$ can be reached from any $s\in S$ at time $t$}{
       Store the path used to reach c by time $t$ to $\pi_1$;

       $\pi_2 \leftarrow compute\_earliest\_arrvl\_path(G, c, DA, [t, t_{\omega}])$;

       Add $(\pi_1, \pi_2)$ to $\Psi$;
     }
   }
 }

 $\pi = \arg\min_{(\pi_1, \pi_2) \in \Psi} \text{dur}(\pi_2)$;

 \Return{$\pi$}
\end{algorithm}

For computing earliest-arrival path subroutine (line 6) we can use the state-of-the-art algorithm proposed by Wu et al. \citep{wu2014path} which has been proven to be time-polynomial. This makes computing attacker optimal attack plan time polynomial. Despite this, Wu's algorithm is inefficient when running on AD-specific graph, slowing down the computation of the optimal attack plan. We will discuss this issue in the next section and propose a more efficient approach for calculating the earliest-arrival path.

\subsection{Faster computation for earliest-arrival path}

As outlined in Section \ref{sec:optatk}, the response time of an attack path is determined by the duration of the earliest-arrival path from an initial point of contact to the DA. Therefore, the computation of optimal attack for fitness function required the call of computing the earliest arrival path subroutine. The first candidate algorithm that we use for computing the earliest arrival path in our implementation is Wu's algorithm \citep{wu2014path}. In \citep{wu2014path}, the author explored the computation of minimal paths in temporal graphs, including the earliest-arrival path. Wu et al. introduced a one-pass algorithm for computing the earliest-arrival path, which stands as one of the state-of-the-art algorithms for this task. Wu's algorithm generates a set of edge streams, a chronological sequence of all edges $E$ ordered by the time at which the edge is collected. The algorithm scans through the edge stream, greedily updating the earliest arrival time at each node that satisfies the arrival condition. This process required the duplication of every static edge to correctly update the earliest arrival time which explain the contradictory of the inefficiencies of Wu's algorithm in AD-specific temporal graph. In general, Wu's algorithm poses inefficiencies when applied to graphs with a substantial number of static edges as Wu's algorithm requires the scan of every edge in $E$. In practice, while the AD graph exhibits dynamic characteristics, a significant portion of the AD infrastructure remains static. For instance, in a snapshot taken from the University of Anonymous on 13/10/2021 at 02:00 pm, a total of 1,151,962 relationships (edges) were identified as online at that time while only 4,039 of these edges were the HasSession edges, which are deemed as the primary source contributing to the dynamism of the AD graph.

Our proposed approach utilises the Dijkstra's edge scanning strategy which allows us to perform the scan only on the underlying edges $E_{\downarrow}$. The intuition behind this algorithm lies in using the Dijkstra greedy scanning strategy, which scans through each underlying edge only once to expand the earliest-arrival paths. The pseudocode is given in Algorithm \ref{alg:dijkstraea}. The idea of using Dijkstra for finding earliest-arrival path has been proposed in \citep{xuan2003computing}. However, we further enhance the runtime on graphs with numerous static edges by introducing a conditional statement between lines 11-14 in Algorithm \ref{alg:dijkstraea}

\setcounter{AlgoLine}{0}
\begin{algorithm}[H]
 \caption{Dijkstra-based algorithm for Computing Earliest-Arrival Time}
 \label{alg:dijkstraea}
 \SetAlgoLined
 \LinesNumbered

 % suppress printed "end" markers
 \SetKwFor{For}{for}{:}{}
 \SetKwFor{ForEach}{foreach}{:}{}
 \SetKwFor{While}{while}{:}{}
 \SetKwIF{If}{ElseIf}{Else}{if}{then}{else if}{else}{}

 \KwIn{Temporal Graph $G$, source nodes $S$, time interval $[t_{\alpha}, t_{\omega}]$}
 \KwOut{The earliest-arrival time from source node $s \in S$ to every vertex}
 \BlankLine
 $PQ = Priority\_Queue$;
 
 $INSERT_{PQ}(t_{i}, s), \forall s \in S$;

 $seen[s] = t_{i}, \forall s \in S$;

 $arrvl\_time \leftarrow empty$ $dictionary$;

 \While{$PQ \neq \emptyset$}{
   $(t_u, u) \leftarrow POP\_MIN_{PQ}()$;

   \If{$u$ in $arrvl\_time$}{
     continue;
   }

   $arrvl\_time[u] = t_u$;

   \ForEach{successor $v$ of $u$}{
     \If{$(u, v)$ is a $static$ $edge$}{
       $v\_arrvl \leftarrow t_u + 1$;
     }
     \Else{
       $v\_arrvl \leftarrow min\{t : t \in time\_labels(u, v),$ $and$ $t > t_u \}$;
     }

     \If{$v$ in $arrvl\_time$}{
       \textbf{continue};
     }
     \ElseIf{$v$ not in $seen$ or $v\_arrvl < seen[v]$}{
       $seen[v] \leftarrow v\_arrival$;

       $INSERT_{PQ}(v\_arrvl, v)$;
     }
   }
 }

 \Return{$arrvl\_time$}
\end{algorithm}

The correctness of the Dijkstra Greedy Strategy for computing the earliest-arrival path is provided in Theorem \ref{theo:dijkstra}. In the general case, the time complexity of Wu et al.'s algorithm can be expressed in our notation as $\mathcal{O}((\varepsilon_{s} + \varepsilon_{d}) \cdot t_{max})$, whereas the time complexity of our proposed algorithm is $\mathcal{O}((\varepsilon_{s} + \varepsilon_{d} \cdot t_{max})\log{}(|V|))$. In scenarios where the number of static edges $\varepsilon_{s} = |E_s|$ outweighs the number of dynamic edges $\varepsilon_{d} = |E_d|$, our algorithm demonstrates more efficient runtime, as theoretically presented in Theorem \ref{theo:staticea}.
% mingyu: complexity unmatched parenthesis, there is another unmatched parenthesis below in theroem 3

Experimentally, when we use these algorithms to find the earliest path from every source to every node in graph $ADX10$ (section ), while Wu's algorithm takes $18.370$ seconds to complete the task, Dijkstra Greedy's runtime is only about $3.389$ seconds (5x faster).

\begin{theorem}
\label{theo:dijkstra}
Algorithm \ref{alg:dijkstraea} correctly compute the earliest-arrival path from a source vertex $x$ to every vertex $v \in V$ within a given interval $[t_{\alpha}, t_{\omega}]$ with the complexity of $\mathcal{O}((\varepsilon_{s} + \varepsilon_{d} \cdot t_{max}) \cdot \log{}(|V|))$
\end{theorem}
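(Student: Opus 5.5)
The plan is to treat Algorithm~\ref{alg:dijkstraea} as Dijkstra's algorithm on the underlying graph $G_{\downarrow}$, with a time-dependent edge relaxation. For an edge $(u,v)$ and departure time $t$ from $u$, define the arrival function $a_{uv}(t)$. For a static edge it is $t+1$. For a dynamic edge it is $\min\{t' \in time\_label(u,v): t' > t\}$, plus the unit traversal; it is $+\infty$ if no such label exists inside $[t_{\alpha}, t_{\omega}]$. Correctness then rests on two properties of $a_{uv}$. First, it is \emph{non-decreasing} in $t$ (the FIFO property): departing later never lets one arrive earlier. Second, it satisfies $a_{uv}(t) > t$. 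Both follow immediately from the definitions, since a later departure only shrinks the set of admissible labels. This is the standard time-dependent shortest path setting of \citep{xuan2003computing}, and I will reprove the invariant directly.

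For correctness I will prove by induction on the order of POP\_MIN the invariant that, when $u$ is first popped with key $t_u$ and recorded in $arrvl\_time$, $t_u$ equals the earliest arrival time $EA(u)$ over all temporal $(S,u)$-paths in the interval. Soundness is easy: every key inserted is witnessed by a concrete temporal path, namely the one to the parent extended by an edge whose label exceeds the parent's arrival time, so $t_u \ge EA(u)$. For the reverse inequality, take an earliest-arrival path $\pi$ to $u$ and let $(x,y)$ be the first edge of $\pi$ with $x$ settled and $y$ not yet settled. By induction $x$ was settled at $EA(x)$, which is at most the time at which $\pi$ reaches $x$. By the FIFO property, relaxing $(x,y)$ from $EA(x)$ gives a key for $y$ that is at most $\pi$'s arrival time at $y$. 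Since $a>t$ along the rest of $\pi$, that key is also at most $\pi$'s arrival time at $u$, which is $EA(u)$. Because POP\_MIN selected $u$, we get $t_u \le \text{key}(y) \le EA(u)$.

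Two details need care, and this is the step I expect to be the main obstacle. First, waiting at nodes is permitted, which is exactly why the successor label is taken as the minimum label strictly after $t_u$ rather than requiring consecutive times. Second, the path definition demands distinct vertices, but an earliest-arrival walk can always be shortcut to a simple path without increasing its arrival time, again by monotonicity. I must also check that the static-edge branch (lines 11--14) is simply the special case $time\_label = [t_{max}]$, so it gives the same value as the general rule. Paths themselves are recovered by storing a parent pointer at each successful insertion.

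For complexity, each vertex is settled once, so each underlying edge $(u,v)\in E_{\downarrow}$ is relaxed exactly once. A static edge is relaxed in $\mathcal{O}(1)$. A dynamic edge costs at most $\mathcal{O}(t_{max})$ by a linear scan of its sorted label list, or less with binary search, which still fits within the stated bound. Each relaxation causes at most one priority-queue insertion costing $\mathcal{O}(\log |V|)$, if we bound the lazy-deletion heap size by $|E_{\downarrow}|$ and use $\log|E_{\downarrow}| = \mathcal{O}(\log |V|)$. Summing over edges gives $\mathcal{O}((\varepsilon_s + \varepsilon_d \cdot t_{max})\log |V|)$, as claimed.
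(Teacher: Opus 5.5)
Your proof is correct, but it reaches correctness by a different route than the paper.

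The paper's argument is short. It admits a prefix-subpath lemma from Wu et al.\ (every prefix of an earliest-arrival path is itself earliest-arrival) and then argues by analogy with classical Dijkstra, without writing out the greedy invariant. For complexity, it assumes each vertex enters the heap once (heap size $|V|$). It then absorbs the $|V|\log|V|$ popping cost by asserting $\varepsilon_{\downarrow} = |V|^2$.

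You instead prove the pop-order invariant directly from two properties of the edge arrival function: it is non-decreasing (FIFO), and $a_{uv}(t) > t$. Your lower-bound step only uses $EA(x) \le$ (the time $\pi$ reaches $x$), which holds for any temporal path, so you never need prefix-optimality. That is a real advantage, because the lemma as the paper states it is too strong. Suppose $a \to y$ exists only at label $5$ and there is a direct edge $x \to a$ at label $1$. Then $x \to b \to a \to y$ with labels $1,3,5$ is earliest-arrival to $y$, yet its prefix to $a$ is not earliest-arrival. Only the existential version holds: \emph{some} earliest-arrival path has all prefixes earliest-arrival.

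Your heap accounting is also closer to Algorithm~\ref{alg:dijkstraea} as written, where a vertex is re-inserted whenever $seen[v]$ decreases. Bounding the heap by $|E_{\downarrow}|$, with one insertion and at most one pop per relaxation, is the right count. It also avoids the step $\varepsilon_{\downarrow} = |V|^2$, which is only an upper bound. The paper's route buys brevity; yours is self-contained and does not depend on the over-strong lemma.

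One bookkeeping fix. As written, your dynamic arrival function is $\min\{t' > t\}$ \emph{plus} one unit, while your static one is $t+1$. Your planned check that lines 11--14 are the special case $time\_label = [t_{max}]$ would therefore come out off by one. Read each key as the label of the last edge traversed and drop the extra unit. That is what the algorithm actually stores, and it matches the strict-increase definition of temporal paths. The static branch then becomes literally $\min\{t' > t_u\} = t_u + 1$. Both properties you rely on survive, so nothing else in your argument changes.
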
 

\begin{proof} To proof the correcness, we first provide the following Lemma: 
\begin{lemma}
\label{lemma:wu}
Let a node sequence $V(\pi) = \langle x, v_{1}, v_{2}, \cdots, v_{k}  \rangle$ be the earliest-arrival path from vertex $x$ to vertex $v_{k}$ within some interval $[t_{\alpha}, t_{\omega}]$. Every prefix-subpath $V(\hat{\pi}) = \langle x, v_{1}, v_{2}, \cdots, v_{i}  \rangle \subset \pi$ where $0 < i < k$, is also an earliest-arrival path from $x$ to $v_{i}$ within $[t_{\alpha}, t_{\omega}]$.
\end{lemma} 
\begin{proof} Admit proof from Lemma 6 of \citep{wu2014path}
\end{proof}

The classic Dijkstra's algorithm computing single-source shortest paths based on the observation that the prefix-subpath of the shortest path is also a shortest path. Lemma \ref{lemma:wu} implied that the prefix-subpath of an earliest-arrival path is also an earliest-arrival paths. This proof the correctness of the use of Dijkstra greedy strategy for computing earliest-arrival path.

Next, we continue with the complexity analysis. We assume the use of a Priority Queue to identify the minimum arrival time of unvisited nodes in the Dijkstra-based algorithm. The algorithm grow the earliest arrival path by scan through each out-bound underlying edges in underlying edge the from the current node. Eventually, vertices $v \in V$ will be added to the heap once, hence, the worst-case heap size is $|V|$. Consequently, the complexity of the extract-min operation of the priority queue is $\mathcal{O}(\log{}(|V|))$. Iteratively popping the minimum value from the priority queue takes $\mathcal{O}(|V| \cdot \log{}(|V|))$. Since each node is only extracted once and not revisited, the for loop at line 10 will visit each underlying edge $E_{\downarrow} \in G_{\downarrow}$ only once. The updated earliest arrival time for each successor requires $\mathcal{O}(1)$ for static edges $e_{s} \in E_{s}$ and $\mathcal{O}(t_{max})$ for dynamic edges $e_{s} \in E_{d}$ where $t_{max} = t_{\omega}-t_{\alpha}$. Consequently, the overall complexity of the algorithm is $\mathcal{O}(|V| \cdot \log{}(|V|) + (\varepsilon_{s} + \varepsilon_{d} \cdot t_{max}) \cdot \log{}(|V|))$. As $\varepsilon_{\downarrow} = V^2$ and $\varepsilon_{\downarrow} = \varepsilon_{s} + \varepsilon_{d}$, we can simply rewrite as $\mathcal{O}((\varepsilon_{s} + \varepsilon_{d} \cdot t_{max}) \cdot \log{}(|V|))$.
\end{proof}

\begin{theorem}
\label{theo:staticea}
When $\varepsilon_{s} \gg \varepsilon_{d}$, the complexity of Dijkstra-based algorithm become $\mathcal{O}(\varepsilon_{s} \cdot \log{}(|V|)$ while complexity of Wu's algorithm become $\mathcal{O}(\varepsilon_{s} \cdot t_{max})$
\end{theorem}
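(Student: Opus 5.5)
The plan is to specialise both complexity expressions to the regime $\varepsilon_s \gg \varepsilon_d$ and keep only the dominant terms. Both algorithms have already been analysed in general form. Theorem~\ref{theo:dijkstra} gives $\mathcal{O}((\varepsilon_{s} + \varepsilon_{d} \cdot t_{max}) \cdot \log(|V|))$ for Algorithm~\ref{alg:dijkstraea}. The discussion preceding it gives $\mathcal{O}((\varepsilon_{s} + \varepsilon_{d}) \cdot t_{max})$ for Wu's one-pass algorithm.

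For Wu's algorithm, I would first justify the stated bound directly from how the edge stream is built. Every static edge $(u,v)\in E_s$ is present at each of the $t_{max}$ time steps, so it contributes $t_{max}$ copies to the stream. A dynamic edge contributes $|time\_label(u,v)| \le t_{max}$ copies. The one-pass scan performs $\mathcal{O}(1)$ work per stream element, so its cost is exactly the stream length, at most $(\varepsilon_s+\varepsilon_d)\,t_{max}$. When $\varepsilon_s \gg \varepsilon_d$, this is $\Theta(\varepsilon_s \cdot t_{max})$, since the static copies alone already force that much scanning. This is the key structural point: Wu's algorithm pays the time factor on the \emph{static} edges.

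For the Dijkstra-based algorithm, I would reuse the per-edge accounting from the proof of Theorem~\ref{theo:dijkstra}. Each underlying edge is relaxed exactly once, when its tail is popped. A static edge costs $\mathcal{O}(1)$ through the branch in lines 11--14, plus a possible heap insertion costing $\mathcal{O}(\log|V|)$. A dynamic edge costs at most $\mathcal{O}(t_{max})$ to find the next label, plus the insertion. The total is then $\mathcal{O}(\varepsilon_s\log|V| + \varepsilon_d\, t_{max}\log|V|)$, to which we add the $\mathcal{O}(|V|\log|V|)$ pops. That added term is absorbed because every vertex reached has an incoming underlying edge.

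The main obstacle is making precise what ``$\varepsilon_s \gg \varepsilon_d$'' must mean for the $\varepsilon_d t_{max}$ term to vanish. Mere $\varepsilon_s > \varepsilon_d$ is not enough, because $\varepsilon_d t_{max}$ could still exceed $\varepsilon_s$. I would state the assumption explicitly as $\varepsilon_d \cdot t_{max} = \mathcal{O}(\varepsilon_s)$, which matches the AD data where HasSession edges are about $4{,}039$ out of $1{,}151{,}962$. Under this assumption the Dijkstra bound collapses to $\mathcal{O}(\varepsilon_s\log|V|)$, while Wu's bound remains $\Theta(\varepsilon_s\, t_{max})$.

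Alternatively, I could sharpen the dynamic-edge cost to $\mathcal{O}(\log t_{max})$ by binary search on the sorted $time\_label(u,v)$. That would weaken the needed condition to $\varepsilon_d \log t_{max} = \mathcal{O}(\varepsilon_s)$. I would finish by noting that the Dijkstra-based algorithm is the faster one exactly when $\log|V| < t_{max}$, which is the typical case for long-lived temporal AD graphs.
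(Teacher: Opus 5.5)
Your argument is correct and follows the same route as the paper, which substitutes the regime into the general bounds $\mathcal{O}((\varepsilon_s+\varepsilon_d t_{max})\log|V|)$ and $\mathcal{O}((\varepsilon_s+\varepsilon_d)t_{max})$ and lets $\varepsilon_d\to 0$. Your version is more careful than the paper's: the limit argument silently treats $\varepsilon_d t_{max}$ as negligible, whereas you identify $\varepsilon_d t_{max}=\mathcal{O}(\varepsilon_s)$ as the condition actually needed for the Dijkstra bound to collapse, and you note that $\varepsilon_d=\mathcal{O}(\varepsilon_s)$ already suffices for Wu's bound.
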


\begin{proof} When $\varepsilon_{s} \gg \varepsilon_{d}$, we can safely assume that $\varepsilon_{d} \to 0$ to present the complexity in term of $\varepsilon_{s}$. The complexity of our Dijkstra-based algorithm can be reformulated as $\mathcal{O}(\lim_{\varepsilon_{d} \to 0}(\varepsilon_{s} + \varepsilon_{d} \cdot t_{max})\cdot \log{}(|V|))$, which simplifies to $\mathcal{O}(\varepsilon_{s} \cdot \log{}(|V|))$. Similarly,  the complexity of Wu's algorithm in the same limit is $\mathcal{O}(\lim_{\varepsilon_{d} \to 0}(\varepsilon_{s} \cdot t_{max} + \varepsilon_{d} \cdot t_{max}))$, which simplifies to $\mathcal{O}(\varepsilon_{s} \cdot t_{max})$.
\end{proof}

\subsection{EDO Algorithm for max-$RT$}

In this section, we discuss the application of the Evolutionary Diversity Optimization (EDO) algorithm within our problem context. The pioneering work of Goel et al. \citep{goel2022defending, goel2023evolving} introduced the EDO technique for addressing the edge-blocking problem in AD attack graphs. We initially applied Goel's EDO algorithm to our scenario. In our problem, the defender employs EDO to acquire a diverse set of defensive plans denoted as $C$, where the fitness function $f(C)$ can be obtained by computing the optimal attack plan. Let's define $P$ as the population of defensive solutions. An individual $p\in P$ is defined as the binarization of solution $C$ where each individual has a length of $|N_b|$, with 1 signifying the decision to block the corresponding node and 0 implying no blocking.
% mingyu: maybe mention goel is doing edge blocking while you are doing vertex blocking

We initiate the process by generating a random population \(P\) of defensive solutions. An individual $p$ is randomly selected from $P$ to undergo either mutation or crossover, each with a probability of 0.5. The number $x$ of mutated bits in the offspring is chosen randomly based on a Poisson distribution. For \textbf{mutation}, we randomly select an individual $p'$ from $P$ and flip $x$ random bits, changing 0s to 1s and 1s to 0s. For example, if we choose $p' = \langle 1, 0, 1, 1, 0, 1 \rangle$ from $P$ and $x = 2$, the resulting offspring could be $p = \langle 0, 1, 0, 1, 1, 1 \rangle$. For \textbf{crossover}, we again randomly select two parents $p'$ and $p''$ from $P$. We identify $x$ coordinates where $p'$ has 0s and $p''$ has 1s, and flip the bits at those coordinates on both $p'$ and $p''$. Similarly, we identify $x$ coordinates where $p'$ has 1s and $p''$ has 0s, and flip the bits at those coordinates. After having the offspring using mutation and crossover operation, we add the new offspring to the population only if their fitness score is close to the best fitness score of the population and reject the individuals that contribute the least to the diversity of the population. We follow the \textbf{diversity measure} of population implementation of \citep{goel2022defending,goel2023evolving}. Here we summarise our diversity objective, which aims to maximise the presence of unique nodes in the population. Let $Cnt_P(v_i)$ denote the number of individuals in population $P$ that contain node $v_i$, and define the set of nodes that appear at least once as $V_P = \{\, v_i \mid \mathrm{Cnt}_P(v_i) > 0\}$. The diversity is formally defined as: 

\begin{equation}
\label{eq:diversity}
D(P) = \sum_{v_i \in V_P} \frac{1}{\mathrm{Cnt}_P(v_i)}
\end{equation}

We say that $v_i$ is more "unique" to the population if they have a lower $Cnt_P(v_i)$ score. We noted that this chapter is not intended to redesign mutation, crossover operations, or diversity measures. Instead, our focus lies in the design of an algorithm aimed at enhancing the overall runtime and the convergence time to feasible solutions. We provide the pseudocode for EDO framework in Algorithm \ref{alg:edo-classic}

\begin{algorithm}[H]
 \caption{Evolutionary Diversity Optimization (EDO) framework}
 \label{alg:edo-classic}
 \SetAlgoLined
 \LinesNumbered

 % suppress printed "end" markers
 \SetKwFor{For}{for}{:}{}
 \SetKwFor{ForEach}{foreach}{:}{}
 \SetKwFor{While}{while}{:}{}
 \SetKwIF{If}{ElseIf}{Else}{if}{then}{else if}{else}{}

 \KwIn{Problem instance, diversity acceptance threshold $\alpha$}
 \KwOut{A diverse population of solutions $P$}
\BlankLine
 Initialize the population $P$\;

 \While{a termination criterion is not met}{
   Let $p_{OPT}$ be the best individual currently in $P$\;
   Generate an offspring $p'$ by either mutation or crossover\;
   \If{$f(p') > f(p_{OPT}) - \alpha$}{
     Add $p'$ to population $P$\;
     Remove the individual from $P$ that causes the smallest loss to the diversity indicator $D(P)$ in \ref{eq:diversity}\;
   }
 }

 \Return{$P$}
\end{algorithm}

Our preliminary investigation of the EDO algorithm revealed that most of generated offspring solution are infeasible. This challenge arises due to the expansive nature of the defender solution space (${|N_b| \choose b}$ combinations), which makes it difficult to generate feasible solutions using conventional evolution operators alone. 
Another challenge with the vanilla EDO algorithm is its requirement to execute the full fitness function. Although we have demonstrated that the fitness function can be computed in polynomial time and pushed the runtime frontier by proposing a modification of a Dijkstra-based algorithm for computing earliest-arrival paths, the execution time remains slow for larger graphs. In the following section, we will explore two constraint-handling techniques that we propose to enhance convergence to feasible solutions and improve the algorithm's runtime efficiency.

\subsection{Constraint-Handling Evolutionary Algorithm}
In this section, we shall introduce two constraint-handling approaches for our EDO algorithm.
\subsubsection{Integer Linear Programming repair operator}

In this proposal, we aim to address the issue of infeasible offspring directly by employing a problem-specific \textit{repair} operator. The repair mechanism involves solving an Integer-Linear Programming (ILP) to "patch" the cutting solution. The complete algorithm can be describe as following. Suppose we encounter an infeasible offspring, denoted as $p$ after the mutation or crossover. For each blocked node, excluding those that have undergone a state change during the mutation or crossover, we probabilistically unblock them (i.e., change 1s to 0s) with a probability of 1/2. The purpose of this unblocking operation is to reserve additional space for the subsequent repair process and fulfill the cardinality condition. Finally, we solve our problem-specific ILP repair operator. 

We begin the description for ILP repair operator by introducing its key \textbf{variables}. Let $R_{i, t}$ be a binary variable representing the DA-reachability of node $i$. A value of 1 indicates that we can reach the DA from node $i$ when starting the journey at time $t$, while 0 indicates otherwise. Additionally, we define $B_i$ as a binary decision variable; a value of 1 mean we decide to block node $i$, and 0 otherwise. The \textbf{objective function} of the repair process is formulated as follows: 

\begin{equation}
\label{eq:repairopt}
\min \sum\limits_{s\in S}\sum\limits_{t = t_{\alpha}}^{t_{\omega}} R_{s,t}
\end{equation}

The ILP minimise number of starting nodes that can reach DA. A resulting objective function score of 0 mean the ILP have successfully patch of the solution. Conversely, if the objective function score is greater than 0, it indicates the infeasibility of patching the cutting solution.

The ILP is subject to various \textbf{constraints}. Firstly, for all $(u, v, t) \in E$ where $v\in N_b\setminus V$, we impose the constraint $R_{u, t} \geq R_{v, t+1}$. This constraint implies that if node $v$ can reach the DA when starting to traverse at time $t+1$, then we can reach the DA from $u$ when starting to traverse at time $t$. The "$\geq$" sign, rather than "$=$," accommodates cases where there is an alternate edge from $u$ to reach the DA. Similarly, for all $v \in V$ and $t \in [t_{\alpha}, t_{\omega}]$, we have the constraint $R_{u, t} \geq R_{u, t+1}$. This indicates that if node $u$ can reach the DA when departing from this node at time $t+1$, then we can also reach the DA when departing from this node at time $t$. Next, the blocking constraint is expressed as follows: for all $(u, v, t) \in E$ where $v\in N_b$, the constraint is $R_{u, t} \geq R_{v, t+1} - B_v$. This states that if node $v$ is decided to be blocked, then $u$ cannot reach the DA via the edge $(u, v)$. Finally, we incorporate budget constraints: $\sum_{i \in V}\ B_{i} \leq b$ to conclude the formulation. The complete formulation is presented as following:
\begin{subequations}
\begin{align}
\text{min} \displaystyle \sum\limits_{s\in S}\sum\limits_{t = t_{\alpha}}^{t_{\omega}} R_{s,t}  \nonumber\\ & \nonumber\\
% \end{flalign*}
% \begin{align}
    R_{u, t} \geq R_{v, t+1}& - B_v, & \forall (u, v, t) \in E, v \in N_b \\
    R_{u, t} \geq R_{v, t+1}&, & \forall (u, v, t) \in E, v \in V \setminus N_b \\
    R_{u, t} \geq R_{u, t+1}&, & \forall v \in V, t \in [t_{\alpha}, t_{\omega}] \\
    \sum_{i \in V}\ B_{i} \leq b,& \\
    R_{u, t}, B_{i} \in \{0, 1\}&
\end{align}
\end{subequations}

The pseudocode for this approach is shown in Algorithm \ref{alg:ilprepair}

\begin{algorithm}[H]
 \caption{EDO with ILP repair operator}
 \label{alg:ilprepair}
 \SetAlgoLined
 \LinesNumbered

 % suppress printed "end" markers
 \SetKwFor{For}{for}{:}{}
 \SetKwFor{ForEach}{foreach}{:}{}
 \SetKwFor{While}{while}{:}{}
 \SetKwIF{If}{ElseIf}{Else}{if}{then}{else if}{else}{}
 
 \KwIn{Problem instance with objective $f$, diversity indicator $D$, acceptance threshold $\alpha$}
 \KwOut{A diverse population of solutions $P$}
 \BlankLine
 Initialize the population P\;

 $p_{OPT} \subseteq P$ is the best individual, $f(p)$ is the fitness function\;

 \While{A termination criterion is not met}{
   Generate an offspring $p'$ by either mutation or crossover\;

   \If{$p'$ is infeasible}{
     Generate $p''$ by randomly flipping blocked nodes to unblocked with probability of $1/2$ except mutated/crossovered nodes\;
     $p' \leftarrow ILP(p'')$\;
   }

   \If{$f(p') > f(p_{OPT}) - \alpha$}{
     Add $p'$ to population P and remove individual with smallest loss to the diversity indicator $D(P)$ in \ref{eq:diversity}\;
   }
 }
\end{algorithm}

While the repair operator ensures convergence to a feasible solution in each iteration, it is worth noting that this approach is very memory costly. The ILP requires $\mathcal{O}(|V|\cdot t_{max})$ variable and upto $\mathcal{O}(\varepsilon \cdot t_{max} + |V|)$ constraints, which can become exponentially large for certain graphs. Additionally, solving the ILP itself is known to be a $\mathcal{NP}$-hard problem. 
% Experimentally, the bottleneck of this approach is the IP-solving process, and it is difficult to run on large graphs due to out-of-memory issues.

\subsubsection{Surrogate-assisted and penalty-based repair operator}

During our experiments, we observed that evaluating the full fitness function (Eq. \ref{eq:fullfit}) is computationally costly, as it requires executing Algorithm \ref{alg:optatk} on the entire graph structure in every iteration. This cost becomes particularly prohibitive as graph size increases. Furthermore, when utilizing the vanilla EDO algorithm, we observed convergence issues where the solution frequently failed to reach feasibility.

To tackle the challenges mentioned earlier, we propose Algorithm \ref{alg:edo}. The central premise of this approach is to evaluate the population using a lightweight surrogate fitness function rather than the high-overhead complete fitness function. Our idea for design is that we only need to focus on a set of "important" paths that are likely to have the most impact on the evaluation, instead of spending time on the entire graph. We will have two separate sets of populations in our algorithm namely global population $P_{global}$ and local population $P_{local}$. The local population is evaluated every iteration by the surrogate fitness, while the global population is only evaluated by the complete fitness function when a specific condition is met. Let $\Phi$ be the set of "important" temporal $(s, DA)$-path for the surrogate function. We initialize the set $\Phi$ by adding a random set of temporal paths in graph. Then we iteratively improve the function by adding to $\Phi$ the most up-to-date optimal attack path by the attacker when facing the current population. Our experimental results demonstrate that the surrogate function eventually becomes as effective as the complete fitness function. The proposed algorithm is designed to guide the solution towards convergence of the feasible solution. The pseudocode of the algorithm is presented in Algorithm \ref{alg:edo}. It involved the call of 3 other subroutines: 

\textit{Local Search (line 14):} In the local search, the algorithm performs the standard mutation or crossover, diversity measure and rejection. The key difference is that instead of using a resource-intensive fitness function, we employ a lightweight surrogate fitness function for evaluation. We say an individual $p$ is a \textbf{locally feasible} solution if $p$ can intercept all paths in $\Phi$. Individuals failed to block all paths in $\Phi$ will be penalized. The penalty score is determined by the number of paths in $\Phi$ that an individual $p$ cannot block. The \textbf{surrogate fitness function} can be presented as follows:

\begin{equation}
  f_s^{\Phi}(C)=\begin{cases}
    \min_{\pi \in \Phi} RT(\pi, C), & \text{if $C$ is locally feasible}.\\
    -|\{ \pi \in \Phi : \pi \cap C = \emptyset\}|, & \text{otherwise}.
  \end{cases}
\label{eq:localfit}
\end{equation}

\textit{Global Search (line 22):}  We define that global search starts only when there are no locally infeasible individuals in the local population, and a specified number of local iterations have been completed. In the Global Search, the algorithm adds each "candidate" individual from the local population to the global population and employs diversity measures and rejection on the global population. We use the complete fitness function to evaluate each individual. \textit{It's important to note that a solution $C$ is locally feasible may not necessarily be globally feasible}. This concern arises because the local search evaluates only a fraction of the graph ($\Phi$), which may not provide enough samples to form a cut in the graph. However, as stated in Theorem \ref{theo:converge}, we establish that eventually, the locally feasible solution yields the globally feasible solution after a certain number of iterations.

\textit{Update the Surrogate Fitness Function (line 8 - 12):} Following every global search, we improve the surrogate function by updating the important path set $\Phi$. The update is based on the performance of each individual in the local population. For every $p\in P_{local}$ that is globally infeasible, we add some random temporal (s, DA)-path to $\Psi$ in graph $G'=(V\setminus p, E)$ after removing nodes in cut set $p$. Those are the paths that make the individual $p$ globally infeasible. We use the modification of the Depth First Search algorithm for temporal graphs to find the random paths. For every $p\in P_{local}$ that is globally feasible, we improve the surrogate function by adding the optimal attack path when facing the defense solution $p$ to $\Phi$.
% denoted as $\pi_{opt}^{p}$,
% To keep the surrogate function light-weight, every time when we add a path $\pi$ to $\Psi$, we remove a path $\hat{\pi}$ that have $V(\pi) = V(\hat{\pi})$.
% mingyu: don't understand the last sentence

\begin{theorem}
\label{theo:converge}
In Algorithm~\ref{alg:edo}, after at most $\mathcal{O}(|V|)$ updates of $\Phi$, any cut $C$ that is feasible for $f_s^{\Phi}(C)$ is also feasible for $f(C)$. Hence the algorithm converges to global feasibility.
\end{theorem}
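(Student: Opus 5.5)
The approach is a monotonicity-plus-potential argument on the important-path set $\Phi$. First I will record the basic monotonicity facts. Local feasibility of $C$ means that every path in $\Phi$ meets $C$. Since $\Phi$ only grows, the family $\mathcal{F}(\Phi)=\{C\subseteq N_b : |C|\le b,\ C\cap V(\pi)\neq\emptyset\ \forall \pi\in\Phi\}$ is non-increasing under updates. Global feasibility (a temporal $(S,DA)$-cut) implies local feasibility for every $\Phi$. Hence $\mathcal{F}(\Phi)$ always contains the globally feasible cuts, and the claim reduces to showing that after $\mathcal{O}(|V|)$ updates $\mathcal{F}(\Phi)$ contains only globally feasible cuts.

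The key step is to show that each update which is triggered by a locally-feasible-but-globally-infeasible individual $p$ strictly shrinks $\mathcal{F}(\Phi)$ in a controlled way. In that case the update adds a temporal $(s,DA)$-path $\pi$ of $G'=(V\setminus p,E)$. Such a path avoids $p$, so after the update $p\notin\mathcal{F}(\Phi)$. To obtain an $\mathcal{O}(|V|)$ bound rather than a bound in $\binom{|N_b|}{b}$, I will not count excluded cuts. Instead I will charge each informative update to the node set it contributes. Consider the sets $V(\pi)\cap N_b$ for $\pi\in\Phi$, i.e., the blockable nodes hit by paths in $\Phi$. The plan is to argue that each informative update introduces, in its projection onto the underlying graph $G_\downarrow$, a blockable vertex not previously forced, or otherwise refines the hitting constraint. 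The potential I will use is the number of blockable vertices $v$ such that some $\pi\in\Phi$ passes through $v$. This potential is at most $|N_b|\le|V|$. Updates from globally feasible individuals only add optimal attack paths, which do not affect feasibility, so they are harmless and can be ignored in the count.

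The main obstacle is justifying that an informative update always increases this potential, i.e., that a new escaping path cannot reuse only already-covered vertices. A path that visits only already-covered vertices could still escape $p$, because $p$ may hit $\Phi$ elsewhere. My first attempt is to exploit the specific choice of the escaping path. I will take it to be one whose blockable vertices are all outside $p$ (true by construction). I will then argue by the temporal prefix property (Lemma~\ref{lemma:wu}) that it can be chosen so that its first blockable vertex not already lying on a $\Phi$-path is new. If that fails, I will fall back on a weaker but still linear count, indexed by node, over distinct \emph{minimal} hitting requirements. This means grouping escaping paths by their earliest blockable vertex under Algorithm~\ref{alg:dijkstraea}'s arrival order, which gives at most $|V|$ classes. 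Once every class is represented, any $C$ hitting all of $\Phi$ hits every temporal $(S,DA)$-path, so it is a temporal cut. Global feasibility of all locally feasible individuals, and hence convergence, then follows.
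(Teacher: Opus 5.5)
\textbf{The gap is the $\mathcal{O}(|V|)$ count, and neither of your two routes can close it.} Your first paragraph is sound. $\mathcal{F}(\Phi)$ only shrinks, and globally feasible cuts are never excluded. The escape path added for a locally-feasible-but-globally-infeasible $p$ removes $p$ from $\mathcal{F}(\Phi)$. Equivalently, and this is the fact the paper's proof also uses, its vertex set differs from that of every path already in $\Phi$, since $p$ meets all of those but not the new one.

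The counting step fails on the following example. Take the static graph (every edge present at every time step) with edges $s\to f_i$, $f_i\to l_j$ for all $i,j\in\{1,\dots,n\}$, and $l_j\to DA$, with $N_b=\{f_1,\dots,f_n,l_1,\dots,l_n\}$. Then $|V|=2n+2$, and a set is a cut iff it contains all $f_i$ or all $l_j$. Suppose $\Phi$ contains the $n$ diagonal paths $s f_i l_i DA$. Every blockable vertex now lies on a $\Phi$-path, so your potential is saturated at $2n$. Every earliest-blockable-vertex class $f_i$ is also represented. Yet $C=\{f_2,\dots,f_n,l_1\}$ meets all of $\Phi$ and is not a cut, because $s f_1 l_2 DA$ avoids it. So informative updates need not increase your potential, and one representative per class does not certify a cut.

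\textbf{In fact no argument can give a linear bound here.} Take $n\ge 3$ and budget $b=\lfloor 3n/2\rfloor$, i.e.\ $1.5\,|minC|$, the paper's own experimental choice. For each $a$ and each $B\subseteq\{1,\dots,n\}$ with $|B|=\lceil n/2\rceil-1\ge 1$, define $C_{a,B}=\{f_i : i\neq a\}\cup\{l_c : c\notin B\}$. This set has size $\lfloor 3n/2\rfloor\le b$ and is not a cut. It is locally feasible unless $\Phi$ contains a path $s f_a l_c DA$ with $c\in B$. Hence, for every $a$, $\Phi$ must contain such paths for at least $\lfloor n/2\rfloor+2$ values of $c$. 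That is $\Omega(n^2)=\Omega(|V|^2)$ distinct vertex sets in $\Phi$. Each update adds at most one path per local individual, so $\Omega(|V|^2/|P_{local}|)$ updates are needed before every locally feasible cut is globally feasible.

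The paper's proof has the same hole. After the valid new-node-sequence observation, it simply asserts that vertex-disjoint escape paths are the worst case and that overlapping paths only speed convergence; the example above refutes this. What your setup does prove is termination after at most as many informative updates as there are distinct vertex sets of $S$--$DA$ paths in $G_{\downarrow}$, or as there are members of the initial $\mathcal{F}(\Phi)$. That gives finite convergence, not convergence within $\mathcal{O}(|V|)$ updates.
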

\begin{proof}
Let's us denote $\Pi(\pi)  = \{\hat{\pi} : V(\hat{\pi}) = V(\pi)\}$ is the set of path where each of the element $\hat{\pi}$ have the identical path sequence with $\pi$. We make a following observations regarding the first point of contact of $\pi$: Let's say $i\in V(\pi) \cap C$ is the first point of contact of temporal path $\pi$, then, $i$ is also the first point of contact of every temporal path $\hat{\pi} \in \Pi(\pi)$. Based on the above mentioned observation, for each time the algorithm execute line 12 to add a random path to $\phi$, the algorithm will add a temporal path that will not overlap with any node sequence of any path in $\phi$.

Let's consider an instance of temporal graph denoted as $G = (V, E)$. In this graph, we have source vertices $s\in V$ and destination vertices $d \in V$, forming the underlying graph $G_{\downarrow} = (V, E_{\downarrow})$. It is specified that $G_{\downarrow}$ contains $\mathcal{O}(|V|-2)$ (excluding the source and destination vertices) disjoint paths from $s$ to $d$. Additionally, it is assumed that there is a budget available for deploying at least $|V|-2$ honeypots. The number of budget is $|V|-2$ since it is the size of the minimal temporal cut of our instance. If $b < |V|-2$, the response time is 0, defining the best defense. To simplify our proof, we make the assumption that the algorithm adds only one path to the set $\phi$ in each global iteration. If more than one path is added to the set, the algorithm may achieve faster convergence. The algorithm continues to append new temporal paths to the set $\phi$ until no further paths remain. In the worst-case scenario, each path $\pi$ added to $\phi$ corresponds to vertices disjoint paths in the underlying graph $G_{\downarrow}$ (every paths in $\phi$ are vertices disjoint with each other). Consequently, all $\mathcal{O}(|V|-2)$ paths must be incorporated into the surrogate path set $\phi$ until the Local Search's feasible solution produces a (s, d)-cut on the graph $G$, meeting the feasibility condition for Global Search. This leads to the conclusion that, at worst, we need $\mathcal{O}(|V|)$ Global Search iterations until the feasible solution of Local Search can yield a feasible solution for Global Search. It's worth noting that in the event of tie-breaking, where paths added to $\phi$ aren't disjoint, the algorithm converges faster (i.e. $\mathcal{O}(|V|-2)$ is the lower bound). Blocking common vertices demands less budget, resulting in $\phi$ containing only disjoint paths as the worst-case scenario.

\end{proof}

% $min\{fitness_{local}(p)$ : $ p \in P_{local})\} > 0$. 

\begin{algorithm}[H]
 \caption{EDO with surrogate-assisted/penalty-based fitness function}
 \label{alg:edo}
 \SetAlgoLined
 \LinesNumbered

 % suppress printed "end" markers
 \SetKwFor{For}{for}{:}{}
 \SetKwFor{ForEach}{foreach}{:}{}
 \SetKwFor{While}{while}{:}{}
 \SetKwIF{If}{ElseIf}{Else}{if}{then}{else if}{else}{}

 % inline function blocks (no "end")
 \SetKwProg{Fn}{Function}{}{}

 \KwIn{Temporal Graph $G$, honeypot budget $b$}
 \KwOut{Blocking population $P$}
\BlankLine
 Initialize local population $P_{local}$\;
 Initialize global population $P_{global}$\;
 Initialise set of paths $\Phi$\;

 \While{A termination criterion is met}{
   $P_{local} \leftarrow LocalSearch(P_{local}, \Phi)$\;

   \If{$\prod_{c\in P_{local}, \pi \in \Phi} |c \cap \pi| \neq 0$ and global criterion is met}{
     $P_{global} \leftarrow GlobalSearch(P_{global}, P_{local})$\;

     \ForEach{$c \in P_{local}$}{
       \If{$c$ is a $(S, DA)$-cut in $G$}{
         Compute $\pi_{opt}^{c}(G)$ and add to $\Phi$\;
       }
       \Else{
         Add a random paths from $s\in S$ to $DA$ in graph $G'=(V \textbackslash c, E)$ to $\Phi$\;
       }
     }
   }
 }

 \Return{$P_{global}$}

\BlankLine
 \Fn{LocalSearch($P_{local}, \Phi$)}{
   Initialize the population $P$\;

   $f_s^\Phi(p)$ is the surrogate fitness function from Eq. \ref{eq:localfit}\;    
   
   $p_{OPT} \subseteq P$ is the best individual\;
   \While{A termination criterion is not met}{
     Generate an offspring $p'$ by either mutation or crossover\;

     \If{$f_s^\Phi(p') > f_s^\Phi(p_{OPT}) - \alpha$}{
       Add $p'$ to population $P$ and remove individual with smallest loss to the diversity indicator $D(P)$ in \ref{eq:diversity}\;
     }
   }

   \Return{$P_{local}$}\;
 }
\BlankLine
 \Fn{GlobalSearch($P_{global}, P_{local}$)}{
   
   $f(p)$ is the full fitness function from Eq. \ref{eq:fullfit}\;  
   $p_{OPT} \subseteq P_{global}$ is the current best individual\;
   \ForEach{$p \in P_{local}$}{
     \If{$f(p) > f(p_{OPT}) - \alpha$}{
       Add $p$ to population $P_{global}$ and remove individual with smallest loss to the diversity indicator $D(P_{global})$ in \ref{eq:diversity}\;
     }
   }
   \Return{$P_{global}$}\;
 }
 
\end{algorithm}

\section{Experiment Result}
\label{sec:exp}

\subsection{Experiment Set Up}
All of the experiments are carried out on a high-performance computing cluster with 1 CPU and 24GB of RAM allocated to each trial. As the real-world AD graph is sensitive, we will conduct experiments on synthetic graph generated by DBCreator \footnote{https://github.com/BloodHoundAD/BloodHound-Tools/tree/master/DBCreator} and Adsimulator \footnote{https://github.com/nicolas-carolo/adsimulator} - two state of the art tools for creating AD graphs. Every graph starting with R ("Rxxx") is generated by DBCreator while the one starting with label AD ("ADxxx") is generated by the ADsimulator. 
DBCreator only allows us to fine-tune the number of computers and users. In contrast, Adsimulator provides greater flexibility by enabling adjustments to various entities in the AD graph, including Security Groups, Organizational Units (OUs), Group Policy Objects (GPOs), and more. Consequently, we have two types of graphs generated by Adsimulator: 'ADX\textbf{x}', where default parameters are increased by a factor of '\textbf{x}' (e.g., ADX10 is 10 times the default setting), and 'ADU\textbf{y}', mimicking '\textbf{y}' fractional proportions of the structure of the real AD network at the University of Anonymized (e.g., ADU05 represents 5 $\%$ of the mimicked network). Due to space constraints, detailed information about the size of each graph will be provided in the technical report. However, for a quick estimate, here are the sizes of the largest graph for each type: R4000 (12001 nodes and 45780 edges), ADX20 (6013 nodes and 26671 edges), and ADU (6875 nodes and 37292 edges).

\begin{table}[h]
\begin{center}
\caption{Comparison of all algorithms with DBCreator's graph. The results show the average response time (higher is better) and the average last improvement time (lower is better) of each setting. The numbers in the parenthesis are the average last improvement time.}\label{tab:dbresult}
\smallskip\noindent
\resizebox{\linewidth}{!}{%
\begin{tabular}{l|llll}
\hline
               & \textbf{R2000+C}   & \textbf{R4000+C} & \textbf{2000+L} & \textbf{R4000+L} \\
\hline
\textbf{VAN-V}     & 2.10 (53177s) & 3.60 (42445s) & 0              & 0                \\
\textbf{VAN-D}     & 2.03 (68268s) & 4.09 (28514s) & 0              & 0           \\
% \hline
\textbf{ILP-V}     & 4.07 (16715s) & 4.49 (17037s) & 2.62 (40999s) & 3.18 (41826s)              \\
\textbf{ILP-D}     & 4.09 (25177s) & 4.58 (19067s) & 2.90 (43272s) & 2.90 (33305s)               \\
% \hline
\textbf{EST-V}     & 4.17 (302s)   & 4.70 (706s)   & 4.50 (11862s)  & 3.70 (20536s)              \\
\textbf{EST-D}     & 4.17 (473s)   & 4.70 (302s)   & 4.60 (10257s)  & 3.70 (20536s)              \\
\hline
\end{tabular}}
\end{center}
\end{table}

\begin{table*}[ht]
\begin{center}
\caption{Comparison all algorithms with ADsimulator's graph. No feasible result found by VAN so we did not include it here. OOM is stand for Out-of-Memory. All notion in Table \ref{tab:dbresult} will be also applied here. }\label{tab:adresult}
\smallskip\noindent
\resizebox{\linewidth}{!}{%
\begin{tabular}{l|llllllll}
\hline
                                    & \textbf{ADX5+C}   & \textbf{ADX10+C} &\textbf{ ADX20+C} & \textbf{ADU5+C} & \textbf{ADX5+L}   & \textbf{ADX10+L} &\textbf{ ADX20+L} & \textbf{ADU5+L} \\
\hline
\textbf{ILP-V}                      & 3.21 (31362s)        & OOM               & OOM                   &  OOM       & 3.50 (26796s)        & 0.83 (78836s)              & OOM                   &  OOM                                 \\
\textbf{ILP-D}                      & 3.30 (25250s)         & OOM                & OOM                   & OOM    & 3.40 (25228s)         & 0.60 (73151s)                & OOM                   & OOM                                    \\

\textbf{EST-V}                      & 3.30 (10517s)        & 3.50 (20498s)        & 2.50 (34762s)            & 1.60 (67432s)     & 5.27 (1965s)        & 1.05 (38057s)        & 1.4 (68776s)            & 1.8 (70165s)                                      \\
\textbf{EST-D}                      & 3.40 (2415s)         & 3.30 (14839s)        & 2.50 (34365s)            & 1.70 (65532s)      & 4.77 (3175s)         & 1.75 (23709s)        & 1.90 (74284s)         & 1.40 (73482s)                                     \\
\hline

\end{tabular}}
\end{center}
\end{table*}

However, these tools only generate static snapshots of the graph. To generate a temporal AD attack graph, we will merge a "mould" of static AD graph with authentication data which simulates the characteristic of Hassession edge. The first source is the authentication data from The Comprehensive, Multi-Source Cyber-Security Events dataset \citep{LANL}, referred to as \textbf{LANL}. The second source is from an anonymous organization, labelled as \textbf{COMP}. We will provide the details of each dataset in the appendix. Combining these datasets involved the following process. First, in each static mould AD graph, we removed all HasSession edges. Next, we randomly mapped users, computers and authentication events from the authentication data to the mould graph to create an instance of the temporal graph. For clarity in denoting the generated instances, we referred to a temporal graph in the format $\{graph\}+\{auth\_source\}$. For instance, \textbf{R2000+C} indicates a temporal graph derived from the mould static graph \textbf{R2000}, with Hassession edge data sourced from the \textbf{C}OMP authentication dataset. In this notation, L refers to the LANL dataset, and C refers to the COMP dataset.

We use Gurobi 9.0.2 solver for solving the ILP module. For each experiment instance, we ran 10 trials. In each trial, we randomly choose a set of 10 starting nodes and randomly re-map the authentication data to the mould graph. 
% We only choose nodes with at least 6 time unit distance to DA as starting nodes.
To define the defensive budget for our problem, we have to determine the size of the minimum temporal cut $|minC|$. We will discuss how we determine $minC$ in our Section \ref{sec:minc}. Given that the condition $b \geq |minC|$ has to be met to ensure our problem is feasible, we define the budget for our problem as $b = b_f*|minC|$ where $b_f > 1$ is the budget factor. We set the budget factor to $b_f = 1.5$ for every experiment. We define that only 90 percent of nodes in the graph is blockable. To construct the HasSession edges, we captured snapshots of the authentication dataset every 1 hour. In the experiment, we considered a total of 1000 snapshots in each setting (about 40 days). To avoid confusion in metrics, we will use "time unit" as the metric for the response time. We generate a population of 10 defensive blocking plans. The termination condition for the evolution algorithms was set at 2,000,000 iterations or 24 hours, whichever came first.

In our experiment, we adopt specific denotation for clarity: the Integer Linear Programming approach is denoted as ILP, the surrogate-assisted approach as EST, and the vanilla EDO algorithm as VIN. Additionally, we introduce a \textbf{Value-based Evolutionary Computation (VEC)} which greedily rejects the worst individual from the population instead of rejecting individuals based on diversity measure. In total, we will have 6 sets of algorithm includes: Vanilla EDO algorithm (\textbf{VAN-D}), Vanilla VEC algorithm (\textbf{VAN-V}), ILP-repair approach with EDO framework (\textbf{ILP-D}), ILP-repair approach with VEC framework (\textbf{ILP-V}), Surrogate-assisted approach with EDO framework (\textbf{EST-D}), Surrogate-assisted approach with VEC framework (\textbf{EST-V}). 
Note that our vanilla EDO algorithm's fitness function is implemented with our Dijkstra-based algorithm for computing the earliest-arrival path. Results would significantly degrade if Wu's algorithm were employed.
 
% \begin{itemize}
%     \item \textbf{VAN-D}: Vanilla EDO algorithm.
%     \item \textbf{VAN-V}: Vanilla VEC algorithm.
%     \item \textbf{ILP-D}: ILP-repair approach with EDO framework.
%     \item \textbf{ILP-V}: ILP-repair approach with VEC framework.
%     \item \textbf{EST-D}: Surrogate-assisted approach with EDO framework.
%     \item \textbf{EST-V}: Surrogate-assisted approach with VEC framework.
% \end{itemize}

% mingyu (c): shouldn't it be $b\ge |minC|$

% The minimum cut number is derived by solving the ILP formulation for the minimum cut for temporal graph (appendix for more details). Given the $\mathcal{NP}$-hard nature of the problem of finding minimum cut on temporal graph, solving the ILP for this proves to be computationally expensive. In cases where finding the minimum cut is infeasible with our computational resources, we resort to a heuristic approach to identify the minimum cut in the static underlying graph.
\subsection{Result Interpretation}

\begin{figure}[h]
\centering
  \includegraphics[width=0.85\linewidth]{Chapter4/figure/general_score_result.jpg}
  \caption{Performance comparison of all 6 algorithms. The \textbf{EST} approaches exhibit significantly faster convergence to the best result compared to the other methods.}
  \label{fig:converge}
\end{figure}

From Figure \ref{fig:converge}, the EST approach significantly improves the convergence speed of the Evolutionary Algorithm, allowing it to reach the best result much faster than ILP and VAN. Notably, ILP approach can find feasible solution from the early iteration since the repair operator will guarantee the mutation/crossover yields a feasible defensive solution. However, solving Integer Linear Programming itself is highly resource-intensive and is the bottleneck for this technique. Unfortunately, ILP failed to run in 5 out of 12 graphs due to Out-of-Memory errors.

Among the graphs, R2000+C and R4000+C are the only two where VAN can find any feasible solution. When we record the time to find the feasible solution (R2000+C and R4000+C), while vanilla takes about 21,728 seconds to find the feasible solution, EST takes on average 201 seconds which is about 108 times faster. For our setting, the EST method performs, on average, about $23\%$ better than the ILP. The convergence speed of EST is also superior to ILP.

To compare the performance of EDO-based algorithms (ended with D) with VEC-based algorithms (ended with V), we conducted a head-to-head comparison between these two approaches. Out of 21 comparable settings (excluding those with OOM errors and infeasible solutions), EDO outperformed VEC in 12 settings, while VEC performed better in only 9 cases (in instances where the response times were equal, we compared the average last improvement time). Overall, EDO outperformed VEC when applied to our problem. 

\section{Conclusion}

This chapter investigated a Stackelberg game model between an attacker and a defender in temporal Active Directory attack graphs. We propose the use of Evolutionary Diversity Optimization algorithms to address this problem. However, the vanilla EDO encounters challenges when scaling to larger graphs and struggling to find feasible solutions.
To improve our solution, we first improve the computation of the attacker's optimal path (fitness function) by refining the calculation of the earliest-arrival path. Our novel Dijkstra-based algorithm for computing the earliest-arrival path, based on the observation that a significant portion of the AD infrastructure remains static. Experimentally, our algorithm is approximately 5 times faster than the SOTA algorithm when running on AD-specific graphs. Next, we introduce two constraint-handling techniques: a repair mechanism using Integer Linear Program (ILP) and a surrogate-assisted model with a penalty fitness function (EST). 
While ILP guarantees to find a feasible solution in early iterations, the EST method achieves this approximately 108 times faster than the vanilla approach. Moreover, EST outperforms ILP, demonstrating approximately a 23\% improvement in our specific setting.

\section{Appendix-ILP for finding minimum temporal cut}
\label{sec:minc}

Zschoche et al. \cite{zschoche2020complexity} provide the complexity analysis on the minimum temporal cut problem (Strict-TS). Despite our effort in finding algorithm for Strict-TS in the literature, we have not come across any algorithm for this algorithm yet. Here, we proposed an ILP formulation to optimally solve the problem. The ILP formulation based on the idea that if node $u$ can reach the DA when departing from this node at time $t+1$, then we can also reach the DA when departing from this node at time $t$. The formulation is inspired by an ILP repair operator, with slight modifications to accommodate our problem requirements. We remove the budget constraints, and the objective function is tailored to minimize the number of budget allocations for the cut. The formulation is presented as follow:

\begin{subequations}
\begin{align}
\text{min} \displaystyle \sum\limits_{v \in N_b} B_{v}  \nonumber\\ & \nonumber\\
% \end{flalign*}
% \begin{align}
    R_{u, t} \geq R_{v, t+1}& - B_v, & \forall (u, v, t) \in E, v \in N_b \\
    R_{u, t} \geq R_{v, t+1}&, & \forall (u, v, t) \in E, v \in V \setminus N_b \\
    R_{u, t} \geq R_{u, t+1}&, & \forall v \in V, t \in [t_{\alpha}, t_{\omega}] \\
    R_{u, t}, B_{i} \in \{0, 1\}&
\end{align}
\end{subequations}

%% file: Chapter5/Chapter5.tex
\chapter{Adaptive Wizard for Removing Cross-Tier Attack Path in Active Directory} % Main chapter title
\label{chapter:paper3} % Change X to a consecutive number; for referencing this chapter elsewhere, use \ref{ChapterX}

In this chapter, we study and propose a novel human-in-the-loop model for edge-removal in AD. Security vulnerabilities in Windows Active Directory (AD) systems are typically modeled using an attack graph and hardening AD systems involves an iterative workflow: security teams propose an edge to remove, and IT operations teams manually review these fixes before implementing the removal.  As verification requires significant manual effort, we formulate an Adaptive Path Removal Problem to minimize the number of steps in this iterative removal process.  
In our model, a wizard proposes an attack path in each step and presents it as a set of multiple-choice options to the IT admin. The IT admin then selects one edge from the proposed set to remove.
This process continues until the target $t$ is disconnected from source $s$ or the number of proposed paths reaches $B$. The model aims to optimize the human effort by minimizing the expected number of interactions between the IT admin and the security wizard. We first prove that the problem is $\mathcal{\#P}$-hard. We then propose a set of solutions including an exact algorithm, an approximate algorithm, and several scalable heuristics. 
Our best heuristic, called DPR, can operate effectively on larger-scale graphs compared to the exact algorithm and consistently outperforms the approximate algorithm across all graphs.
We verify the effectiveness of our algorithms on several synthetic AD graphs and an AD attack graph collected from a real organization.

\section{Introduction}

We propose the Adaptive Path Removal Problem, a model motivated by the challenge of eliminating attack paths in cybersecurity. We begin by describing the cybersecurity use case that motivates our approach and by explaining the design rationale behind our model. The main contributions of this paper are the introduction of a novel theoretical model and the exploration of scalable algorithms for solving this problem. Our model’s design rationale is heavily influenced by practical cybersecurity scenarios and by the urgent demand for workable solutions from security teams.

% edge-blocking => adaptive => LQGCT => drawback => model => why our model is more elegant 

Windows Active Directory (AD) is Microsoft's directory service that enables IT administrators to manage security permissions and control accesses across Windows domain networks. 
An AD environment is naturally described as a graph where nodes are accounts/computers/groups, and the directed edges represent accesses/permissions/vulnerability.
One of the main focus in this line of work is minimizing “attack paths”—routes an attacker might use to escalate privileges and move laterally within the network.

Existing security models \citep{guo2023scalable,zhang2024practical,goel2023evolving} and commercial tools such as BloodHound \citep{BloodHound} reduce these paths by suggesting actionable fixes, typically presented as sets of edges to remove from the graph.
Unfortunately, not every proposed fix (edge) is implementable. Some edges may appear redundant, but removing them could cause significant disruptions.
Since removing edges equates to revoking permissions or accesses within the network, each fix must be approved and implemented by IT operations teams.
This has been referred to in the literature as the “implementable fixes” problem \citep{dunagan2009heat,guo2024limited}.
In industry practice, network hardening workflow typically unfolds in two stages: the security team first proposes necessary fixes, and then IT operations team review those fixes before implementation. This practical constraint has motivated the development of adaptive security models, models that incorporate human feedback and are thus better suited to real-world usage.

In the same way a “proxy” in auction theory places bids on user’s behalf, our proposed wizard model acts as a “proxy” security operator that guide the IT administrator through the attack path removal process.
At every step, the wizard model proposes an attack path to remove.
The IT admin view this as a multiple-choice list of edges and will require to choose one edge to remove.
This process continues until all attack paths are eliminated, or until the number of proposals reaches a preset limit.
The wizard’s goal is to minimize the expected number of proposals.
The wizard is adaptive, meaning it proposes subsequent edges to remove based on the IT admin’s choices in previous steps. 
Unlike previous work \citep{guo2024limited,zheng2011active,dunagan2009heat}, which modeled IT admin’s decision as simply removing or retaining an edge (binary decision), without guaranteeing that all attack paths would be eliminated; our path-based proposal mechanism provides a cut-guarantee solution.

\begin{figure}[h]
 \centering
  \includegraphics[width=0.75\linewidth]{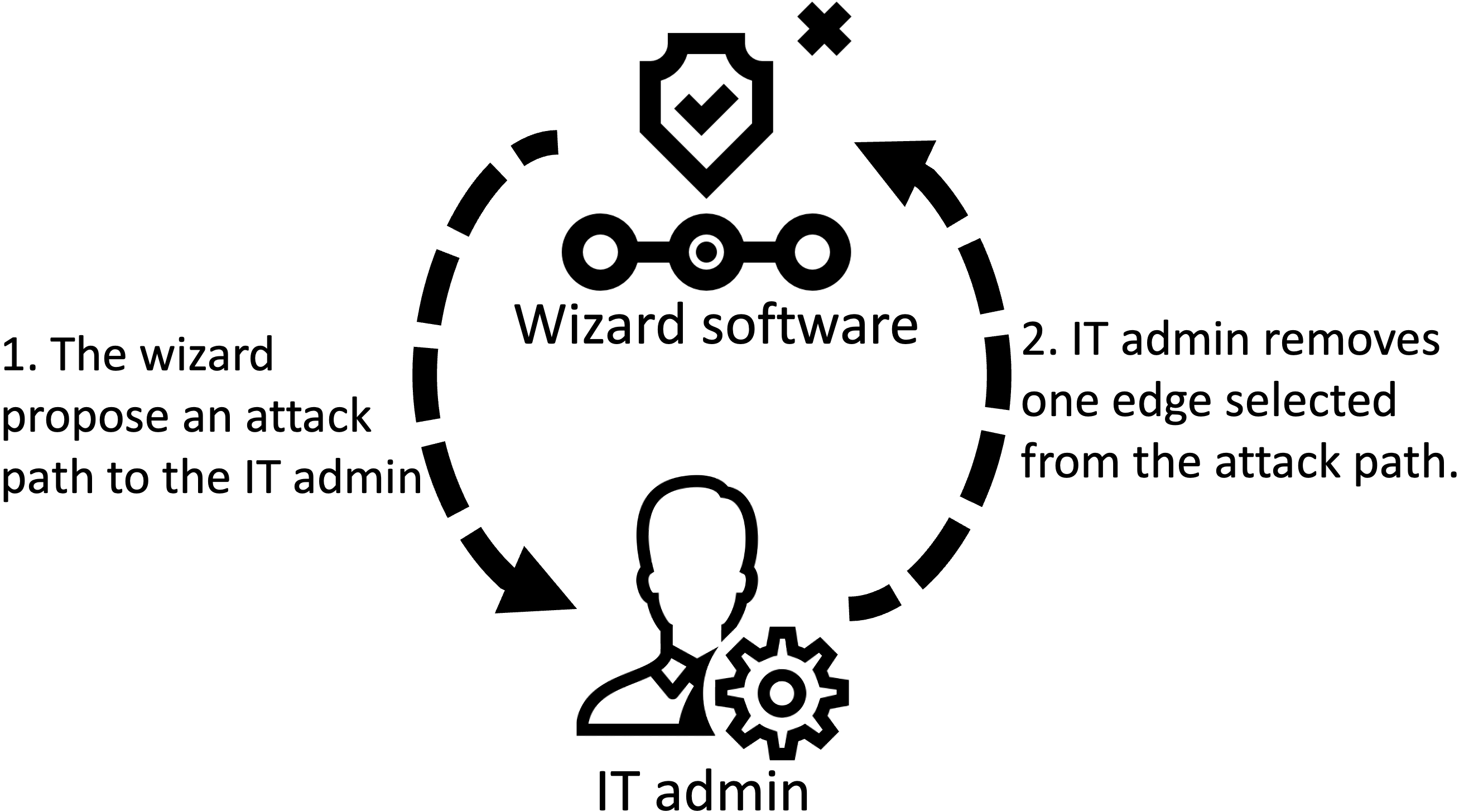}
  \caption{The wizard is a software step-by-step guide to assist the user in performing correction actions without requiring extensive technical knowledge.}
  \label{fig:wizard}
\end{figure}

This chapter contributions can be summarized as follows:

\begin{itemize}
    \item We introduce a new theoretical combinatorial optimization model called Adaptive Path Removal, motivated by the network security use case in AD systems. This is the first adaptive graph-focused model to incorporate path proposals and provide a cut-guarantee solution.
    \item We prove that the problem is $\#\mathcal{P}$-hard and introduce both an exact and an approximate algorithm.
    \item We develop a scalable heuristic called Dynamic Programming with Restriction (DPR), which builds on our exact and approximate algorithms. DPR achieves better scalability than the exact algorithm and outperforms the approximate algorithm.
    \item We also introduce several baseline methods, including two RL-based heuristics, and evaluate them on multiple synthetic graphs and a real AD network. Our experimental results show that DPR consistently achieves superior performance over all other methods.
\end{itemize}

% First, we present a formal formulation of the Adaptive Path Removal (APR) problem, followed by a discussion of the practical motivations and design rationale underlying our model.

\section{Problem Formulation}

The Adaptive Path Removal (APR) problem can be formally defined as follows: Given a directed attack graph $G = (V, E)$ with a source $s$ and a destination node $t$. Each edge $e \in E$ is associated with a confidence score, defined by a function $conf : E \mapsto [0, 1]$. Every round, the system will propose a simple path from the current attack graph. A simple path $p$ is defined as a sequence of edges $p = \langle (v_0 = s, v_1), (v_1, v_2), \ldots, (v_{k-1}, v_k = t) \rangle$ such that no edge is repeated, i.e., $((v_i, v_{i+1}) \neq (v_j, v_{j+1})), \forall ( i \neq j )$. When a path $p$ is queried, the IT admin selects exactly one edge $e \in p$ to remove. We model the IT admin’s choice using the Bradley–Terry model \citep{bradley1952rank}, which assigns a probability to each edge $e \in p$ proportional to its confidence score relative to the others in $p$: 

\begin{equation}
\label{eq:conf}
Pr(e|p) = \frac{conf(e)}{\sum_{e' \in p} conf(e')}
\end{equation}

In other words, an edge with a higher confidence score is more likely to be chosen for removal. This models the administrator’s relative preference or belief about which edge’s removal is most effective.
Let $C$ be the set of edges removed by the IT administrator after $|C|$ round. At round $|C| + 1$, the system will proposed a path $p \in P_{G'}$ in the temporary graph $G' = (V, E \setminus C)$ where $G'$ is called the temporary graph which evolved from the original graph $G = (V, E)$ by removing set of $C$ edges and $P_{G'}$ is the set of every possible path from $s$ to $t$ in $G'$. The query process terminates when either $C$ forms an $(s, t)$-cut (i.e., $s$ is disconnected from $t$) or the query budget $B$ is reached (i.e. $|C| = B$). In this paper, all cuts refer to $s-t$ cuts. To minimize human effort during the cutting process, our optimization goal is to design a policy that minimizes the expected number of queries (or iterations) required to complete the cutting process.

% \HN{Make a comment on the fact that in AD defense you care about paths to high-value targets but here we consider all paths.}

\begin{theorem}
    The APR Problem is $\#\mathcal{P}$-hard 
\end{theorem}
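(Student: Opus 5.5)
We reduce from \#$s$-$t$ path counting in directed graphs, which is \#$\mathcal{P}$-complete (Valiant), by showing that an oracle for the APR objective lets us compute that count. The natural handle is the first round. Under any policy the process terminates after one query exactly when the first removed edge is itself an $s$-$t$ cut. So we will build, from an arbitrary directed graph $H$ with terminals $s',t'$, an APR instance in which the probability of this event, and hence the expected query count, encodes the number of $s'$-$t'$ paths in $H$. Two points are convenient here. We may take the budget $B$ to be small, say $B=2$. We may also take the expected number of queries under an optimal policy as the quantity to be computed, since hardness of the optimal value already gives \#$\mathcal{P}$-hardness of the optimisation problem.

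\textbf{Construction.} Attach $H$ in series with a bridge edge $e^\ast=(t',t)$ and set $s=s'$. In the resulting graph $G$, every $s$-$t$ path is a path of $H$ followed by $e^\ast$, and $\{e^\ast\}$ is a cut. Give $e^\ast$ confidence $1$ and every edge of $H$ confidence $\epsilon$. Proposing path $p$ then ends the process in one round with probability $1/(1+\epsilon|p\cap H|)$. To prevent the optimal policy from trivially choosing a shortest path, we will instead make the probability depend on aggregated path structure. We place a gadget on the first-choice side so that the wizard's best single proposal must be a randomised mixture over paths. Alternatively, in the style of Guo et al.'s \#$\mathcal{P}$-hardness for the limited-query connectivity test, we make the second round's success depend on whether $s$-$t$ connectivity survives inside $H$ after removing a random edge.

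\textbf{Cleaner route.} The variant we will actually pursue uses $B=2$ together with the structure above, but gives the edges of $H$ confidences that make them removed essentially independently. We embed a reliability-type quantity: the probability that after the admin's removals $s$ and $t$ remain connected. This is computed by the second-round termination probability, and $s$-$t$ reliability is \#$\mathcal{P}$-hard (Valiant; Provan--Ball). Concretely, we plan to replace each edge $f$ of $H$ by a small parallel/series gadget. The gadget is built so that, with probability $q_f$ and uniformly over the wizard's choices, the first query ``cuts'' that gadget. Then $\Pr[\text{done after round 2}]$ becomes an affine function of the reliability polynomial $\mathrm{Rel}_H(q)$. Querying the oracle at polynomially many confidence settings and interpolating then recovers the coefficients, and in particular the path/cut counts.

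\textbf{Main obstacle.} The hardest step is the policy-independence. The wizard is adaptive and optimises over exponentially many path proposals, so we must ensure that its optimal value equals a fixed, computable function of $\mathrm{Rel}_H$. The first attempt will be to make all $s$-$t$ paths symmetric: equal length and equal confidence multiset, via layered padding of $H$. Then every proposal is equivalent up to automorphism, and the optimal expected query count reduces to a closed-form expression in the survival probability. If symmetry fails, we will fall back on a monotonicity argument showing that a specific canonical policy is optimal in the constructed instance.
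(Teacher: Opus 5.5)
There is a genuine gap. Nothing in your construction produces a counting quantity. In each round the wizard commits to one simple path, and the only randomness is which edge of that path the admin deletes. The optimal value is therefore a nested \emph{minimum} over proposals of such expectations, never a sum over all paths. In your series construction the one-round termination probability $1/(1+\epsilon|p\cap H|)$ depends on $p$ only through its length. A ``randomised mixture'' of proposals is not part of the model, and it would not help anyway, because the expected cost is linear in the mixing weights.

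The $B=2$ reliability route fails for a more basic reason. In two rounds at most two edges of the whole graph are ever deleted, each on a path the wizard chose. So the gadgets of $H$ cannot each be ``cut'' independently with probability $q_f$. For a fixed policy, $\Pr[\text{done after round } 2]$ is a sum of at most $|E|^2$ terms, each settled by one connectivity check.

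In fact no constant budget can work. For constant $B$ a policy is a tree with $O(|E|^{B-1})$ path-labelled nodes whose value is computable in polynomial time. Hence the optimal value lies in $\mathrm{FP}^{\mathrm{NP}}$, and $\#\mathcal{P}$-hardness would collapse the polynomial hierarchy by Toda's theorem. Simulating independent edge states needs a number of rounds at least linear in $|E(H)|$, i.e.\ a large budget.

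Your proposed fixes for the policy-dependence obstacle also do not work:
\begin{itemize}
\item Interpolation needs the optimal value to be a fixed polynomial in the confidences. As a minimum over exponentially many policies it is only piecewise rational in them.
\item Padding $H$ to equalise path lengths creates no automorphisms of an arbitrary $H$. Different paths share edges with other paths in different ways, so the states reached after a deletion are not equivalent.
\end{itemize}

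The paper sidesteps the optimal policy rather than computing it. It starts from a reliability instance $G'$ with $p_e=1/2$ and takes $B>|E(G')|$. It replaces each edge $(u,v)$ by a segment $u\to uv\to v$ made of two low-confidence edges, plus $B$ parallel high-confidence edges $uv\to v$. The admin's pick within a segment then acts as a fair coin: the segment is either cut or uncuttable within budget. The paper writes the optimal cost as a policy-dependent term times $(1-\mathrm{Rel})$ plus $B\cdot\mathrm{Rel}$. It then recovers $\mathrm{Rel}$ as the difference of the optimal values at budgets $B+1$ and $B$. This large-budget, budget-differencing idea is what your plan lacks.

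Be warned that the paper's identification of the exhaustion probability with $\mathrm{Rel}$ is itself too quick. Which segment a query reveals is decided by the admin's random pick, not by the wizard. A segment revealed as uncuttable also poisons every path through it, so segments behind it can no longer be probed. For $G'$ a two-edge path, the optimal policy exhausts the budget with probability $1/2$, whereas $\mathrm{Rel}=1/4$. A repaired reduction must therefore also make the failure event genuinely policy-independent and equal to the connectivity event of $G'$.
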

% We defer the proof to the Appendix.
\begin{proof}

The proof is based on a reduction from the $(s, t)$- network reliability problem \citep{reliability} which is $\#\mathcal{P}$-hard.

\textbf{PROBLEM: } $(s, t)$-Network Reliability Problem 
\begin{itemize}
    \item \textbf{Input:} A graph $G = (V, E)$, source node $s$ and destination node $t$, probability $p_e \in \left[0, 1\right]$ associated with the present of each edge.
    \item \textbf{Question:} What is the probability that there is a path between two distinguished vertices s and t
\end{itemize}

First, let $Rel(G)$ denote the $(s-t)$-reliability of graph $G$, i.e., the probability that there exists a path between nodes $s$ and $t$. We assume that each edge in $G$ is operational (i.e., "On") with a probability of $p_e = 0.5$. We known that $Rel(G)$ is $\#\mathcal{P}$-hard to compute. 
Now, we proceed with the construction of the reduction instance. Suppose there exists a directed graph $G' = (V', E')$ and a query limit of $B$. Let $m = |E'|$ represent the number of edges in $G'$ and $B > m$. 
For this construction, we define two types of edges: high-confidence edges (which are more likely to be misconfigurations) and low-confidence edges (which are less likely to be misconfigurations). In this way, we assume that when a path involves $x$ high-confidence edges ($x \geq 1$), then the IT admin will never pick one of the low-confidence edge and they will only pick one of the high-confidence edges with equal probability ($1/x$). Conversely, if a path involves only low-confidence edges, the IT administrator will have to choose one of these edges with equal probability. This essentially assumes that the confidence score of high-confidence edge is infinitely. As defined in the APR problem, the IT administrator behaviour is random, i.e., they choose an edge from the proposed path to remove according to the probability distribution defined by the Equation 1 
%It is just to make our proof cleaner. We can change the cost of difficult edges to a really large value – as long as the precision of the final calculation is good enough for Rel(G), which requires a precision within 1/2^m

Given a APR instance graph $G'$, we construct the following graph $G$: For each edge $(u, v)$ in $G'$, we introduce an auxiliary node called $uv$. We then add one low-confidence edge between  $u$ and $uv$ and one low-confidence edge between $uv$ and $v$. Additionally, we introduce $B$ parallel high-confidence edges between $uv$ and $v$. We refer to this constructed graph as $G$.  In this graph, for the segment $u->uv->v$, we classify the edges as follows: an edge is called an $m_l$-type edge if it connects $u$ to $uv$; an edge is called an $n_l$-type if it is a low-confidence edge connecting $uv$ to $v$; and an edge is called $n_h$-type edge if it is a high-confidence edge connecting $uv$ to $v$

The core idea behind this construction is to ensure that the optimal policy avoids presenting any path with $n_h$-type edges to the IT administrator. Querying path contain $n_h$-type edges is suboptimal because resolving any scenario will require at least $B + 1$.

The overall idea of the construction above is to ensure that the optimal policy avoids presenting any high-confidence edges to the IT administrator, since query high-confidence edges will be never useful as any situation will required to proposed at least $B+1$ query to cut the graph. Therefore, the optimal policy prioritizes querying paths that only contain $m_l$-type and $n_l$-type. 

Now, consider a single segment $u\xrightarrow[]{} uv \xrightarrow[]{} v$, which involves two low-confidence edges and $B$ high-confidence edges, being presented to the IT administrator. Following the optimal policy, $m_l$-type edge and $n_l$-edge of the segment will always be presented first. There is 50\% that the IT admin will choose an $m_l$-type edge and a 50\% chance they will choose an $n_l$-type edge. If the IT admin selects the $m_l$-type edge, the segment will be successfully disconnected (i.e., $u$ will be disconnected from  $v$ within this segment). However, if the IT admin chooses the $n_l$-type edge, the segment becomes impossible to disconnect with $B$ query. This is because once the $n_l$-type edge is removed, all $B$ parallel $n_h$-type edges must be queried to disconnect the segment (we will always hit the query limit $B$). 

Now, we will think about optimal query policy on $G$. We denote tuple $\mathcal{I} = \langle G, B \rangle$ as the APR problem instance on graph $G$ with budget of $B$
Next, let us introduce the concept of a\textit{ realization}. In general terms, a \textit{realization} is a specific outcome or instance of a random variable or process—essentially, the actual occurrence of a particular event within a probabilistic framework. In our problem context, a realization represents the decision made by the IT administrator when a path is proposed. We denote \textit{realization} with a function $\psi : P \mapsto E$. The function $\psi (p)$ acts as an oracle that returns the edge $e \in p$ that will be removed when the path $p$ is proposed to the IT administrator under realization $\psi$

Next, let $q(\mathcal{I}, \pi, \psi)$ represent the number of queries made when following the query policy $\pi$ for the problem instance $\mathcal{I}$ under the realization $\psi$. We denote $\psi_{\mathcal{I}, \pi, b>x}$ as the set of realizations where the query cost is at most $x$ when applying policy $\pi$ to instance $\mathcal{I}$. The expected number of queries across all realizations for instance $\mathcal{I}$ under policy $\pi$ is denoted by $\mathbb{E} \left[ q(\mathcal{I}, \pi) \right]$.
Additionally, the conditional expected number of queries, given that the query cost is at most $\mathbb{E} \left[ q(\mathcal{I}, \pi, \Psi) | \Psi \in \psi_{\mathcal{I}, \pi, b>x} \right]$. We define $\pi^*$ as the optimal policy.
According to the law of total expectation, the following equation holds:

\begin{equation}
\begin{split}
    \mathbb{E} &\left[ q(\mathcal{I}, \pi^*) \right] \\
    &= \mathbb{E} \left[ q(\mathcal{I}, \pi^*, \Psi) | \Psi \in \psi_{\mathcal{I}, \pi^*, b<m}\right] \cdot Pr(\Psi \in \psi_{\mathcal{I}, \pi^*, b<x}) \\
    &+ \mathbb{E} \left[ q^*(\mathcal{I}, \pi, \Psi) | \Psi \in \psi_{\mathcal{I}, \pi, b>m}\right] \cdot Pr(\Psi \in \psi_{\mathcal{I}, \pi, b>x})
\end{split}
\end{equation}

We observe that for realizations where the query cost is at most $m$ (i.e. $\psi_{\mathcal{I}, \pi^*, b<m}$), we will also successfully disconnect $(s, t)$ in graph $G$, Similarly, for realizations where the query cost exceeds $m$ (i.e. $\psi_{\mathcal{I}, \pi^*, b>m}$), we fail to disconnect $(s, t)$ in $G$. This is because there are only $m$ edges of $m_{l}$-type, and exceeding $m$ queries will always deplete all of the budget (reminding that successful disconnection of an segment requires the IT administrator to select an $m_{l}$-type). Therefore, if we define $\psi_{\mathcal{I}, \pi, s\leftrightarrow t}$ as the set of realizations where $(s,t)$ remains connected after applying the optimal query process to instance $\mathcal{I}$, and $\psi_{\mathcal{I}, \pi, s \not \leftrightarrow t}$ as the set of realizations where $(s,t)$ becomes disconnected, we have $\psi_{\mathcal{I}, \pi^*, b\leq m} = \psi_{\mathcal{I}, \pi^*, s \leftrightarrow t}$ and $\psi_{\mathcal{I}, \pi^*, b>m} = \psi_{\mathcal{I}, \pi^*, s \not \leftrightarrow t}$. This allows us to express the expected number of queries using the following equation:

\begin{equation}
\begin{split}
    &\mathbb{E} \left[ q(\mathcal{I}, \pi^*) \right] = \mathbb{E} \left[ q(\mathcal{I}, \pi^*, \Psi) | \Psi \in \psi_{\mathcal{I}, \pi^*, s \leftrightarrow t}\right] \cdot Pr(\Psi \in \psi_{\mathcal{I}, \pi^*, s \leftrightarrow t}) \\ 
    & + \mathbb{E} \left[ q^*(\mathcal{I}, \pi, \Psi) | \Psi \in \psi_{\mathcal{I}, \pi, s \not \leftrightarrow t}\right] \cdot  Pr(\Psi \in \psi_{\mathcal{I}, \pi, s \not \leftrightarrow t}) \\
    &= \mathbb{E} \left[ q(\mathcal{I}, \pi^*, \Psi) | \Psi \in \psi_{\mathcal{I}, \pi^*, s \leftrightarrow t}\right] \cdot  Pr(\Psi \in \psi_{\mathcal{I}, \pi^*, s \leftrightarrow t}) + B \cdot Pr(\Psi \in \psi_{\mathcal{I}, \pi, s \not \leftrightarrow t}) \\
    &= \mathbb{E} \left[ q(\mathcal{I}, \pi^*, \Psi) | \Psi \in \psi_{\mathcal{I}, \pi^*, s \leftrightarrow t}\right] (1 - Rel(G)) + B \cdot Rel(G) 
\end{split}
\end{equation}

From this equation, we observe that the optimal query strategy is independent of $B$ (since the second term is not controlled by the policy). Calculate the optimal query strategy is all about minimizing the expected number of queries, conditional on $G$ being $s-t$ disconnected (i.e.$\mathbb{E} \left[ q(\mathcal{I}, \pi^*, \Psi) | \Psi \in \psi_{\mathcal{I}, \pi^*, s \leftrightarrow t}\right]$. When $G$ is $s-t$ connected, then we will always hit the budget limit. We can calculate $Rel(G)$ by taking the difference between $\mathbb{E} \left[ q(\mathcal{I'}=\langle G, B+1 \rangle, \pi^*) \right]$ and $\mathbb{E} \left[ q(\mathcal{I}=\langle G, B \rangle, \pi^*) \right]$. This shows that calculating the expected number of queries in the APR problem is as difficult as the reliability problem. 

\end{proof}

\textbf{Reason for edge's confidence score and how to assign it} Integrating confidence scores helps us effectively embed domain-specific security information into our model, making it easier to identify edges that are more likely to be removable. This matters because not all edges are equally prone to be removed by IT admin; some edges, such as outdated privileges or overly permissive group assignments, are clear candidates for removal. Using insights from the security context in our edge preference model could substantially reduce the number of required queries.
To automate the assignment of confidence scores, we can train a binary classifier that predicts the likelihood of each edge being safely removable. For example, Zheng et al.~\citep{zheng2011active}  propose an active learning approach that learns an IT admin’s decisions about which edges to remove. We can automatically assign confidence scores to edges by using a binary classifier, defined as a function $f : E \mapsto [0, 1]$ where the output represents the classifier’s confidence that a given edge can be safely removed.

\section{Related Works}

\textbf{Active Directory and non-adaptive defense models.}
% In the literature, two primary defender strategies have been explored for defending Active Directory: edge-blocking and decoy allocation (node-blocking). 
The seminal work by Dunagan et al. \citep{dunagan2009heat} proposed the Active Directory (AD) attack graph which modelled the identity snowball attack that developed further and commercialized by Bloodhound \citep{BloodHound}. Follow-up works by Guo et al. \citep{guo2022practical,guo2023scalable} and Zhang et al. \citep{zhang2023oracle} formulate the problem of hardening the AD system as the shortest path interdiction via edge-removing problem. \citep{goel2022defending,goel2023evolving} proposed the Evolutionary Diversity Optimization (EDO) algorithm to defend against attackers in a configurable environment. Another work by Zhang et al. \citep{zhang2024practical} studied the problem of minimizing the number of users with paths to the domain admin via edge removal. Another approach for defending Active Directory found in the literature involves node-removal, which abstracts the concept of decoy allocation as introduced in Chapter \ref{chapter:paper1} and \ref{chapter:paper2}. The main drawback of non-adaptive models in real-world deployments is that they are not amenable to include human feedback.

\textbf{Adaptive models for Active Directory defense}. Several studies have integrated manual feedback from IT admin into network defenses process, emphasizing the importance of human involvement in configuration changes. 
Dunagan et al. \citep{dunagan2009heat} proposed Heat-ray, a system aimed at minimizing snowball identity attacks in Active Directory (AD) by iteratively proposing edge removals to IT administrators based on the sparest cut. 
% The system refines its proposals through iteration by updating edge costs using a Support Vector Machine algorithm, based on IT administrators' feedback on configuration changes.
Zheng et al. \citep{zheng2011active} enhanced Heat-ray with active learning to improve edge cost learning process. 
Guo et al. \citep{guo2024limited} introduced an adaptive defense model called the Limited Query Graph Connectivity Test (LQGCT), which is closely related to our approach. In their model, a proxy algorithm proposes one edge at a time, and the IT admin's decision is binary (i.e. whether to remove or retain it). By contrast, the proxy of our model proposes an entire attack path instead of a single edge which offers a multiple-choice selection rather than a binary decision.
Proposing a path provides several practical advantages over proposing an edge. Firstly, an edge proposal can fail to form a graph cut if the IT admin is overly conservative and retains too many edges.
This leaves the possibility of an attack even after the clean-up. In our experiments, path-based proposals guarantee that no attack path remains, provided the proxy algorithm has a sufficiently large budget.
Secondly, by presenting a list of edges to compare, our model encourages more deliberate choices, whereas a binary question as in LQGCT may incentivize conservative behaviour.
From a theoretical view, path-based proposals fundamentally differ and are harder to solve than the previous edge-based model. In LQGCT, the policy tree is binary, while our policy tree can branch into up to $l$ outcomes at each step, where $l$ is the length of the longest proposed path. As a result, existing algorithms cannot be directly applied to our setting, requiring us to develop an entirely new class of solutions.

\textbf{Related models from other research communities}. The sequential testing problem in operations research \citep{unluyurt2004sequential}, is often described through medical testing use cases. For instance, \citep{short2013iron,yu2023deep} employs adaptive strategies to reduce testing costs to diagnose diseases. Another related area is the problem of learning with attribute costs problem in machine learning \citep{sun1996hill,kaplan2005learning,golovin2011adaptive}. In this problem, each feature incurs a cost, and the task is to construct a classification tree that minimizes the total feature costs. Stochastic Boolean Function Evaluation (SBFE) problem \citep{allen2017evaluation,deshpande2014approximation} is also closely related. An SBFE instance involves a Boolean function $f$  with multiple hidden binary inputs and one binary output. Each input bit can be queried at a cost, and the objective is to find a query strategy that minimizes the expected cost to determine $f$'s output. 
While these models are relevant, they are not designed for our graph-based problems and lack scalability for large graphs. Consequently, similar to LQGCT, solutions for these models cannot be directly applied to our work.

\section{Algorithms}
In this part, we will present our solution for the APR problem. 
To help with the solution formulation, we will convert our problem into an equivalent Markov Decision Process (MDP). 
% Our proposed algorithm will be presented in MDP language. 
\subsection{Markov Decision Process formulation and preliminary}

Let us define the MDP as a tuple $\langle \mathcal{S}, \mathcal{A}, \Phi, R \rangle$, where $\mathcal{S}$ is the set of state, $\mathcal{A}$ is the set of action, $\Phi : \mathcal{S} \times \mathcal{A} \times \mathcal{S} \mapsto [0, 1 ]$ is the state transition and the reward function $R : \mathcal{S} \times \mathcal{A} \mapsto \mathbb{R}$. 

\textbf{State:} In an APR problem, the IT admin will remove an edge from a proposed path in every round. This process will evolve the graph into a series of temporary graphs by removing edges. We present these temporary graphs using a temporary state variable $s$ with $|B|$-dimension: $s = \{(x_1, x_2, \cdots, x_{B}) : x_i \in E \cup \{*\}, \forall i \in \{1, 2, \cdots, B\}\}$ where $x_i$ will be the edge that is removed by the IT admin at round $i$ and  $x_i = $ '*' means we have not query any path in this round.  We define the state of the original attack graph $G$ as the root state $s_r$, which will have the form: $(*, *, \cdots, *)$. A state $s'$ evolves from a state $s$ by removing an edge $e$ expressed as $s' = s \setminus e$. We denote $\mathcal{S}_i$ the set of possible states in round $i$ of the process. Hence, the state space can be represented as $\mathcal{S} = \mathcal{S}_0 \cup \mathcal{S}_{1} \cup \mathcal{S}_{2} \cup \cdots \cup \mathcal{S}_{B}$. We also define two sets of terminal states: $\perp_b$ is the terminal state reached when the budget is exhausted without identifying a cut, and $\perp_d$ is the terminal state reached when a cut is successfully found as a result of the query sequence. 
% The worst-case size of the state space can goes up to $\mathcal{O}\binom{|E|}{B}$. 

\textbf{Action:} Each path proposal in the APR problem is associated with an action in MDP. Let's say we are at state $s'$ associated with a temporary graph $G'$ and $A_{s'}$ is the action space at state $s'$. The action $a \in A_{s'}$ associates with a simple path $p \in P_{G'}$. We have the following Lemma for the action space in our problem:

\begin{lemma}
\label{lemma:subpath}
Given an MDP construction $\langle \mathcal{S}, \mathcal{A}, \Phi, R \rangle$ for the APR problem. We have $A_s \subseteq A_{s_r}$ for every $s \in \mathcal{S}$ where $s_r$ is the root state.
\end{lemma}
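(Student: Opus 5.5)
The plan is to unfold the definitions of state and action and use monotonicity of the edge set along the MDP trajectory. By construction, every state $s \in \mathcal{S}$ is reached from the root state $s_r$ by a finite sequence of edge removals. Concretely, $s = s_r \setminus e_1 \setminus e_2 \setminus \cdots \setminus e_k$ for some $k \le B$, where $e_i$ is the edge chosen by the IT admin at round $i$. The temporary graph associated with $s$ is therefore $G_s = (V, E \setminus C_s)$, where $C_s = \{e_1,\dots,e_k\}$ is the set of non-$*$ coordinates of $s$. The root graph is $G_{s_r} = G = (V,E)$. The first step is to state this explicitly, which gives $E(G_s) \subseteq E(G)$ for every $s$.

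The second step is to identify the action sets with path sets. By definition, $A_{s}$ is in bijection with $P_{G_s}$, the set of simple $s$--$t$ paths in $G_s$, and $A_{s_r}$ is in bijection with $P_G$. So it suffices to show $P_{G_s} \subseteq P_G$. Let $p = \langle (v_0=s, v_1), \dots, (v_{k-1}, v_k=t)\rangle \in P_{G_s}$. Every edge $(v_{i-1},v_i)$ of $p$ lies in $E \setminus C_s \subseteq E$. The defining conditions of a simple path (consecutive edges sharing endpoints, starting at $s$, ending at $t$, no edge repeated) refer only to the edge sequence itself and not to the ambient graph. Hence $p$ is also a simple $s$--$t$ path in $G$, that is, $p \in P_G$. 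Because each action is labelled by its path, the same path yields the same action, and we conclude $A_s \subseteq A_{s_r}$.

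The only point needing care is the identification of actions across states. An action must be read as ``propose path $p$'', a label independent of the state, and not as a state-indexed object. I would make this explicit by defining $\mathcal{A} = P_G$ globally and $A_s = \{p \in \mathcal{A} : p \subseteq E\setminus C_s\}$. Under this reading the inclusion is immediate, and it even yields the stronger monotonicity $A_{s'} \subseteq A_s$ whenever $s' = s\setminus e$. I would record this as a remark, since it is presumably what later algorithms use to restrict attention to paths of the original graph.

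The terminal states $\perp_b, \perp_d$ either have empty action sets or satisfy the inclusion trivially, so they need no separate treatment.
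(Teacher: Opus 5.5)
Your proof is correct and matches the paper's argument: $G_s$ is obtained from $G$ by deleting edges, so every simple $s$--$t$ path in $G_s$ is also one in $G$. This gives $P_{G_s}\subseteq P_G$ and hence $A_s\subseteq A_{s_r}$. The paper states the one-step inclusion $A_{s}\subseteq A_{s'}$ for a successor $s$ of $s'$ and chains it back to the root, whereas you compare to the root directly and record the one-step monotonicity as a remark; the difference is only cosmetic.
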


\begin{proof}
    The action available at each temporary state $s$ is the enumeration of every possible path in the corresponding graph $G$. A temporary graph $G'$ is actually the subgraph of the root graph $G$ (as $G'$ is evolved from $G$ by removing edge) which implies $P_{G'} \subseteq P_{G}$. Therefore, we have $A_{s} \subseteq A_{s'}$, $\forall s, s'$ where $s \in successor(s')$ which imply $A_s \subseteq A_{s_r}$. 
\end{proof}

Lemma \ref{lemma:subpath} states that action set $A_s$ of every state $s \in \mathcal{S}$ is a subset of the action set $A_{s_r}$ of the root state $s_r$. As a result, the overall action space for the APR problem can be expressed as $A = \bigcup_{s\in \mathcal{S}} A_s = A_{s_r}$. Lemma \ref{lemma:subpath} is particularly useful in the design of our algorithms, as discussed in the following sections. 

\textbf{Transition Probabilities:} In each state, an action can lead to different outcomes, which are defined by the transition probabilities in the MDP. In our problem, a transition probability shows the probability of an edge being removed by the IT admin when a path is proposed. The removal probability is defined by the Bradley-Terry preference model, as defined in Equation~\eqref{eq:conf}. Let's say we have a state $s' = s \setminus e$ where $s$ evolved to $s'$ by removing edge e. The transition probability from $s$ to $s'$ when taking action $a$ can be expressed as  $\Phi(s'|s,a) = \Phi(e|s, a) = Pr(e|p)$.

\textbf{Reward:} In our problem, each query will be penalized by a cost of exactly one unit of budget with no discount factor. The reward is difference at two terminal states: when $\perp_b$ is reached, meaning that we ran-out of budget before identifying any cut, we will penalize it with a constant of $\alpha > 0$; 
\begin{equation}
\label{eq:reward}
  R(s, a)=\begin{cases}
    -\alpha, & \text{if } s \in \perp_{b}.\\
    0, & \text{if } s \in \perp_{d}.\\
    -1, & \text{otherwise}.
  \end{cases}
\end{equation}

\textbf{Realization:} While "realization" is not a standard notation in MDP literature, it is commonly used in the context of adaptive submodularity optimization \citep{golovin2011adaptive}. We introduce this concept here as it will help us in describing the algorithms. We define a function $\phi : P_{s_r} \mapsto E$  as the full realization. We can view function $\phi(p)$ as an oracle that returns the IT administrator's edge removal decision for a given path $p$. Additionally, we define the partial realization $\psi_{s}: P_{s_r} \mapsto E$ as the observations made so far at state $s$. Specifically, $\psi_{s}(p)$ returns the edge removed by the IT administrator when $p$ is proposed, and $\psi_s(p) = *$ if $p$ has not yet been proposed or contains any edge have been removed by the IT admin. The domain of the partial realization is defined as  $dom(\psi) = \{p \in P_{s_r}: \psi(p) \neq *\}$, representing the set of actions for which outcomes have been observed. The range of the partial realization is defined as $range(\psi) = \{\psi(p) : p \in P_{s_r}, \psi(p) \neq *\}$, representing the edges that have been removed by IT admin. A partial realization $\psi$ is consistent with realization $\phi$ if they are equal everywhere in the $dom(\psi)$, denoted $\phi \sim \psi$. We say a partial realization $\psi$ is a subrealization of $\psi'$, denoted by $\psi \subseteq \psi'$, if they are both consistent with some $\phi$ and $dom(\psi) \subseteq dom(\psi')$.

\subsection{Dynamic Programming Exact Algorithm (OPT)}
\label{sec:opt}

As we establish the equivalence between the APR problem and the MDP, we also find that our problem satisfies the ``Principle of Optimality" in MDP. In our problem, given a state $s$ and $s' = s \setminus e$, $\forall e \in E_s$ and $E_s$ is the set of edges in the graph corresponding to state $s$, the optimal path query in $s$ is independent of previous queries and solving for the optimal strategy at $s'$ can be viewed as a subproblem of $s$. This allows us to introduce the optimal utility function following the Bellman equation:
\begin{equation}
\label{eq:dp}
    u_{\pi}(s) = \min_{p \in A_{\textbf{s}}}\{r(s, a) + \sum_{e \in p} \Phi (e|s,p) u_{\pi}(s\setminus e) \}, \text{  } \forall s \in \mathcal{S}
\end{equation}

Based on the the optimal utility function in \eqref{eq:dp}, we can design a Bellman's style dynamic programming, called OPT. Now, let $G_s$ be the graph that is associated with state $s$. The Dynamic Programming follows a top-down approach. In each subproblem, we require the utility $u_{\pi}(s)$. However, as shown in Equation \eqref{eq:dp}, obtaining the optimal utility requires invoking the action set $A_s$ for every subproblem, which in turn requires enumerating every simple path $P_{G_{s}}$ in $G_s$. This approach is impractical to run on any graph of reasonable size because the number of subproblems can grow as large as $|E|^B$, and path enumeration, known to be $\#\mathcal{P}$-hard, takes $\mathcal{O}(|V|^k)$ time complexity under a DFS-based approach \citep{peng2019hop}, where $k$ is the longest path length. However, thanks to Lemma \ref{lemma:subpath}, we can run the enumeration one time only for the original problem. The action space for the subproblem is simply $A_s = A_{s_r} \setminus \{p : e \in p, p \in A_{s_r}, e \in range(\psi_s)\}$, i.e., we can obtain $A_s$ by removing paths in $A_{s_r}$ that contains edges that have been chosen to be removed by the IT admin. Since running Depth First Search procedure to check if $s, t$ connectedness in every subproblem will take $\mathcal{O}(|E|+|V|)$. The overall Exact algorithm will take us $\mathcal{O}(|V|^{k}+(|E|+|V|)*|E|^B)$. The pseudoscope for the exact algorithm is illustrated in Algorithm \ref{alg:dp}

\begin{algorithm}[H]
 \caption{Dynamic Programming (OPT)}
 \label{alg:dp}
 \SetAlgoLined
 \LinesNumbered

 % suppress printed "end" markers
 \SetKwFor{For}{for}{:}{}
 \SetKwIF{If}{ElseIf}{Else}{if}{then}{else if}{else}{}

 \KwIn{Directed Attack Graph $G(V, E)$}
 \KwOut{Optimal query policy $\pi$}
\BlankLine
 \For{$i \in \left[ 0, B\right]$}{
   \For{$\textbf{s} \in \mathcal{S}_i$}{
     \If{$\textbf{s}$ is in $T_d$}{
       $u_{\pi}(\textbf{s}\setminus e) = 0$, $\pi(\textbf{s}\setminus e) = \emptyset$\;
     }
     \ElseIf{$\textbf{s}$ is in $T_b$}{
       $u_{\pi}(\textbf{s}\setminus e) = \alpha$, $\pi(\textbf{s}\setminus e) = \emptyset$\;
     }
     \Else{
       $a^* = \arg\min_{p \in A_{\textbf{s}}} \{ \sum_{e \in p}\left[  \Phi(e|s,a) * u_{\pi}(\textbf{s}\setminus e)\right]\}$\;
       $u_{\pi}(\textbf{s}) = 1 + \sum_{e \in p}\left[   \Phi (e|s,a) * u_{\pi}(\textbf{s}\setminus e)\right]$\;
       $\pi(\textbf{s}) = p^*$\;
     }
   }
 }

 \Return{$\pi$}
\end{algorithm}

\textbf{Complexity Analysis} According to Lemma \ref{lemma:subpath}, we can enumerate the action space $A_{s_r}$ in advance and the action space $A_s$ of any state $s$ can be obtained by removing any path in $A_{s_r}$ that contain edges that have been removed by IT admin during the process. Enumerating action is equal to enumerating $s-t$ path in the graph which takes $|V|^{k}$ under the DFS-based approach \cite{peng2019hop}.
The number of subproblem can grow up to $|E|^B$, running Depth First Search procedure to check if $s, t$ connectedness in every subproblem will take $\mathcal{O}(|E|+|V|)$.
Hence, the overall Exact algorithm will take us $\mathcal{O}(|V|^{k}+(|E|+|V|)*|E|^B)$. And thank to lemma \ref{lemma:subpath}, we avoid the complexity $\mathcal{O}(|V|^{k}*(|E|+|V|)*|E|^B)$.

\subsection{Adaptive Submodular Approximation Algorithms (APP)}
\label{sec:appx}

In this section, we present an approximation algorithm, called APP, by utilizing the adaptive submodularity framework \citep{golovin2011adaptive}. The proposed algorithm provides square-logarithmic approximation to the number of enumerations of possible simple paths in the original graph $G$. The Adaptive Submodular Algorithm is proposed for the Stochastic Submodular Set Coverage (SSSC) problem in \citep{golovin2011adaptive} which has a close connection with APR problem.

\begin{problem}The SSSC problem involves a ground set of elements $U = \{u_1, u_2, \cdots u_n\}$ and a collection of items $E = \{e_1, e_2, \cdots, e_m\}$, where each item $e$ is associated with a distribution over subsets of U. When an item is selected, a set is sampled from its distribution, i.e., it will reveal which subset of $U$ will be covered. The objective of this problem is to find an adaptive policy $\pi$ that selects items to cover all elements in $U$ while minimizing the expected number of items. To define the coverage, we define a utility function $f: 2^E \mapsto \mathbb{R}$ that quantifies the coverage achieved by the current state. The complete coverage is represented as states with utility meeting a predefined quota $Q$, i.e., $f(s) = Q$. 
\end{problem}

We can see that our problem can be viewed as a special case of the SSSC problem. In the APR problem, the ground set consists of all simple paths $P_{s_r}$. In every round, when a path $p$ is proposed, the IT admin will choose an edge $e \in p$ to remove, every path in the set $P' = \{p' | e \in p', \forall p' \in P_{s_r} \}$ (the set of paths containing $e$) will also be removed. Each path in the action space can be viewed as an item in the SSSC problem, with each path associated with a distribution over the potential removal of other paths. This distribution is presented as in Equation \ref{eq:conf}. The goal of APR problem is to cover all paths in $P_{s_r}$ (a cut eliminates all paths) while minimizing the number of queries. Next, we have the following definitions.

\begin{definition}
    \textbf{(Conditional expected marginal benefit)} Given a state $s$, an action $a$ and a utility function $g$, the expected marginal benefit of $a$ is defined as:
    \begin{equation}
    \label{eq:marginal}
    \Delta (a\mid s) = \sum_{e \in a}  \{ \Phi (e\mid s,a) * \left[g(s\setminus e) - g(s) \right] \}
    \end{equation}
\end{definition}

\begin{definition}
\label{def:monotone}
    \textbf{(Adaptive Monotonicity)} A utility function $g : S \mapsto \mathbb{R}_{\geq 0}$ is adaptive monotone if the \textit{benefit of selecting an action is always nonnegative}. Formally, function $g$ is adaptive monotonic if $\forall s\in S$ and $\forall e \in \{e \mid e \in a, a \in A_s \}$, we have: 
    \begin{equation}
    \label{eq:monotone}
    g(s\setminus e) - g(s) \geq 0
    \end{equation}
\end{definition}

\begin{definition}
\label{def:submodular}
    \textbf{(Adaptive Submodular)} A utility function $g : S \mapsto \mathbb{R}_{\geq 0}$ is adaptive submodular if the marginal benefit of selecting an action does not increase as more actions are selected. Formally, function $g$ is adaptive submodular if for all temporary state $s$, $s'$ such that $\psi_{s} \subseteq \psi_{s'}$, $\forall a \in A_{s_r}$, we have:
    \begin{equation}
    \label{eq:submodular}
    \Delta(a\mid s) \geq \Delta(a \mid s')
    \end{equation}
\end{definition}

\textbf{The reason for introducing these concepts} is to port algorithms from the Adaptive Submodular framework to the APR problem while ensuring the theoretical approximation bound. To do so, we need to design a utility function $g$ associated with the APR problem that satisfies two key conditions: (1) adaptive monotonicity and (2) adaptive submodularity.

\textbf{Utility Function:} The utility function $g : S \mapsto \mathbb{N}$ is defined as:
    \begin{equation}
    \label{eq:utility}
        g(s) = |\bigcup_{e \in range(\psi_{s})} h(e, P_{s_r})|
    \end{equation}

where the function  $h(e, P) = \{p | e \in p, \forall p \in P\}$ returns the set of paths $p\in P$ that contains $e$. To remind, $range(\psi_{s})$ is the set of edges that have been removed upon state $s$. We have the following lemma for the utility function $g$:

\begin{lemma}
\label{lemma:g}
    Function $g$ is both adaptive monotonic and adaptive submodular 
\end{lemma}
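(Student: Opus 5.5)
We prove the two properties separately, in both cases working directly from the definition of $g$ in \eqref{eq:utility}. The key identity is that for a state $s$ and an edge $e$ on a proposed path,
\begin{equation*}
g(s\setminus e) - g(s) = \Bigl| h(e,P_{s_r}) \setminus \bigcup_{e' \in range(\psi_s)} h(e',P_{s_r}) \Bigr|.
\end{equation*}
In words, the gain is the number of root paths containing $e$ that were not already eliminated by previously removed edges. We establish this first. It follows because $range(\psi_{s\setminus e}) = range(\psi_s)\cup\{e\}$, so the union in \eqref{eq:utility} gains exactly the set $h(e,P_{s_r})$, and only its not-yet-covered part changes the cardinality.

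\textbf{Adaptive monotonicity.} This is immediate from the identity. The right-hand side is the cardinality of a set, hence nonnegative. Equivalently, $g$ is the cardinality of a union that can only grow when another set is added. So Definition \ref{def:monotone} holds for every $s$ and every edge $e$ on every available action $a\in A_s$.

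\textbf{Adaptive submodularity.} Fix states $s, s'$ with $\psi_s\subseteq\psi_{s'}$ and an action $a=p\in A_{s_r}$. Write $U_s=\bigcup_{e'\in range(\psi_s)} h(e',P_{s_r})$ for the set of root paths already eliminated. Since $\psi_s\subseteq\psi_{s'}$ gives $range(\psi_s)\subseteq range(\psi_{s'})$, we have $U_s\subseteq U_{s'}$, and therefore for every $e\in p$
\begin{equation*}
|h(e,P_{s_r})\setminus U_s| \ge |h(e,P_{s_r})\setminus U_{s'}|.
\end{equation*}
This is the pointwise (edge-by-edge) version of coverage submodularity. To lift it to the expected marginal benefit \eqref{eq:marginal}, we must handle the transition weights $\Phi(e\mid s,p)$ and $\Phi(e\mid s',p)$. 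These come from the Bradley--Terry model \eqref{eq:conf} over the edges of the fixed path $p$, and they depend only on $p$ and the confidence scores, not on the state. So the two weight vectors are the same distribution over $e\in p$. Multiplying the pointwise inequality by these common nonnegative weights and summing then gives $\Delta(a\mid s)\ge\Delta(a\mid s')$.

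\textbf{Main obstacle.} The main difficulty is the case where $p$ is no longer available in $s'$ (or in $s$) because some edge of $p$ has already been removed. Then the transition is not defined by the process, and the realization convention sets $\psi_{s'}(p)=*$. Our plan is to adopt the framework's convention \citep{golovin2011adaptive}: an action whose outcome is already determined, or that is no longer selectable, has zero marginal benefit. This is consistent with the identity, because if $p$ contains a removed edge then $p\in U_{s'}$, but other paths through an edge $e\in p$ may still be uncovered. We therefore need to check that defining $\Delta(a\mid s')$ with the same Bradley--Terry weights, as the expression \eqref{eq:marginal} literally reads, keeps the inequality valid. It does, by the same pointwise argument, since the weights never depend on the state. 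The inequality thus holds regardless of which convention is used, provided it is applied consistently to $s$ and $s'$.
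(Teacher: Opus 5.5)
Your proposal is correct and follows the paper's own argument. Both treat $g$ as a coverage count, so the removal gain $g(s\setminus e)-g(s)$ can only shrink as $range(\psi_s)$ grows; since the Bradley--Terry weights $\Phi(e\mid s,p)$ depend only on $p$, summing the edge-wise inequality gives $\Delta(a\mid s)\ge\Delta(a\mid s')$. Your explicit identity $g(s\setminus e)-g(s)=|h(e,P_{s_r})\setminus U_s|$ and your treatment of paths no longer available in $s'$ make precise what the paper leaves implicit; under the zero-benefit convention that case reduces to $\Delta(a\mid s)\ge 0$, i.e.\ to monotonicity.
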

   \begin{proof}

        First, consider the function $g$, which counts the number of paths removed up to the current state. Suppose we are at any state $s$, and path $p$  is proposed to the IT admin, who then chooses to remove an edge $e$. Removing $e$ eliminates at least the proposed path $p$ since $e \in p$. This means $g(s \setminus e) \geq  g(s) + 1$ and satisfying Equation \eqref{eq:monotone}. Therefore, $g$ is adaptive monotone. 

        Moreover, since the utility function $g$, as defined in Eq. \eqref{eq:utility}, returns the number of paths removed by IT admin decision from the root state. Furthermore, removing an edge $e$ in a state $s$ with $\psi_s \subseteq \psi_{s'}$ will result in more paths being eliminated than removing $e$ in the successor state $s'$ which is formally expressed as $g(s\setminus e) - g(s) \geq g(s'\setminus e) - g(s')$. This implies: 

        \begin{equation}
        \nonumber
        \begin{split}
            \Delta(a \mid s) &=  \sum_{e \in a}  \{ \Phi (e\mid s,a) * \left[g(s\setminus e) - g(s) \right] \} \\
                            &\geq \sum_{e \in a}  \{ \Phi (e\mid s',a) * \left[g(s'\setminus e) - g(s') \right]\} = \Delta(a \mid s')
                             % &\geq  \sum_{e \in a}  \{ \Phi (e\mid s,a) * \left[g(s'\setminus e) - g(s') \right]\} \\
                             % % &= \sum_{e \in a}  \{ \Phi (e\mid s',a) * \left[g(s'\setminus e) - g(s') \right]\} = \Delta(a \mid s')
        \end{split}
        \end{equation}
        The transition from line 1 to line 2 is valid because $\Phi (e\mid s,a) = \Phi (e\mid s',a)$, $\forall s, s', a$ as the transition probabilities of an action do not depend on the state, as defined in the equation \ref{eq:conf}. This proves the adaptive submodularity of $g$.
\end{proof}

Note, to ensure Theorem \ref{theorem:approx1} holds, $g$ must be both \textit{strongly adaptive monotonic and submodular}. While Definition \ref{def:monotone} aligns with the concept of strongly adaptive monotonicity \citep{golovin2011adaptive}, Definition \ref{def:submodular} is only adaptive submodularity. A function is strongly adaptive submodular if it is (1) adaptive submodular and (2) pointwise submodular. Although we have proven the former, we admit the second property for $g$ and hence $g$ is also strongly adaptive submodular.
We defer the definition and proof of (2) to the section \ref{sec:asub}

\textbf{Greedy with marginal gain strategy (Algorithm \ref{alg:greedy}):} By applying a greedy strategy with our problem-tailored utility function $g$, we have our Adaptive Submodular Algorithm as shown in Algorithm \ref{alg:greedy}.
In this algorithm, during each query round, we greedily propose the action $a \in A_{s}$ that yields the highest expected marginal benefit with respect to the utility function $g$. 
Here again, due to Lemma \ref{lemma:subpath}, we only need to enumerate the action space (for the $G_{s_r}$) once beforehand.
Note, while the path enumeration problem is known to be $\#\mathcal{P}$-hard, our experiments with attack graphs demonstrate that this enumeration can be performed within a reasonable runtime. The efficiency of this process is largely because attack graphs typically involve only subgraphs of the larger AD structure. 

\begin{algorithm}[H]
 \caption{Adaptive Submodular Strategy (APP)}
 \label{alg:greedy}
 \SetAlgoLined
 \LinesNumbered

 % suppress printed "end" markers
 \SetKwFor{For}{for}{:}{}
 \SetKwFor{ForEach}{foreach}{:}{}
 \SetKwFor{While}{while}{:}{}
 \SetKwIF{If}{ElseIf}{Else}{if}{then}{else if}{else}{}

 \KwIn{Directed graph $G(V, E)$, source $s$ and destination $t$}
 \KwOut{approximate propose strategy}
\BlankLine
 Initialise set of simple path $A_{s_r}$ of original state $s_r$, set of proposed path $A_{\pi}$\;

 \While{$s \notin \{\perp_b \cup \perp_d\}$}{
   \ForEach{$a \in A_{s_r}\setminus A_{\pi}$}{
     $\Delta(a|s) = \sum_{e \in a}  \{ \Phi (e|s,a) * \left[g(s\setminus e) - g(s) \right] \}$\;
   }
   $a^{*} = \arg\max_a \Delta(a|s)$\;
   $A_{\pi} = A_{\pi} \cup \{a^*\}$, proposed $a^*$, observe outcome $e*$\;
   $s = s \setminus e^*$, progressing to new state\;
 }
\end{algorithm}

Once we have proven that $g$ is both adaptive monotonic and adaptive submodular in Lemma \ref{lemma:g}, the following theoretical approximation ratio for Algorithm \ref{alg:greedy} follows.

\begin{theorem}
\label{theorem:approx1}
 Algorithm \ref{alg:greedy} achieves a $(\ln{|P_{s_r}| + 1)^2}$-approximation for the APR problem with $B = |P_{s_r}|$.    
\end{theorem}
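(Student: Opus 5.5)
The plan is to obtain the bound as a direct instance of the Adaptive Stochastic Minimum Cost Cover guarantee of \citet{golovin2011adaptive}. That theorem says: if the utility $g$ is integer-valued, strongly adaptive monotone and strongly adaptive submodular, and self-certifying with quota $Q$, then the adaptive greedy policy has expected cost at most $(\ln Q + 1)^2$ times the optimal expected cover cost.

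The first step is to set up the correspondence, as sketched in the SSSC discussion above. The ground set is $P_{s_r}$. Items are paths $a\in A_{s_r}$, each with unit cost. The random outcome of item $a$ is the removed edge $e$, distributed by $\Phi(e\mid s,a)=Pr(e\mid a)$ from Eq.~\eqref{eq:conf}, which does not depend on the state. The utility is $g$ from Eq.~\eqref{eq:utility}, and the quota is $Q=|P_{s_r}|$. I would verify four properties.
\begin{itemize}
\item \emph{Integrality:} $g$ takes integer values.
\item \emph{Quota characterises termination:} $g(s)=Q$ exactly when every $s$--$t$ path contains a removed edge, i.e.\ when $range(\psi_s)$ is an $(s,t)$-cut, i.e.\ when $s\in\perp_d$.
\item \emph{Self-certification:} whether the quota is met is determined by the observed partial realization alone.
\item \emph{Strong monotonicity and submodularity:} these are Lemma~\ref{lemma:g}, together with the pointwise submodularity admitted in Section~\ref{sec:asub}. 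Pointwise submodularity holds because $g$ is a coverage function, the size of a union of the sets $h(e,P_{s_r})$, for every fixed realization.
\end{itemize}

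The second step handles the budget. With $B=|P_{s_r}|$ the budget never binds: each query removes an edge lying on a currently surviving path, so $g$ strictly increases and at most $|P_{s_r}|$ queries are ever made. Hence $\perp_b$ is unreachable for every policy that proposes only surviving paths, and the APR objective coincides with the expected cover cost. I also need to argue that any optimal APR policy can be assumed to propose only surviving paths. This holds because the action set $A_s$ is defined as the paths in the current graph, so every proposed path is surviving and strictly increases $g$.

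The third step checks that Algorithm~\ref{alg:greedy} is exactly the greedy rule $\arg\max_a \Delta(a\mid s)/c(a)$ with $c\equiv 1$. It also needs to be checked that restricting to $A_{s_r}\setminus A_\pi$ loses nothing: already-proposed paths contain a removed edge, so their marginal gain is zero. Once these checks are done, the theorem applies and yields the $(\ln|P_{s_r}|+1)^2$ factor.

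The main obstacle I expect is the hypothesis check for the cited theorem. Pointwise or strong submodularity is only admitted here, so I would first try to prove it directly from the coverage structure of $g$. I would also make sure the transition model fits the required independence assumption. Outcomes of different items are not independent, because two paths sharing edges have correlated "removed edge" events under a fixed full realization $\phi$. Golovin and Krause's min-cost-cover bound, however, permits arbitrary priors over realizations, so I would invoke the version that allows general priors rather than the independent-item version.
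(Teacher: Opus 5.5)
Your proposal is correct and takes the same route as the paper. The paper's proof is a direct application of Theorem~17 of \citet{golovin2011adaptive} with $Q=|P_{s_r}|$, $\eta=1$ (since $g$ is integer-valued) and $\alpha=1$; your checks of the non-binding budget, of pointwise submodularity via the coverage structure of $g$, and of the match between Algorithm~\ref{alg:greedy} and the greedy rule fill in details the paper leaves implicit.

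One minor correction concerns your last paragraph. Under the Bradley--Terry model, the administrator's choices on distinct proposed paths are drawn independently of each other and of the history, so item outcomes are not actually correlated. Appealing to the general-prior version of the theorem is still harmless.
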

\begin{proof}
    Applying Theorem 17 from \citep{golovin2011adaptive}, the approximation ratio for the greedy algorithm, which maximizes marginal gain, is bounded by:
    \begin{equation}
        c(\pi) \leq \alpha c(\pi^*)\left(\ln{\frac{Q}{\eta}} + 1\right)^2
    \end{equation}
    where $c(\pi)$ is the average cost of the greedy policy which is $\alpha$-approximate w.r.t items cost, $c(\pi^*)$ is the cost of the optimal policy, $Q$ is the utility target for covering the ground set $U$, $\eta$ is the threshold parameter such that $f(S) > Q - \eta$ will implies $f(S) = Q$.

    For algorithm 1 and utility function $g$, we have $range(g) \subseteq \mathbb{N}$ which imply that $\eta = 1$. In the context of the APR problem, achieving full coverage requires to disconnect every path in $P_{s_r}$. This condition is satisfied when $Q = |P_{s_r}|$. Also, $\alpha = 1$ for algorithm 1. Therefore, the following approximation ratio holds for Algorithm 1:
    \begin{equation}
        q(\pi) \leq q(\pi^*)\left(\ln{|P_{s_r}|} + 1\right)^2
    \end{equation}
\end{proof}

\subsection{Scalable Heuristics Algorithm (DPR)}

In this section, we present a scalable heuristic designed based on the exact algorithm and the approximate algorithm as shown in Algorithm \ref{alg:dpheu}

% Additional to approximate algorithm, in this section, we introduce several heuristics to improve the performance. Although we do not have any approximation guarantee for these heuristics. These heuristic is based on the exact algorithm and approximate algorithm, hence, experimentally outperform the approximate algorithms and allow us to scale to larger graphs. 

\textbf{Heuristics based on Exact Algorithm (DPR)} The exact dynamic programming algorithm struggles to scale in realistic scenarios due to two main issues. First, APR problem's MDP often has large action space, particularly in the initial states, where the action space size corresponds to the enumeration of simple paths in the original graph. Second, for each action, the number of child subproblems to solve can grow up to $\mathcal{O}(|E|^k)$. We proposed a scalable heuristic in Algorithm \ref{alg:dpheu}, designed to restrict the subproblem space in dynamic programming to a manageable size and enable efficient execution on graphs of practical scale. We call this algorithm dynamic programming with restriction (DPR). Function $DPR(s', r, B')$ in Algorithm \ref{alg:dpheu} shows a modification of the Exact Dynamic Programming algorithm with restriction. The first restriction is that instead of considering subproblems from $B$ steps ahead, \textit{DPR} reduces the lookahead to $B'$ steps, where $B' < B$. The second restriction is to avoid enumerating every possible state (which can be problematic in the early stages). Instead, we only consider a set of $\tau$ candidate paths, implemented as the $path\_sampling(A_{s'}, \tau)$ function in Algorithm \ref{alg:dpheu}. In general, we modify each heuristic to return the top $k <\tau$ candidate paths, rather than a single best path, based on each heuristic’s ranking criterion. For example, the approximate heuristic ranks paths by their marginal gain according to $g$, then selects the top $k$ among them. This function will draw from multiple heuristic methods—such as those derived from an approximation strategy (Algorithm \ref{alg:greedy}), shortest paths, approximate strategies on sets of shortest paths, or paths likely to remove an edge in a minimum cut. As our experiments show that these approaches provide strong performance. This modification decreases the number of subproblems to $\mathcal{O}((\tau)^{B'})$ for each DP, making DP more feasible in larger settings.
While this adjustment may affect the optimality of the solution (compared to the exact DP algorithm), it significantly improves the scalability. Empirically, we will experimentally demonstrate that the \textit{DPR} algorithm scales better than the exact algorithm and outperforms the approximate algorithm on every graph. The idea of DPR is to run the dynamic program with the restricted number of subproblem. Here we restricted the number of subproblem to $\mathcal{O}(\tau)^{B'}$. This make the complexity of DPR become $\mathcal{O}(|V|^{k}+(|E|+|V|)*(\tau)^{B'})$

\begin{algorithm}[H]
 \caption{Heuristics based on Exact Algorithm (DPR)}
 \label{alg:dpheu}
 \SetAlgoLined
 \LinesNumbered

 % suppress printed "end" markers
 \SetKwFor{For}{for}{:}{}
 \SetKwFor{ForEach}{foreach}{:}{}
 \SetKwFor{While}{while}{:}{}
 \SetKwIF{If}{ElseIf}{Else}{if}{then}{else if}{else}{}
 \SetKwProg{Fn}{Function}{}{}

 \KwIn{Directed Attack Graph $G(V, E)$, budget $B$, look-ahead budget $B'$}
 \KwOut{Heuristic query strategy}
\BlankLine
 \While{$s \notin \{\perp_b \cup \perp_d\}$}{
   $\pi \gets \text{DPR}(s, |A_{\pi}|, B')$\;
   $a^* \gets \pi(s)$\;
   $A_{\pi} \gets A_{\pi} \cup \{a^*\}$, propose $a^*$, observe outcome $e^*$\;
   $s \gets s \setminus \{e^*\}$, progressing to new state\;
 }

\BlankLine
 \Fn{DPR($s', r, B', \tau$)}{
   \For{$i \in [0, B']$}{
     \For{$s' \in \mathcal{S}_{|A_{\pi}| + i}$}{
       \If{$s'$ is in $\perp_{b}$ or $\perp_{d}$}{
         $u_{\pi}(s' \setminus \{e\}) \gets
          \begin{cases}
           \alpha, & \text{if } s' \in \perp_{b}\\
           0, & \text{if } s' \in \perp_{d}
          \end{cases}$\;
         $\pi(s' \setminus \{e\}) \gets \emptyset$\;
       }
       \Else{
         $A \gets \text{path\_sampling}(A_{s'}, \tau)$\;
         $a^* \gets \operatorname*{argmin}_{p \in A} \left\{ \sum_{e \in p} \left[ \Phi(e|s',p) \cdot u_{\pi}(s' \setminus \{e\}) \right] \right\}$\;
         $u_{\pi}(s') \gets 1 + \sum_{e \in a^*} \left[ \Phi (e|s',a^*) \cdot u_{\pi}(s' \setminus \{e\}) \right]$\;
         $\pi(s') \gets a^*$\;
       }
     }
   }
   \Return{$\pi$}\;
 }
\end{algorithm}

\section{Experiment}

In this section, we present the evaluation of our algorithm on 13 synthetic graphs of different sizes and an AD attack graph from a real organization. 
\subsection{Experiment Set Up}

In our experiment, we evaluate our algorithm using synthetic AD attack graphs generated by ADSynth \citep{nguyen2024adsynth}, a state-of-the-art AD graph generator. ADSynth models AD graphs based on Microsoft’s best practices tiering model \citep{bestpractice,bestpractice2}, where Tier 0 contains the highest privilege nodes with administrative control, Tier 1 includes high-privileged servers, and Tier 2 and beyond contain non-administrative nodes. ADSynth simulates the AD attack graph in two steps: (1) generating a best-practice AD infrastructure and (2) creating cross-tier edges. 
If every node had a predefined tier, the defense problem would become trivial, as attack paths could be easily identified and removed by removing all edges connecting lower-privilege nodes to higher-privilege nodes \citep{improhound}. 
Open-source tools like ImproHound \citep{improhound} are designed to automate this process. 
However, assigning roles to nodes is inherently challenging due to the dynamic nature of roles, overlapping responsibilities, and exceptions such as temporary access \citep{tiertalk}. We called these \textit{undefined tier nodes}. 
In our simulated attack graph, we assume the presence of a set of nodes with undefined tier connections which create attack paths from lower-privilege nodes to higher-privilege nodes. We assume that IT admin use our adaptive model with the goal of removing all attack paths from the lowest tier to Tier 0. 
Our model treats the attack graph as a single-source, single-target graph so we merge all Tier 0 nodes into a single supernode $t$ and all lowest-tier nodes into a single supernode $s$. 

For our synthetic attack graphs, we labelled them from $G1$ to $G9$. In these graphs, the number of tiers is fixed at 3, and $95\%$ of nodes in graph have well-defined tier assignments. Additionally, we also have 4 smaller versions of the graph denoted from $GS1$ to $GS4$, used in the experiment in Table \ref{tab:smallgraph}. In this small graph, defined-role ratio is about $99\%$. We also included one real AD graph that we collected from an anonymous organization, we denoted this graph as ORG. 

\begin{table}[h]
\begin{center}
\caption{Expected number of query under different algorithm \textbf{($\downarrow$ is better)}. Here, we only consider graphs where OPT can run on. }\label{tab:smallgraph}
\smallskip\noindent
\resizebox{0.45\linewidth}{!}{%
\begin{tabular}{ccccccc}
\toprule
        ~      & \textbf{$GS1$} & \textbf{$GS2$} & \textbf{$GS3$} & \textbf{$GS4$}  \\ \hline
        % \#node/\#edge   & 7               & 9               & 10              & 15              \\ \hline
        OPT    & \textbf{2.513}  & \textbf{2.592}  & \textbf{2.545}  & \textbf{2.383}  \\ \hline
        APP& \textbf{2.513}  & \textbf{2.592}  & 2.546           & 2.385  \\ \hline
        OTH1 & \textbf{2.513}  & \textbf{2.592}  & 2.546          & \textbf{2.383}  \\ \hline
        OTH2 & \textbf{2.513}  & 2.596           & \textbf{2.545} & 2.388  \\ \hline
        PPO    & \textbf{2.513}  & \textbf{2.592}  & 2.546           & \textbf{2.383}  \\ \hline
        SAC    & 2.514           & \textbf{2.592}  & 2.546           & 2.384  \\ \hline
        DPR   & \textbf{2.513}  & \textbf{2.592}  & 2.546           & \textbf{2.383} \\ 
\toprule
\end{tabular}}
\end{center}
\end{table} 

All of the experiments are carried out on a high-performance computing cluster with 1 CPU and 24GB of RAM allocated to each trial. In Tables \ref{tab:smallgraph} and \ref{tab:biggraph}, we report the average number of queries over 16,000 trials. The budget constraint $B$ is set at 10 for all experiments on synthetic graphs. For the real AD graph ORG, due to computing resource limitations, we report the average number of queries over 200 trials. Also for ORG, we reserve a higher budget of 20 and 30 queries due to the size of this graph, denoted ORG(20) and ORG(30) respectively. For the DPR algorithm, we set $\tau = 16$ actions and a lookahead budget of $B' = 4$ step. We reserve a higher budget of 20 and 30 queries due to the size of this graph. 
For the DPR algorithm, we set $\tau = 16$ actions and a lookahead budget of $B' = 4$ step. 

\textbf{Data collection and preprocessing for the ORG graph} The data was collected from an anonymous organization using SharpHound ~\cite{SharpHound}, which gathers detailed information on user sessions, group memberships, ACLs (Access Control Lists), and permissions within the AD environment. SharpHound is commonly used tool to collect domain data, which is then imported into BloodHound for visualizing potential attack paths. The data collection took place on Thursday, 28th October 2022, at approximately 10:00 AM during regular working hours in the local time zone. The collected Active Directory dataset consists of 125,444 nodes and 1,195,432 edges. While we do not have specific insight into the number of tiers in this AD instance, nor the exact rules the organization uses to define them (due to the confidentiality), we assume the AD follows the common three-tier model. We heuristically classify nodes into three tiers: nodes flagged as "adminaccount" (which SharpHound identifies as highly privileged groups) are categorized as Tier 0, servers and services are placed in Tier 1 (determined by their name), and all remaining nodes are assigned to Tier 2. In this experiment, we assumed the  undefined-tier ratio of 0.95. Given the shear size of this AD graph, we increased the budget from 10 to $B = 20$ and $B = 30$ queries. Additionally, due to runtime constraints, we were unable to include results for the DPR algorithm, as generating queries for 16,000 episodes for this algorithm would take approximately 800 seconds per episode.

\begin{landscape} 
\vspace*{\fill}
\begin{table*}[h]
\begin{center}
\caption{Expected number of query under different algorithm \textbf{($\downarrow$ is better)}. AVG.RANK represents the average head-to-head performance ranking of each algorithm across all evaluated graphs. $\#n/\#e$ show the number of nodes and edge in the graph. MC is the min-cut. }\label{tab:biggraph}
\smallskip\noindent
\resizebox{1\linewidth}{!}{%
\begin{tabular}{c|ccccccccccc|c}
\toprule
        ~      & \textbf{G1}& \textbf{G2} & \textbf{G3} & \textbf{G4} & \textbf{G5} & \textbf{G6} & \textbf{G7} & \textbf{G8} & \textbf{G9} & \textbf{ORG}(20)& \textbf{ORG}(30) & AVG.RANK  \\ 
        \#n/\#e   & 1047/5078              & 1047/5091               & 1047/5116               & 5147/25376             & 5139/25153              & 5139/25161              & 10070/48161               & 10070/48170               & 10070/48192              & 125444/1195432              & \_              &  \_     \\ 
        % \#n/\#e   & 36/70              & 44/86               & 40/78               & 154/314              & 166/343              & 153/323              & 261/547               & 267/578               & 304/652              & 583/1393              & 583/1393              &  $\_$     \\ \hline 
        \textbf{MC}   & 3               & 3                & 3                & 3                & 3                & 4
               & 3               & 3                & 3                & 8                & \_  & \_ \\ \hline

        \textbf{APP}& 3.821           & 3.762            & 4.534            & 4.334            & 3.879            & 4.594            & 3.807             & 3.869             & 3.590            & 17.605           & 18.840           & 3.889   \\ \hline
        \textbf{OTH1} & 3.816           & \textbf{3.755}   & \textbf{4.409}   & \textbf{3.885}   & 3.880            & 4.593            & 3.810             & 3.893             & 3.584            & 17.485           & 18.600           & 3.185  \\ \hline
        \textbf{OTH2} & \textbf{3.813}  & 3.756            & 4.437            & 3.904            & 3.883            & 4.592            & 3.799             & 3.871             & 3.570            & 17.535           & \textbf{18.535 }          & 3.333  \\ \hline
        \textbf{PPO}    & 3.816           & \textbf{3.755}   & 4.425            & 3.905            & 3.876            & 4.587            & 3.797             & 3.876             & 3.573            & 17.665           & 18.835           & 2.667  \\ \hline  
        \textbf{SAC}    & 3.854           & 3.799            & 4.490            & 3.901            & \textbf{3.874}   & 4.606            & \textbf{3.792}           & 3.876            & 3.575            &  18.005          & 18.560            & 2.667  \\ \hline
        \textbf{DPR}   & 3.816           & \textbf{3.755}   & \textbf{4.409}   & 3.901            & 3.876            & \textbf{4.589}   & 3.797             & \textbf{3.869}    & \textbf{3.568}   &  \textbf{17.480}          & 18.555          & \textbf{1.444}  \\ 
\toprule
\end{tabular}}
\end{center}
\end{table*}
\vspace*{\fill}
\end{landscape}

\subsection{Baseline Algorithms}

\textbf{Reinforcement Learning.} This approach shares a similar concept with DPR but replaces the use of Dynamic Programming with restricted lookahead by a model-free reinforcement learning to learn the query strategy. We utilize two model-free reinforcement learning models: Proximal Policy Optimization (PPO) \citep{schulman2017proximal} and Soft Actor-Critic for Discrete Action (SAC) \citep{christodoulou2019soft}.
These Actor-Critic methods allow us to train a reinforcement learning agent across multiple environments simultaneously, where each environment represents a possible realization $\psi$ of the APR problem. The objective in the APR problem is to derive a policy that minimizes the expected number of queries across all possible realizations. By interacting with several realizations, the RL agent learns a policy that minimizes the overall reward (corresponding to the number of queries) for these scenarios.
For the implementation of RL, we encode the observation space as a vector of $(E + \tau B)$ binary bits. The first $E$ bits represent a one-hot encoding of the edges that have been removed through queries, while the remaining $\tau B$ bits encode the taken actions. We allocated a100 GPUs for the training of RL agents.

\noindent\textbf{Others Heuristics} We also introduce two other heuristics called OTH1 and OTH2 which are designed based on the approximate algorithm:
\begin{itemize}
    \item \textbf{Minimum-cut-based heuristic (OTH1)} The idea of this heuristic is to proposes paths with the highest likelihood of removing an edge in the minimum cut set. Formally, it selects the path $p = \arg\max_{a \in P_{G'}:a \cap mc(G')} \sum_{e \in a}  \{ \Phi (e|s,a) * \left[g(s\setminus e) - g(s) \right] \}$ where $mc(G')$ return the $s-t$ minimum cut of the temporary graph $G'$. Suppringingly, MC is actually an approximate algorithm  with the approximation ratio of $|P_{s_r}|$ when queries have a cost (i.e., $c > 1$). However, since our problem assumes unit cost (i.e., $c = 1$), this approximation bound becomes trivial, and we treat MC as a heuristic. Despite this, the MC heuristic outperforms the APP algorithm in 9 out of 12 graphs in our experiments, suggesting the potential for a tighter approximation bound and leave this for future research.
    \item \textbf{Approximate Algorithm on Shortest Paths (OTH2)} In this heuristic, instead of running the APP algorithm on all simple paths, we restrict it to only the set of shortest paths. To remind, the APP algorithm is used to find the path that yields the highest marginal gain within the given set of paths. Surprisingly, this simple modification performs exceptionally well, outperforming the original APP algorithm in 10 out of 12 graphs.
\end{itemize}

\subsection{Sampling Scheme for DPR and RL} 

Now we will discuss about the sampling scheme for DPR, and both of the RL technique.  In total, we have 3 sampling scheme for each algorithm. We implement our sampling algorithm by modifying the APP, OTH1 and OTH2 algorithm to sample the action. The idea is instead of return the top action proposed by these algorithm, we return the top-k path by run these algorithm. The idea is to use DPR or RLs techniques to find the best action among $k$ proposed paths in each step. 
Specifically, the APP will propose the path with the highest marginal gain with respect to utility $g$, here, we use APP to return top $k$ path with highest marginal gain. Similarly, for OTH2, we return top $k$ shortest path. For OTH1, we return top $k$ path that have the highest likelihood of removing a edge in minimum cut. In our experiment, we fix $k=4$ for DPR and $k=16$ for the reinforcement learning methods. This choice is due to DPR's limited scalability at $k=4$, while PPO and SAC can scale effectively to $k=16$.  In the experiment, for each of algorithm, we will report the result of the sampling scheme that yield the highest performance for each algorithm.

\subsection{Performance Interpretation}

In Table \ref{tab:smallgraph}, we report the performance of our proposed algorithm in the graph where OPT can optimally come up with the query policy without out-of-memory error. As we mentioned, OPT is very costly computationally, we are only able to scale it to a graph with 17 nodes, 32 edges and 16 attack paths ($GS4$ graph). Overall, all of our heuristics (OTHs, PPO, SAC and DPR) perform very well with the small optimality gap.

In Table \ref{tab:biggraph}, we present the performance of our algorithm on 13 large synthetic graphs and one real-world AD graph (\textbf{ORG}) from an anonymous organization. We observed that DPR consistently achieved the best performance. Noticeably, DPR outperformed APP in every graph. All proposed algorithms outperformed APP. Nevertheless, APP's performance is theoretically guaranteed which may be useful for some worst-case scenarios. This algorithm is also useful as a path sampling scheme for the DPR algorithm as the action space of DPR algorithm contain the approximate strategy which somewhat helps DPR to have a guaranteed performance. While the RL algorithms (PPO and SAC) performed well on synthetic graphs, their performance was worse on the real AD graph. We suspect this is due to the real graph requiring a significantly larger number of queries. The RL policies are myopic, meaning they excel in scenarios with fewer queries by prioritizing short-term gains but struggle when a higher number of queries is needed, as they fail to account for long-term gains.   
% In this experiment, we also include the 

The adaptive hardening of AD security has been studied as the LQGCT problem by Guo et al. \citep{guo2024limited}. Their model simulates IT admin' behaviour as a binary decision-making process: at each step, an edge is queried, and the administrator labels it as either "cut" or "retain." However, this approach often fails when IT admins are overly conservative, retaining too many edges and leading to unsuccessful cuts. In contrast, our model queries a path in each step, presenting a multiple-choice decision for the IT admin to select one edge to remove. In table \ref{tab:guocomp}, we compare the graph-cutting performance of the RL algorithm from LQGCT with our DPR algorithm. The results from 512 trials demonstrate that our model achieved more successful cuts compared to Guo's model. We observe that a larger budget leads to a higher cutting success rate in our model, this means we can guarantee a successful cut in every trial by allocating a sufficiently larger budget.

\begin{table}[h]
\begin{center}
\caption{The number of trials with successful cuts between RL from LQGCT model and DPR algorithm over 512 trials. \textbf{($\uparrow$ is better)}}\label{tab:guocomp}
\smallskip\noindent
\small
\resizebox{0.45\linewidth}{!}{%
\begin{tabular}{ccccc}
\toprule
       % & \textbf{Guo's RL} & \textbf{DPR} \\ 
       &\multicolumn{2}{c}{LQGCT's RL }&\multicolumn{2}{c}{DPR}\\
                        & B = 5 & B = 10 & B = 5 & B = 10 \\
        \hline
       \textbf{G7}&  128  & 144    &   278 &  459   \\ \hline
       \textbf{G8}&  96   &  96    &   419 &  504   \\ \hline
       \textbf{G9}&  96   &  96    &   339 &  487   \\ 
\toprule
\end{tabular}}
\end{center}
\end{table}

The length of the proposed path is an important consideration in the proposing process, as longer paths could place additional burden on the IT administrator. Table~\ref{tab:len} reports the expected average lengths of the proposed paths in our experiments. The results show that the proposed paths are relatively short, even for real AD environments (DPR average path length is only 2.776 for ORG). This indicates that our approach does not create significant burdens for IT administrators.

\begin{table*}
\begin{center}
\caption{Expected length of the queried path by different algorithms. (\textbf{$\downarrow$ is better)} }\label{tab:len}
\smallskip\noindent
\resizebox{1\linewidth}{!}{%
\begin{tabular}{c|ccccccccccc}
\toprule
        ~      & \textbf{G1}& \textbf{G2} & \textbf{G3} & \textbf{G4} & \textbf{G5} & \textbf{G6} & \textbf{G7} & \textbf{G8} & \textbf{G9} & \textbf{ORG}(20)& \textbf{ORG}(30)  \\\hline
        \textbf{APP}& 2.228           & 2.219            & 2.524            & 2.178            & 2.252            & 2.140            & 2.864             & 1.862             & 2.162            & 3.488          & 3.400   \\ \hline
        \textbf{OTH1} & 2.228           & 2.218         & 2.473           & 1.923         & 2.252            & 2.139            & 2.864            & 1.863             & 2.162            & 2.769           & 2.809        \\ \hline
        \textbf{OTH2} &  2.258            & 2.238          & 2.476           & 1.894           & 2.279            & 2.159             & 2.902             & 1.861            & 2.184           & 2.724          & 2.778  \\ \hline
        \textbf{PPO}    & 2.259           & 2.236        & 2.474            & 1.893            & 2.279            & 2.158            & 2.901             & 1.862             & 2.185            & 2.726           & 2.788    \\ \hline  
        \textbf{SAC}    & 2.259           & 2.238            & 2.477            & 1.894            & 2.279   & 2.158            & 2.907           & 1.861            & 2.186            &  2.728          & 2.783      \\ \hline
        \textbf{DPR}   & 2.259          & 2.237   & 2.473   & 1.894            & 2.277            & 2.159   & 2.901             & 1.86    & 2.184   &  2.72          & 2.776         \\ 
\toprule
\end{tabular}}
\end{center}

\end{table*}
\section{Conclusion}

In this paper, we proposed a practical human-in-the-loop combinatorial problem for network security called Adaptive Path Removal problem. This problem was motivated by the technical requirements and limitations of current industrial models. The goal of our model is to reduce the workload for security teams in an adaptive manner. 
% We first provided that our problem is $\#\mathcal{P}$-hard.
We proposed a comprehensive set of solutions, including an exact algorithm, an approximate algorithm, and several scalable heuristics.
Among these, our DPR heuristic, designed based on both the exact and approximate algorithms, exhibited superior performance.
Specifically, DPR demonstrated the ability to run effectively on larger-scale graphs compared to the exact algorithm and consistently outperformed the approximate algorithm across all tested graph scenarios.
We verify the effectiveness of our algorithm on several synthetic AD graphs and an AD attack graph collected from a real organization.

\section{Appendix-note about Strong Adaptive Monotonicity and Strong Adaptive Submodular}
\label{sec:asub}

To define strong adaptive submodularity, we first need the following extension of $\Delta(a\mid s)$
\begin{definition}
    \textbf{(Conditional expected marginal benefit (extended version)} Given a state $s$ and $s'$ where $\psi_s \subseteq \psi_{s'}$, an action $a$ and a utility function $g$, the expected marginal benefit of $a$ is defined as $\Delta (a | s;s') = \sum_{e \in a}  \{ \Phi (e\mid s',a) * \left[g(s\setminus e) - g(s) \right] \}$
\end{definition}

\begin{definition}
    \textbf{(Strong Adaptive Monotonicity)} A utility function $g : S \mapsto \mathbb{R}_{\geq 0}$ is adaptive monotone if the \textit{benefit of selecting an action is always nonnegative}. Formally, function $g$ is adaptive monotonic if for all $s\in S$ and $e \in \{e \mid e \in a, a \in A_s \}$, we have $g(s) - g(s\setminus e) \geq 0$
\end{definition}

\begin{definition}
    \textbf{(Strong Adaptive Submodular\footnotemark[1])} A utility function $g : S \mapsto \mathbb{R}_{\geq 0}$ is adaptive submodular if marginal benefit of selecting an action not increase as more action are selected. Formally, function $g$ is adaptive submodular if for all temporary state $s$ and $s'$ such that $\phi_{s} \subseteq \phi_{s'}$ and action $a \in A_{s_r}$, we have $\Delta(a | s;s') \geq \Delta(a \mid s')$
\end{definition}

We also provide a the definition of pointwise submodularity as follows:

\begin{definition}
    \textbf{(Pointwise Submodular)} A utility function $g : S \mapsto \mathbb{R}_{\geq 0}$ A function $g$ say to be pointwise submodular if $g$ is submodular in every state for any realization $\psi$. Formally, function $g$ is pointwise submodular if for all temporary state $s$ and $s'$ such that $\phi_{s} \subseteq \phi_{s'}$ and for all $e \in \{e \mid e \in a, a \in A_s \}$, we have $g(s\setminus e) - g(s) \geq g(s'\setminus e) - g(s')$
\end{definition}

As our definition of adaptive monotonicity is already satisfy strong adaptive monotonicity condition so we admit the proof.

A sufficient condition for strong adaptive submodularity is that the function $g$ is both \textit{adaptive submodular} and \textit{pointwise submodular}. 
The utility function $g$ is pointwise submodular as for every state $s$ and $s'$ such that $s\subseteq s'$, and every action $a \in A_{s_r}$, we have $g(s\setminus a) - g(s) \geq g(s'\setminus a) - g(s')$. By definition, if a function $g$ is both adaptive submodular and pointwise submodular, then $g$ is strongly adaptive submodular.

%% file: Chapter6/Chapter6.tex
\chapter{A Reinforcement Learning Approach to Security Adaptive Connectivity Testing in Active Directory} % Main chapter title
\label{chapter:paper4} % Change X to a consecutive number; for referencing this chapter elsewhere, use \ref{ChapterX}

In this chapter we present another robust human-in-the-loop model for edge-removal in AD attack graph. Existing solutions often simplify the complex, feature-driven decisions that IT administrators make during review. We propose a feature-based edge query model that bridges practical gaps in existing work.
Motivated by the need for a practical tool for this security problem, we propose the Feature-based Adaptive Connectivity Test (F-ACT) model. The model is defined on a graph with a source $s$ and a target $t$, where each edge has an initially hidden state: ON/traversable or OFF/blocked, governed by a latent decision function over an associated feature vector. The objective is to find an adaptive policy that tests the $s-t$ connectivity by finding either a path of ON edges or a cut of OFF edges separating them, while minimizing the expected number of queries in the process. 
The sequential nature of the problem makes it a natural fit for a reinforcement learning (RL) formulation. However, applying standard RL algorithms is challenging due to two key obstacles: a significant exploration problem and the sparse reward issue. To overcome these, we design a new RL algorithm with three novel components: a policy-agnostic self-improvement mechanism, a policy-invariant reward shaping and a custom prioritized Experience Replay. Our self-improvement mechanism is capable of refining any given policy, allowing it to push performance boundaries beyond existing state-of-the-art approaches. We demonstrate the effectiveness of our algorithm on six large-scale AD graphs and four other real-world miscellaneous networks.

\section{Introduction}
Microsoft's Active Directory (AD) is a directory service that enables the IT admin to manage security permissions and control accesses within Windows domain networks. This management sevices have been deployed and widely used by large enterprises \citep{adper}. An AD environment is naturally described as a graph where nodes are accounts/computers/groups, and the directed edges describe accesses/permissions/vulnerability. 

One way to harden the network is to disconnect attack paths, routes an attacker might use to gain the access to high-privilege nodes. While existing models \citep{guo2023scalable,zhang2024practical,goel2023evolving} and commercial tools such as BloodHound \citep{BloodHound} can identify edges for removal to reduce these paths, these removals are often not directly implementable as they can disrupt critical business operations. This creates what the literature terms the “implementable fixes” problem \citep{dunagan2009heat,zheng2011active,guo2024limited}, where each proposed fix must be manually approved by an IT admin. This practical constraint has motivated the development of adaptive security models that incorporate human intervention into the hardening process. Our proposed model functions as a 'security wizard' that guides an IT admin through the hardening process by proposing a edge for removal at each step. In response, the IT admin can either approve the removal, which labels the edge as 'OFF' (removed), or disapprove the removal, which labels it as 'ON' (traversable). The process continues until all attack paths are eliminated, or we establish that elimination is impossible.
% (When it is impossible, IT admin can instead disable the account or split its into 2 with difference privileges)
% \footnote{When direct elimination is impossible, IT admin can instead disable the account or split its into 2 with difference privileges}. 
Since each query corresponds to a costly manual intervention, our model's objective is to minimize this human effort by designing a model that can derive and generalize adaptive edge-removal policy.

The effectiveness of this adaptive tool depends on its ability to model the IT admin's decision process. Prior works \citep{dunagan2009heat,guo2024limited,ngo2025adaptive} often oversimplifies this decision-making process by assigning a fixed, independent probability of removal to each edge (i.e., how likely the edge will be approved for removal by an IT admin). 
However, modeling an IT admin's decision as a simple probabilistic event is insufficient. An admin's choices are not independent but are highly dependent on the security context of a given network edge, which is defined by its features. They consider a variety of factors, such as the permission type, account inactivity, departmental roles, and existing access rights. For instance, if an admin approves removing an edge with \textit{GenericAll} permission (granting full control over an object) from an inactive account, the model should learn that edges with these properties are more prone to removal and adaptively update this knowledge before making its next query.
An edge's security context is represented by a feature vector that acts as a list of its associated security risks such as the permission type, the status of the account, its role, etc. This allows us to formalize the IT admin’s decision-making ‘rulebook’ as a hidden decision function, which maps an edge’s feature vector to its state. The proposed algorithm aims to learn this function, thereby generalizing from specific decisions to a network-wide policy. 
% While we use a linear model for f in our exposition, our framework is general and accommodates more complex, non-linear functions.
% Our model formalizes the IT admin's decision making 'rulebook' as a hidden decision function that takes an edge's feature vector as input to determine its true state (removable or not). The proposed algorithm aiming to learn this function to generalize from an administrator's decisions. 

Furthermore, motivated by the goal of saving human effort in the querying process, we introduce the following monotonic risk assumption: ground-truth labeling for edges are monotone with respect to their risk level. This enables deductive reasoning: If an IT admin decides an edge with a specific set of risks is removable ('OFF'), our agent can infer that any other edge with all the same risks plus additional ones (a formal superset) is also 'OFF', and the reverse holds for 'ON' decisions.
% from an 'OFF' decision on , the agent can infer that any edge with a superset of its risk factors is also 'OFF'; conversely, from an 'ON' decision, it can infer that any permission with a subset of those risk factors is also 'ON'. 
The soundness of this rule can be illustrated with an example from Figure \ref{fig:instance}, where we define four binary risk features for each edge:
(1) if the permission is GenericAll, (2) if it grants access to a HighPrivilege object, (3) if the source account has been \textit{Inactive} for an extended period, and (4) if the permission is an AdminTo relationship. Therefore, if an IT admin decided a permission with feature vector $a = [1, 1, 0, 0]$ is removable ('OFF'), our wizard agent can also conclude that a permission with vector $b = [1, 1, 1, 0]$ is also 'OFF' without a new query. This deduction is sound because the additional risk feature (Inactive) only strengthens the reasoning for removal. This allows our method to scale to massive graphs where reviewing individual edges is impractical. By generalizing a single decision to all related and higher-risk edges, it allows IT admin to effectively generalize a security policy from a few queries, significantly pruning the search space.

\begin{figure}[h]
    \centering
    % center the two subfigures as one block; tweak the gap in \hspace{...}
    \makebox[\textwidth][c]{%
        \begin{subfigure}[t]{0.45\linewidth}
            \centering
            \includegraphics[width=\linewidth]{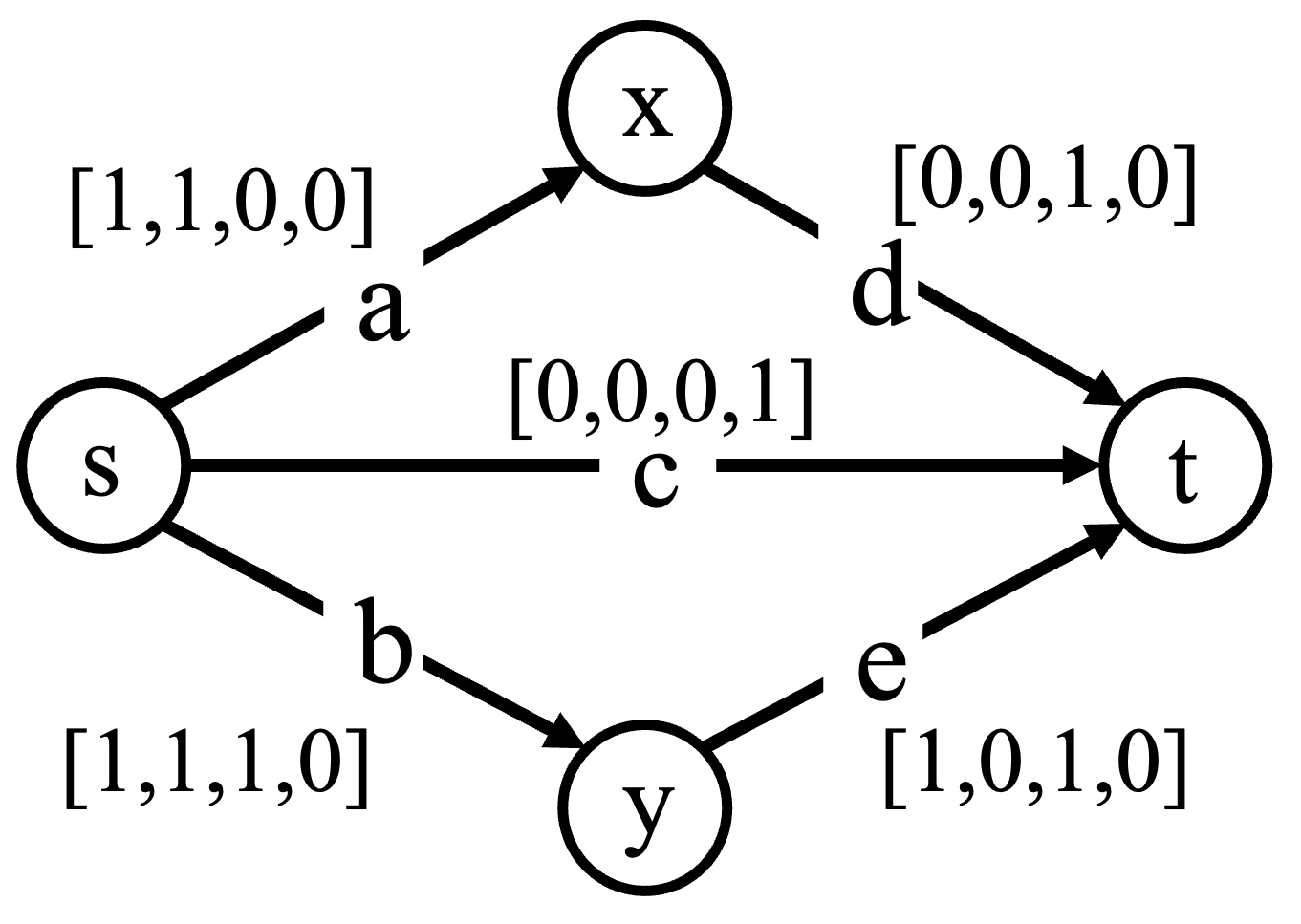}
            \caption{An instance of a AD attack graph}
            \label{fig:instance}
        \end{subfigure}
        \hspace{1.75em}%
        \begin{subfigure}[t]{0.25\linewidth}
            \centering
            \includegraphics[width=\linewidth]{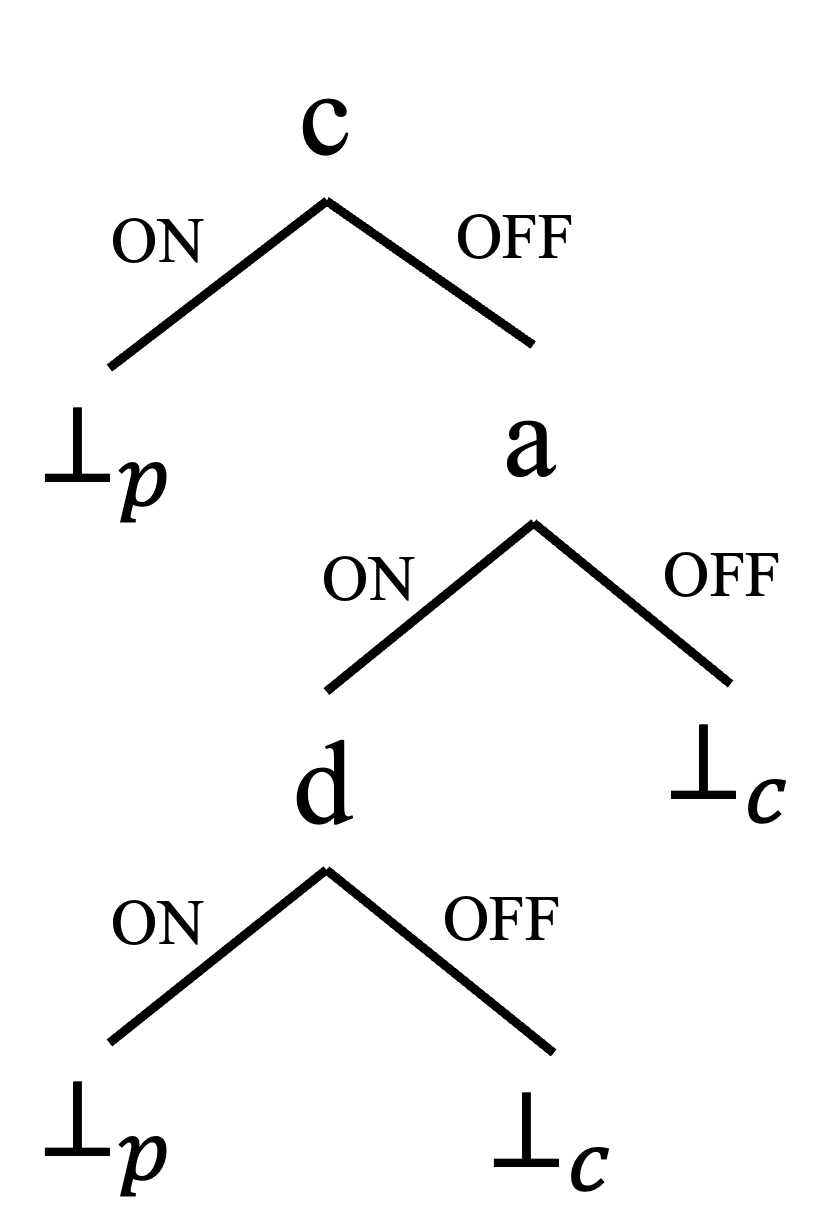}
            \caption{Optimal policy.}
            \label{fig:second_in_col}
        \end{subfigure}%
    }
    \caption{Example of F-ACT instance. $\perp_p$ means found a path of ON edges, $\perp_c$ means found a cut of OFF edges.}
    \label{fig:policy}
\end{figure}

% The Adaptive Network Connectivity Test (ACT) is a combinatorial problem that seeks to determine $s-t$ connectivity in a graph with a minimum number of edge queries. First proposed by Fu et al. \citep{fu2017determining,fu2017complexity}, this class of network-based framework has recently been successfully applied to practical problem such as hardening cybersecurity networks \citep{guo2024limited,ngo2025adaptive}. The ACT problem is defined on a graph where each edge has 2 true state: either online (ON) and thus traversable, or offline (OFF) and blocked. Starting with state of all edge are hidden, an agent can query one edge per round to reveal its true state. The hidden state is governed by a known prior probability $p(e)$ that the edge is 'ON'. The objective is to adaptively select a sequence of queries to find either a path of ON edges establishing $s-t$ connectivity or a separating cut of OFF edges, while minimizing the total number of queries made. 

This chapter contributions can be summarized as follows:

% \begin{itemize}
%     \item We introduce a new theoretical combinatorial optimization model called Feature-based Adaptive Connectivity Test. This is the first adaptive AD-focused model that incorporates security-context features to model an IT admin decision-making process.
%     \item We prove that the F-ACT problem is $\#\mathcal{P}$-hard and develop RL4FT, a novel RL algorithm that addresses the key challenges of exploration and sparse rewards through a heuristic-guided self-improvement mechanism, a provably policy-invariant reward shaping scheme, and a custom Prioritized Experience Replay
%     \item We conduct extensive experiments on six large-scale Active Directory graphs and four other real-world networks, demonstrating that our algorithm significantly outperforms baselines.
%     \item Through the ablation study, we show our self-improvement mechanism capable of refining any given policy and pushing performance boundaries beyond existing state-of-the-art approaches.
% \end{itemize}

\begin{itemize}
    \item We introduce a new theoretical combinatorial optimization model called Feature-based Adaptive Connectivity Test. This is the first adaptive AD-focused model that incorporates security-context features to model an IT admin decision-making process.
    \item We prove that the F-ACT problem is $\#\mathcal{P}$-hard and develop RL4FT, a novel RL algorithm that addresses the key challenges of exploration and sparse rewards through a heuristic-guided self-improvement mechanism, a provably policy-invariant reward shaping scheme, and a custom Prioritized Experience Replay.
    \item We conduct extensive experiments on ten large-scale real-world networks, demonstrating that our algorithm significantly outperforms baselines across both linear and non-linear decision functions.
    \item Through an ablation study, we show our self-improvement mechanism is capable of refining any given policy and pushing performance boundaries beyond existing state-of-the-art approaches.
\end{itemize}

\section{Related Work}

\textbf{Defense models for Active Directory.} The study of AD attack graphs was pioneered by the seminal work of Dunagan et al. \citep{dunagan2009heat}. They introduced the identity snowball attack model, a concept that was later expanded and commercialized by Bloodhound \citep{BloodHound}. Defenses models for AD can be broadly categorized into two lines of work: non-adaptive and adaptive. Non-adaptive models (one-shot) often involves running an optimization algorithm a single time to find a global hardening strategy. A popular defense strategy is edge removal, where the defense model are formulated as a shortest path interdiction problem \citep{guo2022practical,guo2023scalable,zhang2023oracle} or as a problem of minimizing the Domain Admin-reachable nodes \citep{zhang2024practical}. A different model was proposed by Goel et al. \citep{goel2022defending,goel2023evolving}, who used an Evolutionary Diversity Optimization algorithm to find robust defense for AD with uncertain configuration. Another method of defense involves node removal, which models the strategic allocation of network decoys as introduced in Chapter \ref{chapter:paper1} and \ref{chapter:paper2}.

The primary drawback of non-adaptive models in real-world deployments is their inability to incorporate human feedback which lead to the development of adaptive models.
Adaptive model was first explored in Heat-ray, a system proposed by Dunagan et al. \citep{dunagan2009heat} that iteratively suggests edge removals to an administrator based on the sparest cut of the attack graph. Zheng et al. \citep{zheng2011active} used active learning predict the likelihood of an edge being removable, rather than performing an optimization to prescribe an actionable defense. More recently, Chapter \ref{chapter:paper3} proposed the Adaptive Path Removal model where a proxy adaptively proposed attack paths for IT admin, who requiring to select a edge on a path for removal. The Limited Query Graph Connectivity Test (LQGCT) by Guo et al. \citep{guo2024limited} is a closely related problem. Their work involves adaptively querying edges to determine network connectivity where they assume the known prior probabilities for each edge and a fixed query budget. However, these prior edge-by-edge query methods are impractical for large-scale networks like AD, where IT admin usually define high-level security policies based on object properties. Our work is the first to model this directly by using features to learn an underlying decision function, which represents the security policy itself. This approach allows feedback on a single edge to generalize across the entire network, bridging the gap between individual queries and scalable policies.

% In contrast, our approach associates each edge with a feature vector governed by an unknown linear function and seeks to optimize the total number of queries in an unrestricted horizon, which makes the problem prohibitively more challenging. To the best of our knowledge, ours is the first work on AD security to incorporate edge features into the sequential decision-making process.

\textbf{Partially Observable Stochastic Optimization Problems (POSO).} POSO is an emerging problem in operational research and artificial intelligence. This problem involves making an adaptive sequence of decisions under uncertainty, where choices are informed by observing the outcomes of past actions. This class of optimization has been applied to diverse domains. 
For instance, in medical testing, \citep{short2013iron,yu2023deep} developed adaptive strategies to diagnose diseases while minimizing testing costs. In social influence, related work studied the adaptive version of the influence maximization problem, which aims to sequentially select seed nodes to maximize influence spread based on observed propagation \citep{tong2016adaptive, fujii2019beyond, hatano2016adaptive}. 
Furthermore, the principles of POSO are central to some active learning and active fine-tuning techniques \citep{hubotter2024transductive,golovin2011adaptive,wei2015submodularity}, where the goal is to adaptively select the most informative data points to query. 
This paradigm also extends to recommendation systems for adaptively learning user preferences \citep{gabillon2014large,gabillon2013adaptive}.
A more theoretical work including the Stochastic Boolean Function Evaluation problem \citep{allen2017evaluation,deshpande2014approximation} seeks to find a query policy that determines the output of a Boolean function $f$ by probing its hidden binary input while minimize the total expected query cost. Golovin and Krause \citep{golovin2011adaptive} studied a special case of POSO governed by an adaptive submodular utility function, 
% they provide a theoretical analysis of greedy algorithms on this problem class. 
While these models are relevant, their solution is not designed for our graph-based problems and lack scalability for large graphs. Consequently, solutions for these models cannot be directly applied to our work

% \textbf{RL for POSO}

% Large body of apply RL in one-shot combinatorial optimization \citep{bengio2021machine,yang2023survey} where most of the use RL to search for a one unique optimal solution or find generalize policy for difference problem instance \citep{manchanda2020gcomb,chen2021contingency,kamarthi2019influence}. In POSO, the solution itself is a policy so RL is a natural formulation for this family of optimization problem. \citep{gabillon2013adaptive,gabillon2014large} develop optimism bandit algorithm for the problem of maximizing adaptive submodular functions for the application of recommendation system. \citep{fadda2024math,huang2022provably} RL for IM
% In AD research, \citep{ngo2025adaptive} and \citep{guo2024limited} proposed RL for their adaptive netork connectivity problem. 

\section{Problem Formulation}

The Feature-based Adaptive Connectivity Test problem involved a graph $G = (V, E)$, either undirected or directed. There is a single source and a single destination $s, t \in V$. Each edge $e\in E$ has two possible labels (ON/OFF) which are hidden initially. There will be two termination conditions: (1) Among the revealed edges, the ON edges form an s-t path; (2) Among the revealed edges, the OFF edges form an s-t cut. We aim to design an adaptive query strategy $\pi$ that selects the next edge to query based on the history of outcomes. The objective is to find an optimal policy $\pi^*$ that minimizes the expected number of queries.
% Extending from the original ACT problem \citep{fu2017determining,guo2024limited}, we introduce the feature for each edge. 

In our model, we associated each edge $e\in E$ with a $d$-dimension binary feature vector $\mathbf{x}_e = (x_{e,1}, \dots, x_{e,d}) \in \{0, 1\}^d$ where $x_{e,i}$ denoted $i$-th feature for edge $e$. In AD context, feature vector is a list of security risk associated to an edge. The true label $\sigma_e$ of edge $e$ is determined by a parameterized decision function $f(\mathbf{x}_e)$. For the sake of simplicity, we define this function as a linear model: $f(x_e) = \mathbf{w}^T \mathbf{x}_e + b$ where $w \in \mathbb{R}^d$ is the parameterized weight vector and $b \in \mathbb{R}$ is the bias. The true label $\sigma_e$ of an edge is determined as following:
\begin{equation}
    \sigma_e = 
\begin{cases} 
\text{OFF} & \text{if } f(x_e) \ge 0 \\
\text{ON} & \text{if } f(x_e) < 0 
\end{cases}
\label{eq:admin}
\end{equation}

Next, we formally define the deduction assumption that enables efficient querying. First, let us define the set of features with value 1 in the feature vector (support of vector $\mathbf{x}_e$) as $\text{supp}(\mathbf{x}_e) = \{i \in \{1, \dots, d\} \mid x_{e,i} = 1\}$. Conversely, we have $\text{supp}(\bar{\mathbf{x}}_e)$ denote the set of features with value 0 in the feature vector where $\bar{\mathbf{x}}_e$ is the element-wise inverse of $\mathbf{x}_e$ (i.e., $\bar{x}_{e,i} = 1 - x_{e,i}$). Let $D_{\text{OFF}}(\mathbf{x}) = \{\mathbf{x}' \in \mathcal{X} \mid \text{supp}(\mathbf{x}) \subseteq \text{supp}(\mathbf{x}')\}$ denotes the set of all feature vector $\mathbf{x}' \in \mathcal{X}$ where every feature with a value of 1 in $x$ also has a value of 1 in $\mathbf{x}'$. Similarly, let $D_{\text{ON}}(\mathbf{x}) = \{\mathbf{x}' \in \mathcal{X} \mid \text{supp}(\bar{\mathbf{x}}) \subseteq \text{supp}(\bar{\mathbf{x}}')\}$  denotes the set of all feature vector $\mathbf{x}' \in \mathcal{X}$ where every feature with a value of 0 in $\mathbf{x}$ also has a value of 0 in $\mathbf{x}'$. The assumption is as follows:
\begin{assumption}
    % \textbf{(Monotone Deduction Rule)} We assume these features are monotone with respect to risk. This means the presence of a feature  $x_{e_i}=1$ ($x_{e_i}=0$) can only increase, never decrease, the likelihood that an edge's true state is 'OFF' ('ON'). Formally, this gives rise to two deductive rules:
    \textbf{(Monotone Deduction Rule)} We assume ground-truth labeling are monotone with respect to risk; this means a feature combination resulting in an 'OFF' (or 'ON') state will maintain that state if more (or fewer) risk features are present. Formally, this resulted in deductive rules:
    \begin{itemize}
        \item If an edge $e$ is revealed to be "OFF", any other edges $e'$ where $\mathbf{x}_{e'} \in D_{OFF}(\mathbf{x}_e)$ must also be "OFF"
        \item Conversely, if an edge $e$ is revealed to be "ON", any other edges $e'$ where $\mathbf{x}_{e'} \in D_{ON}(\mathbf{x}_e)$ must be "ON".
    \end{itemize}
    \label{assu:mono}
\end{assumption}
Without loss of generality, we can assume that the components of the unknown weight vector $\mathbf{w}$ are non-negative, i.e., $\mathbf{w_i} \geq 0, \forall i \in \{1, \dots, d\}$. Another implication is that all edges sharing the same feature vector must have the same true status. This allows us to simplify the problem significantly, as formalized in the following lemma.
\begin{lemma}
    \textbf{(Query Equivalence)} Let $e_1, e_2 \in E$ be two edges that have identical feature vectors, i.e., $\mathbf{x}_{e_1} = \mathbf{x}_{e_2}$. Then their true label must also be identical, i.e., $\sigma_{e_1} = \sigma_{e_2}$.
    
\label{lemma:equi}
\end{lemma}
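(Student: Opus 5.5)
The plan is a direct argument. There are two independent routes, and I would give both: one through the decision function in Eq.~\eqref{eq:admin}, and one through Assumption~\ref{assu:mono}. The second matters because the paper remarks that the framework is meant to cover non-linear decision functions as well.

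\textbf{Route 1 (decision function).} By the model, the true label of an edge depends on $e$ only through its feature vector. Specifically, $\sigma_e$ is the value of the fixed rule in Eq.~\eqref{eq:admin} applied to $f(\mathbf{x}_e)=\mathbf{w}^T\mathbf{x}_e+b$. Suppose $\mathbf{x}_{e_1}=\mathbf{x}_{e_2}$. Then $f(\mathbf{x}_{e_1})=f(\mathbf{x}_{e_2})$, so the two values lie on the same side of the threshold $0$. Hence $\sigma_{e_1}=\sigma_{e_2}$. This argument uses nothing about linearity; it holds for any deterministic $f$.

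\textbf{Route 2 (monotone deduction only).} Here I would show the claim without invoking the form of $f$. The key observation is that both deduction sets are reflexive. Since $\text{supp}(\mathbf{x})\subseteq\text{supp}(\mathbf{x})$, we have $\mathbf{x}\in D_{\text{OFF}}(\mathbf{x})$. Likewise $\text{supp}(\bar{\mathbf{x}})\subseteq\text{supp}(\bar{\mathbf{x}})$ gives $\mathbf{x}\in D_{\text{ON}}(\mathbf{x})$.

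Now split on $\sigma_{e_1}$. If $\sigma_{e_1}=\text{OFF}$, the first bullet of Assumption~\ref{assu:mono} applies: every $e'$ with $\mathbf{x}_{e'}\in D_{\text{OFF}}(\mathbf{x}_{e_1})$ is OFF. Since $\mathbf{x}_{e_2}=\mathbf{x}_{e_1}\in D_{\text{OFF}}(\mathbf{x}_{e_1})$, it follows that $\sigma_{e_2}=\text{OFF}$. If instead $\sigma_{e_1}=\text{ON}$, the second bullet gives $\sigma_{e_2}=\text{ON}$ in the same way.

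The only subtle point is one of interpretation rather than difficulty. Assumption~\ref{assu:mono} is phrased in terms of an edge being \emph{revealed}, whereas the lemma is about ground-truth labels. I would note that the assumption constrains ground-truth labels, and revelation merely exposes them. Therefore the deduction holds whether or not $e_1$ is ever queried. With that remark in place, both routes are essentially one line, and I expect no real obstacle.
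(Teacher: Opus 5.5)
Your proposal is correct and matches the paper. The paper gives no separate proof: it states the lemma as an immediate implication of the model, namely that the label depends on $e$ only through $f(\mathbf{x}_e)$, together with Assumption~\ref{assu:mono} applied via the trivial inclusions $\mathbf{x}\in D_{\text{OFF}}(\mathbf{x})$ and $\mathbf{x}\in D_{\text{ON}}(\mathbf{x})$. Your two one-line routes make that remark explicit.
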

We also have the following hardness result:

\begin{theorem}
    It is $\#\mathcal{P}$-hard to find the optimal policy for F-ACT problem
\end{theorem}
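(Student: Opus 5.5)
The plan is to reduce from the (non-feature) Adaptive Connectivity Test, where edges are independently ON with known probabilities. Computing the optimal expected query cost there is $\#\mathcal{P}$-hard \citep{guo2024limited}, and that hardness itself comes from $(s,t)$-network reliability, the problem used in Chapter~\ref{chapter:paper3}. I would embed an arbitrary ACT instance into F-ACT so that the feature structure and the Monotone Deduction Rule (Assumption~\ref{assu:mono}) carry no extra information, and the hidden decision function behaves like independent coin flips on the edges. For this to make sense, the expectation in the objective must be over a prior on the latent parameters $(\mathbf{w}, b)$, and I will make that prior explicit.

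\textbf{Construction.} Given a graph $G=(V,E)$ with $|E|=m$ and ON-probabilities $p_e$, I keep the same graph and give edge $e_i$ the feature vector $\mathbf{x}_{e_i}=\mathbf{u}_i\in\{0,1\}^m$, the $i$-th unit vector, so $d=m$. Since $\text{supp}(\mathbf{u}_i)=\{i\}$, the set $D_{\text{OFF}}(\mathbf{u}_i)$ contains no other edge's feature vector: every other edge has a singleton support different from $\{i\}$. Likewise $D_{\text{ON}}(\mathbf{u}_i)$ contains no other edge's vector, because every other $\mathbf{u}_j$ has a $1$ in coordinate $j$, where $\mathbf{u}_i$ has a $0$. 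So no deduction ever fires, and Lemma~\ref{lemma:equi} is vacuous since all features are distinct. For the labels, I fix $b=-1$ and draw the $w_i\ge 0$ independently, with $\Pr[w_i\ge 1]=1-p_{e_i}$. By Eq.~(\ref{eq:admin}), $\sigma_{e_i}$ is OFF iff $w_i\ge 1$, so the labels are independent with $\Pr[\sigma_{e_i}=\text{ON}]=p_{e_i}$, and the non-negativity convention on $\mathbf{w}$ is respected.

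\textbf{Correctness.} I then show that F-ACT on this instance and the original ACT instance have the same state space, the same transitions and the same termination conditions (an ON $s$--$t$ path or an OFF $s$--$t$ cut among revealed edges). Hence every policy has the same expected number of queries in both, and an optimal F-ACT policy, or its value, yields the optimal ACT value. The reduction is polynomial, with $d=m$ features.

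\textbf{Main obstacle.} The delicate step is making sure the hardness statement being reduced from is about \emph{finding or evaluating the optimal policy}, not merely about evaluating one given policy. If I cannot cite that cleanly, I will redo the reliability argument directly. The idea is to chain the instance with an ``expensive'' gadget, as in the proof for the APR problem in Chapter~\ref{chapter:paper3}: a long series path of edges that forces high cost exactly when $G$ is $s$--$t$ connected. The optimal expected cost is then an affine function of $Rel(G)$ with known coefficients. Comparing the optimal values of two instances that differ only in the gadget length then extracts $Rel(G)$, exactly as in the earlier $B$ versus $B+1$ trick.
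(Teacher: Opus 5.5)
Your proposal is correct and follows essentially the same route as the paper: reduce from the (unbudgeted) LQGCT/ACT problem by cloning the graph, giving each edge a one-hot feature vector, and fixing the bias while randomizing the weights so that labels are independent with the original ON-probabilities. Your version is slightly more careful than the paper's. You obtain per-edge priors via $\Pr[w_i\ge 1]=1-p_{e_i}$, whereas the paper uses a single common $p$ with $b=-p$ and uniform weights. You also explicitly check that $D_{\text{OFF}}$ and $D_{\text{ON}}$ never fire, and you flag the gap between finding the optimal policy and evaluating its value, which the paper's proof does not address.
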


\begin{proof}
    The proof is based on a reduction to the LQGCT problem \citep{guo2024limited,fu2017determining} which is known to be $\#\mathcal{P}$-hard. The whole idea is to reduce the LQGCT to the F-ACT instance with edge with one-hot coding feature. 
    
     The LQGCT problem is defined on a undirected graph $G' = (V', E')$ with a source $s'$ and target $t'$. The LQGCT also test for $s'-t'$ connectivity by adaptively query each edge at a time. However, the difference is that each edge $e \in E'$ in LQGCT model is associated with a fixed prior probability $p$ without the risk deduction rule and have the query limit $B$. But the hardness proof of LQGCT is flexible with value of $B$ so we assume $B \geq |E'|$ (no budget at all).

     We will start the reduction as following. Suppose there exist an instance of LQGCT $G' = (V', E')$. We construct a undirected graph $G = (V, E)$ with edge, nodes, source and destination is a clone of $G'$. For each edge $e_i \in E'$ we associated it with a index $i\in \{1, 2,\dots, |E|\}$. Next, we create a associated feature vector $x_i$ where $x_i$ encode the one-hot coding of the edge index $i$ (only the i-th feature is 1). Without the loss of generality, we assume the decision function $f(x_e) = \mathbf{w}^T \mathbf{x}_e + b$ with fixed bias value $b = -p$ and the value of the weight vector $\mathbf{w}$ is uniformly drawn from range $[0, 1]$. We complete the construction of $G$.
     
    The construction $G$ is a clone of $G'$ except the associated feature. But since we fixed bias as $p$ and the feature is the one-hot coding of the edge index, the true label for each edge $e_i \in$ become: $ \sigma_e = 
            \begin{cases} 
            \text{OFF} & \text{if } w_i \ge p \\
            \text{ON} & \text{if } w_i < p 
            \end{cases}$. 
    Since the $w_i$ is unknown and randomly draw from range $[0, 1]$, the optimal query policy $\pi$ for $G$ instance of F-ACT is also optimal policy for $G'$ instance of LQGCT. Therefore, our F-ACT problem must be as difficult as the LQGCT. We conclude the proof.
\end{proof}

Although the presented formulation assumes a linear and monotonic function for analytical clarity, our work is more general. The hardness proof and our designed algorithm can be extended for the broader case involving arbitrary non-linear decision functions and non-monotonic features.

\section{MDP formulation and RL Preliminary}

In this section, we aim to design a RL agent to aid the IT administrator in the query process. The high-level interaction between agent and IT admin is as follows: in each round, the agent proposes an edge $e$ and the associated feature vector $\mathbf{x}_e$ to the IT admin. The IT admin then provides the true label of that feature vector/edge ('ON' or 'OFF') and their decision is modeled as the function $f(\mathbf{x}_e)$. The agent uses this new information to adaptively decide its next query.

\subsection{MDP formulation for F-ACT}
\label{sec:mdp}
Let us define the MDP as a tuple $\langle \mathcal{S}, \mathcal{A}, \Phi, R \rangle$, where $\mathcal{S}$ is the set of states, $\mathcal{A}$ is the set of actions, $\Phi : \mathcal{S} \times \mathcal{A} \times \mathcal{S} \mapsto [0, 1 ]$ is the state transition and the reward function $R : \mathcal{S} \times \mathcal{A} \mapsto \mathbb{R}$. A deterministic policy is a mapping from states to actions: $\pi : \mathcal{S} \mapsto \mathcal{A}$. In our setting, an environment instance is defined by the tuple $\langle G, f \rangle$, where $G$ is the graph and $f$ is the parameterized decision function that determines the hidden true labels of the edges.

\textbf{State:} In principle, the state is defined by the status of every edge $e \in E$. However, following from Lemma \ref{lemma:equi}, it is sufficient to define the state over the set of unique feature vectors $\mathcal{X} = \{\mathbf{x}_e \mid e \in E\}$. The state space of our problem is thus $\mathcal{S} = \{0, 1, *\}^{|\mathcal{X}|}$. Here, each state $s \in \mathcal{S}$ is a vector where each component corresponds to agent knowledge of a unique feature vector $x\in \mathcal{X}$. We define a function $s(\mathbf{x})$  to denote the status of feature $\mathbf{x}$ in state $s$. $s(\mathbf{x})$ return 1 if the status is known to be 'ON', 0 if it is 'OFF', and * if it's status is unknown. Abusing notation slightly, we will also write $s(e)$ to refer to the status of an edge $e$, shorthand for $s(\mathbf{x}_e)$. The initial state $s_0$ will be the vector of all-*. We denote $\mathcal{S}_i$ the set of possible state in round $i$ of the query process. 
% Hence, the state space can be represented as $\mathcal{S} = \mathcal{S}_0 \cup \mathcal{S}_1 \cup \dots \cup \mathcal{S}_{|\mathcal{X}|}$. 
We also define two set of terminal states: $\perp_{p}$ reached when an $s-t$ path of 'ON' edges emerges, and $\perp_{c}$ reached when an $s-t$ cut of 'OFF' edges emerges.

\textbf{Action and Transition Function:} Given a non-terminal state at query round $t$, $s_t \in \mathcal{S}_t \setminus \{\perp_p \cup \perp_c\}$, an action $a_t$  is the selection of a unique feature vector $\mathbf{x}_{t}$ whose status is currently hidden. This action corresponds to querying the IT admin to reveal the true status of the chosen feature vector. Since the true parameter of the decision equation \ref{eq:admin} is unknown to the agent, we can assume the state transition is stochastic and the new state $s_{t+1}$ is sampled from the distribution $p(. |s_{t},a_{t})$. When the agent queries $\mathbf{x}_{t}$, there are two possible outcomes from the oracle: 'ON' or 'OFF'. This leads to a transition to the new state $s_{t+1}$ which updates the label for $\mathbf{x}_{t}$ (and its associated edges) and all other feature vectors deduced via Assumption \ref{assu:mono}.

\textbf{Reward and Objective: } The primary objective is to find a policy that minimizes the expected number of queries (steps) required to reach a terminal state. Hence, the reward function is defined as $R = r(s_t, a_t) = -1$ for any action $a_t$ taken in a non-terminal state $s_t$. The sum of future rewards or \textit{return} is defined: $R_t = \sum_{i=t}^{\infty} r(s_i, a_i)$. Here we consider the discount factor $\gamma = 1$. The goal is to find the optimal policy $\pi^*$ such that maximize: $ \mathbb{E}_{\tau \sim \pi} \left[ R_0|s_0 \right]$. 

\subsection{RL Preliminaries}

RL is a learning algorithm where an agent learn an optimal policy $\pi$ by interacting with an environment (MDP). This interaction process involves the agent selecting an action $a_t$ in state $s_t$ , receiving a reward $r_t$ which refects how good the action/policy is. The search for an optimal policy is often guided by an action-value function $Q(s, a)$ which estimates the expected long-term return for each state-action pair. In this work, we mainly use Soft Actor-Critic (SAC) \citep{haarnoja2018soft} as demonstrative RL algorithm. Instead of solely maximizing the cumulative discounted reward, SAC seeks a policy that also maximizes its own entropy. 
% The objective of SAC will be $J(\pi) = \mathbb{E}_{\tau \sim \pi} [\sum_{t=0}^{\infty} \gamma^t ( r(s_t, a_t) + \alpha \mathcal{H}(\pi(\cdot|s_t)) )]$, where $\mathcal{H}$ is the policy entropy and $\alpha$ is the temperature parameter which balance the trade-off between reward (exploitation) and entropy (exploration). 
To optimize this objective, SAC learns a parameterized action-value function $Q_\theta(s, a)$ which act as the critic Q-network. The critic's parameters are learned by minimizing the soft Bellman residual via the following loss function: $J_Q(\theta) = \mathbb{E}_{(s_t,a_t) \sim \mathcal{B}}[\frac{1}{2} (Q_\theta(s_t, a_t) - \hat{Q}(s_t, a_t))^2]$ where $\mathcal{B}$ is the replay buffer and the target value $ \hat{Q}(s_t, a_t) = r_t + \gamma \mathbb{E}_{a_{t+1} \sim \pi} \left[ Q_{\bar{\theta}}(s_{t+1}, a_{t+1}) - \alpha \log \pi(a_{t+1}|s_{t+1}) \right]$. Here, $Q_{\bar{\theta}}$ is the target network, which is updated using a moving average of the value network parameters $\bar{\theta}$. The learned critic guides the policy network (the actor) to produce actions that maximize the soft Q-values, and this interplay between them drives the learning process. As our task involves a discrete action space, we will use an adaptation of SAC for this setting \citep{christodoulou2019soft}. We note that, although our proposal is mainly demonstrated within the framework of off-policy RL (SAC), our proposed technique can be generalised to any model-free RL algorithm.

\section{RL4FT Algorithm}
\subsection{Motivation for RL4FT}
% Research in one-shot combinatorial optimization has largely applied Reinforcement Learning (RL) in two main ways: either as a powerful search heuristic for a single problem instance \citep{bengio2021machine,yang2023survey}, or to learn a generalizable policy capable of constructing solutions for various unseen instances \citep{manchanda2020gcomb,chen2021contingency,kamarthi2019influence}.
% Our problem, however, have the distinct challenge of adaptive sequential optimization. 
% The nature of F-ACT problem class requires learning a policy that adapt to new information which make it fit naturally to the RL framework. While RL has been applied to other instances of POSO, such as adaptive submodular maximization \citep{gabillon2013adaptive,gabillon2014large}, adaptive influence maximization \citep{fadda2024math,huang2022provably}, and most relevantly to our work, by Guo et al. \citep{guo2024limited} for the LQGCT problem.

The nature of F-ACT problem class requires learning a policy that adapt to new information which make it fit naturally to the RL framework.
While RL has been applied to other instances of POSO \citep{gabillon2013adaptive,gabillon2014large,fadda2024math,huang2022provably,guo2024limited}, our problem setting presents unique challenges that cause standard RL to struggle. First, the F-ACT's environment poses a significant exploration challenge, as the combination of a long-tail pattern in query lengths and a large action space (which can be as large as $|\mathcal{X}|$ at each step) causes the agent to get trapped in suboptimal regions, making it struggle to converge to the optimal policy. The second problem is the sparse reward signal, caused by the default reward structure of the problem. 
The most relevant prior work \citep{guo2024limited} attempts to mitigate these issues by restricting the action space to a set of "good" actions and limiting the learning horizon to a fixed budget of $B$ steps. Although this can improve performance over vanilla RL, this approach sacrifices theoretical optimality (as restricted action and horizon) and fails to experimentally outperform even simple heuristics. This motivates our design of the RL4FT architecture, which introduces several novel components to improve RL performance without making these same compromises to optimality.

\subsection{Overall Architecture}
RL4FT architecture introduces four key novelty: a heuristic-guided mechanism to warm-start the agent and significantly improve convergence speed; a self-improvement mechanism bootstrap the policy and iteratively improve policy quality; a customized experience replay component adapted for this mechanism; and a reward shaping component to address the sparse reward issue. The complete training procedure is detailed in Algorithm \ref{alg:RL4FT}. The self-improvement and heuristic-guided mechanism forms the core learning loop that orchestrates the overall process, which is augmented by our other contributions: reward shaping (Line 10) and a customized Prioritized Experience Replay (Line 12).

\subsection{Self-improvement with Heuristic-guided}

% Our heuristic-guided (HG) mechanism is motivated by the observation that heuristic algorithms for the ACT problem \citep{guo2024limited,fu2017determining}, while not guaranteed to be globally optimal, are effective at finding competitive solutions in a fraction of the time.
In this component, we introduce two key mechanism: a heuristic-guided mechanism and an iterative self-improvement mechanism. The core idea of the \textbf{heuristic-guided mechanism} is to use a reference (curriculum) policy $\pi^{ref}$ which initially a sub-optimal RL policy or a heuristic policy to warm-start and steer the learning of the RL policy $\pi^\phi$. During the initial phase of learning, $\pi^{ref}$ is significantly better than the untrained $\pi^\phi$. Therefore, we use $\pi^{ref}$ for data collection, while $\pi^\phi$ controls the exploration of only the first $h$ steps. As $\pi^\phi$ improves under the guidance of $\pi^{ref}$, we gradually increase the number of steps $h$ that $\pi^\phi$ can explore. Progressively giving $\pi^\phi$ more control allows the agent to explore different states, which helps avoid the out-of-distribution shift that can occur when learning is guided too heavily by a fixed reference policy \citep{uchendu2023jump}. 
Furthermore, we introduce the \textbf{self-improvement mechanism}, which acts as an outer loop over the heuristic-guided training procedure and allows the agent to iteratively bootstrap its policy. The policy learned in each iteration becomes the reference for the subsequent iteration, a process that begins with a suboptimal policy as the initial reference.

\begin{algorithm}[H]
\caption{RL4FT Training Algorithm}
\label{alg:RL4FT}

\SetAlgoLined
\LinesNumbered

% suppress printed "end" markers
\SetKwFor{For}{for}{:}{}
\SetKwFor{ForEach}{foreach}{:}{}
\SetKwFor{While}{while}{:}{}
\SetKwIF{If}{ElseIf}{Else}{if}{then}{else if}{else}{}
\SetKwProg{Fn}{Function}{}{}

\KwIn{Graph $G(V, E)$ with source $s$ and target $t$, initial sub-optimal policy $\pi^{sub}$, number of guide steps $GS$, number of improvement iterations $I_{max}$}
\KwOut{RL policy $\pi$}
\BlankLine

% Initialize the reference policy with the initial heuristic
Initialize reference policy $\pi^{ref} \leftarrow \pi^{sub}$\;
% Main improvement loop

\For{$i = 1 \dots I_{max}$}{
    Initialize actor network $\pi^{\phi}$, critic network $Q^{\theta}$, and replay buffer $\mathcal{B} \leftarrow \emptyset$\;
    
    % Inner loop to incrementally increase RL agent's control
    \For{$h = 1 \dots GS$}{
        \Repeat{$EvalPolicy(\pi^{comb}) \geq EvalPolicy(\pi^{ref})$}{
            % Define the combined policy for the current guide step
            $\pi^{comb} \leftarrow \pi^{\phi}_{1:h} \cup \pi^{ref}_{h+1:H}$\;
            
            Reset env. and randomize parameters of $f$
            % Collect experience with the combined policy
            
            Collect trajectories $\mathcal{T}$ using $\pi^{comb}$\;
            
            % Process and store transitions
            \For{each transition $(s_k, a_k, r_k, s_{k+1}) \in \mathcal{T}$}{
                Calculate shaped reward $r'_k$ using Eq. (\ref{eq:rw})\;
                
                Append $\mathcal{B} \leftarrow \mathcal{B} \cup (s_k, a_k, r'_k, s_{k+1})$ \;
            }
            % Train the RL policy
            Sample mini-batch $\mathcal{MB} \subseteq \mathcal{B}$ using Eq. (\ref{eq:per})\;
            
            $\pi^\phi, Q^\theta \leftarrow TrainPolicy(\pi^\phi, Q^\theta, \mathcal{MB})$\;
        }
    }
    % Update the reference policy for the next iteration
    $\pi^{ref} \leftarrow \pi^{comb}$\;
}

\Return{$\pi^{comb}$}
\end{algorithm}

The training process, detailed in Algorithm \ref{alg:RL4FT}, consists of two nested loops. The inner loop (lines 4-14) implements the heuristic-guided mechanism, which is responsible for gradually transferring control to the RL agent. The outer loop (lines 2-15) drives the self-improvement mechanism by using the policy learned from the completed inner loop as the new reference for the next iteration.
We first describe the heuristic-guided mechanism within the inner loop. First, we initialized reference policy $\pi^{ref}$ with a given suboptimal or heuristic policy, $\pi^{sub}$. In each training episode, a trajectory $\mathcal{T}$ is generated by letting the RL agent interact with the environment using the learned policy $\pi^{\phi}$ for the first $h$ step. After that, we switch to rolling out the guide policy $\pi^{ref}$ for the remaining $H-h$ steps until we meet the horizon $H$ (line 6-8). This means the RL agent has the liberty to explore for the first $h$ steps, and the remaining $H-h$ steps will be completed by the heuristic policy. For brevity, we use $\pi^{\phi}_{i:j}$ to denote the policy restriction $\pi\bigl|_{\bigcup_{k=i}^j S_k}$, i.e., the policy applied only to states encountered from round $i$-th to $j$-th. The collected transitions are stored in a replay buffer $\mathcal{B}$ and used to update $\pi^\phi$ via $TrainPolicy$ function. $TrainPolicy$ can be any standard model-free algorithm (e.g., SAC or PPO). After each episode, we evaluate the combined policy $\pi^{comb} \leftarrow \pi^{\phi}_{1:h} \cup \pi^{ref}_{h+1:H}$ with the standard evaluation function (line 14). If the performance of $\pi^{comb}$ is better than the current $\pi^{ref}$, we increment $h$ and give RL agent more control of exploration. This process continues until $h$ reaches a predefined maximum of $GS$ step. Once the inner loop finishes, the self-improvement mechanism performs its update (line 15) where the newly trained policy $\pi^{comb}$ becomes the new reference policy for the next learning iteration.

Our heuristic-guided self-improvement procedure is conceptually related to Curriculum Learning (CL) in Reinforcement Learning. CL is a well-established method for solving the challenges of hard exploration in complex tasks \citep{zhou2021curriculum,zhou2020robust,ao2021co,graves2017automated,bengio2009curriculum}. In the context of combinatorial optimization, CL is often employed to navigate vast search spaces by decomposing a difficult problem into a curriculum of simpler instances \citep{feng2020solving,nabli2020curriculum}. However, these approaches are fundamentally designed to find a static solution for a one-shot combinatorial problem, while the goal of POSO is to learn an adaptive policy. This required us to design a difference learning procedure. 

Our heuristic-guided procedure is also closely related to Jump-start Reinforcement Learning (JSRL) \citep{uchendu2023jump}, as both methods combine a sub-optimal guide with a learned policy during the roll-out phase. However, our approach has several fundamental differences and serves different purposes. JSRL is designed for robotic optimal-control problems (e.g., robotic grasping, escaping a maze), which often have a specific goal and a unique terminal state. In contrast, our is designed for partially observable stochastic optimization problems \citep{golovin2011adaptive} which typically have a large number of terminal states and sparse rewards. This leads to a key difference in the algorithm design. Our method of agent exploration expansion follows a "top-down" approach, where the agent explores for the first $h$ steps before $\pi^{ref}$ takes over. In contrast, JSRL employs a guide policy to restrict the exploration space and lets the agent explore only in the final few steps of an episode. Second, our technique does not rely on a single, fixed heuristic. Instead, it features the self-improvement mechanism that improves over time. After each full learning iteration, the newly improved policy becomes the reference guide for the next iteration which allow the progressive solution improvement. While , JSRL uses a single guide policy and terminates when the agent's performance surpasses a predefined threshold $\beta$. This makes its potential for improvement finite and introduces the additional time-consuming task of tuning $\beta$.

\subsection{Reward Shaping}

The reward signal in our problem is sparse, as the agent receives a reward of -1 for every action it takes. To address this, we employ reward shaping. This technique modifies the original reward function with an additional shaping function $g$ to provide more informative feedback on an action's quality. In the general additive form, the modified reward is $r' = r + g$. However, an improper design of reward shaping $g$ can alter the optimal policy \citep{hu2020learning,devlin2012dynamic,randlov1998learning}. Our objective is to design a shaping function that effectively incorporates domain knowledge while guaranteeing policy invariance.

We now introduce our proposed reward shaping function.  Given a graph $G(V, E)$, let $\mathcal{P}$ be the set of all paths and $\mathcal{C}$ be the set of all minimal cuts. For any edge $e \in E$, we define $\mathcal{P}_e = \{p \in \mathcal{P} \mid e \in p\}$ as the set of paths containing $e$, and similarly, $\mathcal{C}_e = \{c \in \mathcal{C} \mid e \in c\}$ as the set of minimal cuts containing $e$. We define two auxiliary functions based on the state $s$: $g_p(s) = \left| \cup_{e \in E : s(e)=0} \mathcal{P}_e \right|$ and $g_c(s) = \left| \cup_{e \in E : s(e)=1} \mathcal{C}_e \right|$. We define the potential function $g : \mathcal{S} \to \mathbb{N}$ as follows:
\begin{equation}
g(s) = |\mathcal{P}||\mathcal{C}| - (|\mathcal{P}| - g_p(s))(|\mathcal{C}| - g_c(s))
\label{eq:utility}
\end{equation}
The function $g$ was introduced in \citep{fu2017determining} as a utility function for an approximation algorithm for the LQGCT problem. We utilize $g$ as the potential function for our reward shaping. The shaped reward is represented as:
\begin{equation}
    r_{shaped}(s_k, a_k, s_{k+1}) = g(s_k) - g(s_{k+1}) 
\label{eq:rwshape}
\end{equation}
Finally, the modified reward is represented as: 
\begin{equation}
    r'(s_k, a_k, s_{k+1}) = r(s_k, a_k) + \omega \cdot r_{shaped}(s_k, a_k, s_{k+1})
\label{eq:rw}
\end{equation}
where $\omega$ is the reward shaping weight. Next, we show that our reward shaping term is policy-invariant, meaning the optimal policy remains the same after reward modification. The concept of policy-invariant reward shaping was first introduced with Potential-Based Reward Shaping (PBRS) \citep{ng1999policy}. The policy invariance guarantee of PBRS works by taking advantage of the diminishing effect of a discount factor ($\gamma < 1$) over a long horizon (i.e., $\sum_{k=0}^{\infty} [\gamma \Phi(s_k)-\Phi(s_{k+1})] = \Phi(s_0)$).
Although our reward shaping is a potential-based, however, policy invariance for such functions in the undiscounted setting ($\gamma = 1$) is only guaranteed for MDPs with a single absorbing state. Since our problem has multiple absorbing states by definition, we must show that the value of the potential function is equivalent at any absorbing state. This is proven in the following lemma:
\begin{lemma}
    \textbf{(Policy Invariance)} Let $\tau = (s_0, s_1, \dots, s_H)$ be any trajectory that terminates in $s_H \in \{\perp_p \cup \perp_c\}$. The cumulative shaped reward is constant, i.e, $R_{\text{shaped}}(\tau) = -|\mathcal{P}||\mathcal{C}|$. Hence, the shaped reward in (\ref{eq:rw}) is policy invariant.
\end{lemma}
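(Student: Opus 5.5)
The proof goes by telescoping the shaped reward along the trajectory and then evaluating the potential $g$ at the two kinds of absorbing states. Since $r_{shaped}(s_k,a_k,s_{k+1}) = g(s_k)-g(s_{k+1})$, summing over $k=0,\dots,H-1$ gives
\[
R_{\text{shaped}}(\tau) = g(s_0) - g(s_H).
\]
At the initial state $s_0$ every component is $*$, so no edge has $s(e)=0$ or $s(e)=1$. Hence $g_p(s_0)=g_c(s_0)=0$, and Eq.~(\ref{eq:utility}) yields $g(s_0)=|\mathcal{P}||\mathcal{C}|-|\mathcal{P}||\mathcal{C}|=0$. It therefore suffices to show $g(s_H)=|\mathcal{P}||\mathcal{C}|$ for every $s_H\in\perp_p\cup\perp_c$. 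Equivalently, the product $(|\mathcal{P}|-g_p(s_H))(|\mathcal{C}|-g_c(s_H))$ must vanish at every terminal state.

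The terminal states split into two cases.

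If $s_H\in\perp_c$, the revealed OFF edges contain an $s$--$t$ cut, and so meet every $s$--$t$ path. Every $p\in\mathcal{P}$ then contains some edge $e$ with $s_H(e)=0$, which gives $g_p(s_H)=|\mathcal{P}|$ and makes the first factor zero.

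If $s_H\in\perp_p$, the revealed ON edges contain an $s$--$t$ path $p^*$. Every $s$--$t$ cut, in particular every minimal cut $c\in\mathcal{C}$, must intersect $p^*$, so $c$ contains an edge $e$ with $s_H(e)=1$. Thus $g_c(s_H)=|\mathcal{C}|$ and the second factor is zero.

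In both cases $g(s_H)=|\mathcal{P}||\mathcal{C}|$, so $R_{\text{shaped}}(\tau)=0-|\mathcal{P}||\mathcal{C}|=-|\mathcal{P}||\mathcal{C}|$, independent of the trajectory.

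For policy invariance, the modified return of any complete trajectory equals the original return plus the constant $-\omega|\mathcal{P}||\mathcal{C}|$. Taking expectations under any policy, $J'(\pi)=J(\pi)-\omega|\mathcal{P}||\mathcal{C}|$. Adding a constant to every policy's value preserves the argmax, so the optimal policies coincide.

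The main obstacle is making this last step rigorous in the undiscounted setting. The argument needs every trajectory to terminate almost surely, so that the telescoping sum is finite and the expectation is well defined. This holds because each query reveals at least one previously unknown feature vector and $|\mathcal{X}|$ is finite. Once all labels are revealed, one of $\perp_p,\perp_c$ must hold, since either the ON edges connect $s$ to $t$ or the OFF edges separate them. So $H\le|\mathcal{X}|$, and the argument closes.

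A smaller subtlety concerns the Monotone Deduction Rule (Assumption~\ref{assu:mono}), under which labels are inferred rather than queried. This does not affect the argument, since $g$ depends only on which edges are marked $0$ or $1$ in the state, whether those labels were queried or deduced.
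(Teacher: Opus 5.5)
Your proof is correct and takes the same route as the paper. You telescope the shaped reward to $g(s_0)-g(s_H)$, note $g(s_0)=0$, and get $g(s_H)=|\mathcal{P}||\mathcal{C}|$ because a revealed OFF cut meets every path (so $g_p=|\mathcal{P}|$) and a revealed ON path meets every minimal cut (so $g_c=|\mathcal{C}|$). Your additions, the termination bound $H\le|\mathcal{X}|$ and the constant shift $J'(\pi)=J(\pi)-\omega|\mathcal{P}||\mathcal{C}|$, make rigorous the ``hence policy invariant'' step that the paper leaves implicit.
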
 
\begin{proof}
To prove this, we analyze the behavior of the potential function $g$ at the start and terminal states of any $\tau$. First, we have two types of terminal states by definition. If the trajectory ends by finding a cut ($s_H \in \perp_c$), the auxiliary function will be $g_p(s_H) = |\mathcal{P}|$. This is because the union of all possible paths that contain an edge from the discovered cut must constitute the set of all paths $\mathcal{P}$ in $G$. Similarly, if the trajectory terminates by finding a path ($s_H \in \perp_p$), the auxiliary function will be $g_c(s_H) = |\mathcal{C}|$, as the union of all cuts containing an edge from the discovered path must comprise the set of all possible cuts $\mathcal{C}$. Therefore, for any terminal state, the potential function is $g(s_H) = |P||C|$. Next, at the starting state $s_0$, the potential value is $g(s_0) = 0$ as both component $g_p(s_0) = g_p(s_0) = 0$. Consequently, the cumulative shaped reward for any trajectory $\tau$ simplifies:  $R_{\text{shaped}}(\tau) = \sum_{k=0}^{H-1} r_{\text{shaped}}(s_k, s_{k+1}) = \sum_{k=0}^{H-1} [g(s_k)-g(s_{k+1})] = g(s_0)-g(s_H) = -|P||C|$. This conclude the proof.
\end{proof}
% About why we decided to use an undiscounted $\gamma = 1$ setting. . However, our problem fundamentally seeks to minimize the length of the horizon, making the discounted PBRS unsuitable for our problem.

\textbf{Reproducibility of potential function.} Regarding the implementation of the potential function $g$, a direct approach requires enumerating all paths and cuts. This is computationally infeasible for some graphs, as both are $\#\mathcal{P}$-hard problems. However, since $g$ can serve as heuristic, a precise enumeration is not strictly necessary. So, for graphs where full enumeration is intractable, we approximate the path and cut sets using a sampling procedure: we repeatedly generate a random weight vector $\mathbf{w}$ and bias $b$ and then determine the resulting label of each edge based on Eq. 1, check if a valid path or cut is formed, and add any found to their respective pools. This process continues until a predefined time constraint is met.

\subsection{Custom Prioritized Experience Replay}

The heuristic-guided mechanism is used in roll-out to sample trajectory and transition. These transition will be stored into the experience replay buffer and later used to sample batches of training data. 
In classical experience replay, data is sampled uniformly from the buffer for training  \citep{lin1992self}. To further encourage learning from effective solutions, we instead adopt the idea of Prioritized Experience Replay (PER) \citep{schaul2015prioritized}. The main hypothesis of PER is that an agent learns more effectively from 'surprising' transitions as they are most informative for learning. So, transitions that are more surprising are given a higher probability of being sampled for training. The "surprise" of the transition is measure via the temporal difference (TD) error which is calculated in SAC as $\delta = Q_\theta(s_t, a_t) - \hat{Q}(s_t, a_t)$. Let $\delta_\tau$ be the TD error for transition $\tau$. The probability of sampling transition $\tau$ is then $P_\tau = \frac{|\delta_\tau|+\epsilon}{\sum_j (|\delta_j| + \epsilon)^\alpha}$ where $\epsilon > 0$ is a very small to prevent zero denominator and $\alpha$ is a temperature parameter. 

Initially, when we combined the original PER with our heuristic-guided mechanism, we observed that the convergence speed slowed down and performance deteriorated. This is because the original PER can \textit{oversample} transitions from the reference policy $\pi^{ref}$, causing an out-of-distribution shift where the learning is influenced too heavily by a fixed reference policy. To address this, we introduce a new variant of PER that accounts for the out-of-distribution shift. We define the sampling probability $P_\tau$ for a transition $\tau = (s_k, a_k, r_k)$ occurring at round $k$:

\begin{equation}
    P_\tau = \frac{(|\delta_\tau|\beta_\tau+ \epsilon)^\alpha}{\sum_j (|\delta_j|\beta_\tau + \epsilon)^\alpha}
\label{eq:per}
\end{equation}
where $\beta_\tau = \begin{cases} 
      1, & k \le h \\
      \gamma^{k-h+1}, & k > h 
   \end{cases}$.
For transitions that is generated by heuristic policy ($k>h$), we penalize an exponential decay factor. While the learning of the heuristic-guided transitions is beneficial, standard PER may assign them excessively high priority, leading to oversampling. We avoids this by assigning transitions from the agent's own exploration ($k \leq h$) a higher priority. \textit{This avoids the risk of a distributional shift toward the reference policy and prevents the learned policy from simply only mimic the sub-optimal heuristic.}

\section{Experiment}
\label{sec:paper4:experiment}
\subsection{Experiment Set up}
We first evaluate the performance of RL4FT with baselines algorithms on various of real networks. All experiments were conducted on a high-performance computing cluster. Each experiment use a single node with Intel(R) Xeon(R) Platinum 8360Y CPU, 24GB RAM, NVIDIA A100-40GB.

We evaluate our methods on ten networks, which are categorized into two groups: AD Networks and Miscellaneous Networks. There are a few difference between these graph. The AD network are directed graphs. AD graph in experiment including:
\begin{itemize}
    \item \textbf{AS1-AS5}: These graph are generated using ADSynth \citep{nguyen2024adsynth}.
    \item \textbf{UNI}: This data is the real data, collected from an University of Anonymous using SharpHound \citep{SharpHound}. The data collection took place on Thursday, 28th October 2022, at approximately 10:00 AM during regular working hours in the local time zone. The collected Active Directory dataset consists of 125,444 nodes and 1,195,432 edges
\end{itemize}

The Miscellaneous Network are undirected graph. These graph including
\begin{itemize}
    \item \textbf{EUROROAD} \citep{euroroad} and \textbf{MINNESOTA} \citep{minnesota}: Europe road network and Minnesota’s road network.
    \item \textbf{BCSPWRs} \citep{bcspwr}: Power network patterns collected by Boeing Computer Services
\end{itemize}

Choosing source and destination: The selection of source and target for our experiments is performed as follows. For the Active Directory (AD) graphs, the destination t is set to a high-privilege node (identified by the admincount property). A corresponding source s is then selected from the ten nodes with the greatest shortest-path distance to t. For the miscellaneous graphs, both s and t are chosen uniformly at random.
To avoid trivial instances (such as when the source and target are connected by a single edge, requiring only one query), and to ensure a consistent level of difficulty across our evaluation, we use a "median-seed" selection strategy inspired by \citep{guo2024limited}. We generate eleven problem instances for each graph using random seeds 0-10, measure the difficulty of each by running the H1 heuristic, and then select the single instance that yields the median query cost for all of our main experiments.

The features for Active Directory (AD) and miscellaneous graphs are detailed in Table \ref{tab:ad_features} and Table \ref{tab:misc_features}, respectively. Given the rich security context of AD graphs, we extracted 20 distinct features that fall into three categories: permission-based, topological, and source vulnerabilities. The permission-based features are derived from the type of edge, while topological features are determined by comparing the degree of an edge's endpoints to the graph's median node degree. The source vulnerability is the feature capture the potential vulnerability of the source node based on their property. 

\begin{table*}[htbp]
\centering
\caption{Statistics of the graph datasets used in the experiments, showing the number of nodes and edges.}
\label{tab:graph_statistics}
\resizebox{\textwidth}{!}{%
\begin{tabular}{@{\hspace{1em}}c|cccccc|cccc@{\hspace{1em}}}
\hline
% This new row adds the multi-column category headers
& \multicolumn{6}{c|}{\textbf{AD Network}} & \multicolumn{4}{c}{\textbf{Miscellaneous Network}} \\
\cline{2-11} % This adds a line underneath the new category headers
\textbf{Statistic \textbackslash  Graph} & \textbf{AS1} & \textbf{AS2} & \textbf{AS3} & \textbf{AS4} & \textbf{AS5} & \textbf{ORG} & \textbf{EUROAD} & \textbf{MINNESOTA} & \textbf{BCSPWR05} & \textbf{BCSPWR10} \\
\hline
\textbf{Node} & 1049 & 5176 & 10377 & 16657 & 48275 & 125444 & 1175 & 2643 & 444 & 5301 \\
\hline
\textbf{Edge} & 4836 & 25264 & 48176 & 126178 & 236139 & 1195432 & 1417 & 3303 & 1033 & 13571 \\
\hline
\end{tabular}%
}
\end{table*}

For these graphs, the rich security context allows us to extract 20 security-specific features. The Miscellaneous graphs comprises four real-world, undirected graphs representing road and power grid infrastructures. As these networks provide only topological information, we derive six graph-based features for each. Detailed descriptions of corresponding features of all network are provided in the Table \ref{tab:ad_features} and \ref{tab:misc_features}.

\subsection{Baseline Algorithms}
\textbf{Heuristic Baselines.} We evaluate our algorithm against several strong adaptive heuristics for comparison. The first is \textbf{H1}, a fast and elegant heuristic for the LQGCT problem proposed by \citep{guo2022practical}, which at each step identifies a path and a cut consisting solely of unqueried edges and then queries an edge guaranteed to be in their intersection. Pseudocode for this baseline is shown in Algorithm \ref{alg:H1}. Our second baseline is a \textbf{Greedy} heuristic, inspired by the approximation algorithm in \citep{fu2017determining}. This heuristic takes a myopic, one-step lookahead approach: for each unqueried edge, it calculates the expected marginal benefit with respect to the utility function $g$ of revealing its state (assuming a uniform prior) and queries the edge with the highest expected benefit. Pseudocode for this baseline is shown in Algorithm \ref{alg:Greedy}. Finally, we propose \textbf{HMC} which is a Sequential Monte Carlo (particle filter) approach. HMC maintains n particles, each representing a different hypothesis of the true latent decision function. At each step, it queries the edge that appears most frequently across the candidate paths or cuts generated by these hypotheses. Pseudocode for this baseline is shown in Algorithm \ref{alg:HMC}.

\textbf{Reinforcement Learning Baselines.} We compare against two vanilla model-free RL algorithms: Discrete-\textbf{SAC} \citep{christodoulou2019soft} and \textbf{PPO} \citep{schulman2017proximal}. 
We also introduce \textbf{LIMIT}, a RL-based baseline inspired by \citep{guo2024limited} that uses a BFS lookahead with the H1 heuristic to restrict the action space (unlike the original work, our version didn't have restricted horizon).
\textbf{RL4FT} is our full algorithm, which includes the iterative self-improvement mechanism. 
For comparison, we also introduce a simpler version, \textbf{RL4FT-S}, which does not include the self-improvement mechanism and only performs a one-time improvement over a fixed heuristic policy. 
Our configurations follow the notation RL4FT(\textit{TrainPolicy})+\textit{RefPolicy}, which specifies the training policy (PPO or SAC) in parentheses and the initial reference policy (+H1 or +LIMIT). For example, \textbf{RL4FT(SAC)+LIMIT} use SAC as training policy and LIMIT is the initial reference policy.

\subsection{Hyperparameter}
\label{subsec:hyper}
We have two distinct hyperparameter configurations, one for agents trained with SAC and another for those trained with PPO (Table \ref{tab:RLpara}). We also have hyperparameter for each of our proposed component in Table \ref{tab:proposedpara}. For some hyperparameter we have range of value, for those, we run a combination of those range of parameter and record the setting with the best result. For the implementation, our RL framework is implemented on top of the Tianshou Reinforcement Learning library \citep{weng2022tianshou}.

\begin{table*}[htbp]
    \centering
    \caption{Hyperparameter Settings for All Experiments. Values in curly braces denote the search space.}
    \label{tab:RLpara}

    % Center both blocks as a unit; adjust the gap with \hspace{...}
    \makebox[\textwidth][c]{%
      % Left block: General
      \parbox[t]{0.38\linewidth}{
        \centering
        \textbf{General RL Settings}\\
        \begin{tabularx}{\linewidth}{l|X}
            \hline
            \textbf{Parameter} & \textbf{Value} \\
            \hline
            step-per-epoch   & 4096            \\
            step-per-collect & 512             \\
            batch-size       & 128             \\
            hidden-sizes     & $128 \times 128$  \\
            training-num     & 16              \\
            test-num         & 16              \\
            \hline
        \end{tabularx}
      }
      \hspace{1.5em}%
      % Right block: SAC on top, PPO below (stacked)
      \parbox[t]{0.47\linewidth}{
        \centering
        \textbf{SAC Settings}\\
        \begin{tabularx}{\linewidth}{l|X}
            \hline
            \textbf{Parameter} & \textbf{Value} \\
            \hline
            NN Optimizer & Adam \\
            actor-lr     & $5 \times 10^{-4}$ \\
            critic-lr    & $3 \times 10^{-3}$ \\
            alpha        & \{0.05, 0.15, 0.3\} \\
            tau          & \{0.0025, 0.005, 0.01\} \\
            repeat-collect & 3 \\
            \hline
        \end{tabularx}

        \vspace{0.9\baselineskip}

        \textbf{PPO Settings}\\
        \begin{tabularx}{\linewidth}{l|X}
            \hline
            \textbf{Parameter} & \textbf{Value} \\
            \hline
            NN Optimizer & Adam \\
            lr        & $1 \times 10^{-3}$ \\
            ent-coef & \{0.01, 0.025, 0.05\} \\
            vf-coef  & \{0.3, 0.5, 0.7\} \\
            \hline
        \end{tabularx}
      }%
    }

\end{table*}

\begin{table*}[htbp]
\centering
\caption{Hyperparameters for proposed components.}
\label{tab:proposedpara}
\resizebox{\textwidth}{!}{%
\begin{tabular}{l|l|l|l}
\hline
\textbf{Component} & \textbf{Parameter} & \textbf{Description} & \textbf{Value} \\
\hline
\multirow{2}{*}{Heuristic Guide} & GS & Total number of guide steps & 6 \\
\cline{2-4} 
 & $I_{\max}$ & Total number of policy improvement steps & 5 \\
\hline
Reward Shaping & $\omega$ & Reward Shaping Weight & \{0.1, 0.2, 0.3\} \\
\hline
\multirow{3}{*}{PER} & $\gamma$ & Diminishing weight & 0.95 \\
\cline{2-4} 
 & $\alpha$ & Prioritization exponent & 0.6 \\
\cline{2-4} 
 & $\beta$ & Initial importance-sampling exponent & 0.4 \\
\hline
\end{tabular}
}
\end{table*}

\subsection{Comparing with Baseline}

As discussed in above Section \ref{subsec:hyper}, we run the experiment on a range of hyperparameter and report the configuration with the best performance. 
As summarized in Table \ref{tab:baseline}, our proposed RL4FT(SAC)+LIMIT, achieves the best performance, consistently ranking as the top method on 9 out of 10 graphs. All RL4FTs variants constantly outperformed their reference policy on every single instance. Notably, RL4FT(SAC)+H1 surpasses its H1 reference policy by $30.87\%$ and the vanilla SAC baseline by $55.79\%$. Similarly, RL4FT(SAC)+LIMIT improves upon its LIMIT reference by $4.21 \%$ and the SAC baseline by a significant $57.87\%$. 

% To further validate the framework's versatility, we tested it with non-linear decision functions, namely a quadratic function and a Radial Basis Function (RBF) \citep{buhmann2000radial}. We again found that RL4FT improved the H1 reference policy by  $18.94 \%$ and $47.29 \%$, respectively. 

\begin{landscape}

\vspace*{\fill}
\begin{table}[H]
\centering
\caption{Performance comparison of algorithms based on the expected number of queries ($\downarrow$ is better). The \textbf{best} and \underline{second-best} results for each graph are shown in \textbf{bold} and \underline{underlined}, respectively.}
\label{tab:baseline}
\resizebox{\linewidth}{!}{% <-- use \linewidth here
\begin{tabular}{|@{\hspace{1em}}l|cccccc|cccc|c@{\hspace{1em}}|}
\hline
& \multicolumn{6}{c|}{\textbf{AD Network}} & \multicolumn{4}{c|}{\textbf{Miscellaneous Network}} & \\
\cline{2-11}
\textbf{Method \textbackslash{} Graph} & \textbf{AS1} & \textbf{AS2} & \textbf{AS3} & \textbf{AS4} & \textbf{AS5} & \textbf{UNI} & \textbf{EUROAD} & \textbf{MINN.} & \textbf{BCSPW05} & \textbf{BCSPW10} & \textbf{AVG} \\
\hline
\textbf{H1} & 5.062 & 6.434 & 4.980 & 5.952 & 4.944 & 5.966 & \underline{1.205} & 3.006 & 1.879 & 3.407 & 4.283 \\
\hline
\textbf{GREEDY} & 9.341 & 15.481 & 13.531 & 22.400 & 5.471 & 12.364 & 3.079 & 3.227 & 2.120 & 2.181 & 8.920 \\
\hline
\textbf{HMC} & 5.333 & 6.040 & 4.686 & \underline{4.981} & 4.244 & 5.049 & \underline{1.205} & \underline{1.320} & \underline{1.595} & 2.257 & 3.671 \\
\hline
\textbf{SAC} & 5.493 & 10.064 & 11.995 & 17.402 & 4.930 & 12.024 & \textbf{1.000} & \textbf{1.303} & \textbf{1.099} & \textbf{1.673} & 6.698 \\
\hline
\textbf{PPO} & 5.942 & 10.601 & 13.583 & 20.213 & 9.552 & 17.783 & \textbf{1.000} & 1.482 & \textbf{1.099} & \underline{1.684} & 8.294 \\
\hline
\textbf{RL4FT-S(SAC)+H1} & 4.711 & 5.160 & 4.535 & 5.563 & 4.605 & 5.263 & \textbf{1.000} & 1.482 & \textbf{1.099} & \textbf{1.673} & 3.509 \\
\hline
\textbf{RL4FT(SAC)+H1} & \underline{3.338} & \underline{4.248} & 3.925 & 5.056 & \textbf{3.089} & 4.879 & \textbf{1.000} & \textbf{1.303} & \textbf{1.099} & \textbf{1.673} & 2.961 \\
\hline
\textbf{RL4FT(PPO)+H1} & 3.550 & 4.802 & \underline{3.885} & 5.364 & 3.281 & 4.894 & \textbf{1.000} & 1.482 & \textbf{1.099} & 1.725 & 3.108 \\
\hline
\textbf{LIMIT} & 3.647 & 4.435 & 4.171 & \textbf{4.445} & \underline{3.162} & \underline{4.527} & \textbf{1.000} & \textbf{1.303} & \textbf{1.099} & \textbf{1.673} & \underline{2.946} \\
\hline
\textbf{RL4FT(SAC)+LIMIT} & \textbf{3.333} & \textbf{3.878} & \textbf{3.870} & \textbf{4.445} & \underline{3.162} & \textbf{4.452} & \textbf{1.000} & \textbf{1.303} & \textbf{1.099} & \textbf{1.673} & \textbf{2.822} \\
\hline
\end{tabular}%
}
\end{table}
\vspace*{\fill}

\end{landscape}
% \clearpage
% Our non-RL heuristic, HMC, also performs surprisingly well, outperforming other standard heuristics. However, since the algorithm is based on simulating across n instances, the resulting computational overhead makes it impractical as a reference policy. 

\subsection{Ablation Study}

To evaluate the effectiveness of our proposed components, we also conduct a ablation study on every graph. The result is report in Table \ref{tab:ablation}. In each comparison, we remove one of each components at a time and report the percentage of decrease in performance. We have 5 version of the RL4FT: (1) without Reward Shaping (\textbf{woRWS}), (2) without iterative policy-guide improvement, basically run RL4FT-S (\textbf{woITER}), (3) without Heuristic-guided mechanism (\textbf{woHG}), (4) without custom prioritized experience replay (\textbf{woPER}), (5) without all proposed component (\textbf{woALL}). The result show the effectiveness of each proposed component of RL4FT.

\begin{table}[htbp]
    % Explicitly top-align the parbox
    \caption{Ablation study results for RL4FT+H1 ($\downarrow$ indicates a larger performance degradation). The PER is applicable only to off-policy RL (SAC).  } 
    \label{tab:ablation} % Added an overall label for the combined table
    \parbox[t]{.48\linewidth}{ % Adjust width to leave space for separation
        \centering
        % Removed individual caption from here, as there's an overall one
        \label{tab:sac_results}
        \begin{tabular}{l|r}
            \hline
            \multicolumn{2}{c}{\textbf{RL4FT(SAC)}} \\
            \hline
            woRWS  & -10.08\% \\ % Added %
            woITER & -12.19\% \\ % Added %
            woHG   & -106.45\% \\ % Added %
            woPER  & -8.24\% \\ % Added %
            woALL  & -126.22\% \\ % Added %
            % vs $\pi^{ref}$ & -44.66\%\\
            \hline
        \end{tabular}
    } % No % needed here because \hfill follows directly
    \hfill
    % Explicitly top-align the parbox
    \parbox[t]{.48\linewidth}{ % Same width as the first
        \centering
        % Removed individual caption from here, as there's an overall one
        \label{tab:ppo_results}
        \begin{tabular}{l|r}
            \hline
            \multicolumn{2}{c}{\textbf{RL4FT(PPO)}} \\
            \hline
            woRWS  & -3.15\% \\ % Added %
            woITER & -8.50\% \\ % Added %
            woHG   & -151.04\% \\ % Added %
            woALL  & -166.85\% \\ % Added %
            % vs $\pi^{ref}$ & -37.81\% \\
            \hline
        \end{tabular}
    }
\end{table}

\subsection{Robustness of Self-improvement and Heuristic-Guided mechanism}

To illustrate how our Heuristic-Guided mechanism addresses the long-tail distribution of sampled trajectories during roll-out, we analyze the distribution of trajectory lengths from the initial learning phase. Figure \ref{fig:traj} shows the density of these lengths collected at the end of the first training epoch. We can see that standard SAC exploration (black, dashed) tends to sample long trajectories, resulting in a long-tail distribution. Since the objective is to minimize the number of queries (i.e., trajectory length), these long trajectories are poor samples for the RL agent to learn from. With guidance from the H1 heuristic (red), our method produces significantly shorter trajectories, while standard SAC exhibits a long-tail pattern. This demonstrates that our Heuristic-Guided mechanism effectively warm-starts the RL agent with higher-quality samples.

\begin{figure*}[htbp]
\centering
\caption{Distribution of trajectory lengths comparing our Heuristic-Guided mechanism against Standard SAC across AD network graphs. }
\label{fig:traj}

\begin{subfigure}[b]{0.32\textwidth}
    \centering
    \includegraphics[width=\textwidth]{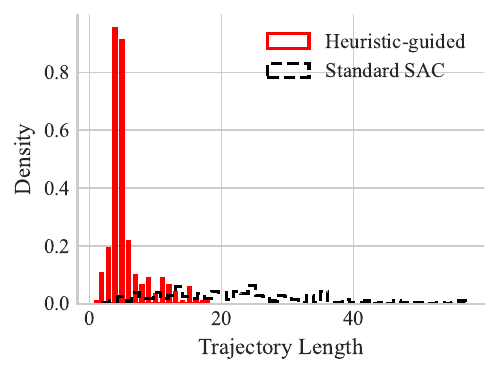}
    \caption{AS1}
    \label{fig:traj_1k}
\end{subfigure}
\hfill
\begin{subfigure}[b]{0.32\textwidth}
    \centering
    \includegraphics[width=\textwidth]{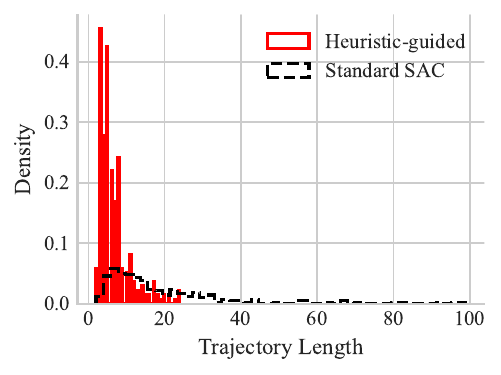}
    \caption{AS2}
    \label{fig:traj_5k}
\end{subfigure}
\hfill
\begin{subfigure}[b]{0.32\textwidth}
    \centering
    \includegraphics[width=\textwidth]{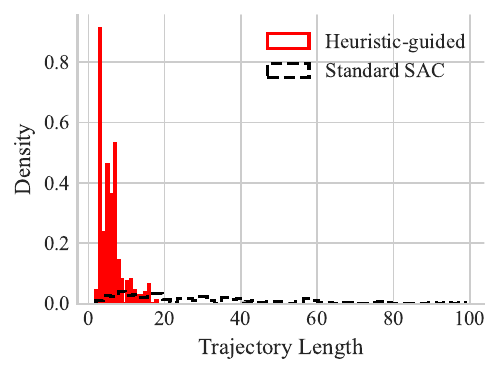}
    \caption{AS3}
    \label{fig:traj_10k}
\end{subfigure}

\vspace{0.5cm} % Adds a small vertical space between rows

\begin{subfigure}[b]{0.32\textwidth}
    \centering
    \includegraphics[width=\textwidth]{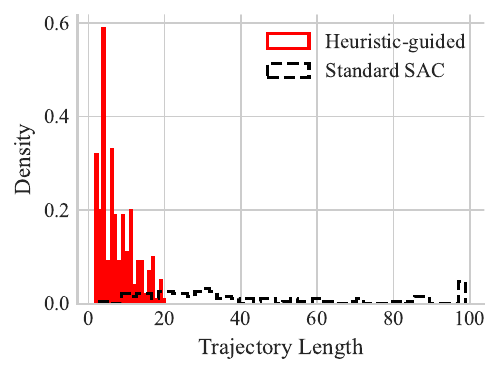}
    \caption{AS4}
    \label{fig:traj_20k}
\end{subfigure}
\hfill
\begin{subfigure}[b]{0.32\textwidth}
    \centering
    \includegraphics[width=\textwidth]{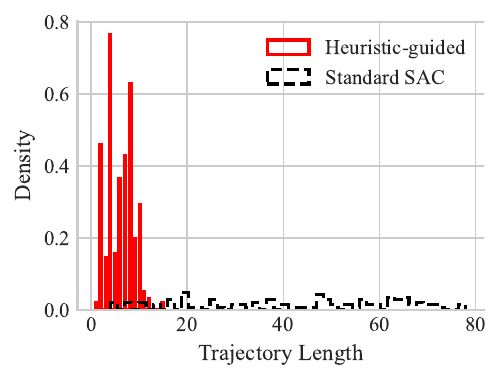}
    \caption{AS5}
    \label{fig:traj_50k}
\end{subfigure}
\hfill
\begin{subfigure}[b]{0.32\textwidth}
    \centering
    \includegraphics[width=\textwidth]{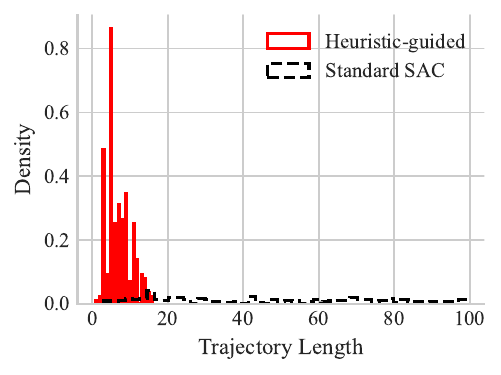}
    \caption{UNI}
    \label{fig:traj_uni}
\end{subfigure}

\end{figure*}

Figure \ref{fig:improve} demonstrates the performance improved by of the self-improvement mechanism through every iteration. In this experiment, we let the the algorithm iteratively improves until policy convergence or a limit of 10 iterations is met. The results shows the effectiveness of heuristic-guided mechanism, as it yields a policy that outperforms the reference heuristic after only a single iteration. Furthermore, this results show that the RL4FT-I framework is more effective when integrated with the SAC training algorithm compared to PPO.

\begin{figure*}[htbp]
\centering
\caption{Expected query cost versus self-improvement iterations for the RL4FT-I agent. The RL4FT-I agent outperforms the H1 reference policy from the very first iteration and keep improving their policy over iteration.}
\label{fig:improve}

\begin{subfigure}[b]{0.32\textwidth}
    \centering
    \includegraphics[width=\textwidth]{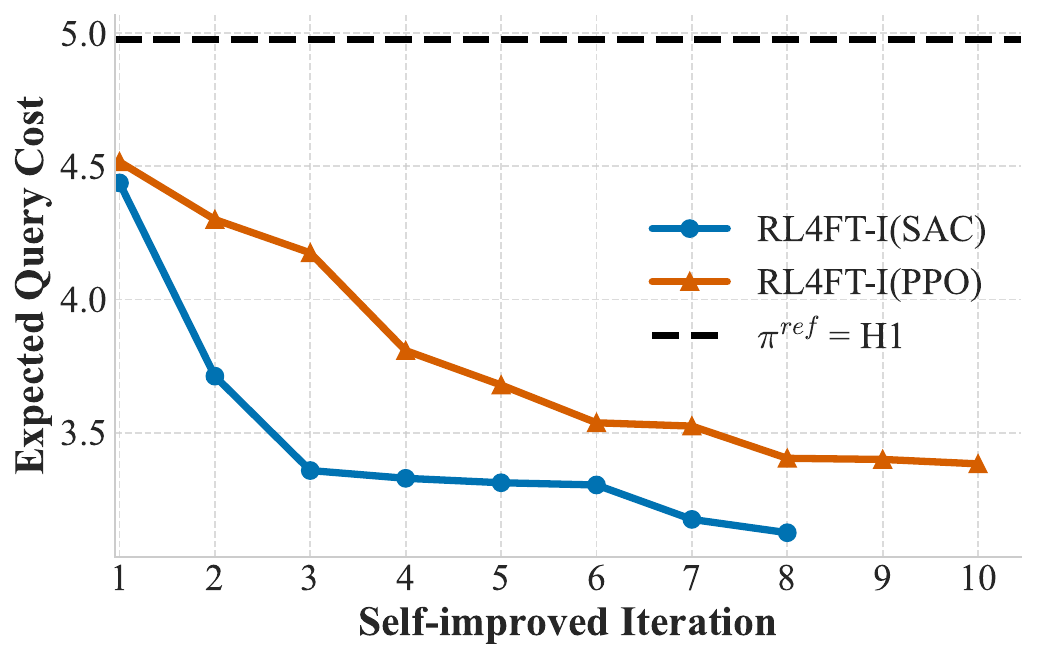}
    \caption{AS1}
    \label{fig:traj_1k}
\end{subfigure}
\hfill
\begin{subfigure}[b]{0.32\textwidth}
    \centering
    \includegraphics[width=\textwidth]{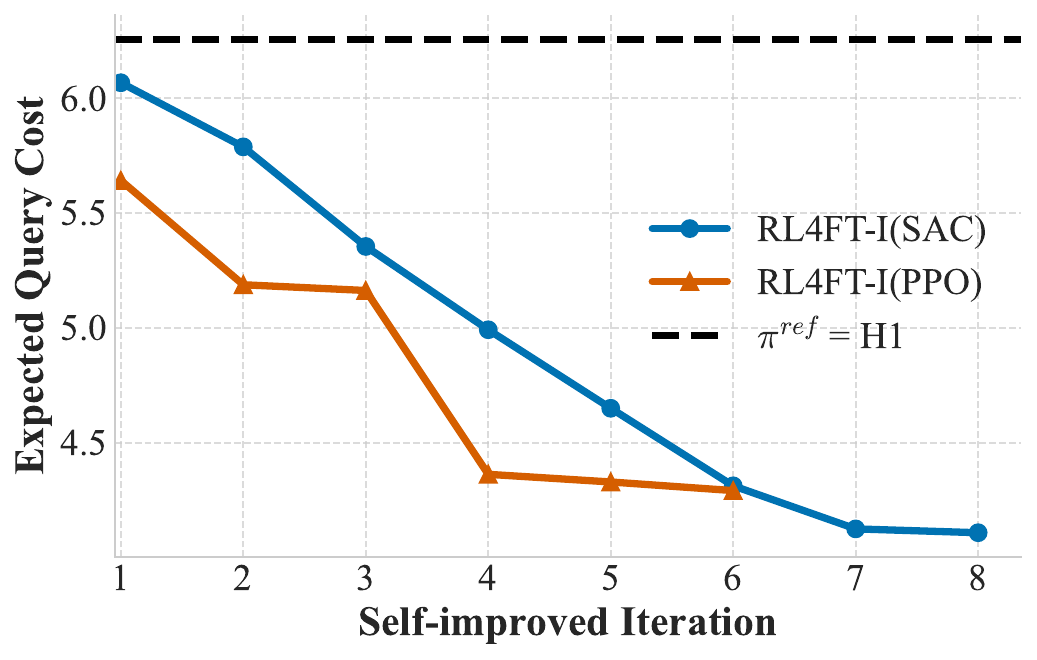}
    \caption{AS2}
    \label{fig:traj_5k}
\end{subfigure}
\hfill
\begin{subfigure}[b]{0.32\textwidth}
    \centering
    \includegraphics[width=\textwidth]{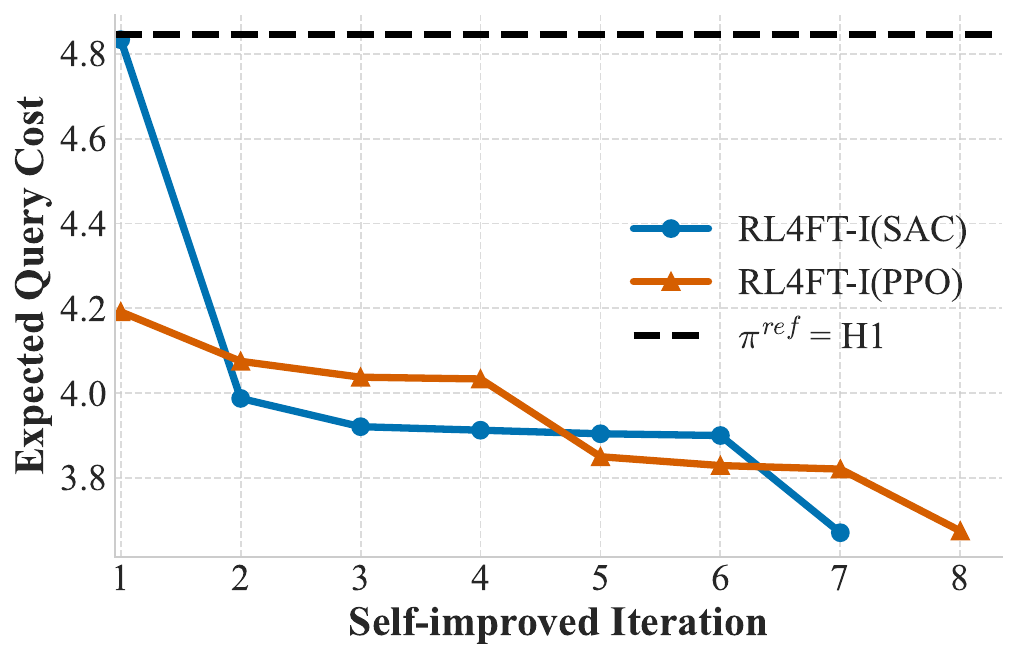}
    \caption{AS3}
    \label{fig:traj_10k}
\end{subfigure}

\end{figure*}

\subsection{Robustness to Non-linear Decision Function}
\label{subsec:nonlinear}

To demonstrate the flexibility of our approach with non-linear decision function, we extend our evaluation to two distinct function classes: a \textbf{Quadratic Model} and a \textbf{Radial Basis Function (RBF) Model} \citep{buhmann2000radial}.
The Quadratic Model captures non-linearities from pairwise feature interactions (e.g., rules based on combinations of two features). It's score function, $f_Q(\mathbf{x}_e)$, is formally defined as:
\[
    f_Q(\mathbf{x}_e) = \mathbf{w}^T \mathbf{x}_e + \mathbf{x}_e^T \mathbf{V} \mathbf{x}_e + b
\]
where $\mathbf{w} \in \mathbb{R}^d$ and $b \in \mathbb{R}$ are the linear weight and bias terms, and $\mathbf{V} \in \mathbb{R}^{d \times d}$ is a matrix of weights capturing the interaction terms.
The Radial Basis Function (RBF) Model captures non-linearities based on an edge's feature similarity to a set of learned prototypes. Its score function, $f_{RBF}(\mathbf{x}_e)$, is defined as:
\[
    f_{RBF}(\mathbf{x}_e) = \sum_{i=1}^{k} w_i \cdot \exp(-\gamma ||\mathbf{x}_e - \mathbf{c}_i||^2) + b
\]
Here, $\{ \mathbf{c}_i \}_{i=1}^k$ is a set of $k$ prototype center vectors, $\gamma$ controls the kernel width, and $\{ w_i \}_{i=1}^k$ and $b$ are the learned weights and bias. Figure \ref{fig:decision} reports results with a non-linear decision function. In this setting, we run RL4FT with H1 as the reference policy to assess robustness under non-linear decision function. We found that RL4FT improved the H1 reference policy by  $18.94 \%$ and $47.29 \%$, respectively. This results show that RL4FT remains robust and improves upon the reference heuristic.

\begin{figure*}[h] % Use the REGULAR figure environment
        \centering
        \caption{Expected query cost with different decision function {linear, quad, rbf}}
        \includegraphics[width=\linewidth]{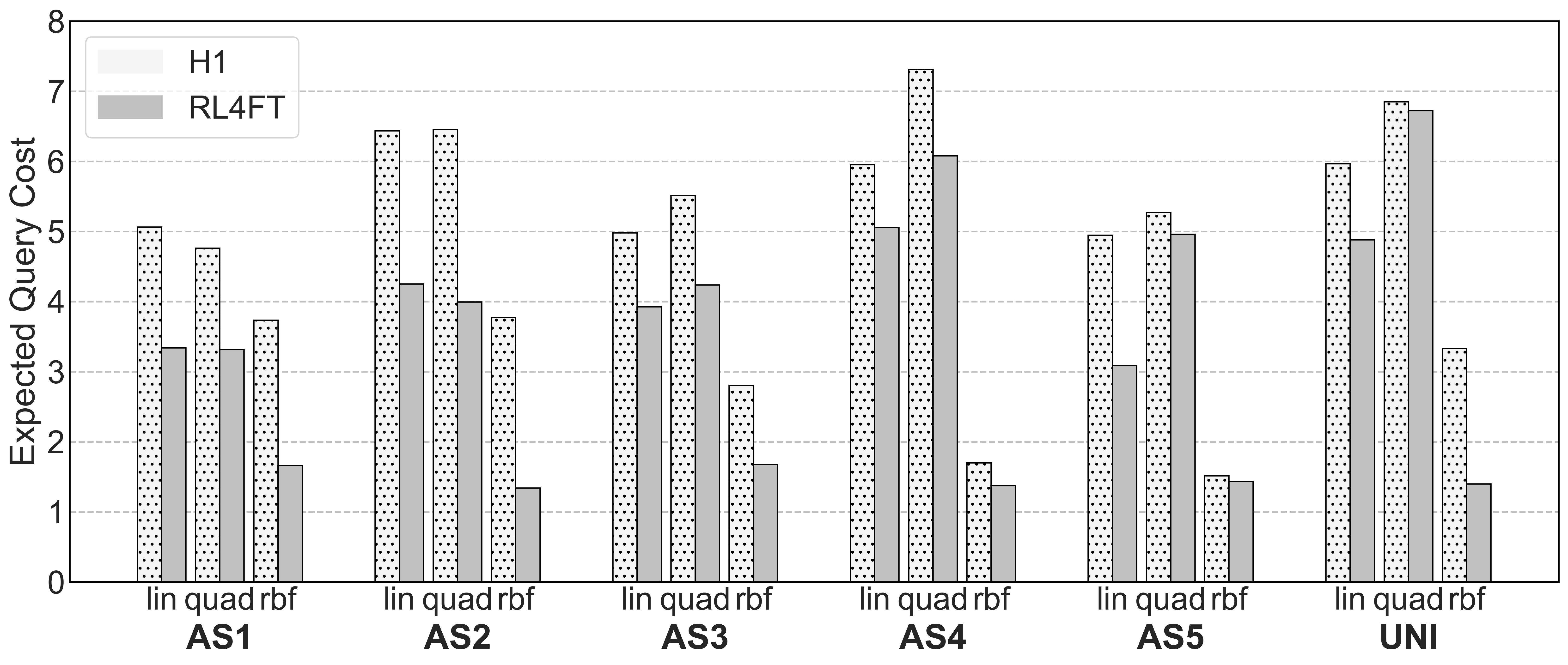}
        
        \label{fig:decision}
    \label{fig:policy}
\end{figure*}

\section{Conclusion}
In this work, we introduced the F-ACT model, a novel security model motivated by practical challenges faced by security teams. The key novelty of our model is its ability to generalize a adaptive security policy from a few queries and modeling IT admin decisions as a function of an edge's security features. We first proved that our problem is $\#\mathcal{P}$-hard and proposed a RL formulation for this problem. However, standard RL agent performed poorly due to challenges like sparse rewards and hard exploration. To address this, we developed RL4FT framework which introduces three key component to overcome these limitations including: (1) a heuristic-guided self-improvement mechanism; (2) a provably policy-invariant reward shaping function; and (3) a custom prioritized experience replay mechanism. 
The framework's flexibility is a key advantage, as it can be applied on top of any policy, enabling it to take an existing state-of-the-art method and further push the boundaries of performance.
We validated the effectiveness of our algorithm through extensive experiments on 10 large-scale, real-world graphs. The results show that our RL4FT algorithm guarantees improvement upon any given reference policy and significantly outperforms standard SOTA baselines.

\section{Appendix—Pseudocode for Heuristic Baselines}

\begin{algorithm}[H]
\caption{H1 Heuristic}
\label{alg:H1}
\KwIn{Graph $G=(V, E)$, source $s$, target $t$}
\KwOut{Query strategy}
\BlankLine

\Repeat{IsTerminated($G$)}{
    $p \leftarrow \text{FindONPath}(G, s, t)$\;
    
    $c \leftarrow \text{FindOFFCut}(G, s, t)$\;
    
    Select edge $e \in p \cap c$\;
    
    Query $e$ and observe its label $\sigma(e)$\;
    
    % Get the set of all newly deduced labels
    $D \leftarrow \text{ApplyDeductionRule}(e, \sigma(e))$, apply Assum. 1\;
    
    % Update the graph with all new information
    \For{each $(e', \sigma(e')) \in D$}{
        $G \leftarrow \text{updateLabel}(G, e', \sigma(e'))$\;
    }
}
\end{algorithm}

\begin{algorithm}[H]
\caption{Greedy Heuristic}
\label{alg:Greedy}
\KwIn{Graph $G=(V, E)$, source $s$, target $t$, utility $g$}
\KwOut{Sequence of queried edges}
\BlankLine

\Repeat{IsTerminated($G$)}{
% For each unqueried edge, calculate the utility of its potential outcomes
    \For{each $e \in E_u$}{
        % Utility if e is revealed to be OFF (after deductions)
        $U_{OFF}(e) \leftarrow g(\text{state after } e==\text{OFF})$\;
        % Utility if e is revealed to be ON (after deductions)
        
        $U_{ON}(e) \leftarrow g(\text{state after } e==\text{ON})$\;
    }
        
    $e \leftarrow \arg\min_{e \in E_G} \left\{ U_{OFF}(e) + U_{ON}(e)\right\}$\;
    
    Query $e$ and observe its label $\sigma(e)$\;
    
    % Get the set of all newly deduced labels
    $D \leftarrow \text{ApplyDeductionRule}(e, \sigma(e))$, apply Assum. 1\;
    
    % Update the graph with all new information
    \For{each $(e', \sigma(e')) \in D$}{
        $G \leftarrow \text{updateLabel}(G, e', \sigma(e'))$\;
    }
}
\end{algorithm}

\begin{algorithm}[H]
\caption{HMC Heuristic (for linear decision function)}
\label{alg:HMC}
\KwIn{Graph $G=(V, E)$, source $s$, target $t$, number of particles $n$}
\KwOut{Sequence of queried edges}
\BlankLine

Initialize particle set $\mathcal{H} = \{h_1, \dots, h_n\}$ where $h_i = (W_i, b_i)$\;

\Repeat{IsTerminated($G$)}{
    Initialize edge list $E' \leftarrow \{\}$\;

    \For{each hypothesis $h \in \mathcal{H}$}{
        % Find the critical path or cut under this hypothesis
        $E_{candidate} \leftarrow \text{FindPathOrCut}(G, h)$\;
        
        % Concatenate the candidate edges to the aggregate list
        $E_{agg} \leftarrow E_{agg} + E_{candidate}$\;
    }
    $e \leftarrow \arg\max_{e \in E_{agg}} \sum_{e' \in E_{agg}} \mathbb{I}(e = e')$, get edge with highest appearance frequency in $E_{agg}$\;
    
    Query $e$ and observe its label $\sigma(e)$\;
    
    % Get the set of all newly deduced labels
    $D \leftarrow \text{ApplyDeductionRule}(e, \sigma(e))$, apply Assum. 1\;
    
    % Update the graph with all new information
    \For{each $(e', \sigma(e')) \in D$}{
        $G \leftarrow \text{updateLabel}(G, e', \sigma(e'))$\;
    }
}
\end{algorithm}

\begin{algorithm}[H]
\caption{LIMIT algorithm - RL Training with Restricted Action Space Sampling}
\label{alg:rl_training_limit}

\SetAlgoLined
\LinesNumbered

% suppress printed "end" markers
\SetKwFor{For}{for}{:}{}
\SetKwFor{ForEach}{foreach}{:}{}
\SetKwFor{While}{while}{:}{}
\SetKwIF{If}{ElseIf}{Else}{if}{then}{else if}{else}{}
\SetKwProg{Fn}{Function}{ is}{end}

\KwIn{Initial Graph $G$, Branching factor $n$, Number of episodes $M$}
\KwOut{Trained policy $\pi_\phi$}
\BlankLine

% Initialize networks and buffer
Initialize policy network $\pi_\phi$, value network $Q_\theta$, and replay buffer $\mathcal{B} \leftarrow \emptyset$\;

\For{episode = $1 \dots M$}{
    Reset env. and randomize parameters of $f$\;
    
    % Collect trajectories T using the policy and restricted action space
    Initialize trajectory $\mathcal{T} \leftarrow \emptyset$ and state $s_t$\;
    
    \While{$\neg$IsTerminated($s_t$)}{
        % Generate restricted action space for current step
        $A_t \leftarrow \text{GetRestrictedActionSpace}(s_t, n)$\;
        
        % Policy samples edge e* from the restricted set At
        $e^* \sim \pi_\phi(s_t, A_t)$\;
        
        Execute query $e^*$, observe reward $r_t$ and next state $s_{t+1}$\;
        
        % Append to current trajectory
        $\mathcal{T} \leftarrow \mathcal{T} \cup (s_t, e^*, r_t, s_{t+1}, A_t)$\;
        $s_t \leftarrow s_{t+1}$\;
    }
    
    % Process and store transitions from the trajectory
    $\mathcal{B} \leftarrow \mathcal{B} \cup \mathcal{T}$
    
    \If{Update Criteria Met}{
        Sample mini-batch $\mathcal{MB} \subseteq \mathcal{B}$\;
        $\pi_\phi, Q_\theta \leftarrow \text{TrainPolicy}(\pi_\phi, Q_\theta, \mathcal{MB})$\;
    }
}

\BlankLine

\Fn{GetRestrictedActionSpace($s$, $n$)}{
    $A \leftarrow \emptyset$\;
    $Q \leftarrow \text{Queue}([s])$\;
    
    \While{$|A| < n$ \textbf{and} $Q$ is not empty}{
        $s'_{curr} \leftarrow Q.\text{pop}()$\;
        
        \If{$\neg$IsTerminated($s'_{curr}$)}{
            % H1 proposes exactly one intermediate edge for the current state
            $e_j \leftarrow \text{H1\_Heuristic}(s'_{curr})$\;
            
            \If{$e_j \notin A$}{
                $A \leftarrow A \cup \{e_j\}$\;
                
                % Explore the two possible scenarios (ON/OFF) in a BFS manner
                $s'_{ON} \leftarrow \text{updateLabel}(s'_{curr}, e_j, \text{ON})$\;
                $s'_{OFF} \leftarrow \text{updateLabel}(s'_{curr}, e_j, \text{OFF})$\;
                
                $Q.\text{push}(s'_{ON})$\;
                $Q.\text{push}(s'_{OFF})$\;
            }
        }
    }
    \Return{$A$}\;
}
\end{algorithm}

\section{Appendix-About LIMIT algorithm}

The pseudocode for the LIMIT algorithm is detailed in Algorithm \ref{alg:rl_training_limit}. While LIMIT follows a standard reinforcement learning training procedure, it diverges in that the agent's action space is constrained to a restricted set $A_t$ at each timestep. This restricted action space is constructed via the \textit{GetRestrictedActionSpace} procedure. This method iteratively identifies candidate actions by applying the H1 heuristic to a given state to get a intermediate edge. The procedure explores subsequent potential states in a Breadth-First Search (BFS) manner, accounting for the binary outcomes (ON/OFF) of each queried edge. As shown in the experiments in Section \ref{sec:paper4:experiment}, the LIMIT algorithm significantly outperforms baseline heuristics. We hypothesize that this performance is thank to the \textit{GetRestrictedActionSpace} procedure, which uses the iterative application of the H1 algorithm to effectively localize high-quality solutions within the graph structure.

While LIMIT alone does not guarantee optimality because the RL algorithm only interacts with a restricted action space, utilizing the RL4FT framework to improve the LIMIT policy (RL4FT+LIMIT) ensures convergence to optimality. Furthermore, we experimentally demonstrate that RL4FT+LIMIT improves performance compared to LIMIT. Note that the RL4FT+LIMIT setting runs as follows: First, we extract a policy by training the RL agent with LIMIT. Then, we use this policy as the reference policy for RL4FT and generate the final trained policy. Since the RL4FT procedure allows the agent to interact with the complete action space and the LIMIT policy is only used as a reference, RL4FT+LIMIT still guarantees convergence to optimality. 

\section{Appendix—List of Feature for Network}

Table \ref{tab:ad_features} lists the binary features we extract per edge in the Active Directory (AD) graph. Features are grouped into three families: permission-based, topological, and source-side vulnerability. The source-side vulnerability indicators capture vulnerabilities of the source node that may be abused at different stages of the attack lifecycle, including reconnaissance, credential access, lateral movement, persistence, and cross-tier privilege escalation.

By contrast, Table \ref{tab:misc_features} lists the binary features we extract per edge in the Miscellaneous graph. This graph does not include properties or labels, so we extract only topological features.

\clearpage
\begin{landscape}
\thispagestyle{empty}

% ===== Top table =====
\noindent
\begin{minipage}{\linewidth}
  \centering
  \captionof{table}{Features extracted for the \textbf{AD network}.}
  \label{tab:ad_features}
  \begin{adjustbox}{max totalsize={\linewidth}{0.46\textheight},center}
    \begin{tabular}{@{}|l|l|l|@{}}
      \toprule
      \textbf{Category} & \textbf{Feature Name} & \textbf{Description} \\ \midrule
      \textit{Permission-based} & \texttt{HasSession} & 1 if the edge type is "HasSession" \\
      & \texttt{MemberOf} & 1 if the edge type is "MemberOf" \\
      & \texttt{Contains} & 1 if the edge type is "Contains" \\
      & \texttt{AdminTo} & 1 if the edge type is "AdminTo" \\
      & \texttt{ForceChangePassword} & 1 if the edge type is "ForceChangePassword" \\
      & \texttt{AddMembers} & 1 if the edge type is "AddMembers" \\
      & \texttt{WriteDACL} & 1 if the edge type is "WriteDACL" \\
      & \texttt{WriteOwner} & 1 if the edge type is "WriteOwner" \\
      & \texttt{Owns} & 1 if the edge type is "Owns" \\
      & \texttt{AllExtendedRights} & 1 if the edge type is "AllExtendedRights" \\
      \midrule
      \textit{Topological} & \texttt{HighSourceInDegree} & 1 if the source node's in-degree is above the graph's median \\
      & \texttt{HighSourceOutDegree} & 1 if the source node's out-degree is above the graph's median \\
      & \texttt{HighTargetInDegree} & 1 if the target node's in-degree is above the graph's median \\
      & \texttt{HighTargetOutDegree} & 1 if the target node's out-degree is above the graph's median \\
      \midrule
      \textit{Source Vulnerabilities} & \texttt{SourceExposedForRecon} & 1 if the source has properties abusable for reconnaissance (e.g. hasspn, userpassword)\\
      & \texttt{SourceCredentialVulnerability} & 1 if the source has properties abusable for credential theft (e.g. dontreqpreauth, passwordnotreqd, pwdneverexpires, creddump)\\
      & \texttt{SourceEnablesLateralMovement} & 1 if the source has properties abusable for lateral movement techniques (e.g. unconstraineddelegation) \\
      & \texttt{SourceAllowsPersistence} & 1 if the source has properties abusable for persistence (e.g. haslaps)\\
      & \texttt{CrossTierVulnerability} & 1 if the edge connects a lower to a higher privilege tier \\
      & \texttt{Inactive} & 1 if the source node has been inactive for over 90 days \\
      \bottomrule
    \end{tabular}
  \end{adjustbox}
\end{minipage}

\vspace{0.8\baselineskip}

% ===== Bottom table =====
\noindent
\begin{minipage}{\linewidth}
  \centering
  \captionof{table}{Features extracted for the \textbf{Miscellaneous network}.}
  \label{tab:misc_features}
  \begin{adjustbox}{max totalsize={\linewidth}{0.46\textheight},center}
    \begin{tabular}{|l|l|}
      \toprule
      \textbf{Feature Name} & \textbf{Description} \\ \midrule
      \texttt{HighSourceNodeDegree} & 1 if the source node's degree is above the graph's median. \\
      \texttt{HighTargetNodeDegree} & 1 if the target node's degree is above the graph's median. \\
      \texttt{HighEdgeBetweenness} & 1 if the edge's betweenness is above the median. \\
      \texttt{HighJaccardCoefficient} & 1 if the Jaccard coefficient of the endpoints is above the median. \\
      \texttt{IsBridge} & 1 if the edge is a bridge (its removal increases connected components). \\
      \texttt{IsCommunityBridge} & 1 if the edge connects two distinct graph communities. \\
      \bottomrule
    \end{tabular}
  \end{adjustbox}
\end{minipage}

\end{landscape}
\clearpage

%% file: Chapter7/chapter7.tex
\chapter{Conclusion and Future Work} % Main chapter title
\label{chapter:end} % Change X to a consecutive number; for referencing this chapter elsewhere, use \ref{ChapterX}

\section{Conclusion}

This thesis studies practical models for defending AD attack graphs. Throughout the research, we have presented a suite of four novel models for hardening large Active Directory networks. These contributions fall into two themes. The first is honeypot allocation on time varying AD attack graphs, aiming to find placement strategies that interdict dynamic attack paths and provide early warning. In the second line of work, we introduce a human in the loop edge removal model and focus on making it practical and usable for large Active Directory networks. 
As these models address challenges that have not been previously explored together, they can be effectively combined into a comprehensive framework for enterprise defense. This framework allows IT administrators to implement a two-layered solution: active defenses strategies that remain effective even as the attack graph changes overtime, and an adaptive removal process that respects human operational constraints. 
Because the proposed framework is designed specifically to recommend and identify critical edges and optimal locations for active defenses, it does not interfere with existing security workflows. Rather than imposing a new process, the framework serves as a supportive tool that enhances security without disrupting established administrative practices. This allows IT security teams to leverage optimized insights while maintaining their current operational routines. Because these models align with existing security practices, they can be easily integrated into the standard enterprise workflow. Finally, we summarize the main contributions of each chapter as follows:
\begin{itemize}
    \item Chapter~\ref{chapter:paper1} investigates, for the first time, honeypot placement in Active Directory attack graphs. The chapter formulates the problem as a Stackelberg game between a defender and attackers. The defender commits to a placement first, and the attacker responds optimally. The model considers two attacker types, one that can observe honeypots and one that cannot. In this problem, the defender’s objective is to place honeypots to interdict attack paths and minimize the attacker’s expected probability of reaching DA. First, we show that computing optimal defenses against both attacker types is intractable. To address this, we propose a set of mixed integer programming formulations and extend it to dynamic attack graphs. To scale to many graph snapshots, we cluster the snapshots to identify a small set of representative instances and solve the MIP on those representatives. Experiments show that the approach scales well and produces near optimal blocking plans on graphs with about one hundred thousand nodes and millions of edges.

    \item Chapter~\ref{chapter:paper2} extends honeypot placement for AD by modeling the time-varying attack graph as a temporal directed graph. In a temporal graph, each edge is labeled with an online time (or interval), and a temporal attack path is a sequence of edges with non-decreasing timestamps. This captures both the paths an attacker can take and the timing of when those paths become available. We introduce a metric, response time, defined as the elapsed time from the first decoy trigger to compromise of the DA, and we use it to evaluate decoy placements. We formulate the problem as a Stackelberg game in which the attacker chooses a temporal path that minimizes traversal time and the defender allocates honeypots to maximize response time. We prove the defender’s problem is $\mathcal{NP}$-hard and proposed to use Evolutionary Diversity Optimization (EDO) to address it. Initial experiments show that vanilla EDO struggles to scale and to find feasible solutions because fitness evaluation is too costly. We therefore make two improvements: a Dijkstra-based earliest-arrival algorithm that exploits AD’s largely static structure to speed up fitness computation, and a surrogate-assisted scheme with a penalty fitness that evaluates candidates on important paths. We prove the surrogate procedure converges to a feasible solution after sufficient iterations, and we show experimentally that our method finds feasible solutions up to $108\times$ faster than vanilla EDO and improves performance by $23\%$ over the best baseline.

    \item Chapter \ref{chapter:paper3} investigates human in the loop edge removal for AD. We propose an adaptive model called Adaptive Path Removal. In each round, a wizard selects one current attack path and presents a multiple choice set of candidate edges on that path, and the administrator chooses one edge to cut. The process repeats until all paths from s to t are removed or a query budget is reached. The algorithm design goal is to reduce security team workload by minimizing the expected number of queries. First, we prove that the problem is $\#\mathcal{P}$-hard. To make it tractable, we develop an exact algorithm, an approximation algorithm, and several scalable heuristics. Among these, Dynamic Programming with Restrictions (DPR) achieves the best performance. It is derived from the exact method by restricting subproblems to gain scalability, while the approximation algorithm is used to derive high quality samples that guide the search. Furthermore, we design a reinforcement learning (RL) heuristic to deal with the problem. Although it does not perform as well as DPR, it can train offline and offers fast inference, which makes it promising for larger graphs and time sensitive settings. Finally, we validate the approach on multiple synthetic AD graphs and on an AD attack graph collected from a real organization.

    \item Chapter \ref{chapter:paper4} also investigates human in the loop edge removal for AD. This chapter adopt an adaptive mode in which, at each step, the wizard selects a candidate edge, presents its security context, and the administrator decides whether to remove or keep it. To better model IT administrator decision making, we associated each edge with a security-context feature vector and treat the administrator’s rule book as a hidden decision function that maps features to actions. To make the query process more efficient, we introduce a monotone risk-deduction rule that propagates one decision to edges with similar or higher risk. We proved that this problem is $\#\mathcal{P}$-hard and proposed to use reinforcement learning to solve it. Standard RL performs poorly because rewards are sparse and exploration is difficult, so we develop the RL4FT framework with three components: a policy agnostic self improvement routine, a policy invariant reward shaping scheme, and a custom prioritized experience replay. RL4FT is flexible and can wrap any reference policy to push performance further. In experiments on ten large real world graphs, RL4FT consistently improves on its reference policy and significantly outperforms strong baselines.
        
\end{itemize}

\section{Future Directions}

This thesis has made significant contributions to improving the resilience of Active Directory networks by investigating and proposing several hardening models. These efforts are expected to move optimization-based hardening from theory to practice and enable integration into existing AD defense workflows. Even so, there are many opportunities for further work. We outline several promising directions below:

\begin{itemize}
    \item \textbf{Generalize the current human in the loop model.} As discussed in this thesis, adaptive hardening is a strong candidate for integration into existing AD defense workflows because it incorporates administrator decisions and all critical changes require human verification. The adaptive models in Chapter~\ref{chapter:paper3}, Chapter~\ref{chapter:paper4}, and \citep{guo2024limited} focus on disconnecting all paths from a low privilege source $s$ to a high privilege target $t$ while minimizing queries. In practice, complete disconnection is often infeasible because the cut set may exceed the reserved change budget, leaving insufficient time to implement all removals, or because operational dependencies require some edges to remain. Non adaptive methods address this with "soft" objectives such as increasing the shortest path length \citep{guo2022practical} or reducing the number of sources that can reach privileged assets \citep{zhang2024practical}. Future work may generalize these soft-objective optimizations to an adaptive setting. A mixed-integer programming will be the backbone for optimization, since most soft objective formulations can be expressed as MILPs. On top of this, an adaptive constraint verification loop can reveal implementability constraints that are initially unknown through a limited number of IT administrator queries. The goal is to learn the constraints under a limited query budget and maximize the expected utility of the resulting MILP solution.

    \item \textbf{Generalize defense strategy in time-varying attack graphs.} Prior work has studied defense on time-varying AD attack graphs in both the honeypot placement setting (Chapters~\ref{chapter:paper1}, \ref{chapter:paper2}) and the edge-removal setting \citep{goel2024optimizing}. However, there are two limitations remain which hinder the practical usage. First, many methods assume that all snapshots are known in advance. Real AD environments change unpredictably through sign-ins, credential residue, policy updates, and routine administration, so future snapshots are not available ahead of time. A more realistic models treat edge availability as a prior online probabilities that are learned from past snapshot. Second, even when probabilities are modeled, defenses are usually optimized against a single attacker strategy, for example a weighted shortest path. In practice an adversary can choose any feasible route and may adapt once a path is not available. In the uncertain graph literature, a large body of work studies network reliability~\citep{khan2018uncertain,potamias2010k,jin2011discovering,reliability,valiant1979complexity}. In that framework, $s$–$t$ reliability is the probability that at least one path exists from $s$ to $t$ when edges materialize according to their probabilities. In the AD context with a time-varying attack graph, we aim to minimize the probability that a path exists from low-privileged sources to high-privileged targets. Subsequently, the defense optimization problem is to select a limited set of edges to remove so as to minimize $s$–$t$ reliability. 
    
    \item \textbf{Study hardening models for higher-order attack graphs.} Current AD attack graphs, following the BloodHound definition, represent security permissions as node-to-node relationships. In practice, however, Active Directory often specifies policy at group or container scope, which creates node-to-set or set-to-set relationships. For example, a AD permission granted to a group A applies to every object within an Organizational Unit (OU) B, this create a node-to-set edges from group A to a set of nodes in OU B. Nguyen et al.\ \citep{nguyen2024adsynth} shown that AD is well modeled as a $\alpha$-metagraph that captures these set-level relations as a node-to-set edges. By contrast, BloodHound flattens node-to-set edges into many node-to-node edges, which can inflate the graph and make hardening more expensive. Building on this idea, \citep{nguyen2025thesis} extends the objective of minimizing the number of sources that can reach Domain Admins from the node-to-node setting of Zhang et al.\ \citep{zhang2024practical} to the $\alpha$-metagraph model, and shows that running the corresponding cut algorithms directly on the metagraph is more efficient. Future work can further expand this line by developing adaptive metagraph-based methods and incorporating time-varying graphs, which may improve the efficiency of current algorithms.

\end{itemize}